\definecolor{ForestGreen}{RGB}{34,139,34}
\newtheorem{theorem}{Theorem}
\newtheorem{fact}{Fact}
\newtheorem{lemma}{Lemma}
\newtheorem{definition}{Definition}[section]
\newcommand{\Var}{\operatorname{Var}}
\let\Pr\relax
\DeclareMathOperator*{\Pr}{\mathbb{P}}
\DeclareMathOperator{\rank}{rank}
\newcommand{\R}{\mathbb{R}}
\newcommand{\C}{\mathbb{C}}
\newcommand{\E}{\mathbb{E}}
\newcommand{\poly}{\mathop\mathrm{poly}}
\newcommand{\nbyn}{n\times n}
\DeclareMathOperator{\nnz}{nnz}
\newcommand{\eqdef}{\mathbin{\stackrel{\rm def}{=}}}
\newcommand{\norm}[1]{\|#1\|}
\newcommand{\bs}[1]{\boldsymbol{#1}}
\newcommand{\bv}[1]{\mathbf{#1}}
\newcommand{\erdos}{Erd\"{o}s-R\'{e}nyi}
\title{Sublinear Time Eigenvalue Approximation via Random Sampling}
\author{
Rajarshi Bhattacharjee\footnote{Manning College of Information and Computer Sciences, University of Massachusetts, Amherst, \texttt{\{rbhattacharj, cmusco, ray\}@cs.umass.edu}} 
\and 
Gregory Dexter\footnote{Department of Computer Science, Purdue University,
West Lafayette, \texttt{\{gdexter, pdrineas\}@purdue.edu}}
\and 
Petros Drineas\footnotemark[2]
\and 
Cameron Musco\footnotemark[1]
\and 
Archan Ray\footnotemark[1]
}
\date{}
\begin{document}
\sloppy

\maketitle

\begin{abstract}
We study the problem of approximating the eigenspectrum of a symmetric matrix $\bv A \in \mathbb{R}^{n \times n}$ with bounded entries (i.e., $\|\bv A\|_{\infty} \leq 1$). We present a simple sublinear time algorithm that approximates all eigenvalues of $\bv{A}$ up to additive error $\pm \epsilon n$ using those of a randomly sampled $\tilde {O}\left (\frac{\log^3 n}{\epsilon^3}\right ) \times \tilde O\left (\frac{\log^3 n}{\epsilon^3}\right )$ principal submatrix. Our result can be viewed as a concentration bound on the complete eigenspectrum of a random submatrix, significantly extending known bounds on just the singular values (the magnitudes of the eigenvalues). We give improved error bounds of $\pm \epsilon \sqrt{\nnz(\bv{A})}$ and $\pm \epsilon \norm{\bv A}_F$ when the rows of $\bv A$ can be sampled with probabilities proportional to their sparsities or their squared $\ell_2$ norms respectively. Here $\nnz(\bv{A})$ is the number of non-zero entries in $\bv{A}$ and $\norm{\bv A}_F$ is its Frobenius norm. Even for the strictly easier problems of approximating the singular values or testing the existence of large negative eigenvalues (Bakshi, Chepurko, and Jayaram, FOCS '20), our results are the first that take advantage of non-uniform sampling to give improved error bounds. From a technical perspective, our results require several new eigenvalue concentration and perturbation bounds for matrices with bounded entries. Our non-uniform sampling bounds require a new algorithmic approach, which judiciously zeroes out entries of a randomly sampled submatrix to reduce variance, before computing the eigenvalues of that submatrix as estimates for those of $\bv A$. We complement our theoretical results with numerical simulations, which demonstrate the effectiveness of our algorithms in practice. 
\end{abstract}
\clearpage

\tableofcontents

\pagebreak


\section{Introduction}\label{sec:introduction}

Approximating the eigenvalues of a symmetric matrix is a fundamental problem -- with applications in engineering, optimization, data analysis, spectral graph theory, and beyond. For an $n \times n$ matrix, all eigenvalues can be computed to high accuracy using direct eigendecomposition in $O(n^{\omega})$ time, where $\omega \approx 2.37$ is the exponent of matrix multiplication \cite{demmel2007fast,alman2021refined}. 
When just a few of the largest magnitude eigenvalues are of interest, the power method and other iterative Krylov methods can be applied \cite{saad2011numerical}.
 These methods repeatedly multiply the matrix of interest by query vectors, requiring $O(n^2)$ time per multiplication when the matrix is dense and unstructured.

For large $n$, it is desirable to have even faster eigenvalue approximation algorithms, running in $o(n^2)$ time -- i.e., sublinear in the size of the input matrix.  Unfortunately, for general matrices, no non-trivial approximation can be computed in $o(n^2)$ time: without reading $\Omega(n^2)$ entries, it is impossible to distinguish with reasonable probability if all entries (and hence all eigenvalues) are equal to zero, or if there is a single pair of arbitrarily large entries at positions $(i,j)$ and $(j,i)$, leading to a pair of arbitrarily large eigenvalues. Given this, we seek to address the following question:

\begin{center}
\emph{Under what assumptions on a symmetric $n \times n$ input matrix, can we compute non-trivial approximations to its eigenvalues in $o(n^2)$ time?}
\end{center}
 
It is well known that $o(n^2)$ time eigenvalue computation is possible for highly structured inputs, like tridiagonal or Toeplitz matrices \cite{gu1995divide}. For sparse or structured matrices that admit fast matrix vector multiplication, one can compute a small number of the largest in magnitude eigenvalues in $o(n^2)$ time using iterative methods. Through the use of robust iterative methods, fast top eigenvalue estimation is also possible for matrices that admit fast \emph{approximate} matrix-vector multiplication, such as kernel similarity matrices \cite{greengard1991fast,hardt2014noisy,backurs2021faster}. Our goal is to study simple, sampling-based sublinear time algorithms that work under {much weaker} assumptions on the input matrix.
 
\subsection{Our Contributions}\label{sec:contributions}

Our main contribution is to show  that a very simple algorithm can be used to approximate \emph{all eigenvalues} of any symmetric matrix with \emph{bounded entries}. In particular, for any $\bv A \in \R^{n \times n}$ with maximum entry magnitude $\norm{\bv A}_\infty \le 1$, sampling an $s \times s$ principal submatrix $\bv{A}_S$ of $\bv{A}$ with $s =  \tilde O\left(\frac{\log^3 n}{\epsilon^3}\right )$ and scaling its eigenvalues by $n/s$ yields a $\pm \epsilon n$ additive error approximation to {all eigenvalues of $\bv A$} with good probability.\footnote{Here and throughout, $\tilde O(\cdot)$ hides logarithmic factors in the argument. Note that by scaling, our algorithm gives a $\pm \epsilon n \cdot \norm{\bv A}_\infty$ approximation for  any $\bv A$.}  This result is formally stated below, where $[n] \eqdef \{1,\ldots,n\}$.

\begin{restatable}[Sublinear Time Eigenvalue Approximation]{theorem}{eigvalApprox}
\label{thm:main_bound}
Let $\bv A \in \mathbb{R}^{n\times n}$ be symmetric with $\|\bv A\|_\infty \leq 1$ and eigenvalues $\lambda_1(\bv{A}) \ge \ldots \ge \lambda_n(\bv{A})$. Let $S \subseteq [n]$ be formed by including each index independently with probability $s/n$ as in Algorithm \ref{alg:eigenvalue estimate}. Let $\bv A_S$ be the corresponding principal submatrix of $\bv{A}$, with eigenvalues $\lambda_1(\bv{A}_S) \ge \ldots \ge \lambda_{|S|}(\bv{A}_S)$.

For all $i \in [|S|]$ with $\lambda_i(\bv{A}_S) \ge 0$, let $\tilde \lambda_i(\bv{A}) = \frac{n}{s} \cdot \lambda_i(\bv{A}_S)$. For all $i \in [|S|]$ with $\lambda_i(\bv{A}_S) < 0$, let $\tilde \lambda_{n-(|S|-i)}(\bv{A}) = \frac{n}{s} \cdot \lambda_i(\bv{A}_S)$. For all other $i \in [n]$, let $\tilde \lambda_i(\bv{A}) = 0$. 
If $s \geq \frac{c \log(1/(\epsilon \delta)) \cdot \log^3 n}{\epsilon^3 {\delta}}$, for large enough constant $c$,
 then with probability $\ge 1-\delta$, for all $i \in [n]$,
\begin{align*}
    \lambda_i(\bv A) -\epsilon n \leq \tilde \lambda_i(\bv A) \leq \lambda_i(\bv A) +\epsilon n.
\end{align*}
\end{restatable}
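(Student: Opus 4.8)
\emph{Proof plan.}
Write $\bv A_S = \bv R\bv A\bv R^\top$ with $\bv R\in\{0,1\}^{|S|\times n}$ the selection matrix of $S$, fix a threshold $\theta$ of order $\epsilon$ up to polylogarithmic factors, and split $\bv A = \bv A_o + \bv A_b$ into its projection $\bv A_o = \bv U\bs\Lambda\bv U^\top$ onto the eigenvectors with $|\lambda_i(\bv A)| > \theta n$ and the remainder $\bv A_b$. Since $\norm{\bv A}_F^2 \le n^2$, $\bv A_o$ has rank $r\le 1/\theta^2$ and $\norm{\bv A_b}_2\le\theta n$, and the eigenvalues of $\bv A$ not carried by $\bv A_o$ have magnitude $\le\theta n\le\epsilon n$, already consistent with the zero estimates. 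The one structural fact I extract from $\norm{\bv A}_\infty\le1$ is that every row of $\bv A$ has squared $\ell_2$-norm $(\bv A^2)_{jj} = \sum_k\bv A_{jk}^2 \le n$; with $\lambda_i(\bv A)^2 > \theta^2 n^2$ on the support of $\bv U$ this yields leverage scores $\norm{\bv U_{j,:}}^2 \le 1/(\theta^2 n)$ and, setting $\bv B = \bv U|\bs\Lambda|^{1/2}$ (so that $\bv A_o = \bv B\,\mathrm{sign}(\bs\Lambda)\,\bv B^\top$ and $\bv B^\top\bv B = |\bs\Lambda|$), the row bound $\norm{\bv B_{j,:}}^2 = \sum_i|\lambda_i(\bv A)|\,\bv U_{j,i}^2 \le 1/\theta$; the same Cauchy--Schwarz estimate gives $\norm{\bv A_b}_\infty = O(1/\theta)$.

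\emph{Bulk.} I would show $\tfrac ns\norm{(\bv A_b)_S}_2\le\epsilon n/3$ with probability $\ge1-\delta/3$. This calls for a concentration bound on the spectral norm of a random principal submatrix of a symmetric matrix having \emph{both} small spectral norm and small entries, roughly $\norm{(\bv A_b)_S}_2 \lesssim \tfrac sn\norm{\bv A_b}_2 + \plog(n)\big(\sqrt{\tfrac sn\norm{\bv A_b}_\infty\norm{\bv A_b}_2} + \norm{\bv A_b}_\infty\big)$, which I would prove by decoupling the quadratic form $\sum_{i\ne j}\mathbf 1[i,j\in S](\bv A_b)_{ij}\bv e_i\bv e_j^\top$, applying matrix Bernstein, treating the diagonal trivially, and passing from an expectation/tail bound to a high-probability bound via Markov (the step responsible for the $1/\delta$ factor in the final sample size). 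Plugging in $\norm{\bv A_b}_2\le\theta n$, $\norm{\bv A_b}_\infty = O(1/\theta)$ and rescaling by $n/s$ produces a bound of the shape $\theta n + n\cdot\wt O(s^{-1/2}) + n\cdot\wt O(1/(s\theta))$, which is $\le\epsilon n/3$ for $\theta = \Theta(\epsilon)$ (up to logs) and $s = \wt\Omega(1/\epsilon^2)$. Weyl's inequality then gives $|\lambda_i(\bv A_S) - \lambda_i((\bv A_o)_S)| \le\norm{(\bv A_b)_S}_2$ and $|\lambda_i(\bv A) - \lambda_i(\bv A_o)| \le\norm{\bv A_b}_2$.

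\emph{Outliers and conclusion.} It remains to control $\big|\tfrac ns\lambda_i((\bv A_o)_S) - \lambda_i(\bv A_o)\big|$. Conditioned on $S$ one has $\tfrac ns(\bv A_o)_S = \bv Z\,\mathrm{sign}(\bs\Lambda)\,\bv Z^\top$ with $\bv Z = \sqrt{n/s}\,\bv R\bv B$, whose nonzero eigenvalues are those of $\mathrm{sign}(\bs\Lambda)\,\bv Z^\top\bv Z = \bs\Lambda + \mathrm{sign}(\bs\Lambda)\bv F$, where $\bv F = \bv Z^\top\bv Z - \bv B^\top\bv B$ is the sampling error of the Gram matrix of $\bv B$ (and $\E[\bv Z^\top\bv Z] = \bv B^\top\bv B = |\bs\Lambda|$). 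Applying matrix Bernstein to $\bv F$, whose summands have spectral norm $\le\tfrac ns\max_j\norm{\bv B_{j,:}}^2 \le \tfrac n{s\theta}$ and matrix variance $\le\tfrac ns\max_j\norm{\bv B_{j,:}}^2\,\norm{\bv B^\top\bv B}_2 \le \tfrac{n^2}{s\theta}$, gives $\norm{\bv F}_2 \lesssim \tfrac n{\sqrt\theta}\sqrt{\log(r/\delta)/s} + \tfrac{n\log(r/\delta)}{s\theta}$, which is $\le\epsilon n/3$ once $s = \wt\Omega(1/\epsilon^3)$ with $\theta = \Theta(\epsilon)$ — this balancing of $\norm{\bv F}_2$ against $\theta$ is where the cubic dependence on $1/\epsilon$ arises. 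Converting $\norm{\bv F}_2 \le\epsilon n/3$ into a \emph{matched}, per-eigenvalue bound $|\tfrac ns\lambda_i((\bv A_o)_S) - \lambda_i(\bv A_o)| \lesssim \norm{\bv F}_2$ — for the indefinite congruence $\mathrm{sign}(\bs\Lambda)$, avoiding the loss a crude symmetrization would incur — is a perturbation statement I would isolate as a lemma; this, together with the submatrix spectral-norm bound, is the technical heart and, I expect, the main obstacle. Finally, combining the three estimates through Weyl's inequality, checking the index bookkeeping (positive eigenvalues of $\bv A_S$ match the largest eigenvalues of $\bv A$, negative ones the smallest, zero estimates the small middle eigenvalues), absorbing the deviation of $|S|$ from $s$ with a Chernoff bound, and taking a union bound over the $O(1)$ bad events — with $\theta = \Theta(\epsilon)$ up to logs — yields the claimed $\pm\epsilon n$ guarantee with the stated sample size.
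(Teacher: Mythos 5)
Your plan follows the paper's argument almost step for step: the same eigenvalue split $\bv A = \bv A_o + \bv A_b$ at a threshold of order $\epsilon$ (the paper uses $\epsilon\sqrt{\delta}$), the same leverage-score bounds $\norm{\bv U_{j,:}}^2 \le 1/(\theta^2 n)$ and $\norm{\bv B_{j,:}}^2 \le 1/\theta$ (hence $\norm{\bv A_o}_\infty = O(1/\theta)$) extracted from $\norm{\bv A}_\infty \le 1$, a Rudelson--Vershynin/Tropp-type spectral norm bound for the sampled bulk $(\bv A_b)_S$ (the paper cites this as Theorem~\ref{thm: tropp2008} rather than reproving it), a matrix Bernstein bound for the sampled Gram matrix $\bv F$ of the outliers, and a final alignment of positive and negative eigenvalues. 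The Bernstein arithmetic in your outlier step matches the paper's Lemma~\ref{lemma: orthonormality}. So this is the paper's route, not a different one.

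The one genuine gap is precisely the step you flag as the main obstacle: passing from $\norm{\bv F}_2 \le \epsilon n$ to a per-eigenvalue bound for $\bs{\Lambda} + \sign(\bs{\Lambda})\,\bv F$. Because $\sign(\bs{\Lambda})\,\bv F$ is not Hermitian, Weyl does not apply, and this is not a detail you can leave as an expected lemma: the paper closes it with a non-Hermitian eigenvalue perturbation theorem of Bhatia (Fact~\ref{fact:weyl_general}), which gives only $\max_i |\lambda_i(\bs{\Lambda}) - \lambda_i(\bs{\Lambda} + \sign(\bs{\Lambda})\bv F)| \le C\log n \cdot \norm{\bv F}_2$, and the authors remark they believe the $\log n$ factor to be necessary. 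After feeding $\epsilon \mapsto \epsilon/(C\log n)$ back into the Bernstein requirement $s \gtrsim \log(1/(\epsilon\delta))/(\epsilon^3\sqrt{\delta})$, that factor is exactly what produces the $\log^3 n$ in the theorem statement. Your hope of "avoiding the loss a crude symmetrization would incur" is therefore only half right: the symmetrization route --- analyze the Hermitian $(\bv Z^T\bv Z)^{1/2}\sign(\bs{\Lambda})(\bv Z^T\bv Z)^{1/2}$ via Weyl, which is the paper's Appendix~\ref{app:alt_bound} --- costs an extra $1/\epsilon$ and yields $s = O(\log n/(\epsilon^4\delta))$, while the non-Hermitian route keeps $1/\epsilon^3$ but pays $\log^3 n$; neither is free, and there is no known way to get both. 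Without stating and citing (or proving) this perturbation bound, and without pricing in its $\log n$ loss, your argument does not close and the stated sample complexity cannot be derived.
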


\noindent See Figure \ref{fig:illustration} for an illustration of how the $|S|$ eigenvalues of $\bv{A}_S$ are mapped to estimates for all $n$ eigenvalues of $\bv{A}$.
Note that the principal submatrix $\bv{A}_S$ sampled in Theorem \ref{thm:main_bound} will have $ O(s) = \tilde O\left (\frac{\log^3 n}{\epsilon^3 \delta} \right )$ rows/columns  with high probability. Thus, with high probability, the algorithm reads just $\tilde O \left (\frac{\log^6n}{\epsilon^6 \delta^2}\right )$ entries of $\bv{A}$ and runs in $\poly(\log n, 1/\epsilon,1/\delta)$ time. 
Standard matrix concentration bounds imply that one can sample $O \left (\frac{s \log(1/\delta)}{\epsilon^2} \right )$ random entries from the $O(s) \times O(s)$ random submatrix $\bv A_S$ and preserve its eigenvalues  to error $\pm \epsilon  s$ with probability $1-\delta$ \cite{achlioptas2007fast}. See Appendix \ref{app:entry} for a proof. This  can be directly combined with Theorem \ref{thm:main_bound} to give improved  sample complexity:
\begin{restatable}[Improved Sample Complexity via Entrywise Sampling]{corollary}{entrywise}
\label{cor:entrywise}
Let $\bv A \in \R^{\nbyn}$ be symmetric  with $\|\bv A\|_\infty \leq 1$ and eigenvalues $\lambda_1(\bv A) \ge \ldots \ge \lambda_n(\bv A )$. For any $\epsilon,\delta \in (0,1)$, there is an algorithm that reads $\Tilde O\left ( \frac{\log^3 n}{ \epsilon^5 \delta}\right )$ entries of $\bv A$ and returns, with probability at least $1-\delta$, $\tilde \lambda_i(\bv{A})$ for each $i \in [n]$ satisfying $| \tilde \lambda_i(\bv A) - \lambda_i(\bv{A})| \le \epsilon n$.\end{restatable}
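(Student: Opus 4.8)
The plan is to compose the sublinear-time algorithm of Theorem~\ref{thm:main_bound} with the entrywise sparsification bound of Appendix~\ref{app:entry}, so that the algorithm never reads the full sampled submatrix. Fix $s = \Theta\!\left(\frac{\log(1/(\epsilon\delta))\log^3 n}{\epsilon^3\delta}\right)$, the smallest value permitted by the hypothesis of Theorem~\ref{thm:main_bound} when it is run with accuracy $\epsilon/2$ and failure probability $\delta/2$. If $s > n/4$ (so $n$ is already small relative to the target submatrix size), I would skip the submatrix-sampling stage entirely and apply the bound of Appendix~\ref{app:entry} directly to $\bv A$: using $n \le 4s$, this reads $O(n\log(1/\delta)/\epsilon^2) = \tilde O(\log^3 n/\epsilon^5\delta)$ entries and preserves every eigenvalue of $\bv A$ to within $\pm \epsilon n$ with probability $\ge 1-\delta$. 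So in what follows I assume $s \le n/4$.

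Next I would invoke Theorem~\ref{thm:main_bound} with accuracy $\epsilon/2$ and failure probability $\delta/2$, drawing $S \subseteq [n]$ by including each index independently with probability $s/n$. A Chernoff bound gives $|S| \le 2s \le n/2$ except with probability $\delta/4$ (the stated lower bound on $s$ comfortably exceeds the $\Omega(\log(1/\delta))$ needed here), and I would condition on this event; in particular $\bv A_S$ has dimension $O(s)$ and $\|\bv A_S\|_\infty \le 1$. Rather than reading all $\Theta(|S|^2)$ entries of $\bv A_S$, I would then apply the entrywise sampling bound of Appendix~\ref{app:entry} (in the style of \cite{achlioptas2007fast}) to $\bv A_S$, with accuracy a small constant times $\epsilon$ and failure probability $\delta/4$: this samples $m = O\!\left(\frac{|S|\log(1/\delta)}{\epsilon^2}\right)$ uniformly random entries of $\bv A_S$, at positions in $S \times S$, and produces $\tilde{\bv A}_S$ with $|\lambda_i(\tilde{\bv A}_S) - \lambda_i(\bv A_S)| \le \tfrac{\epsilon s}{2}$ for every $i \in [|S|]$, except with probability $\delta/4$. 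These $m$ are the only entries of $\bv A$ touched, so the total is $O\!\left(\frac{s\log(1/\delta)}{\epsilon^2}\right) = \tilde O\!\left(\frac{\log^3 n}{\epsilon^5\delta}\right)$, absorbing $\log(1/(\epsilon\delta))\cdot\log(1/\delta)$ into $\tilde O(\cdot)$. Finally I would define the estimates $\tilde\lambda_i(\bv A)$, $i\in[n]$, from the eigenvalues of $\tilde{\bv A}_S$ using exactly the assignment rule of Theorem~\ref{thm:main_bound}: scale by $n/s$, send the nonnegative eigenvalues to the top positions and the negative ones to the bottom positions, and set the remaining positions to $0$.

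It then remains to chain the two error bounds, for which I would view the assignment rule as a map $\Phi$ from symmetric $|S|\times|S|$ matrices to $\R^n$ and write $\lambda(\bv A) = (\lambda_1(\bv A),\ldots,\lambda_n(\bv A))$. The key estimate is that $\Phi$ is $\tfrac{n}{s}$-Lipschitz in the relevant norms: $\|\Phi(\bv B) - \Phi(\bv B')\|_\infty \le \tfrac{n}{s}\max_i|\lambda_i(\bv B) - \lambda_i(\bv B')|$. For a coordinate that receives a scaled eigenvalue of the same rank under both $\bv B$ and $\bv B'$ this is immediate from the rank-$i$ eigenvalue gap; for a coordinate whose role changes — say it carries a nonnegative eigenvalue of $\bv B$ but is zeroed under $\bv B'$ because the eigenvalue of that rank turned negative — the eigenvalue in question has magnitude at most $\max_i|\lambda_i(\bv B)-\lambda_i(\bv B')|$, so the two estimates for that coordinate are again within the claimed bound; and because $|S| \le n/2$, no coordinate can be a top slot for one matrix and a bottom slot for the other (and none can be both for the same matrix). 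Applying this with $\bv B = \bv A_S$ and $\bv B' = \tilde{\bv A}_S$ gives $\|\Phi(\tilde{\bv A}_S) - \Phi(\bv A_S)\|_\infty \le \tfrac{n}{s}\cdot\tfrac{\epsilon s}{2} = \tfrac{\epsilon n}{2}$, while Theorem~\ref{thm:main_bound} gives $\|\Phi(\bv A_S) - \lambda(\bv A)\|_\infty \le \tfrac{\epsilon n}{2}$. A union bound over the three bad events (total probability $\le \delta/2 + \delta/4 + \delta/4 = \delta$) and the triangle inequality give $\|\Phi(\tilde{\bv A}_S) - \lambda(\bv A)\|_\infty \le \epsilon n$, which is the claim.

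I expect the only real obstacle to be the Lipschitz estimate for $\Phi$: the two sampling stages are clean black boxes, but a priori the sign-based assignment rule could amplify the second-stage perturbation when an eigenvalue of $\bv A_S$ near $0$ flips sign in $\tilde{\bv A}_S$, moving its estimate from a top slot to a bottom slot or to $0$. The case analysis above shows this does no harm, precisely because any such eigenvalue has magnitude $O(\epsilon s)$, and the bound $|S| \le n/2$ prevents the top and bottom blocks from interfering. Everything else — the choice of constants, the split of the failure probability, the Chernoff bound on $|S|$, the small-$n$ special case, and the entry count — is routine accounting.
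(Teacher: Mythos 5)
Your proposal is correct and follows the same two-stage composition the paper uses: reduce to the random principal submatrix $\bv A_S$ via Theorem~\ref{thm:main_bound}, sparsify it via the entrywise sampling bound of Theorem~\ref{thm:entrywise sampling}, and chain the two error bounds. The only place you go beyond the paper is in spelling out that the chaining step is legitimate --- the paper simply adds the scaled errors, implicitly trusting that a $\pm\epsilon|S|$ perturbation of the eigenvalues of $\bv A_S$ passes through the sign-based assignment rule with error at most $\tfrac{n}{s}\epsilon|S|$, while you prove this as an explicit $\tfrac{n}{s}$-Lipschitz property of $\Phi$, handling the sign-flip case (where an eigenvalue near zero moves between a top slot, a bottom slot, and the zero block) and using $|S|\le n/2$ to guarantee the top and bottom slots cannot collide. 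That is a useful tightening of the same route rather than a genuinely different argument, and the rest of your accounting (the small-$n$ fallback, the Chernoff bound on $|S|$, and the $\tilde O(\log^3 n/\epsilon^5\delta)$ entry count) matches the paper.
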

Observe that the dependence on $\delta$ in Theorem \ref{thm:main_bound} and Corollary \ref{cor:entrywise} can be improved via standard arguments: running  the algorithm with failure probability $\delta' = 2/3$, repeating $O(\log(1/\delta))$ times, and taking the median estimate for each $\lambda_i(\bv{A})$. This guarantees that the algorithm will succeed with probability at most $1-\delta$ at the expense of a $\log(1/\delta)$ dependence in the complexity.

\begin{figure}[h]
\vspace{-.5em}
\centering
\begin{tikzpicture}[
point/.style = {circle, draw=black, inner sep=0.07cm,
                ball color=#1,
                node contents={}},
every label/.append style = {font=\small}
                    ]
                  
\draw[latex-latex] (0,5) -- (16,5) ;
\draw[shift={(2,5)},color=black] (0pt,3pt) -- (0pt,-3pt);
\draw[shift={(2,5)},color=black] (0pt,0pt) -- (0pt,-3pt) node[below] 
{${-}n$};
\draw[shift={(8,5)},color=black] (0pt,3pt) -- (0pt,-3pt);
\draw[shift={(8,5)},color=black] (0pt,0pt) -- (0pt,-3pt) node[below] 
{$0$};
\draw[shift={(14,5)},color=black] (0pt,3pt) -- (0pt,-3pt);
\draw[shift={(14,5)},color=black] (0pt,0pt) -- (0pt,-3pt) node[below] 
{$n$};

\draw[latex-latex] (0,3) -- (16,3) ;
\draw[shift={(3,3)},color=black] (0pt,3pt) -- (0pt,-3pt);
\draw[shift={(3,3)},color=black] (0pt,0pt) -- (0pt,-3pt) node[below] 
{${-}s$};
\draw[shift={(8,3)},color=black] (0pt,3pt) -- (0pt,-3pt);
\draw[shift={(8,3)},color=black] (0pt,3pt) -- (0pt,-3pt) node(midpoint)[below] 
{$0$};
\draw[shift={(13,3)},color=black] (0pt,3pt) -- (0pt,-3pt);
\draw[shift={(13,3)},color=black] (0pt,0pt) -- (0pt,-3pt) node[below] 
{$s$};

\node (v1) at (2.3,5) [point=cyan, label=above:$\tilde{\lambda}_n(\bv A)$];
\node (v2) at (13.7,5) [point=cyan, label=above:$\tilde{\lambda}_1(\bv A)$];
\node (v3) at (5.6,5) [point=cyan, label=above:$\tilde{\lambda}_{n-(|S|-p)}(\bv A)$];
\node (v4) at (10.3,5) [point=cyan, label=above:$\tilde{\lambda}_{p-1}(\bv A)$];
\node (v5) at (6.7,5) [point=cyan];
\node (v6) at (7.7,5) [point=cyan];
\node (v7) at (8.3,5) [point=cyan];
\node (v8) at (9.3,5) [point=cyan];


\node (a1) at (4.2,3) [point=red, label=below:$\lambda_{|S|}(\bv A_S)$];
\node (a2) at (12,3) [point=red, label=below:$\lambda_1(\bv A_S)$];
\node (a3) at (7,3) [point=red, label=below:$\lambda_{p}(\bv A_S)$];
\node (a4) at (9,3) [point=red, label=below:$\lambda_{p-1}(\bv A_S)$];
\node (a5) at (8,3) [point=red];

\draw[-Latex](a1) edge (v1);
\draw[-Latex](a2) edge (v2);
\draw[-Latex](a3) edge (v3);
\draw[-Latex](a4) edge (v4);

\draw[-Latex](a5) edge (v5);
\draw[-Latex](a5) edge (v6);
\draw[-Latex](a5) edge (v7);
\draw[-Latex](a5) edge (v8);

  
\draw [decorate,decoration={brace,amplitude=10pt,raise=2ex}]
  (v5) -- (v8) node[midway,yshift=3em]{\small{$\tilde{\lambda}_{t}(\bv{A})\text{ for } t \in (n-(|S|-p+1), p)$}};
  
  

\end{tikzpicture}
\caption{\textbf{Alignment of eigenvalues in Thm.~\ref{thm:main_bound} and Algo.~\ref{alg:eigenvalue estimate}.} We illustrate how the eigenvalues of $\bv{A}_S$, scaled by $\frac{n}{s}$, are used to approximate all eigenvalues of $\bv A$. If $\bv{A}_S$ has $p-1$ positive eigenvalues, they are set to the top $p-1$ eigenvalue estimates. Its $|S|-p+1$ negative eigenvalues are set to the bottom eigenvalue estimates. All remaining eigenvalues are simply approximated as zero. 
}
\label{fig:illustration}
\end{figure}

\noindent{\textbf{Comparison to known bounds.}} 
Theorem \ref{thm:main_bound} can be viewed as a concentration inequality on the full eigenspectrum of a random principal submatrix $\bv{A}_S$ of $\bv{A}$. This significantly extends prior work, which was able to bound just the spectral norm (i.e., the magnitude of the top eigenvalue) of a random principal submatrix \cite{rudelson2007sampling,tropp2008norms}. Bakshi, Chepurko, and Jayaram \cite{BakshiChepurkoJayaram:2020} recently identified developing such full eigenspectrum concentration inequalities as an important step in expanding our knowledge of sublinear time property testing algorithms for bounded entry matrices. 

Standard matrix concentration bounds \cite{gittens2011tail} can be used to show that the \emph{singular values} of $\bv{A}$ (i.e., the magnitudes of its eigenvalues) are approximated by those of a $O \left (\frac{\log n }{\epsilon^2} \right) \times O \left (\frac{\log n }{\epsilon^2} \right)$ random submatrix (see Appendix \ref{sec:singular}) with independently sampled rows and columns.  However, such a random matrix will not be symmetric or even have real eigenvalues in general, and thus no analogous bounds were previously known for the eigenvalues themselves.

Recently, Bakshi, Chepurko, and Jayaram \cite{BakshiChepurkoJayaram:2020} studied the closely related problem of testing positive semidefiniteness in the bounded entry model. They show how to test whether the minimum eigenvalue of $\bv A$ is either greater than $0$ or smaller than $-\epsilon n$ by reading just  $\Tilde{O}(\frac{1}{\epsilon^2})$ entries. They show that this result is optimal in terms of query complexity, up to logarithmic factors. Like our approach, their algorithm is based on random principal submatrix sampling. Our eigenvalue approximation guarantee strictly strengthens the testing guarantee -- given $\pm \epsilon n$ approximations to all eigenvalues, we immediately solve the testing problem. Thus, our query complexity is tight up to a $\poly(\log n,1/\epsilon)$ factor. It is open if  our higher sample complexity is necessary to solve the harder full eigenspectrum estimation problem. See Section \ref{optimal} for further discussion.

\medskip

\noindent\textbf{Improved bounds for non-uniform sampling.} 
Our second main contribution is to show that, when it is possible to efficiently sample rows/columns of $\bv A$ with probabilities proportional to their sparsities or their squared $\ell_2$ norms, significantly stronger eigenvalue estimates can be obtained. In particular, letting $\nnz(\bv A)$  denote the number of nonzero entries in $\bv A$ and $\norm{\bv A}_F$ denote its Frobenius norm,
we show that sparsity-based sampling yields eigenvalue estimates with error $\pm \epsilon \sqrt{\nnz(\bv{A})}$ and norm-based sampling gives error $\pm \epsilon \norm{\bv A}_F$. See Theorems \ref {thm:nnz_main_bound} and \ref{thm:l2_main_bound} for formal statements. Observe that when $\|\bv A\|_\infty \leq 1$, its eigenvalues are bounded in magnitude by $\norm{\bv A}_2 \le \norm{\bv A}_F \le \sqrt{\nnz(\bv A)} \le n$. Thus, Theorems \ref{thm:nnz_main_bound} and \ref{thm:l2_main_bound} are natural strengthenings of  Theorem \ref{thm:main_bound}. Row norm-based sampling (Theorem \ref{thm:l2_main_bound}) additionally removes the bounded entry requirement of Theorems \ref{thm:main_bound} and \ref{thm:nnz_main_bound}.

As discussed in Section \ref{sec:sparsityOverview}, sparsity-based sampling can be performed in sublinear time when $\bv{A}$ is stored in a slightly augmented sparse matrix format or when $\bv{A}$ is the adjacency matrix of a graph accessed in the standard graph query model of the sublinear  algorithms literature \cite{GoldreichRon:1997}. Norm-based sampling  can also be performed efficiently with an augmented matrix format, and is commonly studied  in randomized and  `quantum-inspired'  algorithms for linear algebra~\cite{FriezeKannanVempala:2004,Tang:2018}. 

\begin{restatable}[Sparse Matrix Eigenvalue Approximation]{theorem}{nnzEigVal}
\label{thm:nnz_main_bound}
Let $\bv A \in \mathbb{R}^{n\times n}$ be symmetric with $\|\bv A\|_\infty \leq 1$ and eigenvalues $\lambda_1(\bv{A}) \ge \ldots \ge \lambda_n(\bv{A})$. Let $S \subseteq [n]$ be formed by including the $i$\textsuperscript{th} index independently with probability $p_i=\min \left (1,\frac{s\nnz(\bv{A}_i)}{\nnz(\bv{A})}\right)$ as in Algorithm~\ref{alg:nnz eigenvalue estimate}. Here $\nnz(\bv{A}_i)$ is the number of non-zero entries in the $i^{th}$ row of $\bv A$. Let $\bv A_S$ be the corresponding principal submatrix of $\bv{A}$, and let $\tilde \lambda_i(\bv{A})$ be the estimate of $\lambda_i(\bv{A})$ computed from $\bv{A}_S$ as in Algorithm~\ref{alg:nnz eigenvalue estimate}. If $s \geq \frac{c\log^8 n}{\epsilon^8\delta^4}$, for large enough constant $c$,
 then with probability $\ge 1-\delta$, for all $i \in [n]$, $|\tilde  \lambda_i(\bv A) - \lambda_i(\bv A) | \le \epsilon  \sqrt{\nnz(\bv{A})}$.
\end{restatable}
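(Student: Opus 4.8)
The plan is to follow the template behind Theorem~\ref{thm:main_bound}, but to exploit the one extra structural fact that sparsity-proportional sampling buys us: it equalizes the (rescaled) row norms of the sampled submatrix. Write $m \eqdef \nnz(\bv A)$. Since $\norm{\bv A}_\infty \le 1$, each nonzero entry contributes at most $1$ to $\norm{\bv A}_F^2$, so $\sum_i \lambda_i(\bv A)^2 = \norm{\bv A}_F^2 \le m$ and hence at most $1/\epsilon^2$ eigenvalues of $\bv A$ have magnitude $\ge \epsilon\sqrt m$. It therefore suffices to show (i) every eigenvalue of $\bv A$ of magnitude $\gtrsim \epsilon\sqrt m$ is matched to within $\pm\epsilon\sqrt m$ by one of the reported estimates, and (ii) the matrix whose eigenvalues the algorithm reports has no ``spurious'' eigenvalue of magnitude $\gg\epsilon\sqrt m$, so that reporting $0$ for the remaining $n-\abs S$ eigenvalues (and the sign alignment of Figure~\ref{fig:illustration}) costs only $\pm\epsilon\sqrt m$. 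Concretely, fix $\tau = \Theta(\epsilon\sqrt m)$ and split $\bv A = \bv A_{\mathrm{top}} + \bv A_{\mathrm{tail}}$, where $\bv A_{\mathrm{top}} = \bv P\bv A = \bv P\bv A\bv P$ projects onto the span of eigenvectors with $\abs{\lambda_i}\ge\tau$. Then $\rank(\bv A_{\mathrm{top}}) = k = O(1/\epsilon^2)$, $\norm{\bv A_{\mathrm{tail}}}_2\le\tau$, and Weyl's inequality gives $\abs{\lambda_i(\bv A) - \lambda_i(\bv A_{\mathrm{top}})}\le\tau$ for all $i$. Writing the reported estimates as a rescaling of the eigenvalues of the modified submatrix $\wt{\bv A}_S$ produced by Algorithm~\ref{alg:nnz eigenvalue estimate}, it remains to show the (rescaled) eigenvalues of $\wt{\bv A}_S$ are within $O(\epsilon\sqrt m)$ of those of $\bv A_{\mathrm{top}}$.

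The signal part is handled by a subspace-embedding argument. The key point is that $\norm{\bv A}_\infty\le 1$ forces $\norm{\bv A_i}_2 \le \sqrt{\nnz(\bv A_i)}$, and since $\bv A_{\mathrm{top}} = \bv A\bv P$ and projection is a contraction, $\norm{(\bv A_{\mathrm{top}})_i}_2 \le \sqrt{\nnz(\bv A_i)}$ too. Writing $\bv A_{\mathrm{top}} = \bv U\bs\Lambda\bv U^\top$ with $\bv U\in\R^{n\times k}$ orthonormal and $\bs\Lambda$ having all entries of magnitude $\ge\tau$, this gives a leverage-score bound $\ell_i \eqdef \norm{\bv U_i}_2^2 \le \nnz(\bv A_i)/\tau^2 = \Theta\!\big(\nnz(\bv A_i)/(\epsilon^2 m)\big)$. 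Since index $i$ is kept with probability $p_i = \min(1, s\,\nnz(\bv A_i)/m) \gtrsim \min(1, s\epsilon^2\ell_i)$, the sampling probabilities dominate a scaled-up leverage-score distribution for $\colspan(\bv U)$, so standard matrix-Chernoff / approximate-matrix-multiplication bounds show that the appropriately rescaled sampled version of $\bv A_{\mathrm{top}}$ is an $\epsilon$-spectral approximation of $\bv A_{\mathrm{top}}$ once $s = \poly(\log n, 1/\epsilon, 1/\delta)$; by Weyl, this preserves each large eigenvalue of $\bv A_{\mathrm{top}}$ to within $\epsilon\norm{\bv A_{\mathrm{top}}}_2 \le \epsilon\sqrt m$.

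The noise part is the main obstacle, and it is exactly where the entry-zeroing step of Algorithm~\ref{alg:nnz eigenvalue estimate} is needed. We must show the rescaled sampled version of $\bv A_{\mathrm{tail}}$ has spectral norm $O(\epsilon\sqrt m)$. The natural importance-weighted submatrix multiplies entry $(i,j)$ by $1/\sqrt{p_i p_j}$; this keeps the norm of every rescaled row at $\Theta(\sqrt{m/s})$ (this is precisely what nnz-sampling buys), but it over-amplifies individual entries sitting between two very sparse rows --- by as much as a factor $m/s$ in the worst case --- so the crude bound $\norm{\bv W\bv A_{\mathrm{tail}}\bv W}_2 \le \norm{\bv W}_2^2\,\norm{\bv A_{\mathrm{tail}}}_2 = \Theta\!\big(\tfrac{m}{s}\cdot\epsilon\sqrt m\big)$ is useless for sublinear $s$. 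The fix is to zero out exactly those over-amplified ``light--light'' entries from the sampled submatrix. One then argues, using the row-norm bound $\norm{(\bv A_{\mathrm{tail}})_i}_2 \le \sqrt{\nnz(\bv A_i)}$ plus a counting argument bounding how much Frobenius mass can concentrate on such entries, that the truncation deletes only a negligible fraction of $\bv A_{\mathrm{tail}}$'s mass, so the truncated matrix still approximates it; and the truncated, rescaled, sampled matrix now has both bounded entries \emph{and} bounded row norms, which lets one invoke a random-principal-submatrix spectral-norm bound in the spirit of Rudelson--Vershynin / Tropp --- adapted to matrices with bounded \emph{rows} rather than bounded entries --- to conclude $\norm{\cdot}_2 \le O(\epsilon\sqrt m)$ with failure probability $\delta$. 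Chaining the truncation estimate with this concentration bound, and requiring it (and the subspace embedding above) to hold simultaneously without a median-amplification trick, is what forces $s = \Theta\!\big(\log^8 n / (\epsilon^8\delta^4)\big)$.

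Finally, combine the two parts by Weyl: the eigenvalues of the signal-plus-noise decomposition of $\wt{\bv A}_S$ lie within $O(\epsilon\sqrt m)$ of those of $\bv A_{\mathrm{top}}$, hence within $O(\epsilon\sqrt m)$ of those of $\bv A$. The bookkeeping that maps the positive eigenvalues of $\wt{\bv A}_S$ to the top estimates, the negative ones to the bottom estimates, and $0$ to the rest (Figure~\ref{fig:illustration}) goes through verbatim as in Theorem~\ref{thm:main_bound}, since every true eigenvalue not matched this way has magnitude $O(\epsilon\sqrt m)$; rescaling $\epsilon$ by the appropriate constant completes the proof. I expect the noise bound of the third paragraph --- which needs both the zeroing step and a new bounded-row submatrix concentration inequality --- to be by far the most delicate part; the remaining steps are essentially adaptations of the uniform-sampling analysis behind Theorem~\ref{thm:main_bound}.
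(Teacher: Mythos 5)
Your high-level outline --- split into a top (outlying) part and a tail part, use an incoherence/leverage-score bound coming from $\norm{\bv A_i}_2 \le \sqrt{\nnz(\bv A_i)}$ for the top, zero out light--light entries and bound the sampled tail's spectral norm, and then align eigenvalues as in Theorem~\ref{thm:main_bound} --- matches the paper. The signal step also closely parallels Lemma~\ref{lemma:general_row_norm} and Lemma~\ref{lemma:nnz_large}; the only difference is that you use a Hermitian subspace-embedding plus Weyl rather than the paper's main route through Bhatia's non-Hermitian bound (Fact~\ref{fact:weyl_general}); that choice is fine and is in fact the route the paper takes in Appendix~\ref{app:alt_bound}, just with a worse $\epsilon$-dependence.

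The genuine gap is in the truncation estimate. You propose to show that the zeroed-out entries carry ``only a negligible fraction of $\bv A_{\mathrm{tail}}$'s mass'' via a Frobenius-norm counting argument, and then pass from Frobenius to spectral norm. This does not work. Take $\bv A$ tridiagonal with zeros on the main diagonal and ones on the two adjacent diagonals, so $m=\nnz(\bv A)\approx 2n$ and every row has $\nnz(\bv A_i)=2$. Then for every nonzero $(i,j)$, $\nnz(\bv A_i)\nnz(\bv A_j)=4 < \epsilon^2 m / (c\log^2 n)$ for large $n$, so the zeroing step deletes \emph{every} entry: the deleted matrix is $\bv A$ itself, with $\|\bv A\|_F^2 = m$, which is $100\%$ of the Frobenius mass, not $O(\epsilon^2 m)$. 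Nevertheless the deleted matrix has $\|\bv A\|_2 = 2 \le \epsilon\sqrt m$, which is exactly what the theorem needs. So any Frobenius-based bound overshoots by a factor of roughly $\sqrt{m}$; the correct argument must work in spectral norm from the start. The paper's Lemma~\ref{lem:nnz-zeroed} accomplishes this by partitioning rows and columns into logarithmically many sparsity level sets $\mathcal{I}_k$, applying Girshgorin to each block $\bv{\widehat A}_{kl}\bv{\widehat A}_{kl}^T$ to get $\|\bv{\widehat A}_{kl}\|_2 \lesssim m/2^{(k+l)/2}$, and summing the $O(\log^2 n)$ geometric contributions. That structure-exploiting spectral bound (a strengthening of Girshgorin) is the key ingredient you are missing, and it cannot be replaced by mass counting.

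A secondary but real underestimate: you suggest that after zeroing and rescaling, one can ``invoke'' a random-principal-submatrix spectral-norm bound for bounded-row matrices. The paper does not have such an off-the-shelf theorem; Lemma~\ref{lem:nnz-middle} reproves a bound of this type from scratch, via a non-uniform decoupling step (Lemma~\ref{lemma:coupling}), a non-uniform column-sampling spectral bound (Theorem~\ref{thm: rudelson}), and a bespoke Bernstein argument on the scaled column norms $\|(\bar{\bv S}\bv H_m)_{:,i}\|_2 / \sqrt{p_i}$, using the incoherence inequality \eqref{eq:nnz-aoij} to control the $\bv A'_o$-part of each entry. This is the bulk of the work in the sparsity-sampling proof; flagging it as ``invoke an adapted bound'' substantially understates what must be proved.
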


\begin{restatable}[Row Norm Based Matrix Eigenvalue Approximation]{theorem}{rownormSamp}
\label{thm:l2_main_bound}
Let $\bv A \in \mathbb{R}^{n\times n}$ be symmetric and eigenvalues $\lambda_1(\bv{A}) \ge \ldots \ge \lambda_n(\bv{A})$. Let $S \subseteq [n]$ be formed by including the $i$\textsuperscript{th} index independently with probability $p_i=\min \left (1,\frac{s\norm{\bv A_i}^2_2}{\norm{\bv A}^2_F}+\frac{1}{n^2}\right )$ as in Algorithm~\ref{alg:l2_eig_est}. Here $\norm{\bv A_i}_2$ is the $\ell_2$ norm of the $i^{th}$ row of $\bv A$. Let $\bv A_S$ be the corresponding principal submatrix of $\bv{A}$, and let $\tilde \lambda_i(\bv{A})$ be the estimate of $\lambda_i(\bv{A})$ computed from $\bv{A}_S$ as in Algorithm~\ref{alg:l2_eig_est}. If $s \geq  \frac{c \log^{10} n }{\epsilon^8\delta^4}$, for large enough constant $c$, then with probability $\ge 1-\delta$, for all $i \in [n]$, $|\tilde  \lambda_i(\bv A) - \lambda_i(\bv A) | \le \epsilon \norm{\bv{A}}_F.$
\end{restatable}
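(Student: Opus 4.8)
The plan is to mirror the proofs of Theorems~\ref{thm:main_bound} and~\ref{thm:nnz_main_bound}, with the ratio $\norm{\bv A_i}_2^2/\norm{\bv A}_F^2$ playing the role that $s/n$ plays under uniform sampling and $\nnz(\bv A_i)/\nnz(\bv A)$ plays under sparsity-based sampling; the genuinely new work is removing the bounded-entry hypothesis. By construction in Algorithm~\ref{alg:l2_eig_est}, the estimates $\tilde\lambda_i(\bv A)$ are the eigenvalues (padded with zeros and sign-aligned exactly as in Figure~\ref{fig:illustration}) of a zeroed, rescaled version of $\bv A_S$; equivalently, of an explicit sample-dependent symmetric matrix $\hat{\bv A}\in\R^{n\times n}$ built by rescaling each sampled entry so that $\hat{\bv A}$ is unbiased and then zeroing every entry incident to an index $i$ with $p_i$ below a threshold $\tau\asymp\epsilon^2 s/\plog(n)$ (the variance-reduction step). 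Writing $\mathcal L$ for the set of such low-probability indices, split $\bv A=\bv A^{(h)}+\bv A^{(\ell)}$ with $\bv A^{(\ell)}$ collecting the rows and columns in $\mathcal L$. Since $i\in\mathcal L$ forces $\norm{\bv A_i}_2^2\lesssim\tau\norm{\bv A}_F^2/s$, the choice of $\tau$ gives $\norm{\bv A^{(\ell)}}_F^2\le 2\sum_{i\in\mathcal L}\norm{\bv A_i}_2^2=O(\epsilon^2)\norm{\bv A}_F^2$, hence $\norm{\bv A^{(\ell)}}_2\le O(\epsilon)\norm{\bv A}_F$; the $+1/n^2$ term in $p_i$ is what keeps this true without any entrywise bound. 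The diagonal is handled in the same spirit: at most $1/\epsilon^2$ diagonal entries exceed $\epsilon\norm{\bv A}_F$ in magnitude (since $\sum_i\bv A_{ii}^2\le\norm{\bv A}_F^2$), the corresponding rows have norm $\ge\epsilon\norm{\bv A}_F$ and so satisfy $p_i=1$ and are always sampled, while the remaining diagonal has operator norm $\le\epsilon\norm{\bv A}_F$. By Weyl's inequality it then suffices to approximate all eigenvalues of $\bv A^{(h)}$ to additive error $O(\epsilon)\norm{\bv A}_F$, where now $\E[\hat{\bv A}]=\bv A^{(h)}$ and, after the zeroing, every surviving rescaling factor is at most $1/\tau^2\le\plog(n)/(\epsilon^4 s^2)$.

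Next I would use that $\bv A^{(h)}$ is approximately low rank: $\norm{\bv A^{(h)}}_F\le\norm{\bv A}_F$ forces at most $k:=O(1/\epsilon^2)$ of its eigenvalues to exceed $\epsilon\norm{\bv A}_F$ in magnitude, so write $\bv A^{(h)}=\bv H+\bv T$ with $\bv H$ the rank-$k$ truncation to the largest-magnitude eigenvalues and $\norm{\bv T}_2\le\epsilon\norm{\bv A}_F$. The target then becomes a spectral concentration bound confined to the $k$-dimensional span of $\bv H$: with $\bv P$ the orthogonal projection onto that span, I want $\norm{\bv P(\hat{\bv A}-\bv A^{(h)})\bv P}_2$ together with the relevant cross terms to be $O(\epsilon)\norm{\bv A}_F$ with probability $\ge 1-\delta$. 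Restricting to this subspace is essential: for the full deviation $\hat{\bv A}-\bv A^{(h)}$ the operator norm genuinely carries an $n$-dependence that simple examples show cannot be avoided, whereas inside the head the effective dimension is only $k=O(1/\epsilon^2)$, replacing $\log n$ by $\log(1/\epsilon)$ in a matrix-Bernstein-type estimate, and the relevant per-sample operator-norm and variance proxies are of order $\norm{\bv A_i}_2^2/p_i=O(\norm{\bv A}_F^2/s)$, free of $n$, with the bounded rescaling factor from the zeroing keeping the corresponding tail terms in check.

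Establishing this concentration is the crux, and it is where the argument becomes lossy. Because the sampling probabilities are proportional to \emph{squared row norms} rather than to the leverage scores of $\bv H$ -- and because $\bv H$ is itself only an approximation of $\bv A^{(h)}$, so its span interacts with the sampling noise -- this is not a textbook spectral-sparsification instance, and I would not expect a clean matrix-Chernoff bound to apply. Instead I would bound the deviation directly in terms of $\norm{\bv A}_F$ and apply Markov's inequality to a variance computation, reusing the eigenvalue-perturbation and random-submatrix-concentration lemmas developed for Theorems~\ref{thm:main_bound} and~\ref{thm:nnz_main_bound} but with the $\norm{\cdot}_\infty$ bounds on rows there replaced by $\norm{\cdot}_2$ bounds. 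Each such step costs extra factors of $\epsilon$ and $\delta$, which is precisely why the sample complexity degrades to $\epsilon^{-8}\delta^{-4}$ (and the exponent on $\log n$ grows) relative to the $\epsilon^{-3}\delta^{-1}$ of Theorem~\ref{thm:main_bound}. I expect this step -- extracting a dimension-free spectral approximation of the head from row-norm sampling, with no entrywise bound available -- to be the main obstacle.

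Finally I would assemble the pieces. On the head subspace $\hat{\bv A}$ approximates $\bv H$ spectrally to $O(\epsilon)\norm{\bv A}_F$, while the tail $\bv T$, the light part $\bv A^{(\ell)}$, the discarded diagonal, and the component of $\hat{\bv A}$ orthogonal to the head are each $O(\epsilon)\norm{\bv A}_F$ in operator norm. Feeding these into Weyl's inequality controls the top $O(1/\epsilon^2)$ and bottom $O(1/\epsilon^2)$ eigenvalue estimates, and Cauchy interlacing together with the count ``at most $1/\epsilon^2$ eigenvalues of $\bv A$, and of the modified $\bv A_S$, exceed $\epsilon\norm{\bv A}_F$ in magnitude'' shows that every remaining eigenvalue -- which the alignment of Figure~\ref{fig:illustration} maps to a zero estimate -- lies within $\epsilon\norm{\bv A}_F$ of $0$. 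This yields $|\tilde\lambda_i(\bv A)-\lambda_i(\bv A)|\le O(\epsilon)\norm{\bv A}_F$ for all $i$; rescaling $\epsilon$ by a constant, and (to improve the dependence on $\delta$) repeating $O(\log(1/\delta))$ times and taking coordinatewise medians as in the remark after Corollary~\ref{cor:entrywise}, completes the proof.
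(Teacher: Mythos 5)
There is a genuine gap at the very first quantitative step of your proposal, and it is fatal. You claim that since each $i \in \mathcal{L}$ has $\norm{\bv A_i}_2^2 \lesssim \tau\norm{\bv A}_F^2/s$, one gets
\[
\norm{\bv A^{(\ell)}}_F^2 \le 2\sum_{i \in \mathcal{L}}\norm{\bv A_i}_2^2 = O(\epsilon^2)\norm{\bv A}_F^2.
\]
But the sum over $\mathcal{L}$ is only controlled by the number of terms times the per-term bound, and $|\mathcal{L}|$ can be as large as $n$, in which case the sum can equal $\norm{\bv A}_F^2$. Take $\bv A = \bv v \bv v^T$ with $\bv v = (1/\sqrt{n},\ldots,1/\sqrt{n})$: then $\norm{\bv A}_F = 1$, every row has $\norm{\bv A_i}_2 = 1/\sqrt{n}$, so $p_i \approx s/n + 1/n^2$ and, for $n > \plog(n)/\epsilon^2$, \emph{every} index is in $\mathcal L$. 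Your $\bv A^{(\ell)}$ is then all of $\bv A$, its operator norm is $1 = \norm{\bv A}_F$, and the one nonzero eigenvalue of $\bv A$ is incorrectly estimated as $0$. The $1/n^2$ term in $p_i$ does not rescue this; its role in the paper is only to guarantee $p_i \ge 1/n^2$ so that the deterministic upper bound in a Markov/expectation step is finite, not to bound the mass removed by thresholding.

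The deeper problem is that your proposed variance-reduction step---zeroing every entry incident to a low-probability \emph{index}---is not what Algorithm~\ref{alg:l2_eig_est} does, and cannot be made to work. The paper's zeroing rule for off-diagonal entries is $\norm{\bv A_i}_2^2 \norm{\bv A_j}_2^2 < \epsilon^2 \norm{\bv A}_F^2 |\bv A_{ij}|^2/(c_2\log^4 n)$: the condition is per-entry and, crucially, is calibrated against the entry's own magnitude $|\bv A_{ij}|$. An entry sitting in two low-norm rows is \emph{kept} when it is small (its rescaled contribution is then harmless), and zeroed only when it is large enough to cause trouble. This is exactly what is happening in the $\bv v\bv v^T$ example, where the paper does not zero out $\bv A_{ij} = 1/n$ because $\norm{\bv A_i}_2^2\norm{\bv A_j}_2^2 = |\bv A_{ij}|^2$ is comparable to $|\bv A_{ij}|^2$, not $\epsilon^2/\plog(n)$ times smaller. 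Controlling the spectral perturbation of this zeroing is then not a one-line Frobenius estimate but the genuinely new content of Lemma~\ref{lem:l2-zeroed}: one decomposes the entries into dyadic level sets by $|\bv A_{ij}|$, splits each level set further by row sparsity, applies Gershgorin's theorem block by block, and sums a geometric series. Your proposal treats this as if the earlier lemmas for Theorems~\ref{thm:main_bound} and~\ref{thm:nnz_main_bound} could simply be re-run with $\norm{\cdot}_\infty$ replaced by $\norm{\cdot}_2$, but the perturbation lemma needed here has no analogue in the uniform-sampling proof (there is no zeroing there at all), and differs materially from the sparsity-based Lemma~\ref{lem:nnz-zeroed} because the zeroing condition must reference the entry magnitude, not only the row weights.

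Finally, the downstream structure you describe also departs from the paper's. Once the zeroing lemma is in place, the paper does not take a rank-$k$ head/tail split of $\bv A^{(h)}$ and restrict to the span of the head. It splits $\bv A'$ by eigenvalue magnitude into $\bv A'_o + \bv A'_m$ (Definition~\ref{def:split}), proves an incoherence bound for $\bv V'_o$ weighted by $\norm{\bv A_i}_2^2$ (Lemma~\ref{lemma:l2_row_norm}), uses matrix Bernstein plus Bhatia's non-Hermitian perturbation bound (Fact~\ref{fact:weyl_general}) for the outlying part, and a decoupling + non-uniform Rudelson-type bound (Lemma~\ref{lemma:coupling}, Theorem~\ref{thm: rudelson}) plus a scalar Bernstein argument for $\norm{\bv A'_{m,S}}_2$. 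These are not interchangeable with the head-subspace concentration you sketch, and the latter is not fleshed out enough to verify; but even granting it, the proposal fails at the zeroing step before it gets there.
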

The above non-uniform sampling theorems immediately yield algorithms for testing the presence of a negative eigenvalue with magnitude at least $\epsilon \sqrt{\nnz(\bv{A})}$ or $\epsilon \norm{\bv A}_F$ respectively, strengthening the testing results of \cite{BakshiChepurkoJayaram:2020}, which require eigenvalue magnitude at least $\epsilon n$.  In the graph property testing literature, there is a rich line of work exploring the testing of bounded degree or sparse graphs \cite{GoldreichRon:1997,BenjaminiSchrammShapira:2010}. Theorem \ref{thm:nnz_main_bound} can be thought of as  first step in establishing a related theory of sublinear time approximation algorithms and property testers for sparse matrices.

Surprisingly, in the non-uniform sampling case, the eigenvalue estimates derived from $\bv{A}_S$ cannot simply be its scaled eigenvalues, as in Theorem \ref{thm:main_bound}. 
 E.g., when $\bv{A}$ is the identity, our row sampling probabilities are uniform in all cases. However, the scaled submatrix $\frac{n}{s} \cdot \bv{A}_S$ will be a scaled identity, and have eigenvalues equal to $n/s$ -- failing to give a $\pm \epsilon \sqrt{\nnz(\bv A)} = \pm \epsilon \norm{\bv A}_F = \pm \epsilon \sqrt{n}$ approximation to the true eigenvalues (all of which are $1$) unless $s \gtrsim \frac{\sqrt{n}}{\epsilon}$. To handle this, and related cases, we must argue that selectively zeroing out entries in sufficiently low probability rows/columns of $\bv{A}$ (see Algorithms \ref{alg:nnz eigenvalue estimate} and \ref{alg:l2_eig_est}) does not significantly change the spectrum, and ensures concentration of the submatrix eigenvalues. It is  not hard to see that simple random submatrix sampling fails even for the easier problem of singular value estimation. Theorems \ref{thm:nnz_main_bound} and \ref{thm:l2_main_bound}  give the first results of their kinds for this problem as well. 

\subsection{Related Work}\label{sec: related work}

Eigenspectrum estimation is a key primitive in numerical linear algebra, typically known as \emph{spectral density estimation}. The eigenspectrum is viewed as a distribution with mass $1/n$ at each of the $n$ eigenvalues, and the goal is to approximate this distribution \cite{weisse2006kernel,lin2016approximating}. Applications include identifying motifs in social networks \cite{dong2019network}, studying Hessian and weight matrix spectra in deep learning \cite{sagun2016eigenvalues,yao2018hessian,ghorbani2019investigation}, `spectrum splitting' in parallel eigensolvers \cite{li2019eigenvalues}, and the study of many systems in experimental physics and chemistry \cite{wang1994calculating,silver1994densities,helsen2019spectral}.

Recent work has studied sublinear time spectral density estimation for graph structured matrices -- Braverman, Krishnan, and Musco \cite{braverman2021linear} show that the spectral density of a normalized graph adjacency or Laplacian matrix can be estimated to $\epsilon$ error in the Wasserstein distance in $\tilde O(n/\poly(\epsilon))$ time. Cohen-Steiner, Kong, Sohler, and Valiant study a similar setting, giving runtime $2^{O(1/\epsilon)}$ \cite{cohen2018approximating}. We note that the additive error eigenvalue approximation result of Theorem \ref{thm:main_bound} (analogously Theorems \ref{thm:nnz_main_bound} and \ref{thm:l2_main_bound}) directly gives an $\epsilon n$ approximation to the spectral density in the Wasserstein distance -- extending the above results to a much broader class of matrices. When $\norm{\bv A}_\infty \le 1$, $\bv{A}$ can have eigenvalues as large as $n$, while the normalized adjacency matrices studied in \cite{cohen2018approximating,braverman2021linear} have eigenvalues in $[-1,1]$. So, while the results are not directly comparable, our Wasserstein error can be thought as on order of their error of $\epsilon$ after scaling. 

Our work is also closely related to a line of work on sublinear time property testing for bounded entry matrices, initiated by Balcan et al.~\cite{balcan2019testing}. In that work, they study testing of rank, Schatten-$p$ norms, and several other global spectral properties. Sublinear time testing algorithms for the rank and other properties have also been studied under low-rank and bounded row norm assumptions on the input matrix \cite{krauthgamer2003property,li2014improved}. Recent work studies positive semidefiniteness testing and eigenvalue estimation in the matrix-vector query model, where each query computes $\bv{A} \bv{x}$ for some $\bv{x} \in \R^{n \times n}$. As in Theorem \ref{thm:l2_main_bound},  $\pm \epsilon \norm{\bv A}_F$ eigenvalue estimation can be achieved with $\poly(\log n,1/\epsilon)$ queries  in this model \cite{Needell:2022uq}. Finally, several works  study streaming algorithms for eigenspectrum approximation
\cite{andoni2013eigenvalues,li2014sketching,li2016approximating}. These algorithms are not sublinear time -- they require at least linear time to process the input matrix. However, they  use sublinear working memory. Note that Theorem \ref{thm:main_bound} immediately gives a sublinear space streaming algorithm for eigenvalue estimation. We can simply store the sampled submatrix $\bv{A}_S$ as its entries are  updated. 

\subsection{Technical Overview}

In this section, we overview the main techniques used to prove Theorems~\ref{thm:main_bound}, and then how these techniques are extended to prove Theorems \ref{thm:nnz_main_bound} and \ref{thm:l2_main_bound}. We start by defining a  decomposition of any symmetric $\bv A$ into the sum of two matrices containing its large and small magnitude eigendirections.

\begin{definition}[Eigenvalue Split]\label{def:split}
Let $\bv{A} \in \R^{n \times n}$ be symmetric. For any $\epsilon,\delta \in (0,1)$, let $\bv A_o= \bv V_o\bv \Lambda_o \bv V_o^T$ where $\bv \Lambda_o$ is diagonal, with the eigenvalues of $\bv A$ with magnitude $\ge \epsilon \sqrt{\delta} n$ on its diagonal, and  $\bv{V}_o$ has the corresponding eigenvectors as columns. Similarly, let $\bv A_m=\bv V_m\bv \Lambda_m \bv V_m^T$ where $\bv \Lambda_m$ has the eigenvalues of $\bv A$ with magnitude $ <\epsilon \sqrt{\delta} n$ on its diagonal and $\bv{V}_m$ has the corresponding eigenvectors as columns. Then, $\bv A$ can be decomposed as $$\bv A = \bv A_o + \bv A_m=  \bv V_o\bv \Lambda_o\bv V_o^T + \bv V_m\bv \Lambda_m\bv V_m^T.$$
Any principal submatrix of $\bv A$, $\bv{A}_S$, can be similarly written as $$\bv{A}_S = \bv{A}_{o,S} + \bv{A}_{m,S}=\bv V_{o,S}\bv\Lambda_{o}\bv V_{o,S}^T+\bv V_{m,S}\bv\Lambda_{m}\bv V_{m,S}^T,$$ where $\bv V_{o,S},\bv V_{m,S}$ are the corresponding submatrices obtained by sampling rows of $\bv{V}_o,\bv{V}_m$.
\end{definition}

Since $\bv A_S$, $\bv A_{m,S}$ and $\bv A_{o,S}$ are all symmetric, we can use Weyl's eigenvalue perturbation theorem~\cite{weyl1912asymptotic} to show that for all eigenvalues of $\bv{A}_S$, 
\begin{align}\label{eq:wel}\lvert \lambda_i(\bv A_S) -\lambda_i(\bv A_{o,S}) \rvert \leq \|\bv A_{m,S} \|_2.
\end{align}

We will argue that the eigenvalues of $\bv{A}_{o,S}$ approximate those of $\bv{A}_o$ -- i.e. all eigenvalues of  $\bv{A}$ with magnitude $\ge \epsilon \sqrt{\delta} n$. Further, we will show that $\norm{\bv{A}_{m,S}}_2$ is small with good probability. Thus, via \eqref{eq:wel}, the eigenvalues of $\bv{A}_S$ approximate those of $\bv{A}_o$. In the estimation procedure of Theorem \ref{thm:main_bound},  all other  small magnitude eigenvalues of $\bv{A}$ are estimated to be $0$, which will immediately give our $\pm \epsilon n$ approximation bound when the original eigenvalue has magnitude $\le \epsilon n$.

\medskip

\noindent\textbf{Bounding the eigenvalues of $\bv A_{o,S}$.} The first step is to show that the eigenvalues of $\bv{A}_{o,S}$ well-approximate those of $\bv{A}_o$. As in \cite{BakshiChepurkoJayaram:2020}, we critically use  that the eigenvectors corresponding to large eigenvalues are \emph{incoherent} -- intuitively, since $\norm{\bv{A}}_\infty$ is bounded, their mass must be spread out in order to witness a large eigenvalue. Specifically,  \cite{BakshiChepurkoJayaram:2020} shows that for any eigenvector $\bv{v}$ of $\bv{A}$ with corresponding eigenvalue $\ge \epsilon \sqrt{\delta} n$, $\norm{\bv{v}}_\infty \le \frac{1}{\epsilon \sqrt{\delta} \cdot \sqrt{n}}$. We give related bounds on the Euclidean norms of the rows of $\bv{V}_o$ (the \emph{leverage scores} of $\bv{A}_o$), and on these rows after  weighting by $\bv \Lambda_o$.

Using these incoherence bounds, we argue that the eigenvalues of $\bv{A}_{o,S}$ approximate those of $\bv{A}_{o}$ up to $\pm \epsilon n$ error. A key idea is to bound the eigenvalues of $\bv{\Lambda}_o^{1/2} \bv V_{o,S}^T \bv V_{o,S} \bv{\Lambda}_o^{1/2}$, which are identical to the non-zero eigenvalues of $\bv{A}_{o,S} = \bv V_{o,S} \bv{\Lambda}_o \bv V_{o,S}^T$. Via a matrix Bernstein bound and our incoherence bounds on $\bv V_{o}$, we show that this matrix is close to  $\bv{\Lambda}_o$ with high probability. However, since $\bv{\Lambda}_o^{1/2}$ may be complex, the matrix is \emph{not necessarily Hermitian} and standard perturbation bounds~\cite{SS1990,HJ2012} do not apply. Thus, to derive an eigenvalue bound, we apply a perturbation bound of Bhatia \cite{bhatia2013matrix}, which generalizes Weyl's inequality to the non-Hermitian case, with a $\log n$ factor loss. To the best of our knowledge, this is the first time that perturbation theory bounds for non-Hermitian matrices have been used to prove improved algorithmic results in the theoretical computer science literature.

We note that in Appendix \ref{app:alt_bound}, we give an alternate bound, which instead analyzes the Hermitian matrix $(\bv{V}_{o,S}^T  \bv{V}_{o,S})^{1/2} \bv{\Lambda}_{o} (\bv{V}_{o,S}^T \bv{V}_{o,S})^{1/2}$, whose eigenvalues are again identical to those of $\bv{A}_{o,S}$. This approach only requires Weyl's inequality, and yields an overall bound of $s = O \left (\frac{\log n}{\epsilon^4 \delta} \right )$, improving the $\log n$ factors of Theorem \ref{thm:main_bound} at the cost of worse $\epsilon$ dependence.

\medskip

\noindent\textbf{Bounding the spectral norm of $\bv A_{m,S}$.} 
The next step is to show that all eigenvalues of $\bv A_{m,S}$ are small provided a sufficiently large submatrix is sampled. This means that the ``middle'' eigenvalues of $\bv A$, \emph{i.e.} those with magnitude $\le \epsilon \sqrt{\delta} n$ do not contribute much to any eigenvalue $\lambda_i(\bv A_S)$. To do so, we apply a theorem of \cite{rudelson2007sampling,tropp2008norms} which shows concentration of the spectral norm of a uniformly random submatrix of an entrywise bounded matrix
Observe that while $\norm{\bv{A}}_\infty \le 1$, such a bound will not in general hold for $\norm{\bv{A}_m}_\infty$. Nevertheless, we can use the incoherence of $\bv{V}_o$ to show that $\norm{\bv A_o}_\infty$ is bounded, which via triangle inequality, yields a bound on $\norm{\bv{A}_m}_\infty \le \norm{\bv{A}}_\infty+ \norm{\bv{A}_o}_\infty$. In the end, we show  that if  $s \geq {O}(\frac{ \log n}{\epsilon^2 \delta})$, with probability at least $1-\delta$, $\|\bv A_{m,S} \|_2 \leq \epsilon s$. After the $n/s$ scaling in the estimation procedure of Theorem \ref{thm:main_bound}, this spectral norm bound translates into an additive $\epsilon n$ error in approximating the eigenvalues of $\bv{A}$. 
\medskip

\noindent\textbf{Completing the argument.}
Once we establish the above bounds on $\bv{A}_{o,S}$ and $\bv{A}_{m,S}$, Theorem \ref{thm:main_bound} is essentially complete. Any eigenvalue in $\bv{A}$ with magnitude $\ge \epsilon n$ will correspond to a nearby eigenvalue in $\frac{n}{s} \cdot \bv{A}_{o,S}$ and in turn, $\frac{n}{s} \cdot \bv{A}_S$ given our spectral norm bound on $\bv{A}_{m,S}$. An eigenvalue in $\bv{A}$ with magnitude $\le \epsilon n$ may or may not correspond to a nearby by eigenvalue in $\bv{A}_{o,S}$ (it will only if it lies in the range $[\epsilon \sqrt{\delta} n,\epsilon n]$). In any case however, in the estimation procedure of Theorem \ref{thm:main_bound}, such an eigenvalue will either be estimated using a small eigenvalue of $\bv{A}_S$, or be estimated as $0$. In both instances, the estimate will give $\pm \epsilon n$ error, as required.

\medskip

\noindent\textbf{Can we beat additive error?} It is natural to ask if our approach can be improved to yield sublinear time algorithms with stronger relative error approximation guarantees for $\bv{A}$'s eigenvalues. Unfortunately, this is not possible -- consider a matrix with just a single pair of entries $\bv{A}_{i,j},\bv{A}_{j,i}$ set to $1$. To obtain relative error approximations to the two non-zero eigenvalues, we must find the pair $(i,j)$, as otherwise we cannot distinguish $\bv{A}$ from the all zeros matrix. This requires reading a $\Omega(n^2)$ of $\bv{A}$'s entries. More generally, consider $\bv{A}$ with a random $n/t \times n/t$ principal submatrix populated by all $1$s, and with all other entries equal to $0$. $\bv{A}$ has largest eigenvalue $n/t$. However, if we read $s \ll t^2$ entries of $\bv{A}$, with good probability, we will not see even a single one, and thus we will not be able to distinguish $\bv{A}$ from the all zeros matrix. This example establishes that any sublinear time algorithm with query complexity $s$ must incur additive error at least $\Omega(n/\sqrt{s})$.

\subsubsection{Improved Bounds via Non-Uniform Sampling}\label{sec:sparsityOverview}

We now discuss how to give improved approximation bounds via non-uniform sampling. We focus  on the $\pm \epsilon \sqrt{\nnz(\bv{A})}$  bound of Theorem \ref{thm:nnz_main_bound} using sparsity-based sampling. The proof of Theorem \ref{thm:l2_main_bound} for row norm sampling follows the same general ideas, but with some additional complications.

 Theorem \ref{thm:nnz_main_bound} requires sampling a submatrix $\bv{A}_S$, where each index $i$  is included in $S$ with probability $p_i = \min (1, \frac{s \nnz(\bv{A}_i)}{\nnz(\bv{A})})$. We reweight each sampled row by $\frac{1}{\sqrt{p_i}}$. Thus, if entry $\bv{A}_{ij}$ is sampled, it is scaled by $\frac{1}{\sqrt{p_i \cdot p_j}}$. When the rows have uniform sparsity (so all $p_i = s/n$), this ensures that the full submatrix is scaled  by $n/s$, as in Theorem \ref{thm:main_bound}.

The proof of Theorem \ref{thm:nnz_main_bound} follows the same outline as that of Theorem \ref{thm:main_bound}: we first argue that the outlying eigenvectors in $\bv{V}_o$ are incoherent, giving a bound on the norm of each row of $\bv{V}_o$ in terms of $\nnz(\bv A_i)$. We then apply a matrix Bernstein bound and Bhatia's non-Hermitian eigenvalue perturbation bound to show that the eigenvalues of $\bv{A}_{o,S}$ approximate those of $\bv{A}_o$ up to $\pm \epsilon \sqrt{\nnz(\bv{A})}$.

\medskip

\noindent\textbf{Bounding the spectral norm of $\bv{A}_{m,S}$.}
The major challenge is showing that the subsampled middle eigendirections do not significantly increase the approximation error by bounding the $\norm{\bv{A}_{m,S}}_2$ by $\epsilon \sqrt{\nnz(\bv{A})}$. This is difficult since the indices in $\bv{A}_{m,S}$ are sampled nonuniformly, so existing bounds \cite{tropp2008norms} on the spectral norm of uniformly random submatrices do not apply. We extend these bounds to the non-uniform sampling case, but still face an issue due to the rescaling of entries by $\frac{1}{\sqrt{p_i p_j}}$. In fact, without additional algorithmic modifications, $\norm{\bv{A}_{m,S}}_2$ is simply not bounded by $\epsilon \sqrt{\nnz(\bv{A})}$! For example, as already discussed, if $\bv{A} = \bv{I}$ is the identity matrix, we get $\bv{A}_{m,S} = \frac{n}{s} \cdot \bv{I}$ and so $\norm{\bv{A}_{m,S}}_2 = \frac{n}{s} >\epsilon \sqrt{\nnz(\bv{A})}$, assuming $s < \frac{\sqrt{n}}{\epsilon}$. 
Relatedly, suppose that $\bv{A}$ is tridiagonal, with zeros on the diagonal and ones on the first diagonals above and below the main diagonal. Then, if $s \ge \sqrt{n}$, with constant probability, one of the ones will be sampled and scaled by $\frac{n}{s}$. Thus, we will again have $\norm{\bv{A}_{m,S}}_2 \ge \frac{n}{s} \ge \epsilon \sqrt{\nnz(\bv A)}$, assuming $s < \frac{\sqrt{n}}{2\epsilon}$. Observe that this issue arrises even when trying to approximate just the singular values (the eigenvalue magnitudes) via sampling. Thus, while an analogous bound to the uniform sampling result of Theorem \ref{thm:main_bound} can easily be given for singular value estimation via matrix concentration inequalities (see Appendix \ref{sec:singular}), to the best of our knowledge, Theorems \ref{thm:nnz_main_bound} and \ref{thm:l2_main_bound} are the first of their kind even for singular value estimation.

\medskip

\noindent\textbf{Zeroing out entries in sparse rows/columns.}
To handle the above cases, we prove a novel perturbation bound, arguing that if we zero out any entry $\bv{A}_{ij}$ of $\bv{A}$ where $\sqrt{\nnz(\bv{A}_i) \cdot \nnz(\bv{A}_j)} \le  \frac{\epsilon \sqrt{\nnz(\bv{A})}}{c\log n}$, then the eigenvalues of $\bv{A}$ are not perturbed by more than $\epsilon \sqrt{\nnz(\bv{A})}$. This can be thought of as a strengthening of Girshgorin's circle theorem, which would ensure that zeroing out entries in rows/columns with $\nnz(\bv{A}_i) \le \epsilon \sqrt{\nnz(\bv A)}$ does not perturb the eigenvalues by more than $\epsilon \sqrt{\nnz(\bv A)}$. Armed with this perturbation bound, we argue that if we zero out the appropriate entries of $\bv{A}_{S}$ before computing its eigenvalues, then since we have removed  entries in very sparse rows and columns which would be scaled by a large $\frac{1}{\sqrt{p_i p_j}}$ factor in $\bv{A}_S$, we can bound $\norm{\bv{A}_{m,S}}_2$. This requires relating the magnitudes of the entries in $\bv{A}_{m,S}$ to those in $\bv{A}_S$ using the incoherence of the top eigenvectors, which gives bounds on the entries of $\bv{A}_{o,S} = \bv{A}_S - \bv{A}_{m,S}$. 

\medskip

\noindent \textbf{Sampling model.} We note that the sparsity-based sampling of Theorem \ref{thm:nnz_main_bound} can be efficiently implemented in several natural settings.
Given a matrix stored in sparse format, i.e., as a list of nonzero entries, we can easily sample a row with probability $\frac{\nnz(\bv{A}_i)}{\nnz(\bv{A})}$ by sampling a uniformly random non-zero entry and looking at its corresponding row. Via standard techniques, we can convert several such samples into a sampled set $S$ close in distribution to having each $i \in [n]$ included independently with probability $\min \left (1, \frac{s \nnz(\bv{A}_i)}{\nnz(\bv{A})}\right)$. If we store the values of $\nnz(\bv A), \nnz(\bv{A}_1),\ldots,\nnz(\bv{A}_n)$, we can also efficiently access each $p_i$, which is needed for rescaling and zeroing out entries. Also observe that if $\bv A$ is the adjacency matrix of a graph, in the standard graph query model \cite{GoldreichRon:1997}, it is well known how to approximately count edges and sample them uniformly at random, i.e., compute $\nnz(\bv{A})$ and sample its nonzero entries, in sublinear time~\cite{GoldreichRon:2008,EdenRosenbaum:2018}.  Further, it is typically assumed that one has access to the node degrees, i.e., $\nnz(\bv{A}_1),\ldots,\nnz(\bv{A}_n)$. Thus, our algorithm can naturally be used to estimate spectral graph properties in sublinear time.

The $\ell_2$ norm-based sampling of Theorem \ref{thm:l2_main_bound} can also be performed efficiently using an augmented data structure for storing $\bv A$. Such data structures have been used extensively in the literature on quantum-inspired algorithms, and require just $O(\nnz(\bv A))$ time to construct, $O(\nnz(\bv A))$ space, and $O(\log n)$  time to update give an update to an entry of $\bv A$ \cite{Tang:2018,Chepurko:2020tj}.

\subsection{Towards Optimal Query Complexity}\label{optimal} 

As discussed, Bakshi et al.~\cite{BakshiChepurkoJayaram:2020} show that any algorithm which can test with good probability whether $\bv{A}$ has an eigenvalue $\le -\epsilon n$ or else has all non-negative eigenvalues must read $\tilde \Omega \left (\frac{1}{\epsilon^2} \right )$ entries of $\bv{A}$.
This testing problem is strictly easier than outputting $\pm \epsilon n$ error estimates of all eigenvalues, so gives a lower bound for our setting.
If the queried entries are restricted to fall in a submatrix, \cite{BakshiChepurkoJayaram:2020} shows that this 
 submatrix must have dimensions $ \Omega \left (\frac{1}{\epsilon^2}\right ) \times \Omega \left (\frac{1}{\epsilon^2}\right )$, giving total query complexity $\Omega \left (\frac{1}{\epsilon^4} \right)$.
 Closing the gap between our upper bound of $\tilde O\left (\frac{\log^3 n}{\epsilon^3} \right ) \times  \tilde O\left (\frac{\log^3 n}{\epsilon^3} \right )$ and the lower bound  of $\Omega \left (\frac{1}{\epsilon^2} \right ) \times \Omega \left (\frac{1}{\epsilon^2} \right )$ for submatrix queries is an intriguing open question.  
 
 We show in Appendix \ref{app:psd} that this gap can be easily closed via a surprisingly simple argument if  $\bv{A}$ is positive semidefinite (PSD). In that case, $\bv{A} = \bv{B} \bv{B}^T$ with $\bv{B} \in \R^{n \times n}$. Writing $\bv{A}_S = \bv{S}^T \bv{A} \bv{S}$ for a sampling matrix $\bv{S} \in \R^{n \times |S|}$,  the non-zero eigenvalues of $\bv{A}_S$ are identical to those of $\bv{B}\bv{S}\bv{S}^T \bv{B}^T$. Via a standard approximate matrix multiplication analysis \cite{drineas2001fast}, one can then show that, for $s \ge \frac{1}{\epsilon^2 \delta}$, with probability at least $1-\delta$, $\norm{\bv{BB}^T - \bv{B}\bv{S}\bv{S}^T \bv{B}}_F \le \epsilon n$. Via Weyl's  inequality, this shows that the eigenvalues of $\bv{B}\bv{S}\bv{S}^T \bv{B}$, and hence $\bv{A}_S$, approximate those of $\bv{A}$ up to $\pm \epsilon n$ error.\footnote{In fact, via more refined eigenvalue perturbation bounds~\cite{bhatia2013matrix} one can show an $\ell_2$ norm bound on the eigenvalue approximation errors, which can be much stronger than the $\ell_\infty$ norm bound of Theorem~\ref{thm:main_bound}.}
  
 Unfortunately, this approach breaks down when $\bv{A}$ has negative eigenvalues, and so cannot be factored as $\bv{BB}^T$ for real $\bv{B} \in \R^{n \times n}$. This is more than a technical issue: observe that when $\bv{A}$ is PSD and has $\norm{\bv{A}}_\infty \le 1$, it can have at most $1/\epsilon$ eigenvalues larger than $\epsilon n$ -- since its trace, which is equal to the sum of its eigenvalues, is bounded by $n$, and since all eigenvalues are non-negative. When $\bv{A}$ is not PSD, it can have $\Omega(1/\epsilon^2)$ eigenvalues with magnitude larger than $\epsilon n$. In particular, if $\bv{A}$ is the tensor product of a $1/\epsilon^2 \times 1/\epsilon^2$ random $\pm 1$ matrix and the $ \epsilon^2 n \times \epsilon^2 n$ all ones matrix, the bulk of its eigenvalues (of which there are $1/\epsilon^2$) will concentrate around $1/\epsilon \cdot \epsilon^2 n  = \epsilon n$. As a result it remains unclear whether we can match the $1/\epsilon^2$ dependence of the PSD case, or if a stronger lower bound can be shown for indefinite matrices. 
 
Outside the $\epsilon$ dependence, it is unknown if full eigenspectrum approximation can be performed with sample complexity independent of the matrix size $n$. \cite{BakshiChepurkoJayaram:2020} achieve this for the easier positive semidefiniteness testing problem, giving sample complexity $\tilde O(1/\epsilon^2)$. However our bounds have additional $\log n$ factors. As discussed, in Appendix \ref{app:alt_bound} we give an alternate analysis for Theorem \ref{thm:main_bound}, which shows that sampling a $ O \left (\frac{\log n}{\epsilon^4 \delta} \right ) \times  O \left (\frac{\log n}{\epsilon^4 \delta} \right )$ submatrix suffices for $\pm \epsilon n$ eigenvalue approximation, saving a $\log^2 n$ factor at the cost of worse $\epsilon$ dependence. However, removing the final $\log n$ seems difficult -- it arises when bounding  $\norm{\bv A_{m,S}}_2$ via bounds on the spectral norms of random principal submatrices \cite{rudelson2007sampling}. Removing it seems as though it would require either improving such bounds, or taking a different algorithmic approach.
   
Also note that our $\log n$ and $\epsilon$ dependencies for non-uniform sampling (Theorems \ref{thm:nnz_main_bound} and \ref{thm:l2_main_bound}) are likely not tight. It is not hard to check that the lower bounds of \cite{BakshiChepurkoJayaram:2020} still hold in these settings. For example, in the sparsity-based sampling setting, by simply having the matrix entirely supported on a $\sqrt{\nnz(\bv{A})} \times \sqrt{\nnz(\bv{A})}$ submatrix, the lower bounds of \cite{BakshiChepurkoJayaram:2020} directly carry over. Giving tight query complexity bounds here would also be interesting. Finally, it would  be interesting to go beyond principal submatrix based algorithms, to achieve improved query complexity, as in Corollary \ref{cor:entrywise}. Finding an algorithm matching the $\tilde O\left (\frac{1}{\epsilon^2} \right )$ overall query complexity lower bound of \cite{BakshiChepurkoJayaram:2020} is open even in the much simpler PSD setting.
  
\section{Notation and Preliminaries}

We now define notation and foundational results that we use throughout our work. For any integer $n$, let $[n]$ denote the set $\{1,2,\ldots,n\}$. 
We write matrices and vectors in bold literals -- e.g., $\bv A$ or $\bv{x}$. We denote the eigenvalues of a symmetric matrix $\bv{A} \in \R^{n \times n}$ by $\lambda_1(\bv A) \ge \ldots \ge \lambda_n(\bv{A})$, in decreasing order.
A symmetric matrix is positive semidefinite if all its eigenvalues are non-negative. For two matrices $\bv A, \bv B$, we let $\bv A \succeq \bv B$ denote that $\bv A - \bv B$ is positive semidefinite.
For any matrix $\bv A \in \R^{n\times n}$ and $i\in [n]$, we let $\bv{A}_i$ denote the $i^{th}$ row of $\bv{A}$, $\nnz(\bv A_i)$ denote the number of non-zero elements in this row, and $\norm{\bv A_i}_2$ denote its $\ell_2$ norm. We let $\nnz(\bv A)$ denote the total number of non-zero elements $\bv A$. For a vector $\bv{x}$, we let $\norm{\bv{x}}_2$ denote its Euclidean norm. For a matrix $\bv{A}$, we let $\|\bv A\|_{\infty}$ denote the largest magnitude of an entry, $\|\bv A\|_2 = \max_{\bv{x}} \frac{\norm{\bv{Ax}}_2}{\norm{\bv{x}}_2}$ denote the spectral norm, $\|\bv A\|_F = (\sum_{i,j} \bv A_{ij}^2)^{1/2}$ denote the Frobenius norm, and $\|\bv A\|_{1 \rightarrow 2}$ denote the maximum Euclidean norm of a column. For $\bv{A} \in \R^{n \times n}$ and $S \subseteq [n]$ we let $\bv{A}_S$ denote the principal submatrix corresponding to $S$. We let $\mathbb{E}_2$ denote the $L_2$ norm of a random variable, $\mathbb{E}_2[X] = (\mathbb{E}[X^2])^{1/2}$, where $\E[\cdot]$ denotes expectation.

We use the following basic facts and identities on eigenvalues throughout our proofs. 

\begin{fact}[Eigenvalue of Matrix Product]\label{def: equality of eigenvalues}
For any two matrices $\bv A \in \C^{n\times m}, \bv B \in \C^{m\times n}$, the non-zero eigenvalues of $\bv{AB}$ are identical to those of $\bv B\bv A$.
\end{fact}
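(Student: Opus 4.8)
The plan is to establish Fact~\ref{def: equality of eigenvalues} in two stages: first show that the \emph{sets} of nonzero eigenvalues of $\bv{AB}$ and $\bv{BA}$ coincide by a direct eigenvector-transfer argument, and then upgrade this to equality \emph{with algebraic multiplicities} (which also cleanly handles the fact that $\bv{AB} \in \C^{n\times n}$ and $\bv{BA}\in\C^{m\times m}$ may have different dimensions) via a characteristic-polynomial identity.

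For the set-level claim: suppose $\lambda \ne 0$ is an eigenvalue of $\bv{AB}$ with eigenvector $\bv x \ne 0$, so $\bv{ABx} = \lambda \bv x$. First note $\bv{Bx} \ne 0$, since otherwise $\lambda \bv x = \bv A(\bv{Bx}) = 0$, contradicting $\lambda \ne 0$ and $\bv x \ne 0$. Multiplying $\bv{ABx}=\lambda\bv x$ on the left by $\bv B$ gives $\bv{BA}(\bv{Bx}) = \lambda(\bv{Bx})$, so $\lambda$ is an eigenvalue of $\bv{BA}$ with eigenvector $\bv{Bx}$. Swapping the roles of $\bv A$ and $\bv B$ gives the reverse inclusion, so $\bv{AB}$ and $\bv{BA}$ have exactly the same nonzero eigenvalues.

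To also recover multiplicities, I would use the block factorization
$$\begin{pmatrix}\bv I_n & \bv A\\ \bv B & \bv I_m\end{pmatrix} = \begin{pmatrix}\bv I_n & 0\\ \bv B & \bv I_m\end{pmatrix}\begin{pmatrix}\bv I_n & \bv A\\ 0 & \bv I_m-\bv{BA}\end{pmatrix} = \begin{pmatrix}\bv I_n & \bv A\\ 0 & \bv I_m\end{pmatrix}\begin{pmatrix}\bv I_n-\bv{AB} & 0\\ \bv B & \bv I_m\end{pmatrix},$$
which, on taking determinants, yields $\det(\bv I_n-\bv{AB}) = \det(\bv I_m-\bv{BA})$. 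Applying this with $\bv A$ rescaled by $1/\mu$ and clearing powers of $\mu$ gives, for every $\mu\ne 0$ and assuming without loss of generality $n\ge m$, the identity $\det(\mu\bv I_n-\bv{AB}) = \mu^{\,n-m}\det(\mu\bv I_m-\bv{BA})$; since both sides are polynomials in $\mu$, it holds identically. Hence the characteristic polynomials agree up to the factor $\mu^{n-m}$, so the nonzero roots — i.e.\ the nonzero eigenvalues — coincide with multiplicity. The only mild subtlety is the dimension bookkeeping when $n\ne m$, which is precisely what the determinant identity resolves; everything else is routine.
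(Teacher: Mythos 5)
The paper states Fact~\ref{def: equality of eigenvalues} without proof, treating it as a classical result, so there is no in-paper argument to compare against. Your proof is correct and standard. The eigenvector-transfer argument shows that the nonzero spectra of $\bv{AB}$ and $\bv{BA}$ agree as sets, and the block-triangular factorization gives the Weinstein--Aronszajn determinant identity $\det(\bv I_n - \bv{AB}) = \det(\bv I_m - \bv{BA})$, whose rescaled form $\det(\mu\bv I_n - \bv{AB}) = \mu^{\,n-m}\det(\mu\bv I_m - \bv{BA})$ (for $n \ge m$ and all $\mu \in \C$, by polynomial continuation from $\mu \ne 0$) yields equality of nonzero eigenvalues with algebraic multiplicity. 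One small observation: the determinant identity alone already proves the full statement, so the first stage is logically redundant. It is still worth keeping, though, since it exhibits the explicit eigenvector correspondence $\bv x \mapsto \bv{Bx}$ between the two $\lambda$-eigenspaces (which the determinant argument does not), and that correspondence is what the paper actually relies on implicitly when passing back and forth between $\bv A_{o,S}$, $\bv{\Lambda}_o^{1/2} \bv V_{o,S}^T \bv V_{o,S} \bv{\Lambda}_o^{1/2}$, and $\bv \Lambda_o \bv V_{o,S}^T \bv V_{o,S}$ in the proofs of the main theorems.
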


\begin{fact}[Girshgorin's circle theorem \cite{gershgorin1931uber}]
\label{thm:girshgorin}
Let $\bv A \in \C^{\nbyn}$ with entries $\bv A_{ij}$. For $i \in [n]$, let $\bv R_i$ be the sum of absolute values of non-diagonal entries in the $i$\textsuperscript{th} row. Let $D(\bv A_{ii}, \bv R_i)$ be the closed disc centered at $\bv A_{ii}$ with radius $\bv R_i$. Then every eigenvalue of $\bv A$ lies within one of the discs $D(\bv A_{ii}, \bv R_i)$.
\end{fact}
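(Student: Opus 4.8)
The final statement to be proved is Girshgorin's circle theorem (Fact \ref{thm:girshgorin}). Since this is a classical result, I expect the author either to cite it outright or to give the short standard argument; I will sketch the latter. The plan is to take an arbitrary eigenvalue $\lambda$ of $\bv A$ with corresponding eigenvector $\bv x \neq \bv 0$, and localize $\lambda$ by looking at the coordinate of $\bv x$ with the largest absolute value. Concretely, pick $k \in [n]$ with $\abs{x_k} = \max_{j \in [n]} \abs{x_j} > 0$ (this maximum is positive since $\bv x \neq \bv 0$). Writing out the $k$\textsuperscript{th} row of the eigen-equation $\bv A \bv x = \lambda \bv x$ gives $\sum_{j} \bv A_{kj} x_j = \lambda x_k$, which rearranges to $(\lambda - \bv A_{kk}) x_k = \sum_{j \neq k} \bv A_{kj} x_j$.

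From here the argument is a one-line estimate: take absolute values on both sides, apply the triangle inequality to the right-hand side, and use $\abs{x_j} \le \abs{x_k}$ for every $j$ together with $\abs{x_k} > 0$ to divide through. This yields
\begin{align*}
\abs{\lambda - \bv A_{kk}} = \frac{\abs{\sum_{j \neq k} \bv A_{kj} x_j}}{\abs{x_k}} \le \sum_{j \neq k} \abs{\bv A_{kj}} \cdot \frac{\abs{x_j}}{\abs{x_k}} \le \sum_{j \neq k} \abs{\bv A_{kj}} = \bv R_k,
\end{align*}
so $\lambda \in D(\bv A_{kk}, \bv R_k)$, i.e., $\lambda$ lies in one of the $n$ Gershgorin discs. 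Since $\lambda$ was an arbitrary eigenvalue, every eigenvalue of $\bv A$ lies within one of the discs $D(\bv A_{ii}, \bv R_i)$, as claimed.

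There is essentially no obstacle here — the only point requiring the tiniest bit of care is ensuring $\abs{x_k} \neq 0$ so that the division is legitimate, which is immediate from $\bv x \neq \bv 0$ combined with the choice of $k$ as an argmax of $\abs{x_j}$. Note the result holds for general complex matrices $\bv A \in \C^{\nbyn}$, with no symmetry assumption, so no extra structure is needed. Given how standard this is, I would expect the paper to simply cite \cite{gershgorin1931uber} and omit the proof, or include the above four-line argument for completeness.
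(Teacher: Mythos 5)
Your proof is correct and is the standard textbook argument for Gershgorin's theorem; the paper itself states this as a cited classical fact from \cite{gershgorin1931uber} and does not include a proof, exactly as you anticipated. The key step — localizing at the largest-magnitude coordinate of the eigenvector and dividing through, which is legitimate since $\bv x \neq \bv 0$ — is handled correctly.
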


\begin{restatable}[Weyl's Inequality \cite{weyl1912asymptotic}]{fact}{eigenvalueperturb}
\label{thm:eigenvalue_perturbation_theorem}
For any two Hermitian matrices $\bv A, \bv B \in \mathbb{C}^{n\times n}$ with $\bv A - \bv B = \bv E$, 
\begin{align}
    \label{eq:eigenvalue_perturbation_theorem}
    \max_i |\lambda_i(\bv A) - \lambda_i(\bv B)| \leq \|\bv E\|_2.\notag
\end{align}
\end{restatable}
\noindent Weyl's inequality ensures that a small Hermitian perturbation of a Hermitian matrix will not significantly change its eigenvalues. The bound can be extended to the case when the perturbation is not Hermitian, with a loss of an $O(\log n)$ factor; to the best of our knowledge this loss is necessary:

\begin{fact}[Non-Hermitian perturbation bound \cite{bhatia2013matrix}]\label{fact:weyl_general}
Let $\bv A \in \C^{n\times n}$ be Hermitian and $\bv B \in \C^{n\times n}$ be any matrix whose eigenvalues are $\lambda_1(\bv B), \ldots, \lambda_n(\bv B)$ such that $Re(\lambda_1(\bv B))\geq \ldots \ge Re(\lambda_n(\bv B))$ (where $Re(\lambda_i(\bv B))$ denotes the real part of $\lambda_i(\bv B)$). Let $\bv A- \bv B = \bv E$. For some universal constant $C$,
\begin{align*}
    \max_i|\lambda_i(\bv A) - \lambda_i(\bv B)| \leq C \log n\|\bv E\|_2.
\end{align*}
\end{fact}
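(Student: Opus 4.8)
The plan is to deduce Fact \ref{fact:weyl_general} from Weyl's inequality (Fact \ref{thm:eigenvalue_perturbation_theorem}) together with two classical ingredients: the \emph{ordered} Schur decomposition, and the fact that the triangular-truncation operator on $n\times n$ matrices has spectral-norm operator norm $\Theta(\log n)$. The strategy is to pass to coordinates in which $\bv B$ is upper triangular with its eigenvalues on the diagonal, sorted by real part, and then compare the Hermitian matrix $\bv A$ against the diagonal matrix of those real parts, absorbing all the ``non-Hermitian damage'' into an $O(\log n)\cdot\|\bv E\|_2$ Hermitian perturbation that Weyl's inequality can handle.

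First I would record an imaginary-part bound. Write $\bv B = \bv B_H + \bv B_K$ with $\bv B_H = \tfrac12(\bv B + \bv B^*)$ Hermitian and $\bv B_K = \tfrac12(\bv B - \bv B^*)$ skew-Hermitian. Every eigenvalue of $\bv B$ lies in its numerical range, whose imaginary part is $\{v^* (\bv B_K/i) v : \|v\|_2=1\}$, so $|\mathrm{Im}(\lambda_i(\bv B))| \le \|\bv B_K\|_2$ for all $i$. Since $\bv A = \bv A^*$, the skew-Hermitian part of $\bv E = \bv A - \bv B$ is exactly $-\bv B_K$, hence $\|\bv B_K\|_2 \le \|\bv E\|_2$. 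This reduces the theorem, up to an additive $\|\bv E\|_2$ and the triangle inequality, to bounding $|\lambda_i(\bv A) - \mathrm{Re}(\lambda_i(\bv B))|$ with the real parts sorted in decreasing order to match $\lambda_1(\bv A)\ge\cdots\ge\lambda_n(\bv A)$.

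Next I would take an ordered Schur decomposition $\bv B = \bv U \bv T \bv U^*$ with $\bv T = \bv D + \bv N$, where $\bv D = \mathrm{diag}(\lambda_1(\bv B),\ldots,\lambda_n(\bv B))$ is arranged so that $\mathrm{Re}(\lambda_1(\bv B))\ge\cdots\ge\mathrm{Re}(\lambda_n(\bv B))$ and $\bv N$ is strictly upper triangular. Conjugating by $\bv U$ leaves $\bv A' := \bv U^*\bv A\bv U$ Hermitian with the same eigenvalues as $\bv A$, and $\bv A' = \bv D + \bv N + \bv E'$ with $\|\bv E'\|_2 = \|\bv E\|_2$. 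Letting $\bv D_R$ be the Hermitian diagonal matrix of real parts $\mathrm{Re}(\lambda_i(\bv B))$, we get $\bv A' - \bv D_R = i\,\mathrm{Im}(\bv D) + \bv N + \bv E'$. The key observation is that the strictly-upper-triangular part of the skew-Hermitian matrix $\bv U^*\bv B_K\bv U = \tfrac12(\bv T - \bv T^*)$ equals $\tfrac12\bv N$; so by the $\Theta(\log n)$ norm bound for triangular truncation, $\|\bv N\|_2 \le O(\log n)\,\|\bv B_K\|_2 \le O(\log n)\,\|\bv E\|_2$, while $\|i\,\mathrm{Im}(\bv D)\|_2 = \max_i|\mathrm{Im}(\lambda_i(\bv B))| \le \|\bv E\|_2$. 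Hence $\|\bv A' - \bv D_R\|_2 \le O(\log n)\,\|\bv E\|_2$, and Weyl's inequality applied to the Hermitian pair $(\bv A', \bv D_R)$ gives $|\lambda_i(\bv A) - \mathrm{Re}(\lambda_i(\bv B))| \le O(\log n)\,\|\bv E\|_2$. Adding back $|\mathrm{Im}(\lambda_i(\bv B))| \le \|\bv E\|_2$ from the second paragraph yields $\max_i|\lambda_i(\bv A) - \lambda_i(\bv B)| \le C\log n\,\|\bv E\|_2$.

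The main obstacle — and the source of the $\log n$ factor, which the paper notes is necessary — is precisely the control of $\|\bv N\|_2$: extracting the strictly-triangular part of a bounded-norm matrix can inflate the spectral norm by a logarithmic factor, and no reordering of the Schur form avoids this. Everything else is bookkeeping around the ordered Schur form, the numerical range, and Weyl's inequality.
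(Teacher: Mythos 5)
Your argument is correct, and it is the standard proof of this result (the paper states Fact~\ref{fact:weyl_general} as a citation to Bhatia and does not reprove it, so the comparison is against the cited source rather than an in-paper proof). The three ingredients you use --- the Bendixson bound $|\mathrm{Im}(\lambda_i(\bv B))| \le \|\bv B_K\|_2$ with $\bv B_K = -\tfrac12(\bv E - \bv E^*)$, the ordered Schur form $\bv T = \bv D + \bv N$ combined with the $\Theta(\log n)$ operator-norm bound for (strict) triangular truncation applied to $\bv U^* \bv B_K \bv U$ rather than to $\bv T$ itself, and Weyl's inequality for the Hermitian pair $(\bv U^*\bv A\bv U,\ \mathrm{Re}(\bv D))$ --- assemble exactly as you describe. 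In particular, the key move of pulling $\bv N$ out of the skew-Hermitian part (whose norm is controlled by $\|\bv E\|_2$) instead of out of $\bv T$ (whose norm is only controlled by $\|\bv B\|_2$) is precisely what makes the argument close, and the identity
\begin{align*}
\bv U^* \bv B_K \bv U \;=\; \tfrac12(\bv T - \bv T^*) \;=\; i\,\mathrm{Im}(\bv D) + \tfrac12\bv N - \tfrac12\bv N^*
\end{align*}
does give $\tfrac12\bv N$ as its strictly-upper-triangular part, so $\|\bv N\|_2 = O(\log n)\,\|\bv E\|_2$ as claimed. The only minor quibble is that you attribute necessity of the $\log n$ factor to the paper, whereas the paper hedges with ``to the best of our knowledge''; this does not affect the proof.
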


Beyond the above facts, we use several theorems to obtain eigenvalue concentration bounds. We first state a theorem from~\cite{tropp2008norms}, which bounds the spectral norm of a principal submatrix sampled uniformly at random from a bounded entry matrix. We build on this to prove the full eigenspectrum concentration result of Theorem \ref{thm:main_bound}.
\begin{restatable}[Random principal submatrix spectral norm bound \cite{rudelson2007sampling,tropp2008norms}]{theorem}{rps}
\label{thm: tropp2008}
Let $\bv A \in \C^{n \times n}$ be  Hermitian, decomposed into diagonal and off-diagonal parts: $\bv A = \bv D+\bv H$. Let $\bv S \in \R^{n \times n}$ be a diagonal sampling matrix with the $j^{th}$ diagonal entry set to $1$ independently with probability $s/n$ and  $0$ otherwise.
 Then, for some universal constant $C$,
\begin{align}
    \mathbb{E}_2\|\bv{S}\bv{AS}\|_2 \leq C\left[\log n \cdot \mathbb{E}_2\|\bv{S}\bv{HS}\|_{\infty}+\sqrt{\frac{s \log n}{n}} \cdot \mathbb{E}_2\|\bv{HS}\|_{1 \rightarrow 2} + \frac{s}{n} \cdot \|\bv H\|_2\right] + \mathbb{E}_2\|\bv{S}\bv{DS}\|_2.\notag
\end{align}
\end{restatable}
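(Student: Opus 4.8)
\textbf{Proof proposal for Theorem~\ref{thm: tropp2008}.}

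The plan is to reduce this statement to a known noncommutative moment inequality for random matrices with independently selected rows/columns, following the approach of Rudelson--Vershynin and Tropp. First I would set up the decomposition carefully: write $\bv A = \bv D + \bv H$ and observe that the sampling operator acts linearly, so $\bv{SAS} = \bv{SDS} + \bv{SHS}$. The term $\mathbb{E}_2\|\bv{SDS}\|_2$ is handled trivially by the triangle inequality, since $\bv{SDS}$ is itself diagonal and its spectral norm is just the largest sampled diagonal entry; this accounts for the additive $\mathbb{E}_2\|\bv{SDS}\|_2$ at the end of the bound. So the work is entirely in bounding $\mathbb{E}_2\|\bv{SHS}\|_2$ where $\bv H$ is hollow (zero diagonal).

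The key step is to view $\bv{SHS}$ as a matrix whose rows are independently ``kept or killed.'' Concretely, letting $\delta_1,\ldots,\delta_n$ be independent Bernoulli$(s/n)$ variables and $\bv R_\delta = \operatorname{diag}(\delta_i)$, we have $\bv{SHS} = \bv R_\delta \bv H \bv R_\delta$ (identifying $\bv S$ with $\bv R_\delta$). The standard trick is decoupling: bound the symmetric quadratic form $\bv R_\delta \bv H \bv R_\delta$ in terms of the decoupled form $\bv R_\delta \bv H \bv R_{\delta'}$ with an independent copy $\bv R_{\delta'}$, losing only a constant factor (this is valid precisely because $\bv H$ is hollow, so the diagonal terms that decoupling would miss are already zero). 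Then I would condition on $\delta'$, so that $\bv R_\delta \bv H \bv R_{\delta'} = \bv R_\delta \bv M$ where $\bv M = \bv H \bv R_{\delta'}$ has a fixed (conditioned) set of columns, and apply a noncommutative Khintchine / matrix Rosenthal-type inequality for the operator norm of a row-subsampled matrix — this is exactly the content of the Rudelson sampling lemma and its refinement in Tropp's ``Norms of random submatrices'' paper. That inequality produces three terms: a $\log n$ factor times the largest $\|\cdot\|_\infty$ entry of the sampled piece (the ``diagonal-type'' contribution from the Bernoulli variance), a $\sqrt{(s/n)\log n}$ factor times the largest column norm $\|\bv{HS}\|_{1\to 2}$ (the ``row-norm'' contribution), and a $\frac{s}{n}\|\bv H\|_2$ term (the ``expectation'' contribution). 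Taking $\mathbb{E}_2$ (the $L_2$ norm over the randomness) through, using that $\mathbb{E}_2$ is a norm on random variables and hence subadditive, and re-conditioning to replace the $\delta'$-dependent quantities by their $\mathbb{E}_2$ versions, yields the stated bound up to adjusting the universal constant $C$.

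The main obstacle — and the step I would be most careful with — is getting the three terms to come out in exactly the stated form, in particular tracking where the $\log n$ attaches (only to the $\|\cdot\|_\infty$ term, not the column-norm term) and ensuring that after decoupling and conditioning, the quantities $\mathbb{E}_2\|\bv{SHS}\|_\infty$, $\mathbb{E}_2\|\bv{HS}\|_{1\to 2}$ genuinely upper-bound the conditioned analogues (this requires a convexity/Jensen argument to pull the conditional expectation outside, plus the observation that $\bv{SHS}$ and $\bv{HS}$ dominate their decoupled versions entrywise/normwise in the relevant sense). A secondary technical point is the passage from a moment bound of the form $(\mathbb{E}\|\cdot\|_2^p)^{1/p}$ at $p \asymp \log n$ back to the $\mathbb{E}_2 = (\mathbb{E}\|\cdot\|_2^2)^{1/2}$ quantity; this is where a $\log n$-versus-$\sqrt{\log n}$ discrepancy could creep in, and one has to invoke the standard fact that for these submatrix norms the low moments are comparable to the $\log n$-th moment up to constants (equivalently, one works directly at $p=2$ in the Rudelson lemma, which already carries the $\sqrt{\log n}$ internally). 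Since this theorem is quoted from \cite{rudelson2007sampling,tropp2008norms}, in the paper itself I would simply cite it; the above is the route one would follow to reconstruct the proof from scratch.
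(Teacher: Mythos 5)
The paper does not prove Theorem~\ref{thm: tropp2008} at all; it is quoted directly from \cite{rudelson2007sampling,tropp2008norms}, so there is no ``paper's own proof'' to compare against. Your reconstruction of the route is broadly faithful to Tropp's proof in \emph{Norms of random submatrices}: split off the diagonal, decouple the hollow part $\bv{SHS}$ into $\bv{SH\hat{S}}$ (valid because $\bv H$ has zero diagonal), and then drive the estimate with the Rudelson/Ledoux--Talagrand column-sampling inequality. You also correctly note that in a paper one would simply cite, and you flag the right spot where a $\log n$ versus $\sqrt{\log n}$ discrepancy could sneak in. The paper implicitly endorses this reading: its nonuniform analogues, Lemma~\ref{lemma:coupling} (decoupling/recoupling) and Theorem~\ref{thm: rudelson} (column-sampling spectral bound), are exactly the two ingredients you describe, adapted to importance sampling.

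One imprecision worth flagging: you describe ``a noncommutative Khintchine / matrix Rosenthal-type inequality'' as producing all three terms in a single application after conditioning. That is not how the terms actually arise. The column-sampling bound of Theorem~\ref{thm: rudelson} form is applied \emph{twice} — once to peel off $\hat{\bv S}$ on the right, giving a $\sqrt{\log n}\,\mathbb{E}_2\|\bv{SH\hat{S}}\|_{1\to2}$ term plus $\mathbb{E}_2\|\bv{SH}\|_2$; and once more to $\bv{SH}=(\bv H \bv S)^*$ to peel off $\bv S$ on the left, producing $\sqrt{(s/n)\log n}\,\mathbb{E}_2\|\bv{HS}\|_{1\to2} + (s/n)\|\bv H\|_2$. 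Separately, the surviving $\sqrt{\log n}\,\mathbb{E}_2\|\bv{SH\hat{S}}\|_{1\to2}$ term must then be converted from a max-column-norm quantity into the max-entry quantity $\log n\cdot\mathbb{E}_2\|\bv{SHS}\|_\infty$, which requires an additional column-norm-to-$\ell_\infty$ estimate (again by sampling within each column) and a recoupling step to pass from $\hat{\bv S}$ back to $\bv S$. So the three terms emerge from an iterated argument, not one shot; your outline compresses this into a single black box. This does not invalidate your plan, but filling it in would require more machinery than your sketch suggests.
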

For Theorems \ref{thm:nnz_main_bound} and \ref{thm:l2_main_bound}, we need an extension of Theorem \ref{thm: tropp2008} to the setting where rows are sampled non-uniformly. We will use two bounds here. The first is a decoupling and recoupling result for matrix norms. One can prove this lemma following an analogous result in~\cite{tropp2008norms} for sampling rows/columns uniformly. The proof is almost identical so we omit it.
\begin{restatable}[Decoupling and recoupling]{lemma}{coupling}
\label{lemma:coupling} Let $\bv{H}$ be a Hermitian matrix with zero diagonal. Let $\delta_j$ be a sequence of independent random variables such that $\delta_j=\frac{1}{\sqrt{p_j}}$ with probability $p_j$ and $0$ otherwise. Let $\bv S$ be a square diagonal sampling matrix with $j^{th}$ diagonal entry set to $\delta_j$. Then:
$$\mathbb{E}_2 \|\bv{SHS} \|_2 \leq  2\mathbb{E}_2\|\bv{SH\hat{S}} \|_2 \hspace{1em}\text{and}\hspace{1em}\mathbb{E}_2 \|\bv{SH\hat{S}} \|_{\infty} \leq 4\mathbb{E}_2 \|\bv{SHS} \|_{\infty},$$
where $\bv{\hat{S}}$ is an independent diagonal sampling matrix drawn from the same distribution as $\bv{S}$.
\end{restatable}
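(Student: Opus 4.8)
The plan is to prove both inequalities by the classical decoupling/recoupling technique: randomly split the index set $[n]$ into two halves, pass from $\bv{SHS}$ to its off-diagonal block between the halves, and exploit the fact that on that block the ``row'' sampling variables and the ``column'' sampling variables become independent, so the block has the same law as the corresponding block of $\bv{SH\hat S}$. This is exactly the argument used for uniform sampling in \cite{tropp2008norms}; the reweighting $\delta_j = 1/\sqrt{p_j}$ is merely carried along. Concretely, I would introduce i.i.d.\ fair coins $\eta_1,\dots,\eta_n$, independent of $\bv S$ and $\bv{\hat S}$, and set $\bv\Pi$ to be the diagonal $0/1$ matrix with $j$-th entry $\eta_j$ and $\bv\Pi^c = \bv I - \bv\Pi$. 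Since $\bv H$, hence $\bv{SHS}$, has zero diagonal and $\E_\eta[\mathbf{1}[\eta_j\neq\eta_k]] = \tfrac12$ for $j\neq k$, averaging over $\eta$ yields the identity $\E_\eta\!\left[\bv\Pi\,\bv{SHS}\,\bv\Pi^c + \bv\Pi^c\,\bv{SHS}\,\bv\Pi\right] = \tfrac12\,\bv{SHS}$: the off-block-diagonal part of $\bv{SHS}$ under the random partition recovers one half of $\bv{SHS}$ in expectation. The two auxiliary facts I will reuse are: (i) for each fixed $\eta$, the entries of $\bv{SHS}$ surviving in $\bv\Pi\,\bv{SHS}\,\bv\Pi^c$ are functions of $(\delta_j)_{j:\eta_j=1}$ (rows) and $(\delta_k)_{k:\eta_k=0}$ (columns), two independent families with the correct marginals, so $\bv\Pi\,\bv{SHS}\,\bv\Pi^c$ has the same distribution as $\bv\Pi\,\bv S\bv H\bv{\hat S}\,\bv\Pi^c$ (the recoupling step); and (ii) coordinate projections do not increase either the spectral norm or the entrywise $\infty$-norm.

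\textbf{First inequality.} Apply Jensen to the averaging identity: pointwise in the sampling randomness, $\|\bv{SHS}\|_2 \le 2\,\E_\eta\|\bv\Pi\,\bv{SHS}\,\bv\Pi^c + \bv\Pi^c\,\bv{SHS}\,\bv\Pi\|_2$. Because $\bv{SHS}$ is Hermitian, its off-block-diagonal part carries a block $\bv B$ and the block $\bv B^*$, and such a matrix has spectral norm exactly $\|\bv B\|_2 = \|\bv\Pi\,\bv{SHS}\,\bv\Pi^c\|_2$; this identification is what preserves the constant $2$ rather than $4$. Using (i) and then (ii), $\|\bv\Pi\,\bv{SHS}\,\bv\Pi^c\|_2$ has, conditionally on $\eta$, the same distribution as a quantity bounded by $\|\bv S\bv H\bv{\hat S}\|_2$. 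Squaring, pulling $\E_\eta$ across the square by Jensen, using Fubini and the independence of $\eta$ from $(\bv S,\bv{\hat S})$, gives $\E\|\bv{SHS}\|_2^2 \le 4\,\E\|\bv S\bv H\bv{\hat S}\|_2^2$, i.e.\ $\E_2\|\bv{SHS}\|_2 \le 2\,\E_2\|\bv S\bv H\bv{\hat S}\|_2$.

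\textbf{Second inequality.} Here I run the comparison in the reverse direction for the entrywise norm. Fix a realization of $(\bv S,\bv{\hat S})$ and let $(j^*,k^*)$ be a maximizing entry of $\bv S\bv H\bv{\hat S}$; if its magnitude is nonzero then $j^*\neq k^*$ by the zero-diagonal assumption, so $\Pr_\eta[\eta_{j^*}=1,\,\eta_{k^*}=0] = \tfrac14$, and on that event $\|\bv\Pi\,\bv S\bv H\bv{\hat S}\,\bv\Pi^c\|_\infty = \|\bv S\bv H\bv{\hat S}\|_\infty$. Hence $\|\bv S\bv H\bv{\hat S}\|_\infty \le 4\,\E_\eta\|\bv\Pi\,\bv S\bv H\bv{\hat S}\,\bv\Pi^c\|_\infty$. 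Applying the distributional identity of (i) in the other direction, $\bv\Pi\,\bv S\bv H\bv{\hat S}\,\bv\Pi^c$ is, conditionally on $\eta$, distributed as $\bv\Pi\,\bv{SHS}\,\bv\Pi^c$, whose $\infty$-norm is at most $\|\bv{SHS}\|_\infty$ by (ii). Squaring and taking expectations exactly as before yields $\E_2\|\bv S\bv H\bv{\hat S}\|_\infty \le 4\,\E_2\|\bv{SHS}\|_\infty$.

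\textbf{Anticipated difficulty.} The skeleton is routine decoupling, so the main obstacle is not conceptual but bookkeeping: first, squeezing out the factor $2$ (rather than $4$) in the first bound, which relies on the Hermitian structure of $\bv{SHS}$ so that its off-block-diagonal part, though it contains both $\bv B$ and $\bv B^*$, has spectral norm only $\|\bv B\|_2$; and second, checking that every inequality survives passage through the $L_2$ functional $\E_2[X] = (\E[X^2])^{1/2}$, which is handled by applying Jensen's inequality to $x\mapsto x^2$ each time a conditional expectation over $\eta$ is moved across the square, together with the independence of the partition coins from the sampling matrices.
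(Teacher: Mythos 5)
Your proposal is correct and follows the standard Tropp--Gin\'e decoupling/recoupling argument, which is exactly what the paper invokes (it omits the proof, citing that it is ``almost identical'' to the uniform-sampling case in Tropp 2008). The key observations — that the random bipartition $\bv\Pi,\bv\Pi^c$ separates the row and column sampling variables into disjoint, hence independent, families (so the coupled off-block has the same law as the decoupled one for fixed $\eta$), that Hermitian symmetry collapses $\bv B + \bv B^*$ to $\|\bv B\|_2$ and yields the constant $2$, that the probability $1/4$ that a fixed off-diagonal entry survives the bipartition yields the constant $4$, and that Jensen on $x\mapsto x^2$ together with Fubini pushes everything through the $\E_2$ functional — are all exactly as in the reference, with the nonuniform weights $1/\sqrt{p_j}$ riding along without affecting any step.
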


\noindent The second  theorem bounds the spectral norm of a non-uniform random column sample of a matrix. We give a proof in Appendix \ref{app:gen_spectral_bounds}, again following a theorem in~\cite{tropp2007} for uniform sampling.
\begin{restatable}[Non-uniform column sampling -- spectral norm bound]{theorem}{rudelson}
\label{thm: rudelson}
Let $\bv{A}$ be an $m \times n$ matrix with rank $r$.
Let $\delta_j$ be a sequence of independent random variables such that $\delta_j=\frac{1}{\sqrt{p_j}}$ with probability $p_j$ and $0$ otherwise. Let $\bv S$ be a square diagonal sampling matrix with $j^{th}$ diagonal entry set to $\delta_j$. 
$$\E_2 \|\bv{AS} \|_2 \leq 5\sqrt{\log r} \cdot\E_2 \|\bv{AS} \|_{1 \rightarrow 2}+ \|\bv{A}\|_2$$
\end{restatable}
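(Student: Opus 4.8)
The statement to prove is Theorem~\ref{thm: rudelson}, a non-uniform column sampling spectral norm bound extending the uniform version from~\cite{tropp2007}. Let me think about how the uniform case goes and what changes.

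\textbf{The plan.} The approach is to follow the symmetrization-and-Rudelson-argument template used for uniform column sampling, carefully tracking the reweighting factors $\delta_j = 1/\sqrt{p_j}$. First I would write $\bv{AS}\bv{S}^T\bv{A}^T = \sum_j \delta_j^2 \bv{a}_j \bv{a}_j^T$ where $\bv{a}_j$ is the $j$-th column of $\bv A$, and note $\E[\delta_j^2] = p_j \cdot \frac{1}{p_j} = 1$, so $\E[\bv{AS}\bv{S}^T\bv{A}^T] = \bv{A}\bv{A}^T$. The goal is to bound $\E_2\|\bv{AS}\|_2 = (\E\|\bv{AS}\bv{S}^T\bv{A}^T\|_2)^{1/2}$. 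I would center the sum: write $\bv{AS}\bv{S}^T\bv{A}^T = \bv{A}\bv{A}^T + \sum_j (\delta_j^2 - 1)\bv{a}_j\bv{a}_j^T$, apply triangle inequality to get $\E_2\|\bv{AS}\|_2^2 \le \|\bv A\|_2^2 + \E\|\sum_j(\delta_j^2-1)\bv{a}_j\bv{a}_j^T\|_2$. Then symmetrize with Rademacher variables $\varepsilon_j$: $\E\|\sum_j(\delta_j^2-1)\bv{a}_j\bv{a}_j^T\|_2 \le 2\E\|\sum_j \varepsilon_j \delta_j^2 \bv{a}_j\bv{a}_j^T\|_2$. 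Now apply Rudelson's lemma (noncommutative Khintchine / the lemma bounding $\E\|\sum \varepsilon_j \bv{x}_j\bv{x}_j^T\|$) conditionally on the $\delta_j$: this yields something like $C\sqrt{\log r} \cdot \max_j \|\delta_j \bv{a}_j\|_2 \cdot \|\sum_j \delta_j^2 \bv{a}_j\bv{a}_j^T\|_2^{1/2}$, where $\max_j\|\delta_j\bv{a}_j\|_2 = \|\bv{AS}\|_{1\to 2}$ by definition of the column norm, and $\|\sum_j \delta_j^2\bv{a}_j\bv{a}_j^T\|_2^{1/2} = \|\bv{AS}\|_2$.

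\textbf{Closing the recursion.} Putting this together and taking $\E_2$ (using Cauchy--Schwarz to split the product of the max term and the $\|\bv{AS}\|_2$ term), I get an inequality of the rough shape $\E_2\|\bv{AS}\|_2^2 \le \|\bv A\|_2^2 + C\sqrt{\log r}\cdot \E_2\|\bv{AS}\|_{1\to 2}\cdot \E_2\|\bv{AS}\|_2$. Setting $E \eqdef \E_2\|\bv{AS}\|_2$, $Q \eqdef \E_2\|\bv{AS}\|_{1\to 2}$, this is a quadratic inequality $E^2 \le \|\bv A\|_2^2 + C\sqrt{\log r}\, Q\, E$, which solves to $E \le C'\sqrt{\log r}\, Q + \|\bv A\|_2$ (absorbing constants; the precise constant $5$ comes from solving the quadratic and rounding). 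This gives exactly the claimed bound.

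\textbf{Main obstacle.} The delicate points are (i) correctly applying Rudelson's lemma in the reweighted setting — the lemma is usually stated for a sum $\sum_j \varepsilon_j \bv{x}_j\bv{x}_j^T$ with the conclusion $C\sqrt{\log(\text{rank})}\max_j\|\bv x_j\|\cdot\|\sum_j\bv x_j\bv x_j^T\|^{1/2}$, so I need to set $\bv x_j = \delta_j \bv a_j$ and verify the rank of the relevant matrix is at most $r = \rank(\bv A)$ (since each $\delta_j\bv a_j \in \colspan(\bv A)$); and (ii) handling the fact that the $\delta_j$ are unbounded (when $p_j$ is tiny, $\delta_j$ is huge) — but this is fine because the bound is stated in terms of the \emph{realized} quantity $\E_2\|\bv{AS}\|_{1\to 2}$ rather than an a priori bound, so no boundedness is needed; the expectation is simply carried through. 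The one genuinely fiddly step is the conditioning: Rudelson's lemma must be applied to the conditional expectation over $\varepsilon$ given $\{\delta_j\}$, then the outer expectation over $\{\delta_j\}$ is taken, and Cauchy--Schwarz / the $\E_2$ notation must be deployed carefully to decouple $\max_j\|\delta_j\bv a_j\|_2$ from $\|\bv{AS}\|_2$ without losing more than constant factors. Since the proof is stated to closely parallel~\cite{tropp2007}, I would present the reweighting modifications explicitly and cite the uniform argument for the parts that go through verbatim.
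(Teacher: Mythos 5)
Your proposal is correct and follows essentially the same route as the paper's proof in Appendix~\ref{app:gen_spectral_bounds}: center $\sum_j\delta_j^2\bv a_j\bv a_j^*$ about $\bv A\bv A^*$, symmetrize with Rademacher signs (the paper does this via a ghost copy $\delta_j'$ plus Jensen, you state the equivalent one-line symmetrization), apply Rudelson's lemma conditionally on the $\delta_j$'s to produce $\sqrt{\log r}\,\|\bv{AS}\|_{1\to2}\|\bv{AS}\|_2$, decouple with Cauchy--Schwarz in $\E_2$, and solve the resulting quadratic $E^2\le bE+c$. The points you flag as delicate (conditioning, rank of the span being at most $r$, no a priori boundedness needed since the bound is expressed via the realized $\E_2\|\bv{AS}\|_{1\to2}$) are exactly the places the paper's argument handles, so this is a match rather than an alternative.
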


\noindent We use a standard Matrix Bernstein inequality to bound the spectral norm of random submatrices.

\begin{restatable}[Matrix Bernstein \cite{tropp2015introduction}]{theorem}{matrixbernstein}
\label{thm:matrix bernstein}
Consider a finite sequence $\{\bv S_k\}$ of random matrices in $\R^{d \times d}$. Assume that for all $k$,
$
    \mathbb{E}[\bv S_k] = \bv 0\quad\text{and}\quad\|\bv S_k\|_2 \leq L.
$
Let $\bv Z = \sum_k \bv S_k$ and let 
$\bv V_1,\bv V_2$ be semidefinite upper-bounds for the matrix valued variances $\bv{Var}_1(\bv Z)$ and $\bv{Var}_2(\bv Z)$:
\begin{align}
    \bv V_1 &\succeq \bv{Var}_1(\bv Z) \eqdef \mathbb{E}\left(\bv {ZZ}^T\right) = \sum_k \mathbb{E}\left(\bv S_k \bv S_k^T\right), \quad \text{and}\notag\\
    \bv V_2 &\succeq \bv{Var}_2(\bv Z) \eqdef \mathbb{E}\left(\bv {Z}^T\bv Z\right) = \sum_k \mathbb{E}\left(\bv S_k^T\bv S_k \right).\notag
\end{align}
Then, letting $v=\max(\|\bv V_1\|_2, \|\bv V_2\|_2)$, for any $t \geq 0$,
\begin{align}
    \Pr(\|\bv Z\|_2\geq t) &\leq 2d \cdot \exp\left(\frac{-t^2/2}{v+Lt/3}\right).\notag
\end{align}
\end{restatable}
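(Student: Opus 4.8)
The plan is the standard route to matrix concentration: symmetrize via the Hermitian dilation and then run the matrix Laplace transform method. First I would pass from the possibly non-symmetric sum $\bv Z$ to its Hermitian dilation $\mathcal{H}(\bv M)$, the $2d \times 2d$ block matrix with $\bv M$ in the top-right block, $\bv M^T$ in the bottom-left block, and zeros elsewhere. Since $\mathcal{H}$ is linear, $\mathcal{H}(\bv Z) = \sum_k \mathcal{H}(\bv S_k)$ is a sum of independent, mean-zero Hermitian matrices; moreover $\|\mathcal{H}(\bv S_k)\|_2 = \|\bv S_k\|_2 \le L$, $\lambda_{\max}(\mathcal{H}(\bv Z)) = \|\bv Z\|_2$ (the eigenvalues of $\mathcal{H}(\bv Z)$ are $\pm$ the singular values of $\bv Z$), and $\sum_k \E\, \mathcal{H}(\bv S_k)^2$ is block-diagonal with blocks $\bv{Var}_1(\bv Z)$ and $\bv{Var}_2(\bv Z)$, hence dominated by the block-diagonal matrix with blocks $\bv V_1, \bv V_2$, whose spectral norm is exactly $v$. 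So it suffices to prove: for a sum $\bv Y = \sum_k \bv T_k$ of independent, mean-zero, Hermitian $D \times D$ matrices with $\|\bv T_k\|_2 \le L$ and $\|\sum_k \E \bv T_k^2\|_2 \le v$, one has $\Pr(\lambda_{\max}(\bv Y) \ge t) \le D\,\exp( -t^2/2 \,/\, (v + Lt/3) )$, with $D = 2d$; this matches the $2d$ in the statement.

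For this I would use the matrix Laplace transform bound $\Pr(\lambda_{\max}(\bv Y) \ge t) \le e^{-\theta t}\,\E\,\tr e^{\theta \bv Y}$ for every $\theta > 0$ (Markov applied to $e^{\theta\lambda_{\max}(\bv Y)} = \lambda_{\max}(e^{\theta\bv Y}) \le \tr e^{\theta \bv Y}$, using $e^{\theta \bv Y} \succeq \bv 0$), followed by subadditivity of the matrix cumulant generating function, $\E\,\tr\exp( \sum_k \theta\bv T_k ) \le \tr\exp( \sum_k \log \E\, e^{\theta \bv T_k} )$, which is a consequence of Lieb's concavity theorem. This reduces everything to a per-summand matrix mgf bound: for Hermitian mean-zero $\bv T$ with $\|\bv T\|_2 \le L$ and $0 < \theta < 3/L$, $\E\, e^{\theta \bv T} \preceq \exp( g(\theta)\,\E \bv T^2 )$ with $g(\theta) = (\theta^2/2)/(1 - L\theta/3)$. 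I would obtain this by expanding $e^{\theta \bv T} = \bv I + \theta \bv T + \sum_{p\ge 2}\frac{\theta^p}{p!}\bv T^p$, applying the transfer rule $\bv T^p \preceq L^{p-2}\bv T^2$ for $p\ge 2$ (valid since $x^p \le L^{p-2}x^2$ on $[-L,L]$) together with $\E \bv T = \bv 0$ to get $\E\, e^{\theta\bv T} \preceq \bv I + h(\theta)\,\E\bv T^2$ where $h(\theta) = \sum_{p\ge2}\frac{\theta^p L^{p-2}}{p!}$, then bounding $h(\theta) \le g(\theta)$ via the geometric series $\sum_{p\ge 2}\frac{(\theta L)^{p-2}}{p!} \le \frac12\sum_{q\ge0}(\theta L/3)^q$, and finally using $\bv I + g(\theta)\E\bv T^2 \preceq \exp(g(\theta)\E\bv T^2)$ (from $1 + x \le e^x$).

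Assembling the pieces: operator monotonicity of $\log$ gives $\log\E e^{\theta\bv T_k} \preceq g(\theta)\,\E\bv T_k^2$, so $\sum_k \log\E e^{\theta\bv T_k} \preceq g(\theta)\sum_k\E\bv T_k^2 \preceq g(\theta)\,v\,\bv I$, whence $\tr\exp( \sum_k\log\E e^{\theta\bv T_k}) \le D\,e^{g(\theta)v}$ by monotonicity of $\bv M \mapsto \tr e^{\bv M}$. Therefore $\Pr(\lambda_{\max}(\bv Y)\ge t) \le D\exp(g(\theta)v - \theta t)$, and the choice $\theta = t/(v + Lt/3) \in (0, 3/L)$ yields the claimed tail. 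The main obstacle is the single genuinely nontrivial ingredient, Lieb's concavity theorem (equivalently, concavity of $\bv X \mapsto \tr\exp(\bv H + \log\bv X)$ on positive definite $\bv X$), which powers the mgf subadditivity; if one takes this as given, the remainder is careful bookkeeping with legitimate dimension-free operator inequalities — the transfer rule for $\bv T^p \preceq L^{p-2}\bv T^2$, operator monotonicity of $\log$, $\bv I + \bv M \preceq e^{\bv M}$, and monotonicity of trace-exp under $\preceq$ — each routine in isolation but none following from a naive scalar analogy.
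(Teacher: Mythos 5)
Your proof is correct and is precisely the standard argument from Tropp's monograph \cite{tropp2015introduction} — Hermitian dilation, matrix Laplace transform, subadditivity of the matrix cumulant generating function via Lieb's concavity theorem, and the Bernstein mgf bound obtained from the transfer rule and the geometric series bound $p! \ge 2\cdot 3^{p-2}$. The paper states this theorem as a cited external result and gives no proof of its own, so there is nothing to compare beyond noting that you have accurately reconstructed the cited source's proof.
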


\noindent For real valued random variables, we use the standard Bernstein inequality.

\begin{theorem}[Bernstein inequality \cite{bernstein1927extension}]
\label{thm:bernstein}
Let $\{z_j\}$ for $j \in [n]$ be independent random variables with zero mean such that $\lvert z_j \rvert \leq M$ for all $j$. Then for all positive $t$,
\begin{align*}
    \Pr\left(\left\lvert \sum_{j=1}^n z_j \right\rvert \geq t\right) \leq \exp\left(\frac{-t^2/2}{\sum_{i=1}^n \E[z_i^2] + Mt/3}\right).
\end{align*}
\end{theorem}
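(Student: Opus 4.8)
The plan is to use the standard exponential moment (Chernoff) method. First I would fix a parameter $\theta \in (0, 3/M)$ and apply Markov's inequality to the nonnegative random variable $e^{\theta \sum_{j} z_j}$; together with independence of the $z_j$ this gives $\Pr\!\left(\sum_{j=1}^n z_j \ge t\right) \le e^{-\theta t}\prod_{j=1}^n \E\!\left[e^{\theta z_j}\right]$. Everything then reduces to bounding each moment generating factor $\E[e^{\theta z_j}]$.

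For that step I would Taylor expand: since $\E[z_j] = 0$, $\E[e^{\theta z_j}] = 1 + \sum_{k \ge 2} \frac{\theta^k \E[z_j^k]}{k!}$. Using $|z_j| \le M$ we have $|\E[z_j^k]| \le M^{k-2}\E[z_j^2]$ for all $k \ge 2$, and the elementary bound $k! \ge 2\cdot 3^{k-2}$ then gives, for $\theta < 3/M$, $\sum_{k \ge 2} \frac{\theta^k M^{k-2}}{k!} \le \frac{\theta^2}{2}\sum_{\ell \ge 0}\left(\frac{\theta M}{3}\right)^{\ell} = \frac{\theta^2/2}{1 - \theta M/3}$. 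Combining with $1 + x \le e^x$ yields $\E[e^{\theta z_j}] \le \exp\!\left(\frac{\theta^2 \E[z_j^2]/2}{1 - \theta M/3}\right)$, and taking the product over $j$, with $\sigma^2 \eqdef \sum_{j} \E[z_j^2]$, gives $\Pr\!\left(\sum_j z_j \ge t\right) \le \exp\!\left(-\theta t + \frac{\theta^2 \sigma^2/2}{1 - \theta M/3}\right)$.

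The last step is to pick $\theta$. I would take $\theta = \frac{t}{\sigma^2 + Mt/3}$, which lies in $(0, 3/M)$ for $t > 0$; a one-line computation shows $1 - \theta M/3 = \sigma^2/(\sigma^2 + Mt/3)$ and hence that the exponent collapses exactly to $-\frac{t^2/2}{\sigma^2 + Mt/3}$, establishing the one-sided tail bound. Running the identical argument with $z_j$ replaced by $-z_j$ (which share the same mean, magnitude bound, and second moments) controls the lower tail, and a union bound over the two events yields the two-sided statement (with the usual factor of $2$, which is conventionally absorbed). There is no genuine difficulty in this proof — it is classical — and the only spots demanding a little care are the geometric-series estimate of the moment generating function (in particular the inequality $k! \ge 2\cdot 3^{k-2}$) and checking that the stated choice of $\theta$ makes the exponent simplify to the advertised form.
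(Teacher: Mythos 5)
Your argument is the standard exponential--moment (Chernoff--Cram\'er) proof of Bernstein's inequality and it is correct: the bound $\lvert\E[z_j^k]\rvert \le M^{k-2}\E[z_j^2]$, the elementary estimate $k! \ge 2\cdot 3^{k-2}$ that produces the geometric series, and the choice $\theta = t/(\sigma^2 + Mt/3)$ that collapses the exponent all check out. The paper itself gives no proof of Theorem~\ref{thm:bernstein}; it simply cites it as a classical fact, so there is no alternative argument to compare against.

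One small point deserving care: what your proof actually establishes is the one-sided bound $\Pr(\sum_j z_j \ge t) \le \exp\bigl(-\tfrac{t^2/2}{\sigma^2 + Mt/3}\bigr)$, and after applying it to $\{-z_j\}$ and taking a union bound you obtain $\Pr\bigl(\lvert\sum_j z_j\rvert \ge t\bigr) \le 2\exp\bigl(-\tfrac{t^2/2}{\sigma^2 + Mt/3}\bigr)$. The theorem as printed in the paper displays the two-sided event but drops the leading factor of $2$, so read literally it is off by that constant. You flag this yourself; I would just be precise that the factor is not ``absorbed'' into the exponential but is genuinely present, and that it is harmless only because every application of Theorem~\ref{thm:bernstein} in the paper tolerates an adjustment of constants.
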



\section{Sublinear Time Eigenvalue Estimation using Uniform Sampling}\label{sec:eigenvalue estimation}

We now prove our main eigenvalue estimation result -- Theorem \ref{thm:main_bound}. We give the pseudocode for our principal submatrix based estimation procedure in Algorithm \ref{alg:eigenvalue estimate}. We will show that any positive or negative eigenvalue of $\bv{A}$ with magnitude $\ge \epsilon n$ will appear as an approximate eigenvalue in $\bv{A}_S$ with good probability. Thus, in step 5 of Algorithm \ref{alg:eigenvalue estimate}, the positive and negative eigenvvalues of $\bv{A}_S$ are used to estimate the outlying largest and smallest eigenvalues of $\bv{A}$. All other interior eigenvalues of $\bv{A}$ are estimated to be $0$, which will immediately give our $\pm \epsilon n$ approximation bound when the original eigenvalue has magnitude $\le \epsilon n$.

\begin{algorithm}
\caption{Eigenvalue estimator using uniform sampling}
\label{alg:eigenvalue estimate}
\begin{algorithmic}[1]
\STATE {\bfseries Input:} Symmetric $\bv A \in \mathbb{R}^{n\times n}$ with $\|\bv A \|_{\infty} \leq 1$, Accuracy $\epsilon \in (0,1)$, failure prob. $\delta \in (0,1)$.
\STATE Fix $s = \frac{c \log(1/(\epsilon \delta)) \cdot \log^3 n}{\epsilon^3 {\delta}}$ where $c$ is a sufficiently large constant.
\STATE Add each index $i \in [n]$ to the sample set $S$ independently with probability $\frac{s}{n}$. Let the principal submatrix of $\bv A$ corresponding $S$ be $\bv A_S$.
\STATE Compute the eigenvalues of $\bv A_S$: $\lambda_1(\bv{A}_S) \ge \ldots \ge \lambda_{|S|}(\bv{A}_S)$.
\STATE For all $i \in [|S|]$ with $\lambda_i(\bv{A}_S) \ge 0$, let $\tilde \lambda_i(\bv{A}) = \frac{n}{s} \cdot \lambda_i(\bv{A}_S)$. For all $i \in [|S|]$ with $\lambda_i(\bv{A}_S) < 0$, let $\tilde \lambda_{n-(|S|-i)}(\bv{A}) = \frac{n}{s} \cdot \lambda_i(\bv{A}_S)$. For all remaining $i \in [n]$, let $\tilde \lambda_i(\bv{A}) = 0$. 
\STATE {\bfseries Return:} Eigenvalue estimates $\tilde \lambda_1(\bv{A}) \ge \ldots \ge \tilde \lambda_n(\bv{A})$.
\end{algorithmic}
\end{algorithm}
 
\medskip

\noindent \textbf{Running time}. Observe that the expected number of indices chosen by Algorithm \ref{alg:eigenvalue estimate} is $s = \frac{c \log(1/(\epsilon \delta)) \cdot \log^3 n}{\epsilon^3 {\delta}}$. A standard concentration bound can be used to show that with high probability $(1-1/\poly(n))$, the number of sampled entries is $O(s)$. Thus, the algorithm reads a total of $O(s^2)$ entries of $\bv{A}$ and runs in $O(s^{\omega})$ time -- the time to compute a full eigendecomposition of $\bv{A}_S$. 

\subsection{Outer and Middle Eigenvalue Bounds}\label{sec:accuracy bounds}

Recall that we will split $\bv{A}$ into two symmetric matrices (Definition~\ref{def:split}): $\bv A_o=\bv V_o  \bv \Lambda_o \bv V_o^T$ which contains its large magnitude (outlying) eigendirections with eigenvalue magnitudes $\ge \epsilon \sqrt{\delta} n$ and $\bv A_m=\bv V_m \bv \Lambda_m \bv V_m^T$ which contains its small magnitude (middle) eigendirections.

We first show that the eigenvectors in $\bv{V}_o$ are \emph{incoherent}. I.e., that their (eigenvalue weighted) squared row norms are bounded. This ensures that the outlying eigenspace of $\bv{A}$ is well-approximated via uniform sampling.
\begin{restatable}[Incoherence of outlying eigenvectors]{lemma}{rownormbound}
\label{lemma:row_norm}
Let $\bv A \in \R^{n \times n}$ be symmetric with $\|\bv A\|_{\infty} \leq 1$. Let $\bv{V}_o$ be as in Definition \ref{def:split}. Let $\bv V_{o,i}$ denote the $i$\textsuperscript{th} row of $\bv V_o$. Then, 
\begin{align*}
\norm{\bv \Lambda_o^{1/2}\bv{V}_{o,i}}_2^2 \leq \frac{1}{\epsilon \sqrt{\delta}} \hspace{2em}\text{ and }\hspace{2em}\|\bv V_{o,i}\|^2_2 \leq \frac{1}{\epsilon^2\delta n}.
\end{align*}
\end{restatable}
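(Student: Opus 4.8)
The plan is to exploit the single quantitative fact available about $\bv A$, namely $\norm{\bv A}_\infty \le 1$, by passing to $\bv A^2$, whose diagonal entries are exactly the squared row norms of $\bv A$ and hence bounded by $n$. Everything else is bookkeeping on the spectral decomposition.

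First I would write the spectral decomposition $\bv A = \sum_{j} \lambda_j \bv v_j \bv v_j^T$ over all $n$ eigenpairs, and let $O \subseteq [n]$ index the outlying eigenvalues, those with $\abs{\lambda_j} \ge \epsilon\sqrt\delta\, n$, so that the $i$\textsuperscript{th} row of $\bv V_o$ has $\norm{\bv V_{o,i}}_2^2 = \sum_{j \in O} \bv v_j(i)^2$ and, since $\bv\Lambda_o^{1/2}$ acts entrywise (and $\abs{\bv\Lambda_o^{1/2}}$ has $\abs{\lambda_j}^{1/2}$ on the diagonal even when $\lambda_j < 0$), $\norm{\bv\Lambda_o^{1/2}\bv V_{o,i}}_2^2 = \sum_{j \in O} \abs{\lambda_j}\, \bv v_j(i)^2$. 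Next I would observe that $(\bv A^2)_{ii} = \sum_{k} \bv A_{ik}^2 \le n$ because $\abs{\bv A_{ik}} \le 1$, while on the other hand $(\bv A^2)_{ii} = \sum_{j=1}^n \lambda_j^2\, \bv v_j(i)^2 \ge \sum_{j \in O} \lambda_j^2\, \bv v_j(i)^2$. Combining these gives $\sum_{j \in O} \lambda_j^2\, \bv v_j(i)^2 \le n$.

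For the first claimed bound, I would use that each $j \in O$ satisfies $\lambda_j^2 = \abs{\lambda_j}\cdot\abs{\lambda_j} \ge \epsilon\sqrt\delta\, n \cdot \abs{\lambda_j}$, so
\[
\epsilon\sqrt\delta\, n \sum_{j\in O}\abs{\lambda_j}\,\bv v_j(i)^2 \;\le\; \sum_{j\in O}\lambda_j^2\,\bv v_j(i)^2 \;\le\; n,
\]
which rearranges to $\norm{\bv\Lambda_o^{1/2}\bv V_{o,i}}_2^2 = \sum_{j\in O}\abs{\lambda_j}\,\bv v_j(i)^2 \le \frac{1}{\epsilon\sqrt\delta}$. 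For the second bound I would feed this back in: again using $\abs{\lambda_j} \ge \epsilon\sqrt\delta\, n$ for $j \in O$,
\[
\norm{\bv V_{o,i}}_2^2 = \sum_{j\in O}\bv v_j(i)^2 \;\le\; \frac{1}{\epsilon\sqrt\delta\, n}\sum_{j\in O}\abs{\lambda_j}\,\bv v_j(i)^2 \;\le\; \frac{1}{\epsilon\sqrt\delta\, n}\cdot\frac{1}{\epsilon\sqrt\delta} \;=\; \frac{1}{\epsilon^2\delta\, n}.
\]

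There is no real obstacle here; the only point requiring a little care is the handling of the (possibly complex) square root $\bv\Lambda_o^{1/2}$ — one should note that $\norm{\bv\Lambda_o^{1/2}\bv V_{o,i}}_2^2$ means $\bv V_{o,i}^* \,\overline{\bv\Lambda_o^{1/2}}\,\bv\Lambda_o^{1/2}\,\bv V_{o,i} = \bv V_{o,i}\,\abs{\bv\Lambda_o}\,\bv V_{o,i}^T$, so that no cancellation or sign issue arises and the quantity is the honest weighted sum $\sum_{j\in O}\abs{\lambda_j}\,\bv v_j(i)^2$ used above.
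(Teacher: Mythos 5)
Your proof is correct and matches the paper's argument in all essentials: both reduce to the key inequality $\sum_{j \in O} \lambda_j^2\, \bv v_j(i)^2 \le n$ and then divide through by $\abs{\lambda_j}$ once or twice. The only cosmetic difference is how that inequality is obtained — you pass to $(\bv A^2)_{ii} = \norm{\bv A_i}_2^2 = \sum_{j\in[n]}\lambda_j^2\,\bv v_j(i)^2$ and drop the non-outlying terms, whereas the paper bounds $\norm{\bv A_i \bv V_o}_2^2 \le \norm{\bv A_i}_2^2\,\norm{\bv V_o}_2^2$ using that $\bv V_o$ has orthonormal columns; these are two one-line routes to the same place. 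Your closing remark on $\bv\Lambda_o^{1/2}$ is a worthwhile clarification: the paper writes $\norm{\bv\Lambda_o^{1/2}\bv V_{o,i}}_2^2 = \sum_j \lambda_j\,\bv V_{o,i,j}^2$, which implicitly means $\abs{\lambda_j}$ when negative eigenvalues are present, and your spelled-out $\bv V_{o,i}\,\abs{\bv\Lambda_o}\,\bv V_{o,i}^T$ makes this precise.
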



\begin{proof}
Observe that $\bv A \bv V_o =\bv V_o \bv \Lambda_o $. Let  $[\bv A \bv V_o]_i$ denote the $i$\textsuperscript{th} row of the $\bv A \bv V_o$. Then we have
\begin{equation}\label{Eq: row_norm1}
    \|[\bv A \bv V_o]_i\|_2^2 = \|[\bv V_o \bv \Lambda_o]_i\|_2^2 = \sum_{j=1}^r \lambda_j^2 \cdot \bv V_{o,i, j}^2,
\end{equation}
where $r=\rank(\bv A_o)$, $\bv V_{o,i,j}$ is the $(i,j)$\textsuperscript{th} element of $\bv V_o$  and $\lambda_j=\bv \Lambda_o(j,j)$. $\|\bv A\|_\infty \leq 1$ by assumption and since $\bv{V}_o$ has orthonormal columns, its spectral norm is bounded by $1$, thus we have
\begin{equation*}\label{Eq: row_norm2}
    \|[\bv A \bv V_o]_i\|_2^2 =\|[\bv A]_i \bv V_o\|_2^2\leq \|[\bv A]_i\|_2^2 \cdot \|\bv V_o\|^2_{2} \leq n.
\end{equation*}
Therefore, by \eqref{Eq: row_norm1}, we have:
\begin{equation}\label{Eq: eig_bound }
    \sum_{j=1}^r \lambda_j^2 \cdot  \bv V_{o,i, j}^2 \leq n.
\end{equation}
Since by definition of $\bv{\Lambda}_o$, $\lvert \lambda_j \rvert \geq \epsilon \sqrt{\delta} n$ for all $j$, we finally have 
$$\norm{\bv \Lambda_o^{1/2}\bv{V}_{o,i}}_2^2 =\sum_{j=1}^r \lambda_j\cdot \bv{V}_{o,i,j}^2 \leq \frac{n}{\epsilon\sqrt{\delta} n} = \frac{1}{\epsilon \sqrt{\delta}}$$ and
\begin{align*}
    \|\bv V_{o,i}\|_2^2 = \sum_{j=1}^r \bv V_{o,i, j}^2 
    &\leq \frac{n}{\epsilon^2\delta n^2} = \frac{1}{\epsilon^2\delta n}.
\end{align*}
\end{proof}

Let $\bv{\bar S} \in \R^{n \times |S|}$ be the scaled sampling matrix satisfying $\bv{\bar S}^T \bv{A} \bv{\bar S} = \frac{n}{s} \cdot \bv{A}_{S}$.
We next apply Lemma \ref{lemma:row_norm} in conjunction with a matrix Bernstein bound to show that $\bv \Lambda_o^{1/2}\bv V_o^T \bar{\bv S} \bar{\bv S}^T \bv V_o \bv \Lambda_o^{1/2}$ concentrates around its expectation, $\bv \Lambda_o$. Since by Fact \ref{def: equality of eigenvalues}, this matrix has identical eigenvalues to $\frac{n}{s} \cdot \bv{A}_{o,S} = \bar{\bv S}^T\bv V_o \bv \Lambda_o \bv V_o^T  \bar{\bv S}$, this allows us to argue that the eigenvalues of $\frac{n}{s} \cdot \bv{A}_{o,S}$ approximate those of $\bv \Lambda_o$.

\begin{lemma}[Concentration of outlying eigenvalues]\label{lemma: orthonormality}
Let $S \subseteq [n]$ be sampled as in Algorithm \ref{alg:eigenvalue estimate} for $s \geq \frac{c \log(1/(\epsilon \delta))}{\epsilon^3\sqrt{\delta}}$ where $c$ is a sufficiently large constant. Let $\bv{\bar S} \in \R^{n \times |S|}$ be the scaled sampling matrix satisfying $\bv{\bar S}^T \bv{A} \bv{\bar S} = \frac{n}{s} \cdot \bv{A}_{S}$. Letting $\bv \Lambda_o, \bv{V}_o$ be as in Definition \ref{def:split}, with probability at least $1-\delta$, $$\|\bv \Lambda_o^{1/2}\bv V_o^T \bar{\bv S} \bar{\bv S}^T \bv V_o\bv \Lambda_o^{1/2} - \bv \Lambda_o \|_2 \leq \epsilon n.$$ 
\end{lemma}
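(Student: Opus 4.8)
The plan is to apply the Matrix Bernstein inequality (Theorem~\ref{thm:matrix bernstein}) to the centered sum
$\bv Z = \bv \Lambda_o^{1/2}\bv V_o^T \bar{\bv S}\bar{\bv S}^T \bv V_o\bv \Lambda_o^{1/2} - \bv \Lambda_o$,
which we write as a sum over the sampled coordinates. Concretely, let $\chi_j$ be the indicator that $j \in S$, so $\E[\chi_j] = s/n$, and $\bar{\bv S}\bar{\bv S}^T = \frac{n}{s}\sum_j \chi_j \bv e_j \bv e_j^T$. Then
\[
\bv \Lambda_o^{1/2}\bv V_o^T \bar{\bv S}\bar{\bv S}^T \bv V_o\bv \Lambda_o^{1/2} = \frac{n}{s}\sum_{j=1}^n \chi_j\, \bv \Lambda_o^{1/2}\bv V_{o,j}\bv V_{o,j}^T \bv \Lambda_o^{1/2},
\]
and since $\sum_j \bv V_{o,j}\bv V_{o,j}^T = \bv V_o^T\bv V_o = \bv I$, the expectation of the sum is exactly $\bv \Lambda_o$. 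So set $\bv S_j = \left(\frac{n}{s}\chi_j - 1\right)\bv \Lambda_o^{1/2}\bv V_{o,j}\bv V_{o,j}^T \bv \Lambda_o^{1/2}$; these are independent, mean zero, and $\bv Z = \sum_j \bv S_j$.

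The two quantities feeding into Bernstein are the uniform bound $L$ and the variance $v$, and both are controlled by Lemma~\ref{lemma:row_norm}. For $L$: $\|\bv S_j\|_2 \le \frac{n}{s}\|\bv \Lambda_o^{1/2}\bv V_{o,j}\|_2^2 \le \frac{n}{s}\cdot\frac{1}{\epsilon\sqrt\delta}$, using the first incoherence bound (the $\frac{n}{s}\chi_j - 1$ factor is at most $n/s$ in magnitude). For the variance: $\bv S_j\bv S_j^T = \left(\frac{n}{s}\chi_j-1\right)^2 \|\bv \Lambda_o^{1/2}\bv V_{o,j}\|_2^2 \cdot \bv \Lambda_o^{1/2}\bv V_{o,j}\bv V_{o,j}^T \bv \Lambda_o^{1/2}$, and taking expectations, $\E\left[\left(\frac{n}{s}\chi_j-1\right)^2\right] = \frac{n}{s} - 1 \le \frac{n}{s}$, so
\[
\sum_j \E[\bv S_j\bv S_j^T] \preceq \frac{n}{s}\cdot\frac{1}{\epsilon\sqrt\delta}\sum_j \bv \Lambda_o^{1/2}\bv V_{o,j}\bv V_{o,j}^T \bv \Lambda_o^{1/2} = \frac{n}{s}\cdot\frac{1}{\epsilon\sqrt\delta}\bv \Lambda_o,
\]
hence $v \le \frac{n}{s}\cdot\frac{\|\bv\Lambda_o\|_2}{\epsilon\sqrt\delta} \le \frac{n}{s}\cdot\frac{n}{\epsilon\sqrt\delta}$ (the eigenvalues of $\bv A$, and thus $\bv\Lambda_o$, are bounded by $n$ since $\|\bv A\|_\infty \le 1$ implies $\|\bv A\|_2 \le n$). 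The matrices here are real symmetric so $\bv V_1 = \bv V_2$ and $\bv Z$ is symmetric; $d = \rank(\bv A_o) \le n$.

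Plugging $t = \epsilon n$ into the Bernstein tail: the failure probability is at most $2n\exp\left(\frac{-\epsilon^2 n^2/2}{v + L\epsilon n/3}\right)$. With $v \le \frac{n^2}{s\epsilon\sqrt\delta}$ and $L\epsilon n \le \frac{\epsilon n^2}{s\epsilon\sqrt\delta} = \frac{n^2}{s\sqrt\delta} \le \frac{n^2}{s\epsilon\sqrt\delta}$, the denominator is $O\!\left(\frac{n^2}{s\epsilon\sqrt\delta}\right)$, so the exponent is $-\Omega(s\epsilon^3\sqrt\delta)$. Choosing $s \ge \frac{c\log(1/(\epsilon\delta))}{\epsilon^3\sqrt\delta}$ for large enough $c$ makes $2n\exp(-\Omega(s\epsilon^3\sqrt\delta)) \le \delta$ — here one has to be a little careful that the $\log n$ from the $2n$ prefactor is absorbed; since $n$ appears, strictly the exponent needs to beat $\log n + \log(1/\delta)$, which the stated $s$ (together with the ambient assumption that $s$ is also at least the $\frac{\log^3 n}{\epsilon^3\delta}$-type bound of Theorem~\ref{thm:main_bound}, ensuring $s\epsilon^3\sqrt\delta \gtrsim \log n$) handles. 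The main subtlety — and the place I'd be most careful — is the bookkeeping of which factors of $\log n$, $1/\epsilon$, $1/\delta$ are needed: the clean part is the Bernstein setup and the variance computation via Lemma~\ref{lemma:row_norm}, while the fiddly part is verifying the prefactor $2n$ and the $L\epsilon n$ term in the denominator don't degrade the stated sample complexity, and confirming the lemma's stated threshold on $s$ is consistent with what Bernstein actually requires.
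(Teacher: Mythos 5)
Your Bernstein setup (the centered summands, the $L$ bound, the variance computation via Lemma~\ref{lemma:row_norm}) matches the paper's proof exactly, and you correctly identify $d = \rank(\bv A_o)$ as the relevant dimension in the tail bound. The gap is precisely where you flagged your own uncertainty: you then upper-bound $\rank(\bv A_o) \le n$, which puts a $2n$ prefactor in front of the exponential and forces the exponent to beat $\log n$ — something the stated sample size $s \ge \frac{c\log(1/(\epsilon\delta))}{\epsilon^3\sqrt\delta}$ does \emph{not} guarantee on its own. Your workaround — appealing to the ``ambient'' assumption that $s$ is also at least the $\frac{\log^3 n}{\epsilon^3\delta}$ value from Theorem~\ref{thm:main_bound} — is circular as a proof of the lemma as stated, since the lemma is an independent ingredient invoked inside the proof of that theorem and should be provable at its own stated threshold.

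The fix the paper uses is to sharpen the bound on $d$: since $\|\bv A\|_\infty \le 1$ gives $\|\bv A\|_F^2 \le n^2$, and every eigenvalue placed in $\bv\Lambda_o$ has magnitude at least $\epsilon\sqrt\delta\, n$ and so contributes at least $\epsilon^2\delta n^2$ to $\sum_i \lambda_i^2 = \|\bv A\|_F^2$, the number of outlying eigenvalues satisfies $\rank(\bv A_o) \le \frac{1}{\epsilon^2\delta}$. With $d \le \frac{1}{\epsilon^2\delta}$ the Bernstein prefactor is $\frac{2}{\epsilon^2\delta}$ rather than $2n$, the exponent only needs to beat $\log(1/(\epsilon^2\delta))$, and the stated $s \ge \frac{c\log(1/(\epsilon\delta))}{\epsilon^3\sqrt\delta}$ suffices with no $\log n$ factor at all. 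Everything else in your argument is sound; replacing $d \le n$ with $d \le \frac{1}{\epsilon^2\delta}$ closes the gap cleanly.
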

\begin{proof}
Define $\bv{E} = \bv \Lambda_o^{1/2}\bv V_o^T \bar{\bv S} \bar{\bv S}^T \bv V_o \bv \Lambda_o^{1/2}  - \bv \Lambda_o$.
For all $i \in [n]$, let $\bv{V}_{o,i}$ be the $i^{th}$ row of $\bv{V}_o$ and define the matrix valued random variable
\begin{align}
    \label{eq:rv}
    \bv Y_i = 
    \begin{cases}
    \frac{n}{s}\bv \Lambda_o^{1/2} \bv{V}_{o,i}\bv{V}_{o,i}^T\bv \Lambda_o^{1/2} , & \text{with probability } s/n\\
    0 & \text{otherwise.}
    \end{cases}
\end{align}

Define $\bv Q_i = \bv Y_i - \mathbb{E}\left[\bv Y_i\right]$. Observe that $\bv Q_1, \ldots, \bv Q_n$ are independent random variables and that $ \sum_{i=1}^n \bv Q_i=\bv \Lambda_o^{1/2} \bv V_o^T \bar{\bv S} \bar{\bv S}^T \bv V_o \bv \Lambda_o^{1/2}  - \bv \Lambda_o = \bv{E}$.  Further, observe that $\norm{\bv{Q}_i}_2 \le \max\left (1, \frac{n}{s} -1 \right)  \cdot \norm{ \bv \Lambda_o^{1/2}\bv{V}_{o,i} \bv{V}_{o,i}^T \bv \Lambda_o^{1/2}}_2 \leq \max\left (1, \frac{n}{s} -1 \right)  \cdot  \norm{\bv \Lambda_o^{1/2}\bv{V}_{o,i}}_2^2$. Now, $\norm{\bv \Lambda_o^{1/2}\bv{V}_{o,i}}_2^2 \leq \frac{1}{\epsilon \sqrt{\delta}}$ by Lemma~\ref{lemma:row_norm}. Thus, $\norm{\bv{Q}_i}_2 \le \frac{n}{\epsilon \sqrt{\delta} s}$.
The variance $\bv{Var}(\bv E) \eqdef\E(\bv{EE}^T)= \E(\bv{E}^T\bv{E})  =\sum_{i=1}^n \mathbb{E}[\bv Q_i^2]$ can be bounded as:
\begin{align}
\sum_{i=1}^n \mathbb{E}[\bv Q_i^2]
&= \sum_{i=1}^n \left[ \frac{s}{n} \cdot \left(\frac{n}{s}-1\right)^2  + \left (1-\frac{s}{n}\right ) \right] \cdot ( \bv \Lambda_o^{1/2}\bv{V}_{o,i} \bv{V}_{o,i}^T \bv \Lambda_o \bv{V}_{o,i} \bv{V}_{o,i}^T \bv \Lambda_o^{1/2})\nonumber \\
&\preceq \sum_{i=1}^n \frac{n}{s} \cdot \norm{\bv \Lambda_o^{1/2} \bv{V}_{o,i}}_2^2 \cdot  (\bv \Lambda_o^{1/2}\bv{V}_{o,i}  \bv{V}_{o,i}^T\bv \Lambda_o^{1/2}).\label{var_ineq}
\end{align}
Again by Lemma \ref{lemma:row_norm},  $\norm{\bv \Lambda_o^{1/2} \bv{V}_{o,i}}_2^2 \le \frac{1}{\epsilon \sqrt{\delta}}$. Plugging back into \eqref{var_ineq} we can bound,
\begin{align*}
\sum_{i=1}^n \mathbb{E}[\bv Q_i^2] &\preceq \sum_{i=1}^n \frac{n}{s} \cdot \frac{1}{\epsilon \sqrt{\delta}} \cdot  (\bv \Lambda_o^{1/2}\bv{V}_{o,i} \bv{V}_{o,i}^T \bv \Lambda_o^{1/2}) = \frac{n}{s \epsilon \sqrt{\delta}} \bv{\Lambda}_o \preceq \frac{n^2}{s\epsilon \sqrt{\delta}} \cdot \bv{I}.
\end{align*}
Since $\bv{Q}_i^2$ is PSD, this establishes that $\norm{\bv{Var}(\bv E)}_2 \le  \frac{n^2}{s \epsilon \sqrt{\delta}} $.
We then apply Theorem~\ref{thm:matrix bernstein} (the matrix Bernstein inequality) with $L = \frac{n}{s\epsilon \sqrt{\delta}}$, $v= \frac{n^2}{s\epsilon \sqrt{\delta}}$, and $d \le \frac{1}{\epsilon^2 \delta}$ since there are at most $\frac{\norm{\bv A}_F^2}{\delta \epsilon^2 n^2} \le \frac{1}{\epsilon^2 \delta}$ outlying eigenvalues with magnitude $\ge \sqrt{\delta} \epsilon n$ in $\bs{\Lambda}_o$. This gives:
\begin{align}
    \Pr\left(\left\|\bv E\right\|_2 \geq \epsilon n \right) &\leq \frac{2}{\epsilon^2 \delta} \cdot \exp\left(\frac{-\epsilon^2n^2/2}{v+L\epsilon n/3}\right)\notag\\
    &\leq \frac{2}{\epsilon^2 \delta} \cdot \exp \left(\frac{-\epsilon^2n^2 /2}{\frac{n^2}{s\epsilon\sqrt{\delta}} + \frac{\epsilon n^2}{3s\epsilon \sqrt{\delta}}}\right)\notag\\
    &\leq \frac{2}{\epsilon^2 \delta} \cdot \exp \left( \frac{-s\epsilon^3\sqrt{\delta}}{4}\right).\notag
\end{align}
Thus, if we set $s \geq \frac{c\log(1/(\epsilon \delta))}{\epsilon^3 \sqrt{\delta}}$ for large enough $c$, then the probability is bounded above by $\delta$, completing the proof.
\end{proof}

We cannot prove an analogous leverage score bound to Lemma \ref{lemma:row_norm} for the interior eigenvectors of $\bv A$ appearing in $\bv{V}_m$. Thus we cannot apply a matrix Bernstein bound as in Lemma \ref{lemma: orthonormality}. However, we can use Theorem \ref{thm: tropp2008} to show that the spectral norm of the random principal submatrix $\bv{A}_{m,S}$ is not too large, and thus that the eigenvalues of $\bv{A}_S = \bv{A}_{o,S}+\bv{A}_{m,S}$ are close to those of $\bv{A}_{o,S}$.

\begin{restatable}[Spectral norm bound -- sampled middle eigenvalues]{lemma}{mideigbound}
\label{middle}
Let $\bv A \in \mathbb{R}^{n\times n}$ be symmetric with $\|\bv A\|_\infty \leq 1$. Let $\bv A_m$ be as in Definition \ref{def:split}. Let $S$ be sampled as in Algorithm \ref{alg:eigenvalue estimate}. If $s\geq \frac{c\log n}{\epsilon^2\delta}$ for some sufficiently large constant $c$, then with probability at least $1-\delta$, $\norm{\bv A_{m,S}}_2 \leq \epsilon s$. 
\end{restatable}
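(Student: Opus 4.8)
The plan is to apply the random principal submatrix spectral norm bound of Theorem \ref{thm: tropp2008} to the Hermitian matrix $\bv A_m$, with $\bv S$ the (unscaled) $0/1$ diagonal sampling matrix that keeps index $j$ with probability $s/n$; note $\bv{S}\bv{A}_m\bv{S}$ has the same nonzero eigenvalues as the principal submatrix $\bv{A}_{m,S}$, so it suffices to bound $\mathbb{E}_2\|\bv{S}\bv{A}_m\bv{S}\|_2$ and then invoke Markov's inequality. Theorem \ref{thm: tropp2008} requires control of three quantities for the splitting $\bv A_m = \bv D_m + \bv H_m$ into diagonal and off-diagonal parts: (i) $\mathbb{E}_2\|\bv{S}\bv{H}_m\bv{S}\|_\infty$, i.e.\ the largest entry magnitude of $\bv A_m$ off the diagonal; (ii) $\mathbb{E}_2\|\bv{H}_m\bv{S}\|_{1\to 2}$, the largest column norm after sampling; and (iii) $\|\bv H_m\|_2 \le \|\bv A_m\|_2 + \|\bv D_m\|_2 \le \epsilon\sqrt{\delta}n + \|\bv D_m\|_\infty$, together with $\mathbb{E}_2\|\bv S\bv D_m\bv S\|_2 = \mathbb{E}_2\|\bv{S}\bv{D}_m\bv{S}\|_\infty \le \|\bv D_m\|_\infty$.

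The crux is that although $\|\bv A\|_\infty \le 1$, we have no a priori entrywise bound on $\bv A_m = \bv A - \bv A_o$; we must control $\|\bv A_m\|_\infty$ via $\|\bv A_m\|_\infty \le \|\bv A\|_\infty + \|\bv A_o\|_\infty \le 1 + \|\bv A_o\|_\infty$. So the key step is to bound $\|\bv A_o\|_\infty$ using the incoherence of $\bv V_o$ from Lemma \ref{lemma:row_norm}: for any entry, $|(\bv A_o)_{ij}| = |\bv V_{o,i}^T \bv\Lambda_o \bv V_{o,j}| = |(\bv\Lambda_o^{1/2}\bv V_{o,i})^T \mathrm{sign}(\bv\Lambda_o)(\bv\Lambda_o^{1/2}\bv V_{o,j})| \le \|\bv\Lambda_o^{1/2}\bv V_{o,i}\|_2 \|\bv\Lambda_o^{1/2}\bv V_{o,j}\|_2 \le \frac{1}{\epsilon\sqrt{\delta}}$, using Cauchy--Schwarz and Lemma \ref{lemma:row_norm}. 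Hence $\|\bv A_m\|_\infty \le 1 + \frac{1}{\epsilon\sqrt{\delta}} = O\!\left(\frac{1}{\epsilon\sqrt{\delta}}\right)$, which controls term (i) directly and term (iii)'s diagonal part. For term (ii): each column of $\bv H_m$ has squared $\ell_2$ norm at most $n \cdot \|\bv A_m\|_\infty^2 = O\!\left(\frac{n}{\epsilon^2\delta}\right)$, and after sampling roughly an $s/n$ fraction survives in expectation, so $\mathbb{E}_2\|\bv H_m\bv S\|_{1\to2} = O\!\left(\sqrt{\frac{s}{\epsilon^2\delta}}\right)$ (this needs the standard second-moment computation $\mathbb{E}\sum_j \delta_j^2 x_j^2 = \frac{s}{n}\sum_j x_j^2$ applied columnwise, plus a max over columns that costs only constants or is absorbed since we take an $\mathbb{E}_2$).

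Substituting into Theorem \ref{thm: tropp2008}, the three contributions are, up to constants, $\log n \cdot \frac{1}{\epsilon\sqrt{\delta}}$; $\sqrt{\frac{s\log n}{n}} \cdot \sqrt{\frac{n}{\epsilon^2\delta}} = \sqrt{\frac{s\log n}{\epsilon^2\delta}}$; and $\frac{s}{n}\big(\epsilon\sqrt{\delta}n + \frac{1}{\epsilon\sqrt{\delta}}\big) \le \epsilon\sqrt{\delta}s + \frac{s}{n\epsilon\sqrt{\delta}}$; plus the diagonal term $O\!\left(\frac{1}{\epsilon\sqrt{\delta}}\right)$. One checks that when $s \ge \frac{c\log n}{\epsilon^2\delta}$ each of these is at most $\frac{\epsilon s}{C'}$ for a suitable constant: the dominant constraints come from the second term, which is $\le \epsilon s/C'$ exactly when $s \gtrsim \frac{\log n}{\epsilon^4\delta}$ — wait, this is the point I expect to be the main obstacle, namely getting the $\epsilon$-dependence down to $\epsilon^2$ rather than $\epsilon^4$. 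The resolution (and the reason the lemma is stated with $s \ge \frac{c\log n}{\epsilon^2\delta}$) must be that the relevant incoherence bound used here is the cruder $\|\bv A_o\|_\infty = O(1/\sqrt{\delta})$ coming from $\|\bv A_o\|_\infty \le \|\bv A\|_\infty + \|\bv A_o - \bv A\|$-type reasoning or from a direct argument that $\|\bv A_m\|_\infty = O(1)$ — I would revisit Lemma \ref{lemma:row_norm} to extract the sharper statement that actually makes the second term $\sqrt{s\log n /(\epsilon^2 \delta)}\cdot\sqrt{\text{(something } O(1))}$, or alternatively absorb the $\sqrt{\delta}$ to trade against the $1/\delta$ in the sample size. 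Modulo pinning down this constant-chasing, the argument concludes: $\mathbb{E}_2\|\bv{S}\bv A_m\bv S\|_2 \le \frac{\epsilon s}{C''}$, so by Markov $\Pr\!\big(\|\bv A_{m,S}\|_2 \ge \epsilon s\big) \le \Pr\!\big(\|\bv{S}\bv A_m\bv S\|_2^2 \ge \epsilon^2 s^2\big) \le \frac{\mathbb{E}_2\|\bv{S}\bv A_m\bv S\|_2^2}{\epsilon^2 s^2} \le \delta$ after adjusting constants, which gives the claim.
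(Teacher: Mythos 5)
Your proposal follows the paper's blueprint — decompose $\bv A_m = \bv D_m + \bv H_m$, apply Theorem~\ref{thm: tropp2008}, control each term via the incoherence of $\bv V_o$, then Markov — and you correctly pinpoint where the argument is at risk of losing $\epsilon$-factors. But the speculative fix you offer is wrong, and as written your bound on the $1\to 2$ term leaves a genuine gap. You bound $\mathbb{E}_2\|\bv H_m \bv S\|_{1\to 2}$ by $\sqrt{n}\cdot\|\bv A_m\|_\infty = O\!\left(\sqrt{n}/(\epsilon\sqrt{\delta})\right)$, which makes the second Tropp term $\sqrt{\tfrac{s\log n}{n}}\cdot\sqrt{\tfrac{n}{\epsilon^2\delta}} = \sqrt{\tfrac{s\log n}{\epsilon^2\delta}}$; forcing this below $\epsilon\sqrt{\delta}\,s$ requires $s\gtrsim \tfrac{\log n}{\epsilon^4\delta^2}$, not $\tfrac{\log n}{\epsilon^2\delta}$. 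You suggest the resolution is a sharper entrywise bound on $\bv A_o$ or $\bv A_m$, but that is not available: $\|\bv A_o\|_\infty = \Theta\!\left(\tfrac{1}{\epsilon\sqrt{\delta}}\right)$ is in fact tight in general, and no improvement on $\|\bv A_m\|_\infty$ helps here.

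The actual fix avoids $\|\bv A_m\|_\infty$ altogether for this term. Since $\bv A_m = \bv V_m \bv V_m^T \bv A$ and $\bv V_m\bv V_m^T$ is an orthogonal projection, every column of $\bv A_m$ is a projection of the corresponding column of $\bv A$; hence $\|\bv A_m\|_{1\to 2} \le \|\bv A\|_{1\to 2} \le \sqrt{n}$ directly from $\|\bv A\|_\infty\le 1$. Then $\|\bv H_m\bv S\|_{1\to 2}\le \|\bv A_m\bv S\|_{1\to 2} \le \|\bv A_m\|_{1\to 2}\le\sqrt{n}$ deterministically (right-multiplication by the $0/1$ diagonal $\bv S$ only deletes columns, so it cannot increase the maximum column norm), and the second Tropp term becomes $\sqrt{\tfrac{s\log n}{n}}\cdot\sqrt{n}=\sqrt{s\log n}$, which is $\le\epsilon\sqrt{\delta}\,s$ exactly when $s\gtrsim\tfrac{\log n}{\epsilon^2\delta}$. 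With that substitution everything else in your argument (the $\|\bv A_m\|_\infty = O(1/(\epsilon\sqrt\delta))$ control of the $\ell_\infty$ and diagonal terms, the $\|\bv H_m\|_2$ bound, and the Markov step requiring $\mathbb{E}_2\|\bv A_{m,S}\|_2\le\epsilon\sqrt{\delta}\,s$) goes through as you set it up. One small clarification: your sentence claiming $\mathbb{E}_2\|\bv H_m\bv S\|_{1\to 2}=O(\sqrt{s/(\epsilon^2\delta)})$ via ``an $s/n$ fraction of entries survives'' is also incorrect reasoning — $\bv H_m\bv S$ drops whole columns, so no per-column savings accrue; the $\sqrt{s/n}$ factor lives in Theorem~\ref{thm: tropp2008}'s prefactor, not inside $\mathbb{E}_2\|\bv H_m\bv S\|_{1\to 2}$.
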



\begin{proof}
Let $\bv A_{m}=\bv D_m+\bv H_m$ where $\bv D_m$ is the matrix of diagonal elements and $\bv H_m$ the matrix of off-diagonal elements. Let $\bv S \in \R^{n \times |S|}$ be the binary sampling matrix with $\bv A_{m,S}=\bv S^T\bv A_m\bv S$. From Theorem~\ref{thm: tropp2008}, we have for some constant $C$,
\begin{equation}\label{Eq: tropp_bound}
    \mathbb{E}_2[\|\bv A_{m,S}\|_2] \leq C\bigg[\log n \cdot \mathbb{E}_2[\|\bv S^T\bv H_m\bv S\|_{\infty}]+\sqrt{ \frac{s \log n}{n}} \mathbb{E}_2[\|\bv H_m \bv S\|_{1\rightarrow 2}] +\frac{s}{n}\|\bv H_m\|_2 \bigg]+\mathbb{E}_2[\|\bv S^T\bv D_m \bv S \|].
\end{equation}
Considering the various terms in \eqref{Eq: tropp_bound}, we have $\|\bv S^T\bv H_m\bv S\|_{\infty} \leq \|\bv A_m\|_{\infty}$ and $\norm{\bv{S}^T \bv{D}_m \bv{S}}_2 = \norm{\bv{S}^T \bv{D}_m \bv{S}}_\infty \le \norm{\bv{A}_m}_\infty$. We also have $$\|\bv H_m \|_2 \leq \|\bv A_m \|_2+\| \bv D_m \|_2 \leq \| \bv A_m\|_2 +\|\bv A_m \|_{\infty} \le \epsilon \delta^{1/2} n + \|\bv A_m \|_{\infty}$$ and $$\|\bv H_m \bv S \|_{1\rightarrow 2} \leq \|\bv A_m \bv S \|_{1\rightarrow 2} \leq \|\bv A_m \|_{1\rightarrow 2} \le \sqrt{n}.$$ 
The final bound follows since $\bv{A}_m = \bv{V}_m \bv{V}_m^T\bv{A}  $, where $\bv{V}_m \bv{V}_m^T$ is an orthogonal projection matrix. Thus, $\|\bv A_m \|_{1\rightarrow 2} \le \|\bv A \|_{1\rightarrow 2} \le \sqrt{n}$ by our assumption that $\norm{\bv{A}}_\infty \le 1$.
Plugging all these bounds into \eqref{Eq: tropp_bound} we have, for some constant $C$,
\begin{align}\label{before_infty}
\mathbb{E}_2[\|\bv A_{m,S}\|_2] \leq C\bigg[\log n \cdot \|\bv A_{m}
\|_{\infty}+\sqrt{\log n\cdot s} +s \cdot \epsilon \delta^{1/2} \bigg].
\end{align}
It remains to bound $\norm{\bv{A}_m}_\infty$. We have $\bv A=\bv A_m+\bv A_o$ and thus by triangle inequality, 
\begin{align}\label{eq:inf_tri}
\norm{\bv{A}_m}_\infty \le \norm{\bv{A}}_\infty + \norm{\bv{A}_o}_\infty = 1 + \norm{\bv{A}_o}_\infty.
\end{align}
Writing $\bv A_o=\bv V_o \bv \Lambda_o \bv V_o^T$ (see Definition \ref{def:split}), and letting  $\bv V_{o,i}$ denote the $i$\textsuperscript{th} row of $\bv V_o$, the $(i,j)$\textsuperscript{th} element of $\bv A_o$ has magnitude$$|\bv A_{o,i,j}| = |\bv V_{o,i}\bv \Lambda_o \bv V^T_{o,j}| \leq \|\bv V_{o,i} \|_2 \cdot \| \bv \Lambda_o \bv V^T_{o,j} \|_2 ,$$by Cauchy-Schwarz. From Lemma \ref{lemma:row_norm}, we have $\|\bv V_{o,i}\|_2\leq \frac{1}{\epsilon \delta^{1/2} \sqrt{n}}$. Also, from \eqref{Eq: row_norm1}, $\|\bv \Lambda_o \bv V^T_{o,j} \|_2 = \|[\bv A\bv V_o]_j \|_2 \leq \sqrt{n}   $. Overall, for all $i,j$ we have $\bv A_{o,i,j} \leq \frac{1}{\epsilon \delta^{1/2} \sqrt{n}} \cdot \sqrt{n} =\frac{1}{\epsilon \delta^{1/2}}$, giving $\|\bv A_o \|_{\infty} \leq \frac{1}{\epsilon \delta^{1/2}}$. Plugging back into \eqref{eq:inf_tri} and in turn \eqref{before_infty}, we have for some constant $C$,
$$\mathbb{E}_2[\|\bv A_{m,S}\|_2] \leq C\bigg[\frac{\log n}{\epsilon\delta^{1/2}}+\sqrt{s\log n}+ s\epsilon \delta^{1/2} \bigg].$$
Setting $s \geq \frac{c\log n}{\epsilon^2\delta}$ for sufficiently large $c$, all terms in the right hand side of the above equation are bounded by $\epsilon \sqrt{\delta} s$ and so $$\mathbb{E}_2[\|\bv A_{m,S}\|_2] \leq 3\epsilon\sqrt{\delta} s$$ 
Thus, by Markov's inequality, with probability at least $1-\delta$, we have $\|\bv A_{m,S}\|_2 \leq 3\epsilon s$. We can adjust $\epsilon$ by a constant to obtain the required bound.
\end{proof}

\subsection{Main Accuracy Bounds}\label{sec:main accuracy bounds, uniform}

We now restate our main result, and give its proof via Lemmas \ref{lemma: orthonormality} and \ref{middle}.
\eigvalApprox*

\begin{proof}
Let $\bv{S} \in \R^{n \times |S|}$ be the binary sampling matrix with a single one in each column such that $\bv{S}^T \bv{A} \bv{S} = \bv{A}_S$. Let $\bar{\bv S} = \sqrt{n/s} \cdot \bv{S}$
Following Definition \ref{def:split}, we write $\bv A = \bv A_o + \bv A_m$.
By Fact \ref{def: equality of eigenvalues} we have that the nonzero eigenvalues of $\frac{n}{s} \cdot \bv{ A}_{o,S} = \bar{\bv S}^T \bv V_o \bv \Lambda_o \bv V_o^T \bar{\bv S}$ are identical to those of $\bv{\Lambda}_o^{1/2} \bv V_o^T \bar{\bv S} \bar{\bv S}^T \bv V_o \bv{\Lambda}_o^{1/2}$ where $\bv{\Lambda}_o^{1/2}$ is the square root matrix of $\bv{\Lambda}_o$ such that $\bv{\Lambda}_o^{1/2}\bv{\Lambda}_o^{1/2}=\bv{\Lambda}_o$. 

Note that $\bv{\Lambda}_o$ is Hermitian. However $\bv{\Lambda}_o^{1/2}$ may be complex, and hence $\bv{\Lambda}_o^{1/2} \bv V_o^T \bar{\bv S} \bar{\bv S}^T \bv V_o \bv{\Lambda}_o^{1/2}$ is \emph{not necessarily Hermitian}, although it does have real eigenvalues.
Thus, we can apply the perturbation bound of Fact \ref{fact:weyl_general} to $\bv{\Lambda}_o$ and $\bv{\Lambda}_o^{1/2} \bv V_o^T \bar{\bv S} \bar{\bv S}^T \bv V_o \bv{\Lambda}_o^{1/2}$ to claim for all $i \in [n]$, and some constant $C$, $$\lvert \lambda_i(\bv{\Lambda}_o^{1/2} \bv V_o^T \bar{\bv S} \bar{\bv S}^T \bv V_o \bv{\Lambda}_o^{1/2})-\lambda_i(\bv{\Lambda}_o) \rvert \leq C \log n \|\bv{\Lambda}_o^{1/2} \bv V_o^T \bar{\bv S} \bar{\bv S}^T \bv V_o \bv{\Lambda}_o^{1/2}-\bv{\Lambda}_o \|_2.$$
 By Lemma \ref{lemma: orthonormality} applied with error $\frac{\epsilon}{2C\log n}$, with probability at least $1-\delta$, for any $s \geq \frac{c \log(1/(\epsilon \delta)) \cdot \log^3 n}{\epsilon^3 \sqrt{\delta}}$ (for a large enough constant $c$) we have $\|\bv{\Lambda}_o^{1/2} \bv V_o^T \bar{\bv S} \bar{\bv S}^T \bv V_o \bv{\Lambda}_o^{1/2}-\bv{\Lambda}_o \|_2 \leq \frac{\epsilon n}{2C\log n}$. Thus, for all $i$,
\begin{align}
   \left| \lambda_i(\bv{\Lambda}_o^{1/2} \bv V_o^T \bar{\bv S} \bar{\bv S}^T \bv V_o \bv{\Lambda}_o^{1/2} ) - \lambda_i(\bv \Lambda_o)\right| &<\frac{\epsilon n}{2}.
  \label{eq:apply_perturbation}
\end{align}
We note that the conceptual part of the proof is essentially complete: the nonzero eigenvalues of $\frac{n}{s} \cdot \bv{A}_{o,S}$ are identical to those of $\bv{\Lambda}_o^{1/2} \bv V_o^T \bar{\bv S} \bar{\bv S}^T \bv V_o \bv{\Lambda}_o^{1/2}$, which we have shown well approximate those of $\bv{\Lambda}_o$ and in turn $\bv{A}_o$. i.e., the non-zero eigenvalues of $\frac{n}{s} \cdot \bv{A}_{o,S}$ approximate all outlying eigenvalues of $\bv{A}$. It remains to  carefully argue how these approximations should be `lined up' given the presence of zero eigenvalues in the spectrum of these matrices.  We also must account for the impact of the interior eigenvalues in $\bv{A}_{m,S}$, which is limited by the spectral norm bound of Lemma \ref{middle}.

\medskip

\noindent\textbf{Eigenvalue alignment and effect of interior eigenvalues.} 
First recall that $\bv A_S=\bv A_{o,S}+\bv A_{m,S}$. By Lemma \ref{middle} applied with error $\epsilon/2$, we have $\|\bv A_{m,S} \|_2 \leq \epsilon/2 \cdot s $ with probability at least $1-\delta$ when $s \geq \frac{c\log n}{\epsilon^2 \delta}$. By Weyl's inequality (Fact  \ref{thm:eigenvalue_perturbation_theorem}), for all $i \in [|S|]$ we thus have
\begin{align}\label{eq:eig_middle}
     \left \lvert \frac{n}{s}\lambda_i(\bv A_S) - \frac{n}{s}\lambda_i(\bv A_{o,S}) \right \rvert &\leq \frac{n}{s} \cdot \frac{\epsilon s}{2} = \frac{\epsilon n}{2}.
\end{align}
Consider $i \in [|S|]$ with $\lambda_i(\bv A_{o,S}) > 0$. Since the nonzero eigenvalues of $\frac{n}{s} \cdot \bv{A}_{o,S}$ are identical to those of $\bv{\Lambda}_o^{1/2} \bv V_o^T \bar{\bv S} \bar{\bv S}^T \bv V_o \bv{\Lambda}_o^{1/2}$,
$\frac{n}{s} \cdot \lambda_i(\bv A_{o,S}) = \lambda_i(\bv{\Lambda}_o^{1/2} \bv V_o^T \bar{\bv S} \bar{\bv S}^T \bv V_o \bv{\Lambda}_o^{1/2} )$, and so by 
\eqref{eq:apply_perturbation},
\begin{align}\label{eq:eig_top1}
\left| \frac{n}{s} \cdot  \lambda_i(\bv A_{o,S}) - \lambda_i(\bv \Lambda_o)\right| &< \frac{\epsilon n}{2}.
\end{align}
Analogously, consider $i \in [|S|]$ such that $\lambda_i(\bv A_{o,S}) < 0$. We have $\frac{n}{s}\cdot\lambda_i(\bv A_{o,S}) = \lambda_{r-(|S|-i)}(\bv{\Lambda}_o^{1/2} \bv V_o^T \bar{\bv S} \bar{\bv S}^T \bv V_o \bv{\Lambda}_o^{1/2})$, where $r$ is the dimension of $\bv \Lambda_o$ -- i.e., the number of outlying eigenvalues in $\bv{A}$. Again by \eqref{eq:apply_perturbation} we have
\begin{align}\label{eq:eig_bot1}
\left| \frac{n}{s} \cdot  \lambda_i(\bv A_{o,S}) - \lambda_{r-(|S|-i)}(\bv \Lambda_o)\right| &< \frac{\epsilon n}{2}.
\end{align}
Now the nonzero eigenvalues of $\bv{A}_o$ are identical to those of $\bv{\Lambda}_o$. Consider $i \in [|S|]$ such that  $\lambda_i(\bv A_{S}) \ge \epsilon s$. In this case, by \eqref{eq:eig_middle}, \eqref{eq:eig_top1}, and the triangle inequality, we have $\lambda_i(\bv{\Lambda}_o) > 0$ and thus we have $\lambda_i(\bv{\Lambda}_o) = \lambda_i(\bv{A}_o)$. In turn, again applying  \eqref{eq:eig_middle}, \eqref{eq:eig_top1}, and the triangle inequality, we have
$$\left |\frac{n}{s} \lambda_i(\bv A_{S})  -\lambda_i(\bv{A}_o)\right | \le \left |\frac{n}{s} \lambda_i(\bv A_{o,S})  -\lambda_i(\bv{A}_o)\right | + \left |\frac{n}{s} \lambda_i(\bv A_{S})  -\lambda_i(\bv{A}_{o,S})\right | \le  \epsilon n.$$
 Analogously, for $i \in [|S|]$ such that  $\lambda_i(\bv A_{S}) \le -\epsilon s$, we have by \eqref{eq:eig_middle} and \eqref{eq:eig_bot1} that $\lambda_{r-(|S|-i)}(\bv{\Lambda}_o) < 0$. Thus $\lambda_{r-(|S|-i)}(\bv{\Lambda}_o) = \lambda_{n-(r-i)}(\bv{A}_o)$. Again by \eqref{eq:eig_middle}, \eqref{eq:eig_bot1}, and triangle inequality this gives 
 $$\left| \frac{n}{s} \cdot  \lambda_i(\bv A_{S}) - \lambda_{n-(|S|-i)}(\bv A_o)\right| \le \epsilon n.$$

Now, consider all $i \in [n]$ such that $\lambda_i(\bv{A}_o)$ is not well approximated by one of the outlying eigenvalues of $\bv{A}_{S}$ as argued above. By \eqref{eq:eig_middle}, \eqref{eq:eig_top1}, and \eqref{eq:eig_bot1}, all such eigenvalues must have  $|\lambda_i(\bv{A}_o)| \le 2\epsilon n$. Thus, if we approximate them in any way either by the remaining eigenvalues of $\bv{A}_{S}$ with magnitude $\le \epsilon s$, or else by $0$, we will approximate all to error at most $3 \epsilon n$. Thus, if (as in Algorithm \ref{alg:eigenvalue estimate}) for  $i \in [|S|]$ with $\lambda_i(\bv{A}_{S}) \ge 0$, we let $\tilde \lambda_i(\bv{A}) = \frac{n}{s} \cdot \lambda_i(\bv{A}_{S})$ and for $i \in [|S|]$ with $\lambda_i(\bv{A}_S) < 0$, let $\tilde \lambda_{n-(|S|-i)}(\bv{A}) = \frac{n}{s} \cdot \lambda_i(\bv{A}_{S})$, and let $\tilde \lambda_i(\bv{A}) = 0$ for all other $i$, we will have for all $i$,
\begin{align*}
\left |\tilde \lambda_i(\bv{A}) - \lambda_i(\bv{A}_o) \right | \le 3\epsilon n.
\end{align*}
Finally by definition,  for all $i$, $|\lambda_i(\bv{A})-\lambda_i(\bv{A}_o)| \le \epsilon \sqrt{\delta} n \le \epsilon n$ and thus, via triangle inequality,
$
\left |\tilde \lambda_i(\bv{A}) - \lambda_i(\bv{A}) \right | \le 4\epsilon n.
$
This gives our final error bound after adjusting constants on $\epsilon$. 

Recall that we require $s \geq  \frac{c \log(1/(\epsilon \delta)) \cdot \log^3 n}{\epsilon^3 \sqrt{\delta}}$ for the outer eigenvalue bound of \eqref{eq:apply_perturbation} to hold with probability $1-\delta$. We  require $s \geq \frac{c\log n}{\epsilon^2\delta}$ for $\|\bv A_{m,S}\|_2 \leq \epsilon/2\cdot s$ to hold with probability $1-\delta$ by Lemma \ref{middle}. Thus, for both conditions to hold simultaneously with probability $1-2\delta$ by a union bound, if suffices to set  $s = \frac{c \log(1/(\epsilon \delta)) \cdot \log^3 n}{\epsilon^3 {\delta}} \ge \max\left(\frac{c \log(1/(\epsilon \delta)) \cdot \log^3 n}{\epsilon^3 \sqrt{\delta}}, \frac{c\log n}{\epsilon^2\delta}\right)$, where we use that $\log(1/(\epsilon \delta) \le O(\log n)$, as otherwise our algorithm can take $\bv{A}_S$ to be the full matrix $\bv{A}$. Adjusting $\delta$ to $\delta/2$ completes the theorem.
\end{proof}

\noindent \textbf{Remark:} The proof of Lemma~\ref{lemma: orthonormality} and consequently, Theorem~\ref{thm:main_bound} can be modified to give better bounds for the case when the eigenvalues of $\bv A_o$ lie in a bounded range -- between $\epsilon^a\sqrt{\delta} n$ and $\epsilon^b n$ where $0 \leq b \leq a \leq 1$. See Theorem~\ref{cor: refined_bound1} in Appendix \ref{app:refined} for details. For example, if all the top eigenvalues are equal, one can show that $s = \tilde O\left (\frac{\log^2 n}{\epsilon^2}\right )$ suffices to give $\pm \epsilon n$ error, nearly matching the lower bound of \cite{BakshiChepurkoJayaram:2020}. This seems to indicate that improving Theorem~\ref{thm:main_bound} in general requires tackling the case when the outlying eigenvalues in $\bs{\Lambda}_o$ have a wide range. 


\section{Improved Bounds via Sparsity-Based Sampling}\label{sec:sparsity}

We now prove  the  $\pm \epsilon \sqrt{\nnz(\bv A)}$ approximation bound of Theorem \ref{thm:nnz_main_bound}, assuming the ability to sample each row with probability proportional to $\frac{\nnz(\bv{A}_i)}{\nnz(\bv{A})}$.
Pseudocode for our algorithm is given in Algorithm~\ref{alg:nnz eigenvalue estimate}. Unlike in the uniform sampling case (Algorithm \ref{alg:eigenvalue estimate}), we cannot simply sample a principal submatrix of $\bv A$ and compute its eigenvalues. We must carefully zero out entries lying at the intersection of sparse rows and columns to ensure accuracy of our estimates. A similar approach is taken for the  norm-based sampling result of Theorem \ref{thm:l2_main_bound}. We defer that proof to Appendix \ref{sec:l2}.

\subsection{Preliminary Lemmas}

Our first step is to argue that zeroing out entries in sparse rows/columns in step 5 of Algorithm \ref{alg:nnz eigenvalue estimate} does not introduce significant error. We define $\bv{A}' \in \R^{\nbyn}$ to be the extension of $\bv{A}'$ to the original matrix -- i.e., $\bv{A}'_{ij} = 0$ whenever $i = j$ or $\nnz(\bv A_i) \nnz(\bv A_j) < \frac{\epsilon^2 \nnz({\bv A})}{c_2\log^2 n}$. Otherwise $\bv{A}'_{ij} = \bv{A}_{ij}$.
We argue via a strengthening of Girshgorin's theorem that $|\lambda_i(\bv A) - \lambda_i(\bv A')| \leq \epsilon \sqrt{\nnz(\bv A)}$ for all $i$. 

After this step is complete, our proof follows the same general outline as that of Theorem \ref{thm:main_bound} in Section 3. We split $\bv{A}' = \bv{A}'_o + \bv{A}'_m$, arguing that (1) after sampling $\norm{\bv{A}_{m,S}'}_2 \le \epsilon \sqrt{\nnz(\bv A)}$ and (2) that the eigenvalues of $\bv{A}'_{o,S}$ are $\pm \epsilon \sqrt{\nnz(\bv{A})}$ approximations to those of $\bv{A}'_o$. In both cases, we critically use that the rescaling factors introduced in line 4 of Algorithm \ref{alg:nnz eigenvalue estimate} do not introduce too much variance, due to the zeroing out of entries in $\bv{A}'$.

\begin{flushleft}
\begin{minipage}[!ht]{\linewidth}
\begin{algorithm}[H]
\caption{Eigenvalue estimator using sparsity-based sampling}
\label{alg:nnz eigenvalue estimate}
\begin{algorithmic}[1]
\STATE {\bfseries Input:} Symmetric $\bv A \in \mathbb{R}^{n\times n}$ with $\|\bv A \|_{\infty} \leq 1$, Accuracy $\epsilon \in (0,1)$, failure prob. $\delta \in (0,1)$.  $\nnz(\bv{A}_i)$ for all $i \in [n]$ and $\nnz(\bv{A})$.
\STATE Fix $s = \frac{c_1\log^8 n}{\epsilon^8\delta^4}$ where $c_1$ is a sufficiently large constant.
\STATE Add each $i \in [n]$ to sample set $S$ independently, with probability $p_i=\min \left (1,\frac{s\nnz(\bv{A}_i)}{\nnz(\bv{A})}\right )$. Let the principal submatrix of $\bv A$ corresponding to $S$ be $\bv A_S$. 
\STATE Let $\bv{A}_S = \bv{D} \bv{A}_S \bv{D}$ where $\bv{D} \in \R^{|S| \times |S|}$ is diagonal with $\bv{D}_{i,i} = \frac{1}{\sqrt{p_j}}$ if the $i^{th}$ element of $S$ is $j$. 
\STATE Construct $\bv{A}'_S \in \R^{|S| \times |S|}$ from $\bv{A}_S$  as follows:
\begin{align*}
    \bv [\bv{A}'_S]_{i,j} &=   
    \begin{cases}
    0 & \text{if $i=j$ or }\nnz(\bv A_i) \nnz(\bv A_j) < \frac{\epsilon^2 \nnz({\bv A})}{c_2\log^2 n}\text{ for sufficient large constant $c_2$}\\
     [\bv A_S]_{i,j} & \text{otherwise}.
    \end{cases}
\end{align*}
\STATE Compute the eigenvalues of $\bv A'_S$: $\lambda_1(\bv{A}'_S) \ge \ldots \ge \lambda_{|S|}(\bv{A}'_S)$.
\STATE For all $i \in [|S|]$ with $\lambda_i(\bv{A}'_S) \ge 0$, let $\tilde \lambda_i(\bv{A}) =  \lambda_i(\bv{A}'_S)$. For all $i \in [|S|]$ with $\lambda_i(\bv{A}'_S) < 0$, let $\tilde \lambda_{n-(|S|-i)}(\bv{A}) =  \lambda_i(\bv{A}'_S)$. For all remaining $i \in [n]$, let $\tilde \lambda_i(\bv{A}) = 0$. 
\STATE {\bfseries Return:} Eigenvalue estimates $\tilde \lambda_1(\bv{A}) \ge \ldots \ge \tilde \lambda_n(\bv{A})$.
\end{algorithmic}
\end{algorithm}
\vspace{-0.5em}
\noindent \textbf{Remark:} Throughout, we will assume that $\bv{A}$ does not have any rows/columns that are all $0$, as such rows will never be sampled and will have no effect on the output of Algorithm \ref{alg:nnz eigenvalue estimate}. Additionally, we will assume that $\nnz(\bv{A}) \ge \frac{c_1\log^{8} n}{\epsilon^8\delta^4}$, as otherwise, $\bv{A}$ has at most $s = \frac{c_1\log^{8} n}{\epsilon^8\delta^4}$ non-zero rows.
Thus, rather than running Algorithm \ref{alg:nnz eigenvalue estimate}, we can directly compute the eigenvalues of $\bv A$.
\end{minipage}
\end{flushleft}
\vspace{0.5em}

\begin{lemma}\label{lem:nnz-zeroed}
    Let $\bv A \in \R^{n \times n}$ be symmetric with $\|\bv A\|_\infty \leq 1$ and $\nnz(\bv{A}) \ge 2/\epsilon^2$.  Let $\bv{A}' \in \R^{n \times n}$ have $\bv A'_{ij} = 0$ if $i = j$ or $\nnz(\bv A_i) \cdot \nnz(\bv A_j) < \frac{\epsilon^2 \nnz({\bv A})}{c_2\log^2 n}$ for a sufficiently large constant $c_2$ and $\bv A'_{ij} =  \bv A_{ij}$ otherwise. Then, for all $i \in [n]$, $$|\lambda_i(\bv A) - \lambda_i(\bv A')| \le \epsilon \sqrt{\nnz(\bv{A})}.$$
\end{lemma}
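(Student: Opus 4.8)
The plan is to reduce to a spectral-norm bound on the perturbation and then invoke Weyl's inequality. Set $\bv E \eqdef \bv A - \bv A'$ and $\tau \eqdef \frac{\epsilon^2\nnz(\bv A)}{c_2\log^2 n}$. Then $\bv E$ is real symmetric, with $\bv E_{ii} = \bv A_{ii}$, with $\bv E_{ij} = \bv A_{ij}$ exactly for the off-diagonal pairs $(i,j)$ having $\nnz(\bv A_i)\nnz(\bv A_j) < \tau$, and $\bv E_{ij} = 0$ otherwise. By Fact~\ref{thm:eigenvalue_perturbation_theorem} (Weyl), $\max_i |\lambda_i(\bv A) - \lambda_i(\bv A')| \le \|\bv E\|_2$, and since $\bv E$ is symmetric, $\|\bv E\|_2 = \max_i |\lambda_i(\bv E)|$. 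So it suffices to show every eigenvalue of $\bv E$ has magnitude at most $\epsilon\sqrt{\nnz(\bv A)}$. We may assume $\nnz(\bv A_i)\ge 1$ for every $i$, since an all-zero row/column contributes a zero eigenvalue to both $\bv A$ and $\bv A'$ and can be deleted.

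The crux is a weighted (similarity-transformed) form of Girshgorin's theorem. Let $\bv W = \operatorname{diag}(w_1,\dots,w_n)$ with $w_i = \nnz(\bv A_i)^{-1/2}$. Since $\bv W\bv E\bv W^{-1}$ has the same eigenvalues as $\bv E$, Fact~\ref{thm:girshgorin} shows every eigenvalue of $\bv E$ lies in a disc $D(\bv E_{ii}, R_i)$ with $R_i = \sum_{j\ne i}\frac{w_i}{w_j}|\bv E_{ij}|$. For a pair $(i,j)$ with $\bv E_{ij}\ne 0$ we have $\nnz(\bv A_i)\nnz(\bv A_j) < \tau$, hence $\frac{w_i}{w_j} = \sqrt{\nnz(\bv A_j)/\nnz(\bv A_i)} < \sqrt{\tau}/\nnz(\bv A_i)$; also $|\bv E_{ij}| = |\bv A_{ij}| \le 1$, and the number of indices $j$ with $\bv E_{ij}\ne 0$ is at most $\nnz(\bv A_i)$ (each such $j$ has $\bv A_{ij}\ne 0$). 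Thus $R_i \le \nnz(\bv A_i)\cdot\frac{\sqrt{\tau}}{\nnz(\bv A_i)}\cdot 1 = \sqrt{\tau}$. Together with $|\bv E_{ii}| = |\bv A_{ii}|\le 1$, every eigenvalue $\lambda$ of $\bv E$ obeys $|\lambda| \le 1 + \sqrt{\tau} = 1 + \frac{\epsilon\sqrt{\nnz(\bv A)}}{\sqrt{c_2}\,\log n}$. Using $\nnz(\bv A)\ge 2/\epsilon^2$ so that $\epsilon\sqrt{\nnz(\bv A)}\ge \sqrt{2}$ (hence $1 \le \epsilon\sqrt{\nnz(\bv A)}/\sqrt{2}$), and taking $c_2$ a sufficiently large constant so that $\frac{1}{\sqrt{c_2}\log n} \le 1 - \frac{1}{\sqrt 2}$, we get $|\lambda| \le \epsilon\sqrt{\nnz(\bv A)}$, which completes the proof.

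The part to get right is precisely this choice of weights: a plain Girshgorin bound ($w_i\equiv 1$) gives only $R_i \le \#\{j\ne i : \bv E_{ij}\ne 0\}$, which is not controlled by $\sqrt{\tau}$ when $\nnz(\bv A_i)$ is large, and would only license zeroing entries in rows with $\nnz(\bv A_i)\lesssim \epsilon\sqrt{\nnz(\bv A)}$ — not the stronger product condition used in Algorithm~\ref{alg:nnz eigenvalue estimate}. Letting $w_i$ decay like $\nnz(\bv A_i)^{-1/2}$ is exactly what converts the product condition $\nnz(\bv A_i)\nnz(\bv A_j)<\tau$ into a uniform per-row bound: the shrinkage of $w_i/w_j$ cancels the up-to-$\nnz(\bv A_i)$ surviving entries of row $i$. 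Everything else (the diagonal terms, the additive $1$, the edge case of zero rows) is routine and is absorbed using the hypothesis $\nnz(\bv A)\ge 2/\epsilon^2$ and the freedom to enlarge $c_2$.
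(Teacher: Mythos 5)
Your proof is correct, and it is a genuinely different and cleaner argument than the paper's. Both proofs hinge on Girshgorin's circle theorem, but they deploy it very differently. The paper partitions the rows into $O(\log n)$ dyadic sparsity buckets $\mathcal{I}_k$, forms for each of the $O(\log^2 n)$ bucket pairs $(k,l)$ the sub-block $\bv{\widehat A}_{kl}$ of zeroed-out entries, bounds $\norm{\bv{\widehat A}_{kl}}_2$ by applying Girshgorin to $\bv{\widehat A}_{kl}\bv{\widehat A}_{kl}^T$ (exploiting that every row and column in the block has bounded sparsity), and sums the contributions — a geometric-series calculation in which one $\log n$ is paid for the outer sum and one for the per-block bound, which is exactly why the threshold in the lemma carries a $\log^2 n$. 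You instead conjugate $\bv E=\bv A-\bv A'$ by the diagonal matrix $\bv W=\operatorname{diag}(\nnz(\bv A_i)^{-1/2})$ and apply Girshgorin once to $\bv W\bv E\bv W^{-1}$: the weight ratio $w_i/w_j=\sqrt{\nnz(\bv A_j)/\nnz(\bv A_i)}$ turns the product-thresholding condition directly into a uniform per-row radius $R_i\le\sqrt\tau$, with no bucketing and no sum over buckets. As you note, this gives a strictly sharper conclusion: with your argument the $\log^2 n$ in the threshold is pure slack — a constant in place of $c_2\log^2 n$ would already suffice — whereas the paper's route actually needs those factors. (Your bookkeeping of the diagonal $\pm 1$ via $\nnz(\bv A)\ge 2/\epsilon^2$ and the zero-row edge case matches what the paper does with its auxiliary matrix $\bv A''$.) This weighted-Girshgorin observation could be used to shave $\log$ factors from the zeroing threshold in Algorithm~\ref{alg:nnz eigenvalue estimate}, though those savings would not affect the dominant term in the final sample complexity bound, which comes from Lemma~\ref{lem:nnz-middle}.
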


\begin{proof}
We consider the matrix $\bv{A}''$, which is defined identically to $\bv{A}'$ except we only set $\bv{A}''_{ij} = 0$ if  $\nnz(\bv A_i) \cdot \nnz(\bv A_j) < \frac{\epsilon^2 \nnz({\bv A})}{c_2\log^2 n}$. I.e., we do not have the condition requiring setting the diagonal to $0$. We will show that $|\lambda_i(\bv A) - \lambda_i(\bv A'')| \le \epsilon/2 \cdot \sqrt{\nnz(\bv{A})}$. By Weyl's inequality, and the assumption that $\nnz(\bv{A}) \ge 2/\epsilon^2$, we then have $|\lambda_i(\bv A) - \lambda_i(\bv A')| \le \epsilon/2 \cdot \sqrt{\nnz(\bv{A})} + 1 \le \epsilon \cdot \sqrt{\nnz(\bv{A})}$ as required.

Let $\mathcal{I}_k \subset [n]$ be the set of rows/columns with $\nnz(\bv A_i) \in \left[\frac{\nnz(\bv A)}{2^k}, \frac{\nnz(\bv A)}{2^{k-1}}\right)$ and $\bv A_{kl} = \bv A(\mathcal{I}_k, \mathcal{I}_l)$ be the submatrix of $\bv{A}$ formed with rows in $\mathcal{I}_k$ and columns in $\mathcal{I}_l$. Define $\bv{A}''_{kl}$ in the same way and observe that $\bv{A}''_{kl} = \bv{A}_{kl}$ whenever $2^{k+l} \le \frac{c_2 \nnz(\bv A)\log^2 n}{\epsilon^2}$.

When $2^{k+l} > \frac{c_2\nnz(\bv A) \log^2 n}{\epsilon^2}$, we may zero out some entries of $\bv{A}_{kl}$ to produce $\bv{A}_{kl}''$. Let $\bv{\widehat A}_{kl}$ be equal to $\bv{A}_{kl}$ on this set of zeroed out entries, and $0$ everywhere else.
Observe that  $(\bv{\widehat A}_{kl} \bv{\widehat A}_{kl}^T)_{m,:} = (\bv{\widehat A}_{kl})_{m,:}\bv{\widehat A}_{kl}^T$. Next observe that $(\bv{\widehat A}_{kl})_{m,:}$ has at most $\nnz(\bv{A}_{m}) \le \frac{\nnz(\bv A)}{2^{k-1}}$ non-zero entries. Similarly, each row of $\bv{\widehat A}_{kl}^T$ has at most $\frac{\nnz(\bv A)}{2^{l-1}}$ non-zero elements. Thus, for all $m \in |\mathcal{I}_k|$, using that $\norm{\bv{A}}_\infty \le 1$,
\begin{align*}
    \|(\bv{\widehat A}_{kl} \bv{\widehat A}_{kl}^T)_{m,:}\|_1 \leq \frac{\nnz(\bv A)^2}{2^{k+l-2}} = \frac{4\nnz(\bv A)^2}{2^{k+l}}.
\end{align*}
Applying Girshgorin's circle theorem (Theorem \ref{thm:girshgorin}) we thus have:
\begin{align}\label{eq:girsh}
    \|\bv{\widehat A}_{kl}\|_2^2 = \|\bv{\widehat A}_{kl}\bv{\widehat A}_{kl}^T\|_2 \leq \max_m \|(\bv{\widehat A}_{kl} \bv{\widehat A}_{kl}^T)_{m,:}\|_1 \leq  \frac{4\nnz(\bv A)^2}{2^{k+l}}.
\end{align}
Let $\bar{\bv A}_{kl} \in \mathbb{R}^{n \times n}$ be a symmetric matrix such that $\bar{\bv A}_{kl}(\mathcal{I}_k, \mathcal{I}_l)=\bv{\widehat A}_{kl}$, $\bar{\bv A}_{kl}(\mathcal{I}_l, \mathcal{I}_k)=\bv{\widehat A}_{lk}$, and $\bar{\bv A}_{kl}$ is zero everywhere else. By triangle inequality and the bound of \eqref{eq:girsh},
\begin{align*}
    \|\bar{\bv A}_{kl}\|_2 
    \leq \|\bv{\widehat A}_{kl}\|_2 + \|\bv{\widehat A}_{lk}\|_2
    \leq \frac{4\nnz(\bv A)}{2^{(k+l)/2}}.
\end{align*}
Observe that, since we assume all rows have at least one non-zero entry, $\nnz(\bv A_i) \geq 1$ and $\nnz(\bv A) \leq n^2$.  Therefore, $k,l$ can range from $1$ to $\log(n^2) = 2 \log n$. By triangle inequality, 
\begin{align*}
\|\bv A - \bv A''\|_2 &\le \sum_{(k,l): 2^{k+l} > \frac{c_2\nnz(\bv A)\log^2 n}{\epsilon^2}} \norm{\bv{\bar A}_{kl}}\\
&\le \sum_{k=1}^{2 \log n} \frac{4\epsilon \sqrt{\nnz(\bv{A})}}{\sqrt{c_2} \cdot \log n} \cdot \sum_{i=1}^{2 \log n} \frac{1}{2^{i-1}}\\
&\le \frac{16 \epsilon \sqrt{\nnz(\bv{A})}}{\sqrt{c_2}}.
\end{align*}
Finally, setting $c_2$ large enough and using Weyls' inequality (Fact \ref{fact:weyl_general}) we have the required bound:
\begin{align*}
    |\lambda_i(\bv A) - \lambda_i(\bv A'')| &\leq \epsilon/2 \sqrt{\nnz(\bv A)}.
\end{align*}
\end{proof}

We next give a bound on the coherence of the outlying eigenvectors of $\bv{A}'$. This bound is analogous to Lemma \ref{lemma:row_norm}, but is more refined, taking into account the sparsity of each row.

\begin{lemma}[Incoherence of outlying eigenvectors in terms of sparsity]
\label{lemma:general_row_norm}
Let $\bv A, \bv{A}' \in \mathbb{R}^{n\times n}$ be as in Lemma \ref{lem:nnz-zeroed}. Let $\bv A'_o= \bv V'_o\bv \Lambda'_o \bv V_o^{'T}$ where $\bv \Lambda'_o$ is diagonal, with the eigenvalues of $\bv A'$ with magnitude $\ge \epsilon \sqrt{\delta} \sqrt{\nnz(\bv{A})}$ on its diagonal, and $\bv{V}'_o$ has columns equal to the corresponding eigenvectors. Let $\bv V'_{o,i}$ denote the $i$\textsuperscript{th} row of $\bv V'_o$. Then, \begin{align*}
\|\bv{\Lambda}_o^{'1/2}\bv{V}'_{o,i} \|_2^2 \leq \frac{\nnz(\bv{A}_i)}{\epsilon \sqrt{\delta} \sqrt{\nnz(\bv{A})}}\hspace{1em}and\hspace{1em}\|\bv V'_{o,i}\|^2_2 \leq \frac{\nnz(\bv{A}_i)}{\epsilon^2\delta \nnz(\bv{A})}.
\end{align*}
\end{lemma}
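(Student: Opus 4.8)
The plan is to mimic the proof of Lemma~\ref{lemma:row_norm}, but instead of bounding $\|[\bv A\bv V_o]_i\|_2^2$ by $\|[\bv A]_i\|_2^2\cdot\|\bv V_o\|_2^2\le n$ using only $\|\bv A\|_\infty\le1$, I would use the sparsity of row $i$ of $\bv{A}'$ to get a tighter bound. The key observation is that $\bv A' \bv V'_o = \bv V'_o \bv \Lambda'_o$, so the $i$\textsuperscript{th} row satisfies $[\bv A'\bv V'_o]_i = [\bv A']_i \bv V'_o$, and hence
\begin{align*}
\sum_{j=1}^{r'} (\lambda'_j)^2 \cdot (\bv V'_{o,i,j})^2 = \|[\bv A']_i \bv V'_o\|_2^2 \le \|[\bv A']_i\|_2^2 \cdot \|\bv V'_o\|_2^2 \le \|[\bv A']_i\|_2^2,
\end{align*}
where $r' = \rank(\bv A'_o)$ and I used that $\bv V'_o$ has orthonormal columns. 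Now the crucial point is that $\nnz([\bv A']_i) \le \nnz(\bv A_i)$ (zeroing out entries only reduces sparsity) and each nonzero entry has magnitude $\le1$, so $\|[\bv A']_i\|_2^2 \le \nnz(\bv A_i)$. This replaces the crude bound of $n$ in Lemma~\ref{lemma:row_norm} with the row-specific quantity $\nnz(\bv A_i)$.

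From here the two claimed bounds follow exactly as in Lemma~\ref{lemma:row_norm}. Since $|\lambda'_j| \ge \epsilon\sqrt{\delta}\sqrt{\nnz(\bv A)}$ for all eigenvalues in $\bv\Lambda'_o$ by definition, I would write
\begin{align*}
\|\bv\Lambda_o^{'1/2}\bv V'_{o,i}\|_2^2 = \sum_{j=1}^{r'} |\lambda'_j| \cdot (\bv V'_{o,i,j})^2 \le \frac{1}{\epsilon\sqrt{\delta}\sqrt{\nnz(\bv A)}} \sum_{j=1}^{r'} (\lambda'_j)^2 (\bv V'_{o,i,j})^2 \le \frac{\nnz(\bv A_i)}{\epsilon\sqrt{\delta}\sqrt{\nnz(\bv A)}},
\end{align*}
and similarly
\begin{align*}
\|\bv V'_{o,i}\|_2^2 = \sum_{j=1}^{r'} (\bv V'_{o,i,j})^2 \le \frac{1}{(\epsilon\sqrt{\delta}\sqrt{\nnz(\bv A)})^2}\sum_{j=1}^{r'}(\lambda'_j)^2(\bv V'_{o,i,j})^2 \le \frac{\nnz(\bv A_i)}{\epsilon^2\delta\,\nnz(\bv A)}.
\end{align*}

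I do not anticipate a serious obstacle here; this is essentially a routine refinement of Lemma~\ref{lemma:row_norm}. The one point that requires a little care is justifying $\|[\bv A']_i\|_2^2 \le \nnz(\bv A_i)$: here $[\bv A']_i$ denotes the $i$\textsuperscript{th} row of the full $n\times n$ matrix $\bv A'$ (the extension, not the submatrix), whose support is contained in the support of $[\bv A]_i$ since we only ever zero out entries, and whose surviving entries are unchanged from $\bv A$ and thus bounded by $1$ in magnitude. One should also note that $\bv A'$ is symmetric (the zeroing rule is symmetric in $i,j$ and the diagonal is zeroed), so that $\bv V'_o$, $\bv\Lambda'_o$ are well-defined with real eigenvalues and an orthonormal eigenbasis, and Fact~\ref{def: equality of eigenvalues}-style manipulations go through. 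With these remarks in place the proof is complete.
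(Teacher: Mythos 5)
Your proof is correct and matches the paper's proof essentially verbatim: both start from $\bv A'\bv V'_o = \bv V'_o\bv\Lambda'_o$, bound $\|[\bv A']_i\|_2^2 \le \|\bv A_i\|_2^2 \le \nnz(\bv A_i)$ using that the zeroing-out rule only removes entries and $\|\bv A\|_\infty \le 1$, and then divide through by the lower bound on $|\lambda'_j|$. Your extra remarks on symmetry of $\bv A'$ and on using $|\lambda'_j|$ rather than $\lambda'_j$ in the $\bv\Lambda_o^{'1/2}$ step are small points of added care but not new ideas.
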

\begin{proof}
The proof is nearly identical to that of Lemma \ref{lemma:row_norm}.
Observe that $\bv A' \bv V'_o =\bv V'_o \bv \Lambda'_o $. Letting $[\bv A' \bv V'_o]_i$ denote the $i$\textsuperscript{th} row of the $\bv A' \bv V'_o$, we have
\begin{equation}\label{Eq: row_norm1_nnz}
    \|[\bv A' \bv V'_o]_i\|_2^2 = \|[\bv V'_o \bv \Lambda'_o]_i\|_2^2 = \sum_{j=1}^r \lambda_j^2 \cdot \bv V_{o,i, j}^{'2},
\end{equation}
where $r=\rank(\bv A'_o)$, $\bv V'_{o,i,j}$ is the $(i,j)$\textsuperscript{th} element of $\bv V'_o$  and $\lambda_j=\bv \Lambda'_o(j,j)$. Since $\bv{V}'_o$ has orthonormal columns, we thus have $ \|[\bv A' \bv V'_o]_i\|_2^2 \le  \|\bv A'_i\|_2^2 \le \norm{\bv A_i}_2^2 \le \nnz(\bv{A}_i)$.
Therefore, by \eqref{Eq: row_norm1_nnz},
\begin{equation}\label{Eq: eig_bound_nnz }
    \sum_{j=1}^r \lambda_j^2 \cdot  \bv V_{o,i, j}^{'2} \leq \nnz(\bv{A}_i).
\end{equation}
Since by definition $\lvert \lambda_j \rvert \geq \epsilon \sqrt{\delta} \sqrt{\nnz(\bv{A})}$ for all $j$, we can concluse that $
   \|\bv{\Lambda}_o^{'1/2}\bv{V}'_{o,i} \|_2^2 =\sum_{j=1}^r \lambda_j \cdot  \bv V_{o,i, j}^{'2} \leq \frac{\nnz(\bv{A}_i)}{\epsilon \sqrt{\delta} \sqrt{\nnz(\bv{A})}}$
and
$
    \|\bv V'_{o,i}\|_2^2 = \sum_{j=1}^r \bv V_{o,i, j}^{'2} 
    \leq \frac{\nnz(\bv{A}_i)}{\epsilon^2\delta \nnz(\bv{A})}$, which completes the lemma.
\end{proof}

\subsection{Outer and Middle Eigenvalue Bounds}\label{sec:accuracy bounds_nnz}

Using Lemma \ref{lemma:general_row_norm}, we next argue that the eigenvalues of $\bv{A}_{o,S}'$ will approximate those of $\bv{A}'$, and in turn those of $\bv{A}$. The proof is very similar to Lemma \ref{lemma: orthonormality} in the uniform sampling case.

\begin{lemma}[Concentration of outlying eigenvalues with sparsity-based sampling]\label{lemma:nnz_large}
Let $\bv{A},\bv{A}' \in\mathbb{R}^{n\times n}$ be as in Lemmas \ref{lem:nnz-zeroed} and \ref{lemma:general_row_norm}.
Let $\bv A' = \bv A'_m + \bv A'_o$, where $\bv A'_m = \bv V'_m\bv{\Lambda}'_m\bv {\bv V'}_m^{T}$, and 
$\bv A'_o = \bv V'_o\bv{\Lambda}'_o\bv {\bv V'}_o^{T}$ are projections onto the eigenspaces with magnitude $< \epsilon\sqrt{\delta}\sqrt{\nnz(\bv A)}$ and $\ge \epsilon\sqrt{\delta}\sqrt{\nnz(\bv A)}$ respectively (analogous to Definition \ref{def:split})
As in Algorithm \ref{alg:nnz eigenvalue estimate}, for all $i \in [n]$ let $p_i=\min\left (1,\frac{s \nnz(\bv{A}_i)}{\nnz(\bv{A})}\right )$ and let $\bar{\bv S}$ be a scaled diagonal sampling matrix such that the $\bar{\bv S}_{ii}=\frac{1}{\sqrt{p_i}}$ with probability $p_i$ and $\bar{\bv S}_{ii}=0$ otherwise. If $s \geq \frac{c \log(1/(\epsilon \delta))}{\epsilon^3 \sqrt{\delta}}$ for a large enough constant $c$, then with probability at least $1-\delta$, $\|\bv \Lambda_o^{'1/2}\bv V_o^{'T} \bar{\bv S} \bar{\bv S}^T \bv V'_o\bv \Lambda_o^{'1/2} - \bv \Lambda'_o \|_2 \leq \epsilon \sqrt{\nnz(\bv{A})}$.   
\end{lemma}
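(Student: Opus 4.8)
The plan is to follow the proof of Lemma~\ref{lemma: orthonormality} essentially verbatim, with two changes: the uniform incoherence bound of Lemma~\ref{lemma:row_norm} is replaced by the sparsity-weighted bound of Lemma~\ref{lemma:general_row_norm}, and the uniform inclusion probability $s/n$ is replaced by $p_i = \min(1, s\,\nnz(\bv A_i)/\nnz(\bv A))$. Concretely, for each $i \in [n]$ I would introduce the matrix-valued random variable $\bv Y_i = \frac{1}{p_i}\,\bv\Lambda_o^{'1/2}\bv V'_{o,i}\bv V_{o,i}^{'T}\bv\Lambda_o^{'1/2}$ with probability $p_i$, and $\bv Y_i = \bv 0$ otherwise, so that $\E[\bv Y_i] = \bv\Lambda_o^{'1/2}\bv V'_{o,i}\bv V_{o,i}^{'T}\bv\Lambda_o^{'1/2}$ and, since $\bv V'_o$ has orthonormal columns, $\sum_{i=1}^n \E[\bv Y_i] = \bv\Lambda_o^{'1/2}\bv V_o^{'T}\bv V'_o\bv\Lambda_o^{'1/2} = \bv\Lambda'_o$. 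Setting $\bv Q_i = \bv Y_i - \E[\bv Y_i]$, the $\bv Q_i$ are independent, $\sum_i \bv Q_i = \bv E \eqdef \bv\Lambda_o^{'1/2}\bv V_o^{'T}\bar{\bv S}\bar{\bv S}^T\bv V'_o\bv\Lambda_o^{'1/2} - \bv\Lambda'_o$, and every index with $p_i = 1$ has $\bv Q_i = \bv 0$ (it is included deterministically), so only the indices with $p_i = s\,\nnz(\bv A_i)/\nnz(\bv A) < 1$ contribute.

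Next I would supply the two ingredients for matrix Bernstein (Theorem~\ref{thm:matrix bernstein}). For the uniform norm bound, $\|\bv Q_i\|_2 \le \max(1,\tfrac{1}{p_i}-1)\cdot\|\bv\Lambda_o^{'1/2}\bv V'_{o,i}\|_2^2$, and Lemma~\ref{lemma:general_row_norm} gives $\|\bv\Lambda_o^{'1/2}\bv V'_{o,i}\|_2^2 \le \frac{\nnz(\bv A_i)}{\epsilon\sqrt{\delta}\sqrt{\nnz(\bv A)}}$; the crucial observation is that for $p_i < 1$ the factor $\nnz(\bv A_i)$ cancels: $\frac{1}{p_i}\cdot\frac{\nnz(\bv A_i)}{\epsilon\sqrt{\delta}\sqrt{\nnz(\bv A)}} = \frac{\nnz(\bv A)}{s\,\nnz(\bv A_i)}\cdot\frac{\nnz(\bv A_i)}{\epsilon\sqrt{\delta}\sqrt{\nnz(\bv A)}} = \frac{\sqrt{\nnz(\bv A)}}{s\epsilon\sqrt{\delta}}$, so one may take $L = \frac{\sqrt{\nnz(\bv A)}}{s\epsilon\sqrt{\delta}}$. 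For the variance, computing as in \eqref{var_ineq} yields $\sum_i \E[\bv Q_i^2] = \sum_i(\tfrac{1}{p_i}-1)\,\|\bv\Lambda_o^{'1/2}\bv V'_{o,i}\|_2^2\,(\bv\Lambda_o^{'1/2}\bv V'_{o,i}\bv V_{o,i}^{'T}\bv\Lambda_o^{'1/2}) \preceq L\sum_i \bv\Lambda_o^{'1/2}\bv V'_{o,i}\bv V_{o,i}^{'T}\bv\Lambda_o^{'1/2} = L\,\bv\Lambda'_o \preceq L\sqrt{\nnz(\bv A)}\cdot\bv I$, where I use that $\bv A'$ has $\|\bv A'\|_\infty \le 1$ and $\nnz(\bv A') \le \nnz(\bv A)$, hence $\|\bv\Lambda'_o\|_2 \le \|\bv A'\|_2 \le \|\bv A'\|_F \le \sqrt{\nnz(\bv A)}$; thus $v = \frac{\nnz(\bv A)}{s\epsilon\sqrt{\delta}}$. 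The dimension parameter is $d \le \frac{\|\bv A'\|_F^2}{\epsilon^2\delta\,\nnz(\bv A)} \le \frac{1}{\epsilon^2\delta}$, the number of outlying eigenvalues of $\bv A'$.

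Finally I would invoke Theorem~\ref{thm:matrix bernstein} with $t = \epsilon\sqrt{\nnz(\bv A)}$: since $\epsilon < 1$ we have $v + Lt/3 \le \frac{2\nnz(\bv A)}{s\epsilon\sqrt{\delta}}$, so the exponent collapses to $-\frac{s\epsilon^3\sqrt{\delta}}{4}$, giving $\Pr(\|\bv E\|_2 \ge \epsilon\sqrt{\nnz(\bv A)}) \le \frac{2}{\epsilon^2\delta}\exp(-s\epsilon^3\sqrt{\delta}/4)$, which is at most $\delta$ once $s \ge \frac{c\log(1/(\epsilon\delta))}{\epsilon^3\sqrt{\delta}}$ for a sufficiently large constant $c$. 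The one genuinely delicate point --- the analogue of the main obstacle in the uniform case --- is verifying that the $\frac{1}{p_i}$ rescaling does not inflate either $L$ or $v$; this goes through precisely because the $\nnz(\bv A_i)$ in the denominator of $1/p_i$ exactly cancels the $\nnz(\bv A_i)$ coming out of the sparsity-weighted incoherence bound of Lemma~\ref{lemma:general_row_norm}, and because the (deterministic) indices with $p_i = 1$ drop out of both the norm and the variance bound. Everything else is a routine transcription of the uniform-sampling argument.
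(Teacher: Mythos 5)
Your proposal is correct and matches the paper's proof essentially step for step: same decomposition into the random variables $\bv Y_i$ and $\bv Q_i$, same observation that only indices with $p_i<1$ contribute, same cancellation of $\nnz(\bv A_i)$ between the $1/p_i$ factor and the sparsity-weighted incoherence bound of Lemma~\ref{lemma:general_row_norm}, same bounds $L=\frac{\sqrt{\nnz(\bv A)}}{s\epsilon\sqrt{\delta}}$, $v=\frac{\nnz(\bv A)}{s\epsilon\sqrt{\delta}}$, $d\le\frac{1}{\epsilon^2\delta}$, and same application of matrix Bernstein. The only cosmetic difference is that you keep the exact factor $(\tfrac{1}{p_i}-1)$ in the variance rather than bounding it by $\tfrac{1}{p_i}$ immediately, which changes nothing.
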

\begin{proof}
Define $\bv{E} = \bv \Lambda_o^{'1/2}\bv V_o^{'T} \bar{\bv S} \bar{\bv S}^T \bv V'_o\bv \Lambda_o^{'1/2} - \bv \Lambda'_o$.
For all $i \in [n]$, let $\bv{V}_{o,i}$ be the $i^{th}$ row of $\bv{V}'_o$ and define the matrix valued random variable
\begin{align}
    \label{eq:rv_nnz}
    \bv Y_i = 
    \begin{cases}
    \frac{1}{p_i}\bv \Lambda_o^{'1/2}\bv V'_{o,i} \bv V_{o,i}^{'T}\bv \Lambda_o^{'1/2} , & \text{with probability } p_i\\
    0 & \text{otherwise.}
    \end{cases}
\end{align}
Define $\bv Q_i = \bv Y_i - \mathbb{E}\left[\bv Y_i\right]$. We can  observe that $\bv Q_1, \bv Q_2, \ldots, \bv Q_n$ are independent random variables and that $ \sum_{i=1}^n \bv Q_i=\bv \Lambda_o^{'1/2}\bv V_o^{'T} \bar{\bv S}\bar{\bv S}^T \bv V'_o\bv \Lambda_o^{'1/2} - \bv \Lambda'_o = \bv{E}$. Let $P=\{i \in [n]: p_i < 1\}$. Then, observe that $\sum_{i \in [n]\setminus P} \bv Q_i= 0$. So, $\bv{E}=\sum_{i \in P} \bv{Q}_i$. Then, similar to the proof of Lemma~\ref{lemma: orthonormality}, we need to bound $\|\bv{Q}_i \|_2$ for all $i \in P$ and  $\bv{Var}(\bv E) \eqdef\E(\bv{EE}^T)= \E(\bv{E}^T\bv{E})  =\sum_{i\in P} \mathbb{E}[\bv Q_i^2]$ using the improved row norm bounds of Lemma~\ref{lem:nnz-zeroed}. In particular, we have 
 \begin{align}
 \sum_{i\in P}\mathbb{E}[\bv Q_i^2] &= \sum_{i\in P} \left[ p_i \cdot \left(\frac{1}{p_i}-1\right)^2  + \left (1-p_i\right ) \right] \cdot ( \bv \Lambda_o^{1/2}\bv V_{o,i} \bv V_{o,i}^T \bv \Lambda_o \bv V_{o,i} \bv V_{o,i}^T \bv \Lambda_o^{1/2})\nonumber \\
 &\preceq \sum_{i\in P} \frac{1}{p_{i}} \cdot \norm{\bv \Lambda_o^{1/2} \bv V_{o,i}}_2^2 \cdot  (\bv \Lambda_o^{1/2}\bv V_{o,i}  \bv V_{o,i}^T\bv \Lambda_o^{1/2}).\label{var_ineq_nnz}
 \end{align}
 By Lemma \ref{lem:nnz-zeroed},  $\norm{\bv \Lambda_o^{1/2} \bv V_{o,i}}_2^2 \le \frac{\nnz(\bv{A}_i)}{\epsilon \sqrt{\delta} \sqrt{\nnz(\bv{A})}}$. Plugging back into \eqref{var_ineq_nnz},
 \begin{align*}
 \sum_{i\in P} \mathbb{E}[\bv Q_i^2] &\preceq\sum_{i\in P} \frac{1}{p_{i}} \cdot \frac{\nnz(\bv A_i)}{\epsilon \sqrt{\delta} \sqrt{\nnz(\bv A)}} \cdot  (\bv \Lambda_o^{1/2}\bv V_{o,i} \bv V_{o,i}^T \bv \Lambda_o^{1/2}) \\
 &\preceq \frac{\sqrt{\nnz(\bv A)}}{s \epsilon \sqrt{\delta}}(\sum_{i\in P} \Lambda_o^{1/2}\bv V_{o,i} \bv V_{o,i}^T \bv \Lambda_o^{1/2}) \\
 &= \frac{\sqrt{\nnz(\bv A)}}{s \epsilon \sqrt{\delta}}\bv{\Lambda}_o \preceq  \frac{\nnz(\bv A)}{s \epsilon \sqrt{\delta}} \cdot \bv{I}.
 \end{align*}
 Since $\bv{Q}_i^2$ is PSD this establishes that $v \leq \|\textbf{Var(E)} \|_2 \leq \frac{\nnz(\bv A)}{s \epsilon \sqrt{\delta}}$. Since there are at most $\frac{\nnz(\bv A)}{\delta\epsilon^2\nnz(\bv A)} = \frac{1}{\epsilon^2 \delta}$ eigenvalues with absolute value $\geq \epsilon\sqrt{\delta}\sqrt{\nnz(\bv A)}$, we can apply the matrix Bernstein inequality exactly as in the proof of Lemma \ref{lemma: orthonormality} with $d = \frac{1}{\epsilon^2 \delta}$ to show that when $s \geq \frac{c \log(1/(\epsilon \delta))}{\epsilon^3 \sqrt{\delta}}$ for large enough $c$, with probability at least $1-\delta$, $\left\|\bv E\right\|_2 \leq \epsilon \sqrt{\nnz(\bv{A})} $.
\end{proof}

We next bound the spectral norm of $\bv{A}'_{m,S}$. This is the most challenging part of the proof -- the rows of this matrix are sampled non-uniformly and scaled proportional to their inverse sampling probabilities, so we cannot apply existing bounds on the spectral norms of uniformly sampled random submatrices \cite{rudelson2007sampling}. We extend these bounds to the non-uniform case, critically using that entries which would be scaled up significantly after sampling (i.e. those lying in sparse rows/columns), have already been set to $0$ in $\bv{A}'_{m,S}$, and thus do not contribute to the spectral norm.

\begin{lemma}[Concentration of middle eigenvalues with sparsity-based sampling]
\label{lem:nnz-middle}
Let $\bv{A},\bv{A}' \in\mathbb{R}^{n\times n}$ be as in Lemmas \ref{lem:nnz-zeroed} and \ref{lemma:general_row_norm}.
Let $\bv A' = \bv A'_m + \bv A'_o$, where $\bv A'_m = \bv V'_m\bv{\Lambda}'_m\bv {\bv V'}_m^{T}$, and 
$\bv A'_o = \bv V'_o\bv{\Lambda}'_o\bv {\bv V'}_o^{T}$ are projections onto the eigenspaces with magnitude $< \epsilon\sqrt{\delta}\sqrt{\nnz(\bv A)}$ and $\ge \epsilon\sqrt{\delta}\sqrt{\nnz(\bv A)}$ respectively (analogous to Definition \ref{def:split}).
As in Algorithm \ref{alg:nnz eigenvalue estimate}, for all $i \in [n]$ let $p_i=\min\left (1,\frac{s \nnz(\bv{A}_i)}{\nnz(\bv{A})}\right )$ and let $\bar{\bv S}$ be a scaled diagonal sampling matrix such that the $\bar{\bv S}_{ii}=\frac{1}{\sqrt{p_i}}$ with probability $p_i$ and $\bar{\bv S}_{ii}=0$ otherwise. If $s \geq \frac{c\log^8 n}{\epsilon^8\delta^4}$ for a large enough constant $c$, then with probability at least $1-\delta$, $$\norm{\bar{\bv S}\bv A'_{m}\bar{\bv S}}_2 \leq \epsilon \sqrt{\nnz(\bv{A})}.$$
\end{lemma}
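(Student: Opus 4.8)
The plan is to follow the proof of Lemma~\ref{middle}, replacing the uniform random submatrix spectral norm bound of Theorem~\ref{thm: tropp2008} by a non-uniform analogue built from the decoupling/recoupling Lemma~\ref{lemma:coupling} together with the non-uniform column sampling bound of Theorem~\ref{thm: rudelson}. Decompose $\bv A'_m = \bv D'_m + \bv H'_m$ into its diagonal and off-diagonal parts. Since $\bv A'$ has zero diagonal, $[\bv D'_m]_{ii} = -[\bv A'_o]_{ii} = -[\bv A'_m]_{ii}$, and a single diagonal entry obeys $|[\bv A'_m]_{ii}| \le \|\bv A'_m\|_2 < \epsilon\sqrt\delta\sqrt{\nnz(\bv A)}$; by Lemma~\ref{lemma:general_row_norm} it is also at most $\frac{\nnz(\bv A_i)}{\epsilon\sqrt\delta\sqrt{\nnz(\bv A)}}$. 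The $i$th diagonal entry of $\bar{\bv S}\bv A'_m\bar{\bv S}$ is $\frac{1}{p_i}[\bv D'_m]_{ii}$ when $i$ is sampled: if $p_i = 1$ the first estimate bounds it by $\epsilon\sqrt\delta\sqrt{\nnz(\bv A)}$, and if $p_i < 1$ the factor $\frac{1}{p_i} = \frac{\nnz(\bv A)}{s\,\nnz(\bv A_i)}$ cancels the $\nnz(\bv A_i)$ in the second estimate, leaving $\frac{\sqrt{\nnz(\bv A)}}{s\epsilon\sqrt\delta}$. Hence $\|\bar{\bv S}\bv D'_m\bar{\bv S}\|_2 \le \epsilon\sqrt{\nnz(\bv A)}$ for $s$ large, and it remains to bound $\|\bar{\bv S}\bv H'_m\bar{\bv S}\|_2$.

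Because $\bv H'_m$ has zero diagonal, Lemma~\ref{lemma:coupling} gives $\E_2\|\bar{\bv S}\bv H'_m\bar{\bv S}\|_2 \le 2\,\E_2\|\bar{\bv S}\bv H'_m\hat{\bv S}\|_2$ for an independent copy $\hat{\bv S}$. Conditioning on $\bar{\bv S}$ and applying Theorem~\ref{thm: rudelson} with $\hat{\bv S}$ sampling the columns of the fixed matrix $\bar{\bv S}\bv H'_m$, then averaging over $\bar{\bv S}$ and applying Theorem~\ref{thm: rudelson} once more to $\|\bar{\bv S}\bv H'_m\|_2 = \|\bv H'_m\bar{\bv S}\|_2$ (now with the deterministic matrix $\bv H'_m$), reduces the task to controlling three quantities: the entrywise maximum $\|\bar{\bv S}\bv H'_m\bar{\bv S}\|_\infty$, rescaled column-norm terms such as $\E_2\|\bv H'_m\bar{\bv S}\|_{1\to2}$, and $\|\bv H'_m\|_2 \le \|\bv A'_m\|_2 + \|\bv D'_m\|_2 < 2\epsilon\sqrt\delta\sqrt{\nnz(\bv A)}$. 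The last is already below $\epsilon\sqrt{\nnz(\bv A)}$: the $\sqrt\delta$ in the eigenvalue split of Definition~\ref{def:split} is precisely the slack that renders this term harmless (in contrast to Theorem~\ref{thm: tropp2008}, where the analogous term carried an $\frac sn$ damping that Theorem~\ref{thm: rudelson} lacks).

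For the entrywise maximum, write $[\bv H'_m]_{ij} = [\bv A']_{ij} - [\bv A'_o]_{ij}$ and bound $\frac{1}{\sqrt{p_ip_j}}\,|[\bv H'_m]_{ij}|$ termwise. The $[\bv A']_{ij}$ part is nonzero only if that entry survived the zeroing in Algorithm~\ref{alg:nnz eigenvalue estimate}, in which case $\sqrt{\nnz(\bv A_i)\nnz(\bv A_j)} \ge \frac{\epsilon\sqrt{\nnz(\bv A)}}{\sqrt{c_2}\log n}$, so $\frac{1}{\sqrt{p_ip_j}} \le \frac{\sqrt{c_2}\,\log n\,\sqrt{\nnz(\bv A)}}{s\epsilon}$; the $[\bv A'_o]_{ij}$ part is at most $\frac{\sqrt{\nnz(\bv A_i)\nnz(\bv A_j)}}{\epsilon\sqrt\delta\sqrt{\nnz(\bv A)}}$ by Cauchy--Schwarz and Lemma~\ref{lemma:general_row_norm}, whose $\sqrt{\nnz(\bv A_i)\nnz(\bv A_j)}$ is cancelled by the same rescaling factor (this is the key point: even for very sparse rows, where $\frac{1}{\sqrt{p_ip_j}}$ is as large as $\frac{\nnz(\bv A)}{s}$, both surviving and zeroed positions stay controlled). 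Thus $\|\bar{\bv S}\bv H'_m\bar{\bv S}\|_\infty = O\big(\frac{\log n\,\sqrt{\nnz(\bv A)}}{s\epsilon\sqrt\delta}\big)$. The column-norm terms are handled in the same spirit: for a row $j$ with $p_j < 1$, $\|[\bv H'_m]_{:,j}\|_2 \le \|[\bv A']_{:,j}\|_2 + \|[\bv A'_o]_{:,j}\|_2 \le 2\sqrt{\nnz(\bv A_j)}$, using $\|\bv A\|_\infty\le1$, $\|\bv V'_o\|_2 = 1$, and $\|\bv\Lambda'_o\bv V'_{o,j}\|_2 = \|[\bv A'\bv V'_o]_j\|_2 \le \sqrt{\nnz(\bv A_j)}$ by~\eqref{Eq: row_norm1_nnz}, so the $\frac1{\sqrt{p_j}}$ scaling makes this $\le 2\sqrt{\nnz(\bv A)/s}$; the at most $s$ rows with $p_j = 1$ are sampled without rescaling, and the block of $\bar{\bv S}\bv A'_m\bar{\bv S}$ on those rows is a genuine submatrix of $\bv A'_m$ (so has spectral norm $< \epsilon\sqrt\delta\sqrt{\nnz(\bv A)}$), with the cross blocks to the remaining rows bounded separately via the incoherence estimates.

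Combining, $\E_2\|\bar{\bv S}\bv A'_m\bar{\bv S}\|_2$ is at most a $\plog(n)\cdot\poly(1/\delta)$ multiple of $\frac{\sqrt{\nnz(\bv A)}}{s}$ plus a term of order $\sqrt\delta\,\sqrt{\nnz(\bv A)}$; choosing $s \ge \frac{c\log^8 n}{\epsilon^8\delta^4}$ and readjusting constants makes this at most $\epsilon\sqrt\delta\,\sqrt{\nnz(\bv A)}$, and Markov's inequality then gives $\|\bar{\bv S}\bv A'_m\bar{\bv S}\|_2 \le \epsilon\sqrt{\nnz(\bv A)}$ with probability at least $1-\delta$. (The generous powers of $\log n$ and $1/\delta$ in the requirement on $s$ are what absorb the two nested applications of Theorem~\ref{thm: rudelson}, the conversions between $\E_2$ and ordinary expectation, and the Markov step.) The main obstacle --- the reason this needs more than the direct argument of Lemma~\ref{middle} --- is exactly the interplay between the rescaling factors $\frac{1}{\sqrt{p_ip_j}}$ and the fact that $\bv A'_m$, unlike $\bv A$, need not have bounded entries or columns: it is the zeroing out of entries with $\nnz(\bv A_i)\nnz(\bv A_j) < \frac{\epsilon^2\nnz(\bv A)}{c_2\log^2 n}$ (Lemma~\ref{lem:nnz-zeroed}) together with the sparsity-weighted incoherence of $\bv V'_o$ (which controls the residual $\bv A'_o$ mass even in the zeroed positions) that keeps every relevant quantity at scale $\frac{\sqrt{\nnz(\bv A)}}{s}\cdot\plog(n)$, small once $s$ is polylogarithmically large, while the densest $O(s)$ rows are dispatched directly.
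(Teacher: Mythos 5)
The overall skeleton you describe — diagonal/off-diagonal split, decoupling via Lemma~\ref{lemma:coupling}, two applications of Theorem~\ref{thm: rudelson}, and then termwise bounds exploiting the zeroing and the incoherence from Lemma~\ref{lemma:general_row_norm} — is exactly the paper's. But there are two linked errors that leave a genuine gap.

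First, you misstate what Theorem~\ref{thm: rudelson} produces. Applying it twice to $\E_2\|\bar{\bv S}\bv H'_m\hat{\bv S}\|_2$ yields the three quantities $\E_2\|\bar{\bv S}\bv H'_m\hat{\bv S}\|_{1\to 2}$, $\E_2\|\bv H'_m\bar{\bv S}\|_{1\to 2}$, and $\|\bv H'_m\|_2$; there is no $\|\cdot\|_\infty$ term (that belongs to Theorem~\ref{thm: tropp2008}, the uniform-sampling bound you are precisely trying to avoid). Your entrywise-maximum estimate is therefore not one of the quantities you need, although it is morally in the right direction if you meant to use it to bound the first $1\to 2$ norm by entrywise-max$\,\cdot\sqrt{|S|}$.

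Second, and more seriously, that substitute does not work, and the quantity $\E_2\|\bar{\bv S}\bv H'_m\hat{\bv S}\|_{1\to 2}$ is where the real difficulty lies. For a fixed column $j$, $\|(\bar{\bv S}\bv H'_m)_{:,j}\|_2^2/p_j = \frac1{p_j}\sum_{i\in S}\frac{1}{p_i}|[\bv H'_m]_{ij}|^2$ is a random sum whose individual terms can be as large as $\frac{\nnz(\bv A)}{s\,\nnz(\bv A_i)}$ (huge for sparse $i$), and for dense rows with $p_i=1$ the entry itself can be as large as $\epsilon\sqrt\delta\sqrt{\nnz(\bv A)}$. The entrywise max bound $M = O\bigl(\tfrac{\log n\sqrt{\nnz(\bv A)}}{s\epsilon\sqrt\delta}\bigr)$ you derive is only valid when both $p_i<1$ and $p_j<1$, so multiplying by $\sqrt{|S|}$ does not cover the dense-row contribution, and your fallback ``block of $p=1$ rows is a genuine submatrix, cross-blocks by incoherence'' does not give a bound on the \emph{column norm} of the \emph{rescaled} sampled matrix at the required scale of $\epsilon\sqrt\delta\sqrt{\nnz(\bv A)}$. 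What saves the paper's argument is that the column norm is controlled only \emph{in expectation}, $\E\|(\bar{\bv S}\bv A'_m)_{:,j}\|_2^2 = \|\bv A'_{m,:,j}\|_2^2 \le \epsilon^2\delta\nnz(\bv A)$ or $\nnz(\bv A_j)$, and one must then run a scalar Bernstein inequality (Theorem~\ref{thm:bernstein}) on the random variables $z_i = \frac{1}{p_i}|[\bv A'_m]_{ij}|^2$ — with a careful split into the two contributions from $\bv A'$ and $\bv A'_o$, using the zeroing threshold to control $|z_i|$ and $\Var$, and then a union bound over $j$ — to show the column norms exceed their expectation by at most a $\log n$ factor with probability $1 - n^{-c'}$. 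This Bernstein concentration is the heart of the lemma; your proposal omits it, and without it the bound on $\E_2\|\bar{\bv S}\bv H'_m\hat{\bv S}\|_{1\to 2}$, and hence on $\E_2\|\bar{\bv S}\bv A'_m\bar{\bv S}\|_2$, cannot be established.
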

\begin{proof}
The initial part of the proof follows the outline of proof of the spectral norm bound for uniformly random submatrices (Theorem \ref{thm: tropp2008}) of~\cite{tropp2008norms}.
From Lemma \ref{lemma:general_row_norm}, we have $\|{\bv V'}_{o,i}\|_2 \leq \frac{\sqrt{\nnz({\bv A}_i)}}{\epsilon\sqrt{\delta} \sqrt{\nnz({\bv A})}}$. 
Also, following the proof of Lemma~\ref{lemma:general_row_norm},  we have $\|{\bv \Lambda'}_o {\bv V'}^T_{o,j} \|_2 = \|[{\bv A'}{\bv V'}_o]_j \|_2 \leq \sqrt{\nnz({\bv A}_j)}$.
Thus, for all $i,j \in [n]$, using Cauchy Schwarz's inequality, we have 
\begin{align}
\label{eq:nnz-aoij}
    |{\bv A'}_{o,i,j}|=|{\bv V'}_{o,i} {\bv \Lambda'}_o {\bv V'}_{o,j}^T| \leq \|{\bv V'}_{o,i}\|_2 \cdot \|{\bv \Lambda'}_o {\bv V'}_{o,j}^T\|_2 \leq \frac{\sqrt{\nnz({\bv A}_i)}}{\epsilon \sqrt{\delta} \sqrt{\nnz({\bv A})}} \cdot \sqrt{\nnz({\bv A}_j)}.
\end{align} 
Let ${\bv A'}_m = \bv H_m + \bv D_m$ where $\bv{H}_m$ and $\bv{D}_m$ contain the off-diagonal and diagonal elements of $\bv{A}'_m$ respectively. Note that while $\bv{A}'$ is zero on the diagonal, $\bv{A}'_m$ may not be. We have: 
\begin{align*}
    \E_2 \|\bar{\bv S}{\bv A'}_m \bar{\bv S}\|_2 \leq \E_2\|\bar{\bv S}\bv H_m \bar{\bv S}\|_2 + \E_2\|\bar{\bv S}\bv D_m \bar{\bv S}\|_2.
\end{align*}
Using Lemma \ref{lemma:coupling} (decoupling) on $\E_2\|\bv S\bv H_m \bar{\bv S}\|_2$, we get 
\begin{align}\label{eq:decoupled}
    \E_2 \|\bar{\bv S}{\bv A'}_m \bar{\bv S}\|_2 \leq 2\E_2\|\bar{\bv S}\bv H_m \hat{\bv S}\|_2 + \E_2\|\bar{\bv S}\bv D_m \bar{\bv S}\|_2,
\end{align}
where $\hat{\bv S}$ is an independent copy of $\bar{\bv S}$.  Upper bounding the rank of $\bv H_m$ as $n$ and applying Theorem \ref{thm: rudelson} twice to $\E_2\|\bar{\bv S}\bv H_m \hat{\bv S}\|_2$, once for each operator, we get 
\begin{align}
    \E_2\|\bar{\bv S}\bv H_m \hat{\bv S}\|_2 &\leq 5 \sqrt{\log n}\E_2\|\bar{\bv S}\bv H_m \hat{\bv S}\|_{1\to 2} + \E_2\|\hat{\bv S}\bv H_m\|_2\notag\\
    &\leq 5\sqrt{\log n}\E_2\|\bar{\bv S}\bv H_m \hat{\bv S}\|_{1\to 2} + 5\sqrt{\log n}\E_2\|\bv H_m\hat{\bv S}\|_{1\to 2} + \|\bv H_m\|_2\label{eq:nnz shs unsolved}.
\end{align}
Plugging \eqref{eq:nnz shs unsolved} into \eqref{eq:decoupled}, we have:
\begin{align}\label{eq:decoupled2}
    \E_2 \|\bar{\bv S}{\bv A'}_m \bar{\bv S}\|_2 \le 10\sqrt{\log n}\left (\E_2\|\bar{\bv S}\bv H_m \hat{\bv S}\|_{1\to 2} + \E_2\|\bv H_m\hat{\bv S}\|_{1\to 2}\right ) + 2\|\bv H_m\|_2 + \E_2\|\bar{\bv S}\bv D_m \bar{\bv S}\|_2
\end{align}

We now proceed to bound each of the terms on the right hand side of \eqref{eq:decoupled2}. We start with $\E_2\|\bar{\bv S}\bv D_m \bar{\bv S}\|_2$. First, observe that $\E_2\|\bar{\bv S}\bv D_m \bar{\bv S}\|_2 \leq \max_i \frac{1}{p_i}\lvert (\bv{D}_m)_{ii} \rvert$. We consider two cases.

\smallskip

\noindent \textbf{Case 1:} $p_i<1$. Then, $p_i=\frac{s \nnz(\bv{A}_i)}{\nnz(\bv{A})}$ and $\lvert (\bv{D}_m)_{ii} \rvert= \lvert ({\bv{A}'}_m)_{ii} \rvert = \lvert (\bv{A}'_o)_{ii} \rvert $ (since $\bv{A}'_{ii}=0$). Then  by \eqref{eq:nnz-aoij}, we have $\frac{1}{p_i}\lvert (\bv{D}_m)_{ii} \rvert \leq \frac{\sqrt{\nnz(\bv{A})}}{s\epsilon \sqrt{\delta}}  $.

 \smallskip

\noindent\textbf{Case 2:} $p_i=1$. Then we have $\frac{1}{p_i}\lvert (\bv{D}_m)_{ii} \rvert=\lvert (\bv{D}_m)_{ii} \rvert \leq \max_j \lvert (\bv{D}_m)_{jj} \rvert \leq \|\bv{A}'_m \|_2 \leq \epsilon \sqrt{\delta}\sqrt{\nnz(\bv{A})}$. \\
From the two cases above, for $s \geq \frac{1}{\epsilon^2 \delta}$, we have:
\begin{align}
    \label{eq:nnz-sds-bound}
    \E_2\|\bar{\bv S}\bv D_m \bar{\bv S}\|_2 \leq \epsilon\sqrt{\delta}\sqrt{\nnz({\bv A})}.
\end{align} 
We can bound $\norm{\bv{H}_m}_2$ similarly. Since $\bv H_m = {\bv A'}_m - \bv D_m$ and $\|{\bv A'}_m\|_2 \leq \epsilon\sqrt{\delta}\sqrt{\nnz({\bv A})}$,
\begin{align}
    \|\bv H_m\|_2 &\leq \|{\bv A'}_m\|_2 + \|\bv D_m\|_2\notag\\
    &\leq \epsilon\sqrt{\delta}\sqrt{\nnz({\bv A})} + \epsilon\sqrt{\delta}\sqrt{\nnz({\bv A})}\notag\\
    &= 2\epsilon\sqrt{\delta}\sqrt{\nnz({\bv A})}\label{eq:nnz-hm-bound}
\end{align}
where the second step follows from the fact that $\norm{\bv{D}_m}_2 \le \max_i \lvert (\bv{D}_m)_{ii} \rvert  \le \norm{\bv{A}'_m}_2$.

We next bound the term $\E_2\|\bv H_m\hat{\bv S}\|_{1\to 2}$.
 Observe that $\E_2\|\bv H_m\hat{\bv S}\|_{1\to 2} \leq \frac{\max_i \|{\bv A'}_{m,i}\|_2}{\sqrt{p_i}}$, where $\bv{A'}_{m,i}$ is the $i$\textsuperscript{th} column/row of $\bv{A}'_m$. We again consider the two cases when $p_i = 1$ and $p_i<1$:
 
 \smallskip
 
 \noindent\textbf{Case 1:} $p_i = 1$. Then $\|{\bv A'}_{m,i}\|_2 \leq \|{\bv A'}_m\|_2 \leq \epsilon\sqrt{\delta}\sqrt{\nnz({\bv A})}$.
 
  \smallskip

 \noindent\textbf{Case 2:} $p_i < 1$. Then $\|{\bv A'}_{m,i}\|_2 \leq \|{\bv A'}_i\|_2 \leq \sqrt{\nnz({\bv A}_i)}$. Thus, setting $s \geq \frac{1}{\epsilon^2\delta}$ we have:
 \begin{align*}
     \frac{\|{\bv A'}_{m,i}\|_2}{\sqrt{p_i}}
    &\leq \sqrt{\frac{\nnz({\bv A})}{s\nnz({\bv A}_i)}} \cdot \|{\bv A'}_i\|_2\\
    &\leq \sqrt{\frac{\nnz({\bv A})}{s}} \leq \epsilon\sqrt{\delta}\sqrt{\nnz({\bv A})}.
\end{align*}
Thus, from the two cases above, for all $i \in [n]$, adjusting $\epsilon$ by a $\frac{1}{\sqrt{\log n}}$ factor, we have
for $s \geq \frac{\log n}{\epsilon^2\delta}$:
\begin{align}
\label{eq:nnz-hms-bound}
    \E_2\|\bv H_m\hat{\bv S}\|_{1 \to 2} &\leq \frac{\epsilon\sqrt{\delta}\sqrt{\nnz({\bv A})}}{\sqrt{\log n}}.
\end{align}

Overall, plugging \eqref{eq:nnz-sds-bound}, \eqref{eq:nnz-hm-bound}, and \eqref{eq:nnz-hms-bound} back into \eqref{eq:decoupled2}, we have :
\begin{align}\label{eq:decoupled3}
    \E_2 \|\bar{\bv S}{\bv A'}_m \bar{\bv S}\|_2 \le 10\sqrt{\log n} \cdot \E_2\|\bar{\bv S}\bv H_m \hat{\bv S}\|_{1\to 2} + 15 \epsilon \sqrt{\delta}\sqrt{\nnz({\bv A})}.
\end{align}

It remains to bound $\E_2\|\bar{\bv S}\bv H_m \hat{\bv S}\|_{1\to 2}$, which is the most complex part of the proof.  Since $\hat{\bv S}$ is an independent copy of $\bar{\bv S}$, we denote the norm of the $i$\textsuperscript{th} column of $\bar{\bv S}\bv H_m\hat{\bv S}$ as $\frac{\|(\bar{\bv S}\bv H_m)_{:,i}\|_2}{\sqrt{p_i}}$. Then $\E_2\|\bar{\bv S}\bv H_m\hat{\bv S}\|_{1\to 2} \leq \E_2 \left (\max_{i: i\in [n]} \frac{\|(\bar{\bv S}\bv H_m)_{:,i}\|_2}{\sqrt{p_i}}\right )$. 
We will argue that $\max_{i: i\in [n]} \frac{\|(\bar{\bv S}\bv H_m)_{:,i}\|_2}{\sqrt{p_i}}$ is bounded by $\epsilon\sqrt{\delta} \sqrt{\nnz(\bv{A})}$ with probability $1-1/\poly(n)$. Since our sampling probabilities are all at least $1/n^2$ and since $\norm{\bv{H}_m}_F \le \norm{\bv{A}}_F \le n$, this value is also deterministically bounded by $n^2$. Thus, our high probability bound implies the needed bound on $ \E_2 \left (\max_{i: i\in [n]} \frac{\|(\bar{\bv S}\bv H_m)_{:,i}\|_2}{\sqrt{p_i}}\right )$.

We begin by observing that since ${\bv A'}_m = \bv H_m + \bv D_m$, $\|(\bar{\bv S}{\bv A'}_m)_{:,i}\|_2 \geq \|(\bar{\bv S}\bv H_m)_{:,i}\|_2$, and so to bound $\max_{i: i\in [n]} \frac{\|(\bar{\bv S}\bv H_m)_{:,i}\|_2}{\sqrt{p_i}}$, it suffices to bound $\frac{\|(\bar{\bv S}\bv A'_m)_{:,i}\|_2}{\sqrt{p_i}}$ for all $i\in [n]$. Towards this end, for a fixed $i$ and any $j \in [n]$, define 
\begin{align*}
    z_j &=
    \begin{cases}
    \frac{1}{p_j}|{\bv A'}_{m,i,j}|^2 & \text{with probability $p_j$}\\
    0 & \text{otherwise}.
    \end{cases}
\end{align*}
Then $\sum_{j=1}^n z_j = \|(\bar{\bv S}{\bv A'}_{m})_{:,i}\|_2^2$ and $\mathbb{E}[\sum_{j=1}^n z_j] = \|\bv{A}'_{m,i} \|_2^2 \leq \|\bv{A}'_{i} \|_2^2 \leq \nnz(\bv{A}_i)$. Since $\sum_{j=1}^n z_j = \|(\bar{\bv S}{\bv A'}_{m})_{:,i}\|_2^2$ is a sum of independent random variables, we can bound this quantity by applying Bernstein's inequality.  To do this, we must bound $|z_j|$ for all $j \in [n]$ and $\bv{Var}\left(\sum_{j=1}^n z_j\right)$.  We will again consider the cases of $p_i <1$ and $p_i = 1$ separately.

\smallskip

\noindent \textbf{Case 1:} $p_i<1$. Then, we have $p_i = s\nnz({\bv A}_i) / \nnz({\bv A})$. If ${\bv A'}_{i,j} \neq 0$ then
\begin{align*}
    |z_j| &\leq \frac{1}{p_j} |{\bv A'}_{m,i,j}|^2 \leq \max\left(1, \frac{\nnz({\bv A})}{s\nnz({\bv A}_j)}\right) |{\bv A'}_{m,i,j}|^2\\
    &\leq |{\bv A'}_{m,i,j}|^2 + \frac{2\nnz({\bv A})}{s\nnz({\bv A}_j)}\left(|{\bv A'}_{i,j}|^2 + |{\bv A'}_{o,i,j}|^2\right)\\
    &\leq |{\bv A'}_{m,i,j}|^2 + \frac{2\nnz({\bv A})}{s\nnz({\bv A}_j)}\left(|{\bv A'}_{i,j}|^2 + \frac{\nnz({\bv A}_i)\nnz({\bv A}_j)}{\epsilon^2\delta\nnz({\bv A})}\right)\\
    &\leq |{\bv A'}_{m,i,j}|^2 + \frac{2\nnz({\bv A})}{s\nnz({\bv A}_j)}|{\bv A'}_{i,j}|^2 + \frac{2\nnz({\bv A}_i)}{\epsilon^2\delta s},
\end{align*}
where the fourth inequality uses \eqref{eq:nnz-aoij}.
By the thresholding procedure which defines $\bv{A}'$, if $\bv A'_{ij} \neq 0$,
\begin{align}
    \nnz({\bv A}_i) \cdot \nnz({\bv A}_j) 
    \geq \frac{\epsilon^2 \nnz({\bv A})}{c_2\log^2 n}
    \Rightarrow
    \nnz({\bv A}_j) 
    \geq \frac{\epsilon^2 \nnz({\bv A})}{c_2 \log^2 n \nnz({\bv A}_i) },\label{eq:nnz-aj}
\end{align}
and thus for $p_i < 1$ and ${\bv A'}_{ij} \neq 0$ we have
\begin{align*}
    |z_j| &\leq |{\bv A'}_{m,i,j}|^2 + \frac{2c_2 \log^2 n \nnz({\bv A}_i)}{s\epsilon^2} + \frac{2\nnz({\bv A}_i)}{\epsilon^2\delta s}.
\end{align*}
If ${\bv A'}_{i,j} = 0$ then we simply have
\begin{align*}
    |z_j| &\leq |{\bv A'}_{m,ij}|^2 + \frac{\nnz({\bv A}_i)}{s\epsilon^2\delta}.
\end{align*}
Overall for all $j \in [n]$,
\begin{align}
    |z_j| &\leq |{\bv A'}_{m,i,j}|^2 + \frac{2\nnz({\bv A}_i)}{s\epsilon^2\delta} + \frac{2c_2\log^2 n \nnz({\bv A}_i)}{s\epsilon^2}, \label{eq:partial abs z}
\end{align}
and since $|{\bv A'}_{m,i,j}|^2 \leq  \sum_{j=1}^n |{\bv A'}_{m,i,j}|^2 = \|{\bv A'}_{m,i}\|_2^2 \leq \|{\bv A'}_i\|_2^2 \leq \nnz({\bv A}_i)$, 
\begin{align}
    |z_j| &\leq \nnz({\bv A}_i) + \frac{2\nnz({\bv A}_i)}{s\epsilon^2\delta} + \frac{2 c_2 \log^2 n \nnz({\bv A}_i)}{s\epsilon^2}. \label{eq:abs z}
\end{align}
For $s \geq c\left (\frac{\log^2 n}{\epsilon^2} + \frac{1}{\epsilon^2 \delta}\right )$ and large enough $c$, we thus have $|z_j| \leq 2\nnz({\bv A})$. 

\noindent We next bound the variance by:
\begin{align*}
    \bv{Var}\left(\sum_{j=1}^n z_j\right) &\leq \sum_{j=1}^n \E [z_j^2] \leq \sum_{j=1}^n p_j\frac{1}{p_j^2} |{\bv A'}_{m,i,j}|^4\\
    &= \sum_{j=1}^n \max \left(1, \frac{\nnz({\bv A})}{s\nnz({\bv A}_j)}\right)|{\bv A'}_{m,i,j}|^4 \\
    &\leq \sum_{j=1}^n |{\bv A'}_{m,i,j}|^4 + \sum_{j=1}^n \frac{12\nnz({\bv A})}{s\nnz({\bv A}_j)} \left(|{\bv A'}_{i,j}|^4 + |{\bv A'}_{o,i,j}|^4\right)\\
    &\leq \|{\bv A'}_{m,i}\|_2^4 + \sum_{j=1}^n \frac{12\nnz({\bv A})}{s\nnz({\bv A}_j)} \left(|\bv A_{i,j}'|^4 + \frac{\nnz(\bv A_i)^2\nnz(\bv A_j)^2}{\epsilon^4\delta^2\nnz(\bv A)^2}\right),
\end{align*}
where the last inequality uses \eqref{eq:nnz-aoij}. Now since $\bv A_{ii}' = 0$ for all $i$ and $\|\bv A'\|_\infty \leq 1$ we have
\begin{align}
    \bv{Var}\left(\sum_{j=1}^n z_j\right) &\leq \|{\bv A'}_{m,i}\|_2^4 + \sum_{j:{\bv A'}_{i,j}\neq 0}\frac{12\nnz({\bv A})}{s\nnz({\bv A}_j)} + \sum_{j=1}^n \frac{12\nnz({\bv A}_i)^2\nnz({\bv A}_j)}{s\epsilon^4\delta^2\nnz({\bv A})}. \label{eq:unsolved var sum zj 1}
\end{align}
\noindent Combining \eqref{eq:nnz-aj} with the second term to the right of \eqref{eq:unsolved var sum zj 1} we have
\begin{align*}
    \bv{Var}\left(\sum_{j=1}^n z_j\right) &\leq \|{\bv A'}_{m,i}\|_2^4 + \sum_{j:{\bv A'}_{i,j}\neq 0}\frac{12c_2\log^2 n \cdot \nnz({\bv A_i})}{s\epsilon^2} + \sum_{j=1}^n \frac{12\nnz({\bv A}_i)^2\nnz({\bv A}_j)}{s\epsilon^4\delta^2\nnz({\bv A})}, 
\end{align*}
and since $|\{j:{\bv A'}_{i,j}\neq 0 \}| = \nnz({\bv A}_i)$, we have
\begin{align}
    \bv{Var}\left(\sum_{j=1}^n z_j\right) &\leq \|{\bv A'}_{m,i}\|_2^4 + \frac{12c_2\log^2 n \cdot \nnz({\bv A_i})^2}{s\epsilon^2} + \sum_{j=1}^n \frac{12\nnz({\bv A}_i)^2\nnz({\bv A}_j)}{s\epsilon^4\delta^2\nnz({\bv A})}. \label{eq:unsolved var sum zj 2}
\end{align}
\noindent Finally since $\sum_{j=1}^n \nnz(\bv A_j) = \nnz(\bv A)$ and $\|{\bv A'}_{m,i}\|_2^4 \leq \norm{\bv{A'}_i}_2^4 \le \nnz(\bv A_i)^2$ we have
\begin{align}
    \bv{Var}\left(\sum_{j=1}^n z_j\right) &\leq \nnz({\bv A}_i)^2 + \frac{12c_2 \log^2 n \cdot \nnz({\bv A}_i)^2}{s\epsilon^2} + \frac{12\nnz({\bv A}_i)^2}{s\epsilon^4\delta^2}.\label{eq:var sum zj}
\end{align}

\noindent For $s \geq c\left (\frac{\log^2 n}{\epsilon^2} + \frac{1}{\epsilon^4\delta^2}\right )$ for large enough $c$, we have $\bv{Var}\left(\sum_{j=1}^n z_j\right) \leq 2\nnz({\bv A}_i)^2$. 

\noindent Therefore, using \eqref{eq:abs z} and \eqref{eq:var sum zj} with $s \geq c\left (\frac{\log^2 n}{\epsilon^2} + \frac{1}{\epsilon^4\delta^2}\right )$, we can 
apply Bernstein inequality (Theorem \ref{thm:bernstein}) (for some constant $c$) to get
\begin{align*}
    \Pr\left(\|(\bar{\bv S}\bv A'_m)_{:,i}\|_2^2 \geq \E\|(\bar{\bv S}\bv A'_m)_{:,i}\|_2^2 + t \right) &\leq \Pr\left(\sum_{j=1}^n z_j \geq \nnz(\bv{A}_i) + t \right)\\
    &\leq \exp\left(\frac{-t^2/2}{c\nnz({\bv A}_i)^2 + ct\nnz({\bv A}_i)/3}\right).
\end{align*}
If we set $t= \log n \cdot \nnz({\bv A}_i)$, for some constant $c'$ we have
\begin{align*}
     \Pr\left(\|(\bar{\bv S}\bv A'_m)_{:,i}\|_2^2 \geq \E\|(\bar{\bv S}\bv A'_m)_{:,i}\|_2^2+\log n \cdot \nnz({\bv A}_i) \right)
     &\leq \exp\left(\frac{-(\log n)^2/2}{c+c(\log n)/3}\right) \leq  \exp(-c'\log n) \leq 1/n^{c'}.
\end{align*}
Since ${\bv A'}_m = \bv H_m + \bv D_m$, we have $\|(\bar{\bv S}{\bv A'}_m)_{:,i}\|_2 \geq \|(\bar{\bv S}\bv H_m)_{:,i}\|_2$. Then with probability at least $1-1/n^{c'} \ge 1 -\delta$, for any row $i$ with $p_i <1$, we have 
\begin{align*}
    \frac{1}{p_i} \cdot \|(\bar{\bv S}\bv H_m)_{:,i}\|_2^2 \leq \frac{\nnz({\bv A})}{s\nnz({\bv A}_i)}\cdot c(\log n)\nnz({\bv A}_i) \leq \frac{\epsilon^2 \delta\nnz({\bv A})}{\log n},
\end{align*}
for $s \geq c\left (\frac{\log^2 n}{\epsilon^2} + \frac{1}{\epsilon^4\delta^2}\right )$ for large enough $c$. 
Observe that, as in Lemma \ref{lemma: orthonormality} w.l.o.g. we have assumed $1-n^{c'} \ge 1-\delta$, since otherwise, our algorithm would read all $n^2$ entries of the matrix.

\smallskip

\noindent \textbf{Case 2:} $p_i = 1$. Then, we have $\nnz({\bv A}_i) \geq \nnz({\bv A})/s$. As in the $p_i < 1$ case, we have from~\eqref{eq:partial abs z}: 
\begin{align*}
    |z_j| &\leq |{\bv A'}_{m,i,j}|^2 + \frac{2\nnz({\bv A}_i)}{s\epsilon^2\delta} + \frac{2c_2\log^2 n \nnz({\bv A}_i)}{s\epsilon^2}.
\end{align*}
Now, we observe that $|{\bv A'}_{m,i,j}|^2 \leq \sum_{j=1}^n |{\bv A'}_{m,i,j}|^2 \leq \| \bv{A}'_{m,i}\|^2_2 \leq \| \bv{A}'\|^2_2 \leq \epsilon^2\delta \nnz(\bv{A})$, which gives us
\begin{align}
    |z_j| &\leq \epsilon^2 \delta \nnz(\bv{A}) + \frac{2\nnz({\bv A}_i)}{s\epsilon^2\delta} + \frac{2c_2\log^2 n \nnz({\bv A}_i)}{s\epsilon^2}.\label{eq: abs zj p=1}
\end{align}
Thus, for $s \geq c\left (\frac{\log^2 n}{\epsilon^4\delta} + \frac{1}{\epsilon^4\delta^2}\right )$ for a large enough constant $c$ and adjusting for other constants we have $|z_j| \leq 2\epsilon^2\delta\nnz({\bv A})$.
Also observe that the expectation of $\sum z_j$ can be bounded by:
\begin{align*}
    \E\left(\sum_{j=1}^n z_j\right) = \E\|(\bar{\bv S}{\bv A'}_m)_{:,i}\|_2^2 = \|{\bv A'}_{m,i}\|_2^2 \leq \|{\bv A'}_m\|_2^2 \leq \epsilon^2\delta\nnz({\bv A}).
\end{align*}

\noindent Next, the variance of the sum of the random variables $\{z_j\}$ can again be bounded by following the analysis presented in and prior to \eqref{eq:unsolved var sum zj 2} and \eqref{eq:var sum zj} we have
\begin{align}
    \bv{Var}\left(\sum_{j=1}^n z_j\right) &\leq \|{\bv A'}_{m,i,j}\|_2^4 + \frac{12c_2 \log^2 n \cdot \nnz({\bv A}_i)^2}{s\epsilon^2} + \frac{12\nnz({\bv A}_i)^2}{s\epsilon^4\delta^2}\notag\\
    &\leq \epsilon^4\delta^2\nnz({\bv A})^2 + \frac{12c_2 \log^2 n \cdot \nnz({\bv A}_i)^2}{s\epsilon^2} + \frac{12\nnz({\bv A}_i)^2}{s\epsilon^4\delta^2},\label{eq: var z_j p=1}
\end{align}
where we again bound $\|{\bv A'}_{m,i,j}\|_2^4$ using $$|{\bv A'}_{m,i,j}|^2 \leq \sum_{j=1}^n |{\bv A'}_{m,i,j}|^2 \leq \| \bv{A}'_{m,i}\|^2_2 \leq \| \bv{A}'\|^2_2 \leq \epsilon^2\delta \nnz(\bv{A}).$$

\noindent Then for $s \geq c(\frac{\log^2 n}{\epsilon^6\delta^2} + \frac{1}{\epsilon^8\delta^4})$, we have $\bv{Var}\left(\sum_{j=1}^n z_j\right) \leq 2\epsilon^4\delta^2\nnz({\bv A})^2$ for large enough constant $c$.

\noindent Using \eqref{eq: abs zj p=1} and \eqref{eq: var z_j p=1} and noting that $\sum_{j=1}^n\E\left( z_j^2\right) \geq \bv{Var}\left(\sum_{j=1}^n z_j\right) - \E^2\left(\sum_{j=1}^n z_j\right)$ we can apply the Bernstein inequality (Theorem \ref{thm:bernstein}):
\begin{align*}
    \Pr\left(\|(\bar{\bv S}\bv A'_m)_{:,i}\|_2^2 \geq \E\|(\bar{\bv S}\bv A'_m)_{:,i}\|_2^2+t \right) &\leq \Pr\left(\sum_{j=1}^n z_j \geq \epsilon^2\delta\nnz(\bv{A}_i)+t \right) \\ &\leq \exp\left(\frac{-t^2/2}{c\epsilon^4\delta^2\nnz({\bv A})^2+c\epsilon^2\delta\nnz({\bv A})t/3}\right).
\end{align*}
If we set $t=(\log n)\epsilon^2\delta\nnz({\bv A})$, then for some constant $c'$ we have
\begin{align*}
    \Pr\left(\|(\bar{\bv S}\bv A'_m)_{:,i}\|_2^2 \geq \E\|(\bar{\bv S}\bv A'_m)_{:,i}\|_2^2+t \right) &\leq \exp(-c'\log n) \leq 1/n^{c'}.
\end{align*}
This, since $\|(\bar{\bv S}\bv H_m)_{:,i}\|_2^2 \leq \|(\bar{\bv S}\bv A'_m)_{:,i}\|_2^2$, when $p_i = 1$, setting  $s \geq c(\frac{\log^2 n}{\epsilon^6\delta^2} + \frac{1}{\epsilon^8\delta^4})$ for large enough $c$,  we have with probability $\ge 1-1/n^{c'}$ 
$
    \frac{1}{p_i}\|(\bar{\bv S}\bv H_m)_{:,i}\|_2^2 = \|(\bar{\bv S}\bv H_m)_{:,i}\|_2^2 \leq \|(\bar{\bv S}\bv A'_m)_{:,i}\|_2^2 \leq (\log n)\epsilon^2\delta\nnz({\bv A}).
$

We thus have, that with probability $\ge 1-1/n^{c'}$, for both cases when $p_i < 1$ and $p_i = 1$, $\frac{\|(\bar{\bv{S}}\bv H_m)_{:,i}\|_2^2}{p_i} \le (\log n)\epsilon^2\delta\nnz({\bv A})$. Taking a union bound over all $i \in [n]$,  with probability at least  $1-1/n^{c'-1}$, $ \max_i \frac{\|(\bar{\bv S}\bv H_m)_{:,i}\|_2}{\sqrt{p_i}} \leq \sqrt{\log n}\epsilon\sqrt{\delta}\sqrt{\nnz({\bv A})}$ for $s \geq c(\frac{\log^2 n}{\epsilon^6\delta^2} + \frac{1}{\epsilon^8\delta^4})$. As stated before, since $p_i \geq \frac{1}{n^2}$ for all $i \in [n]$, and since $\norm{\bv{H}_m}_F \le \norm{\bv{A}}_F \le n$, we also have $\max_i \frac{\|(\bar{\bv S}\bv H_m)_{:,i}\|_2}{\sqrt{p_i}} \leq n^2$. Thus, 
\begin{align*}
    \E_2 \left (\max_{i: i\in [n]} \frac{\|(\bar{\bv S}\bv H_m)_{:,i}\|_2}{\sqrt{p_i}}\right ) \leq \sqrt{\log n}\epsilon\sqrt{\delta}\sqrt{\nnz({\bv A})}(1-\frac{1}{n^{c'-1}})+\frac{1}{n^{c'-3}} \leq \sqrt{\log n}\epsilon\sqrt{\delta}\sqrt{\nnz({\bv A})}.
\end{align*} 
after adjusting $\epsilon$ by at most some constants.
Overall, we finally get 
$$\E_2\|\bar{\bv S}\bv H_m\hat{\bv S}\|_{1\to 2} \leq \E_2 \left (\max_{i: i\in [n]} \frac{\|(\bar{\bv S}\bv H_m)_{:,i}\|_2}{\sqrt{p_i}}\right ) \leq \epsilon\sqrt{\log n}\sqrt{\delta}\sqrt{\nnz({\bv A})}.$$
Plugging this bound into \eqref{eq:decoupled3}, we have for $s \geq c(\frac{\log^2 n}{\epsilon^6\delta^2} + \frac{1}{\epsilon^8\delta^4})$,
\begin{align*}
    \E_2\|\bar{\bv S}{\bv A'}_m \bar{\bv S}\|_2 &\leq (\log n)\epsilon\sqrt{\delta}\sqrt{\nnz({\bv A})}.
\end{align*}
Finally after adjusting $\epsilon$ by a $\frac{1}{\log n}$ factor, we have for $s \geq c(\frac{\log^{8} n}{\epsilon^6\delta^2} + \frac{\log^8 n}{\epsilon^8\delta^4})$ or $s \geq \frac{c\log^8 n}{\epsilon^8\delta^4}$,
\begin{align*}
    \E_2\|\bar{\bv S}{\bv A'}_m \bar{\bv S}\|_2 &\leq \epsilon\sqrt{\delta}\sqrt{\nnz({\bv A})}.
\end{align*}
The final bound then follows via Markov's inequality on $\|\bar{\bv S}{\bv A'}_m \bar{\bv S}\|_2$.
\end{proof}

\subsection{Main Accuracy Bound}

We are finally ready to prove our main result for sparsity-based sampling, which we restate below.
\nnzEigVal*

\begin{proof}
With Lemmas \ref{lemma:nnz_large} and \ref{lem:nnz-middle} in place, the proof is nearly identical to that of Theorem \ref{thm:main_bound}, with the additional need to apply Lemma \ref{lem:nnz-zeroed} to show that the eigenvalues of $\bv{A}'$ are close to those of $\bv A$.

For all $i \in [n]$ let $p_i=\min\left (1,\frac{s \nnz(\bv{A}_i)}{\nnz(\bv{A})}\right )$ and let $\bar{\bv S}$ be a scaled diagonal sampling matrix such that the $\bar{\bv S}_{ii}=\frac{1}{\sqrt{p_i}}$ with probability $p_i$ and $\bar{\bv S}_{ii}=0$ otherwise. Let $\bv{A}'$ be the matrix constructed from $\bv{A}$ by zeroing out its elements as described in Lemma~\ref{lem:nnz-zeroed}. Then, note that $\bar{\bv S}\bv{A}'\bar{\bv S}=\bv{A}'_{S}$ where $\bv{A}'_S$ is the submatrix constructed as in Algorithm~\ref{alg:nnz eigenvalue estimate}. We first show that the eigenvalues of $\bv{A}'_S$ approximate those of $\bv{A}'$ up to error $\epsilon\sqrt{\nnz(\bv{A})}$. The steps are almost identical to those in the proof of Theorem~\ref{thm:main_bound}. We provide a brief outline of the steps but skip the details. 

We split $\bv{A}'$ as $\bv A' = \bv A'_o + \bv A'_m$ where $\bv A'_o$ and $ \bv A'_m$ contain eigenvalues of $\bv{A}'$ of magnitudes $< \epsilon\sqrt{\delta} \sqrt{\nnz(\bv{A})}$ and$\ge \epsilon\sqrt{\delta} \sqrt{\nnz(\bv{A})}$. This implies $\bv{A}'_S=\bv A'_{o,S}+\bv A'_{m,S}$ where $\bv{A}'_{o,S}=\bar{\bv{S}}\bv A'_o \bar{\bv{S}}$ and $\bv A'_{m,S}=\bar{\bv{S}} \bv A'_m \bar{\bv{S}}$. By Fact \ref{def: equality of eigenvalues} we have that the nonzero eigenvalues of $ \bv{A}'_{o,S} = \bar{\bv S} \bv V'_o \bv \Lambda'_o \bv V_o^{'T} \bar{\bv S}$ are identical to those of $\bv{\Lambda}_o^{'1/2} \bv V_o^{'T} \bar{\bv S} \bar{\bv S} \bv V'_o \bv{\Lambda}_o^{'1/2}$. Thus, applying the perturbation bound of Fact \ref{fact:weyl_general}, we have:
 \begin{align*}
     \left| \lambda_i(\bv{\Lambda}_o^{'1/2} \bv V_o^{'T} \bar{\bv S} \bar{\bv S} \bv V'_o \bv{\Lambda}_o^{'1/2} ) - \lambda_i(\bv \Lambda'_o)\right| \leq  C\log n\|\bv{\Lambda}_o^{'1/2} \bv V_o^{'T} \bar{\bv S} \bar{\bv S} \bv V'_o \bv{\Lambda}_o^{'1/2}  - \bv \Lambda'_o\|_2.
 \end{align*}
 From Lemma~\ref{lemma:nnz_large}, we get $ \|\bv{\Lambda}_o^{'1/2} \bv V_o^{'T} \bar{\bv S} \bar{\bv S} \bv V'_o \bv{\Lambda}_o^{'1/2}  - \bv \Lambda'_o\|_2 \leq \epsilon \sqrt{\nnz(\bv{A})}$ for $s \geq \frac{c\log(1/(\epsilon \delta))}{\epsilon^3 \sqrt{\delta}}$ with probability at least $1-\delta$. Thus, setting the error parameter to $\frac{\epsilon}{\log n}$ in Lemma~\ref{lemma:nnz_large}, for $s \geq \frac{c\log(1/(\epsilon \delta))\log^3 n}{\epsilon^3 \sqrt{\delta}}$, with probability at least $1-\delta$ we have:
\begin{align}
   \left| \lambda_i(\bv{\Lambda}_o^{'1/2} \bv V_o^{'T} \bar{\bv S} \bar{\bv S} \bv V'_o \bv{\Lambda}_o^{'1/2} ) - \lambda_i(\bv \Lambda'_o)\right| &< \epsilon \sqrt{\nnz(\bv{A})}.
   \label{eq:nnz_outer_weyl}
\end{align}
We have thus shown that the non-zero eigenvalues of $ \bv{A}'_{o,S}$ approximate all outlying eigenvalues of $\bv{A}'$. Note that by Lemma~\ref{lem:nnz-middle}, we also have $\|\bv A'_{m,S} \|_2 \leq \epsilon \sqrt{\nnz(\bv{A})}$  with probability at least $1-\delta$ for $s \geq \frac{c\log^8 n}{\epsilon^8\delta^4}$. Then, similarly to the section on eigenvalue alignment of Theorem~\ref{thm:main_bound}, we can argue how these approximations `line up' in the presence of zero eigenvalues in the spectrum of these matrices, concluding that, for all $i \in [n]$,
\begin{align*}
\left |\tilde \lambda_i(\bv{A}) - \lambda_i(\bv{A}') \right | \le \epsilon \sqrt{\nnz(\bv{A})}. 
\end{align*}
 
\noindent Finally, by Lemma~\ref{lem:nnz-zeroed}, we have $\lvert \lambda_i(\bv{A}') -\lambda_i(\bv{A})\rvert \leq \epsilon \sqrt{\nnz(\bv{A})}$ for all $i \in [n]$. Thus, via triangle inequality, $
\left |\tilde \lambda_i(\bv{A}) - \lambda_i(\bv{A}) \right | \le 2\epsilon \sqrt{\nnz(\bv{A})}$, which gives the required bound after adjusting  $\epsilon$ to $\epsilon/2$.

Recall that we require $s \geq  \frac{c \log(1/(\epsilon \delta)) \cdot \log^3 n}{\epsilon^3 \sqrt{\delta}}$ for \eqref{eq:nnz_outer_weyl} to hold with probability $1-\delta$. We also require $s \geq \frac{c\log^8n}{\epsilon^8\delta^4}$ for $\|\bv A_{m,S}'\|_2 \leq \epsilon\sqrt{\nnz(\bv A)}$ to hold with probability $1-\delta$ by Lemma \ref{lem:nnz-middle}. Thus, for both conditions to hold simultaneously with probability $1-2\delta$ by a union bound, it suffices to set  $s = \frac{c\log^8n}{\epsilon^8\delta^4} \ge \max\left(\frac{c \log(1/(\epsilon \delta)) \cdot \log^3 n}{\epsilon^3 \sqrt{\delta}}, \frac{c\log^8n}{\epsilon^8\delta^4}\right)$, where we use that $\log(1/(\epsilon \delta) \le O(\log n)$, as otherwise our algorithm can take $\bv{A}_S$ to be the full matrix $\bv{A}$. Adjusting $\delta$ to $\delta/2$ completes the theorem.
\end{proof}


\section{Empirical Evaluation}\label{sec: empirical evalutation}

We complement our theoretical results by evaluating Algorithms \ref{alg:eigenvalue estimate} (uniform sampling) and Algorithm \ref{alg:nnz eigenvalue estimate} (sparsity-based sampling) in approximating the eigenvalues of several symmetric matrices. We defer an evaluation of Algorithm \ref{alg:l2_eig_est} (norm-based sampling) to later work. Algorithm \ref{alg:eigenvalue estimate} and Algorithm \ref{alg:nnz eigenvalue estimate} perform very well. They seem to have error dependence roughly $1/\epsilon^2$ in practice, as compared to the $1/\epsilon^3$ dependence proven in Theorem \ref{thm:main_bound} and $1/\epsilon^8$ dependence in Theorem \ref{thm:nnz_main_bound}. Closing the gap between the theory and observed results would be very interesting. 

\subsection{Datasets}\label{subsec: dataset}

We test Algorithm~\ref{alg:eigenvalue estimate} (uniform sampler) on three dense matrices. We also compare the relative performance of Algorithm \ref{alg:eigenvalue estimate} and Algorithm \ref{alg:nnz eigenvalue estimate} (sparsity sampler) on three other synthetic and real world matrices.
 
The first two dense matrices, following \cite{cai2021fast}, are created by sampling $5000$ points from a binary image. We then normalize all the points in the range $[0,1]$ in both axes. The original image and resulting set of points are shown in Figure \ref{fig:kong_dataset}. We then compute a similarity matrix for the points using two common similarity functions used in machine learning and computer graphics: $\delta(\bv x,\bv y) = \tanh\left({\frac{\langle \bv x,\bv y\rangle}{2}}\right)$, the hyperbolic tangent; and $\delta(\bv x,\bv y) = {\norm{\bv x-\bv y}_2^2} \cdot \log\left({\norm{\bv x-\bv y}_2^2}\right)$, the thin plane spline. 
These measures lead to symmetric, indefinite, and entrywise bounded similarity matrices. 

Our next dense matrix (called the block matrix) is based on the construction of the hard instance for the lower bound in~\cite{BakshiChepurkoJayaram:2020} which shows that we need $\Omega(1/\epsilon^2) \times \Omega(1/\epsilon^2)$ samples to compute $\epsilon n$ approximations to the eigenvalues of a bounded entry matrix. It is a $5000\times5000$ matrix containing a $2500\times2500$ principal submatrix of all $1$s, with the rest of the entries set to $0$. 
It has $\lambda_1(\bv{A}) = 2500$ and all other eigenvalues equal to $0$.

We now describe the three matrices used to compare Algorithm \ref{alg:eigenvalue estimate} and Algorithm \ref{alg:nnz eigenvalue estimate}. All three are graph adjacency matrices, which are symmetric, indefinite, entrywise bounded and sparse. 
Spectral density estimation for graph structured matrices is an important primitive in network analysis \cite{dong2019network}. The first is a dense \erdos\ graph with $5000$ nodes and connection probability $0.1$. 
The second two are real world graphs, taken from SNAP \cite{snapnets}; namely Facebook \cite{mcauley2012learning} and Arxiv COND-MAT \cite{leskovec2007graph}. The Facebook graph contains 
$4039$ nodes and $88234$ directed edges. We symmetrize the adjacency matrix. Arxiv COND-MAT is a collaboration network between authors of Condensed Matter papers published on arXiv, containing 
$23133$ nodes and $93497$ undirected edges. 
Both these graphs are very sparse -- the number of edges is $\leq 1\%$ of the total edges in a complete graph with same number of nodes.

\begin{figure}[H]
    \centering
    \includegraphics[width=0.3\textwidth]{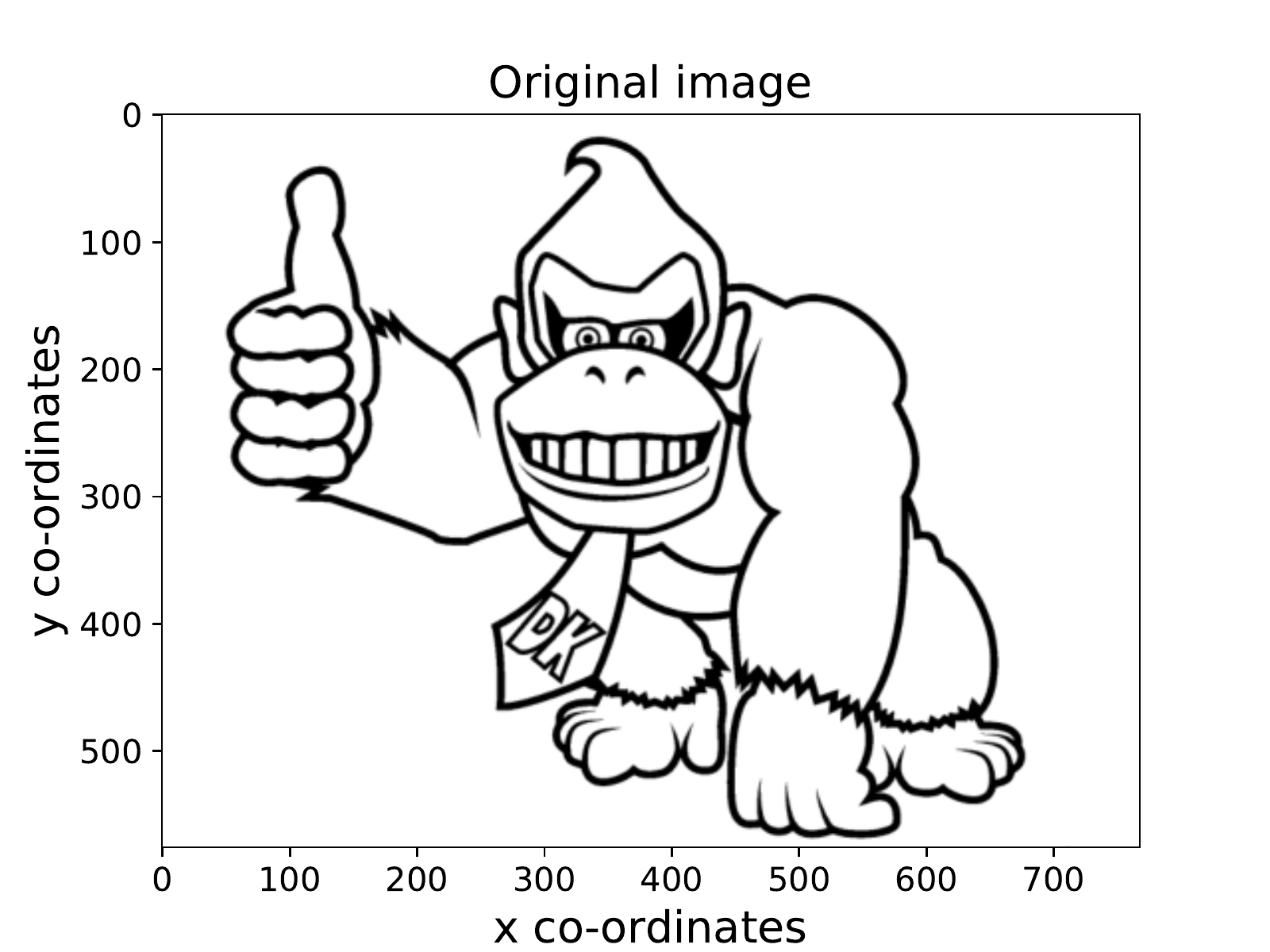}
    \includegraphics[width=0.3\textwidth]{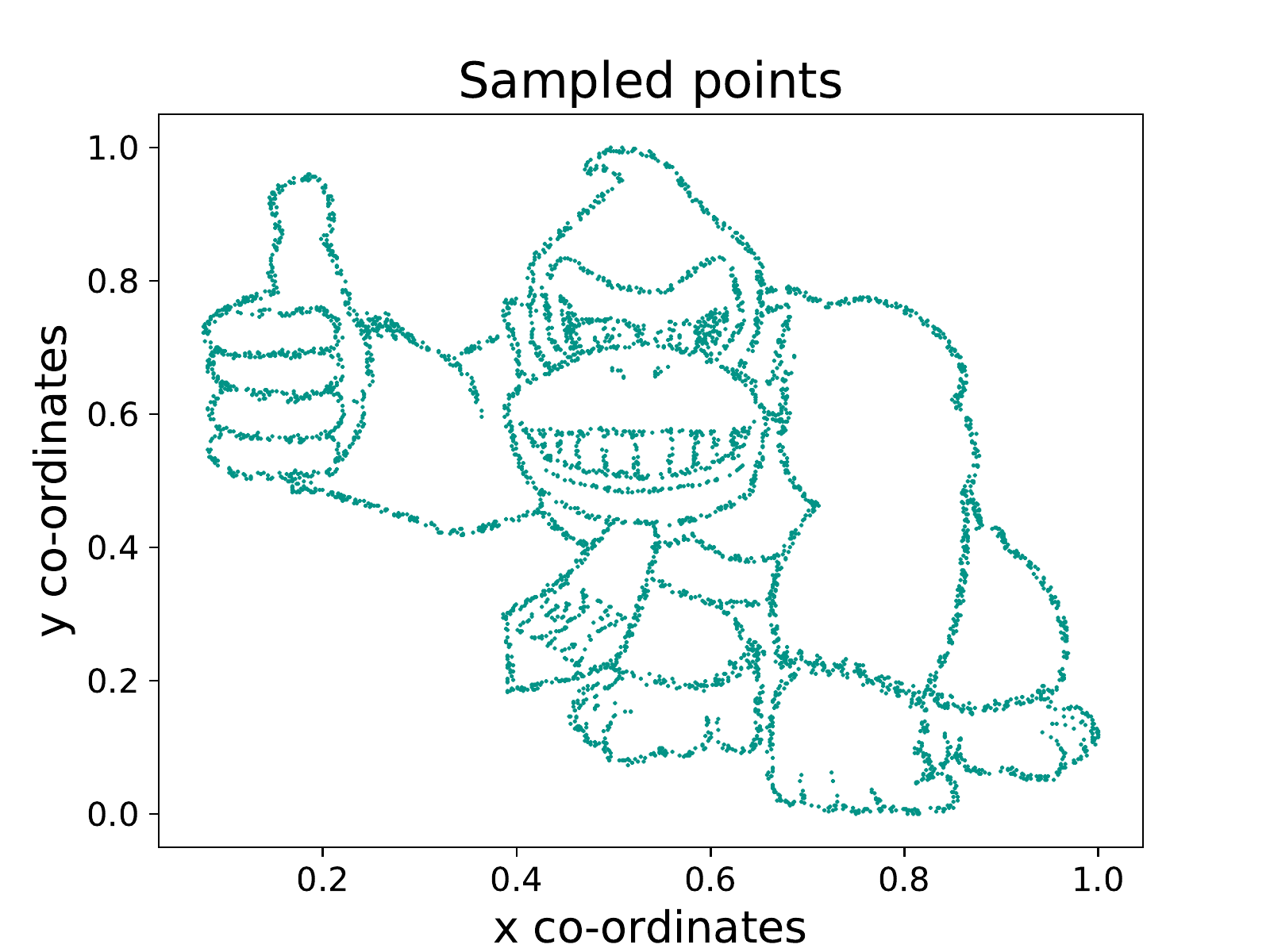}
    \caption{\textbf{Kong dataset}. The image on the left is the original synthetic binary image and the image on the right shows the 5000 sampled points from the outline used as dataset in our experiments.}
    \label{fig:kong_dataset}
    \vspace{-1em}
\end{figure}

\subsection{Implementation Details}\label{subsec:implementation}
Apart from uniform random sampling (Algorithm \ref{alg:eigenvalue estimate}), we also apply the sparsity-based sampling technique in Algorithm \ref{alg:nnz eigenvalue estimate} and a modification to Algorithm \ref{alg:nnz eigenvalue estimate}, where we do not zero out the elements of the sampled submatrix $\bv A_S$ (we call this \emph{simple sparsity sampler}). In practice, to apply Algorithm \ref{alg:nnz eigenvalue estimate}, we zero out element $[\bv A_S]_{i,j}$ (line 5 of Algorithm \ref{alg:nnz eigenvalue estimate}) if $i=j$ or $\nnz(\bv A_i)\nnz(\bv A_j) < \frac{\nnz(\bv A)}{c_2 s}$, where $c_2$ is a constant and $s$ is the size of the sample. We set $c_2 = 0.1$ experimentally as this results in consistent behavior across datasets. 

\subsection{Experimental Setup}\label{subsec:experiments}
We subsample each matrix and compute its eigenvalues using numpy \cite{numpyeigvals}.
We then use our approximation algorithms to estimate the eigenvalues of $\bv{A}$ by scaling the eigenvalues of the sampled submatrix. For $t$ trials, we report the logarithm of the average absolute scaled error, $\log \left(\frac{1}{t}\sum\frac{|\Tilde \lambda_{i,t}(\bv A) - \lambda_i(\bv A)|}{\sqrt{\nnz(\bv A)}}\right)$, where $\Tilde \lambda_{i,t}(\bv A)$ is the estimated eigenvalue in the $t^{th}$ trial, $\lambda_i(\bv A)$ is the true eigenvalue and $\nnz(\bv A)$ is the number of non-zero elements in $\bv A$. Recall that $\sqrt{\nnz(\bv A)} \ge \norm{\bv A}_F$ is an upper bound on all eigenvalue magnitudes. 
Also note that for the fully dense matrices, $\sqrt{\nnz(\bv A)} \approx n$. 
 
We repeat our experiments for $t = 50$ trials at different sampling rates and aggregate the results. The resultant errors of estimation for dense matrices are plotted in Figure~\ref{fig:errors1} and for the graph matrices are plotted in Figure~\ref{fig:errors2}. The $x$-axis is the log proportion of the number of random samples chosen from the matrix. If we sample $1\%$ of the rows/columns, then the $\log$ comes to around $-4.5$. 
In these log-log plots, if the sample size has polynomial dependence on $\epsilon$, e.g., $\epsilon n$ or $\epsilon\sqrt{\nnz(\bv A )}$ error is achieved with sample size proportional to $1/\epsilon^p$, we expect to see error falling off linearly, with slope equal to $-1/p$ where $p$ is the exponent on $\epsilon$. 
 
As a baseline we also show the error if we approximate all eigenvalues with $0$ which results in an error of $\frac{\lambda_i}{\sqrt{\nnz(\bv A)}}$. This helps us to observe how the approximation algorithms perform for both large and small order eigenvalues, as opposed to just approximating everything by $0$.

\noindent\textbf{Code.} All codes are written in Python and available at \url{https://github.com/archanray/eigenvalue_estimation}. 

\subsection{Summary of Results}\label{subsec:results}

Our results are plotted in Figures~\ref{fig:errors1} and \ref{fig:errors2}. We observe relatively small error in approximating all eigenvalues, with the error decreasing as the number of samples increases. 
What is more interesting is that the relationship between sample size and error $\epsilon n$ seems to be generally on the order of $1/\epsilon^2$, our expected lower bound for approximating eigenvalues by randomly sampling a principal submatrix. This can be seen by observing the slope of approximately $-1/2$ on the log-log error plots. In some cases, we do better in approximating small eigenvalues of $\bv{A}$ -- if the eigenvalue lies well within the range of middle eigenvalues, i.e. $\{-\epsilon n, \epsilon n\}$), we may achieve a very good absolute error estimate simply by approximating it to $0$. 

As expected, on the graph adjacency matrices (in Figure~\ref{fig:errors2}), sparsity-based sampling techniques generally achieve better error than uniform sampling. For the \erdos\ graph, we expect the node degrees (and hence row sparsities) to be similar. Thus the sampling probability for each row will be roughly uniform, which leads to similar performance of sparsity-based techniques and uniform sampling. For the real world graphs, which have power law degree distributions, sparsity-based sampling techniques has a significant effect. As a result Algorithm \ref{alg:nnz eigenvalue estimate}, and the simple sparsity sampler variant significantly outperform uniform sampling. 

Algorithm \ref{alg:nnz eigenvalue estimate}  almost always dominates simple sparsity sampler. In some cases simple sparsity sampler performs better or equivalent to Algorithm \ref{alg:nnz eigenvalue estimate}. This may happen because for two reasons: 1) if Algorithm \ref{alg:nnz eigenvalue estimate} zeroes out almost all of the sampled submatrix $\bv A_S$ for small samples, the algorithm will underestimate the corresponding eigenvalue, and 2) the cut-off threshold for the term $\nnz(\bv A_i)\nnz(\bv A_j)$ may be too high leading to no difference between simple sparsity sampler and Algorithm \ref{alg:nnz eigenvalue estimate}.

We also observe that approximating all eigenvalues with 0 results in very good approximation for small eigenvalues of the \erdos\ graph. We believe this is because the smaller eigenvalues are significantly less than the largest eigenvalue (of the order of $3500$). We see similar trends of approximating eigenvalues with zero for the real world graphs too. But since eigenvalues at the extreme spectrum are of a larger order, we see reasonably good approximation for the sampling algorithms. Algorithm \ref{alg:nnz eigenvalue estimate} outperforms approximation by $0$ in all of these cases.

In the dense matrices uniform sampling almost always outperforms approximation by $0$ when estimating any reasonably large eigenvalues. Additionally, note that the block matrix is rank-$1$ with true eigenvalues $\{2500, 0, \ldots, 0\}$. Any sampled principal submatrix will also have rank at most $1$. Thus, outside the top eigenvalue, the submatrix will have all zero eigenvalues. So, in theory, our algorithm should give perfect error for all eigenvalues outside the top -- we see that this is nearly the case. The very small and sporadic error in the plots for these eigenvalues arises due to numerical roundoff in the eigensolver. The only non-trivial approximation for this matrix is for the top eigenvalue. This approximation seems to have error dependency around $1/\epsilon^2$, as expected.  

\section{Conclusion}\label{sec:conclusion}

We present efficient algorithms for estimating all eigenvalues of a symmetric matrix with bounded entries up to additive error $\epsilon n$, by reading just a $\poly(\log n,1/\epsilon) \times \poly(\log n,1/\epsilon) $ random principal  submatrix. We give improved error bounds of $\epsilon \sqrt{\nnz(\bv A)}$ and $\epsilon \norm{\bv A}_F$ when the  rows/columns are sampled with probabilities proportional to their sparsities or squared $\ell_2$ norms, respectively. 

As discussed, our work leaves  several open questions. In particular, it is open if our query complexity for $\pm \epsilon n$ approximation can be improved, possibly to $\tilde O(\log^c n/\epsilon^4)$ total entries using principal submatrix queries or $\tilde O(\log^c/\epsilon^2)$ entries using general queries. The later bound is open even when $\bv{A}$ is PSD, a setting where we know that sampling a $O(1/\epsilon^2) \times O(1/\epsilon^2)$ principal submatrix (with $O(1/\epsilon^4)$ total entries) does suffice. Additionally, it is open if we can achieve sample complexity independent of $n$, by removing all $\log n$ factors, as have been done for the easier problem of testing positive semidefiniteness \cite{BakshiChepurkoJayaram:2020}. See Section \ref{optimal} for more details.

It would also be interesting to extend our results to give improved approximation bounds for other properties of the matrix spectrum, such as various Schatten-$p$ norms and spectral summaries. For many of these problems large gaps in understanding exist -- e.g., for $\pm n^{3/2}$ approximation to the Schatten-$1$ norm, which requires $\Omega(n)$ queries, but for which no $o(n^2)$ query algorithm is known. Applying our techniques to improve sublinear time PSD testing algorithms under an $\ell_2$ rather than $\ell_\infty$ approximation requirement \cite{BakshiChepurkoJayaram:2020} would also be interesting. Finally, it would be interesting to identify additional assumptions on $\bv{A}$ or on the sampling model where stronger approximation guarantees (e.g., relative error) can be achieved in sublinear time.

\subsection*{Acknowledgements}

We thank Ainesh Bakshi, Rajesh Jayaram, Anil Damle, and Christopher Musco for helpful conversations about this work. RB, CM and AR was partially supported by an Adobe Research grant, along with NSF Grants 2046235 and 1763618. PD and GD were partially supported by NSF AF 1814041, NSF FRG 1760353, and DOE-SC0022085. 

\begin{figure}[H]
    \centering
    \begin{subfigure}[t]{\textwidth}
    \centering
    \includegraphics[width=0.32\textwidth]{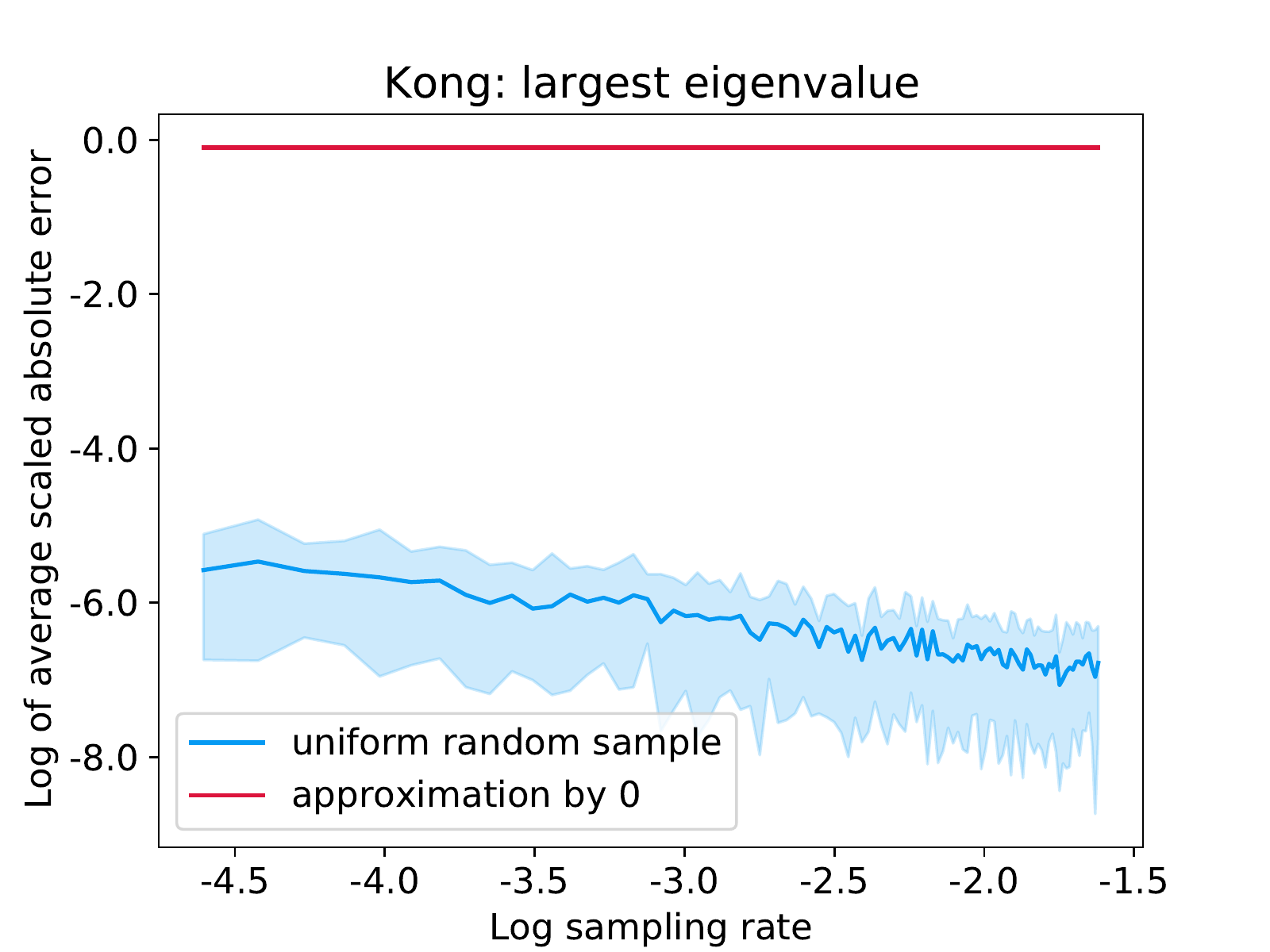}
    \includegraphics[width=0.32\textwidth]{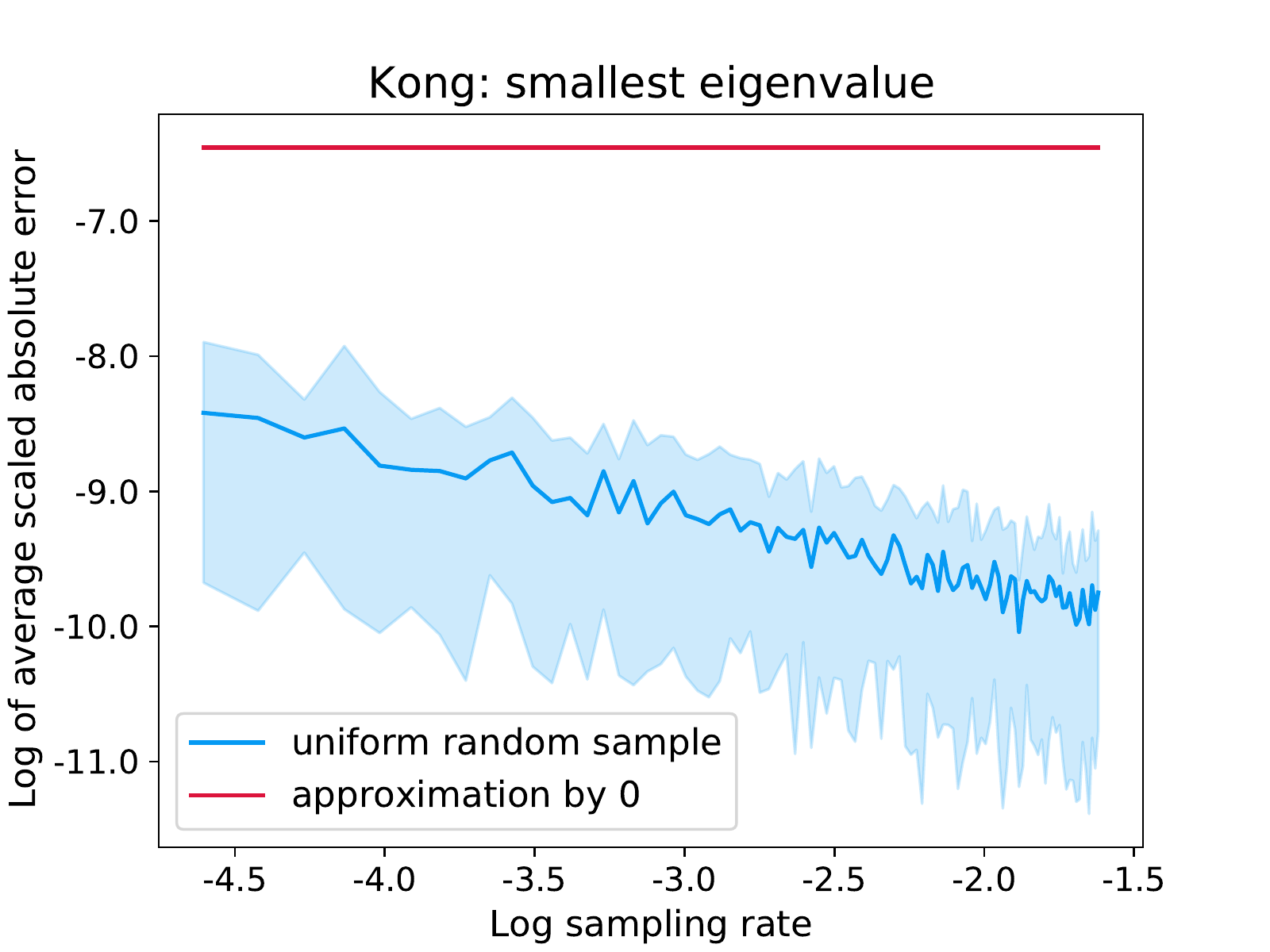}
    \includegraphics[width=0.32\textwidth]{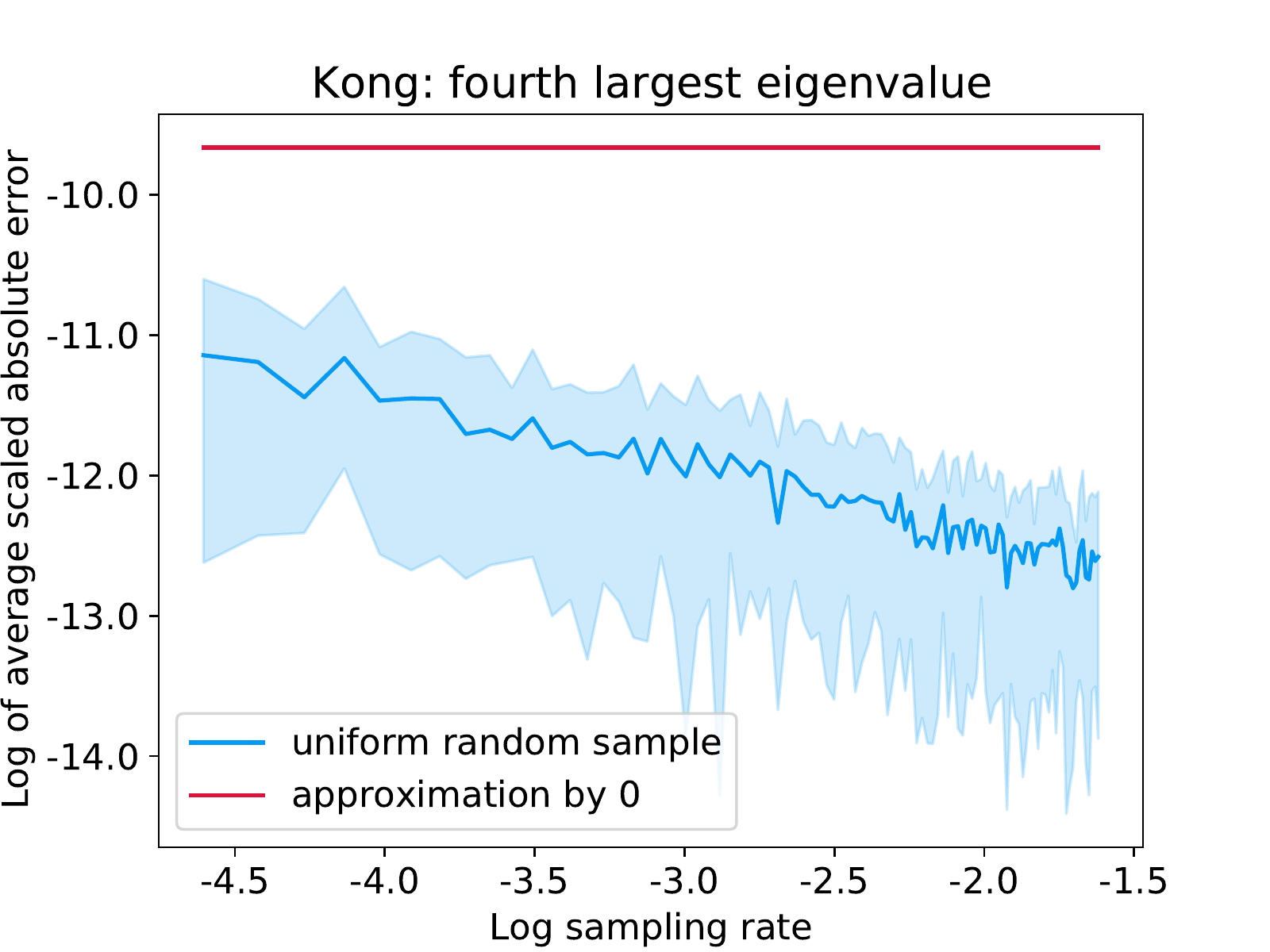}
    \caption{Hyperbolic tangent similarity matrix.} 
    \end{subfigure}
    \vskip\baselineskip
    \vspace{-1.3em}
    \begin{subfigure}[t]{\textwidth}
    \centering
    \includegraphics[width=0.32\textwidth]{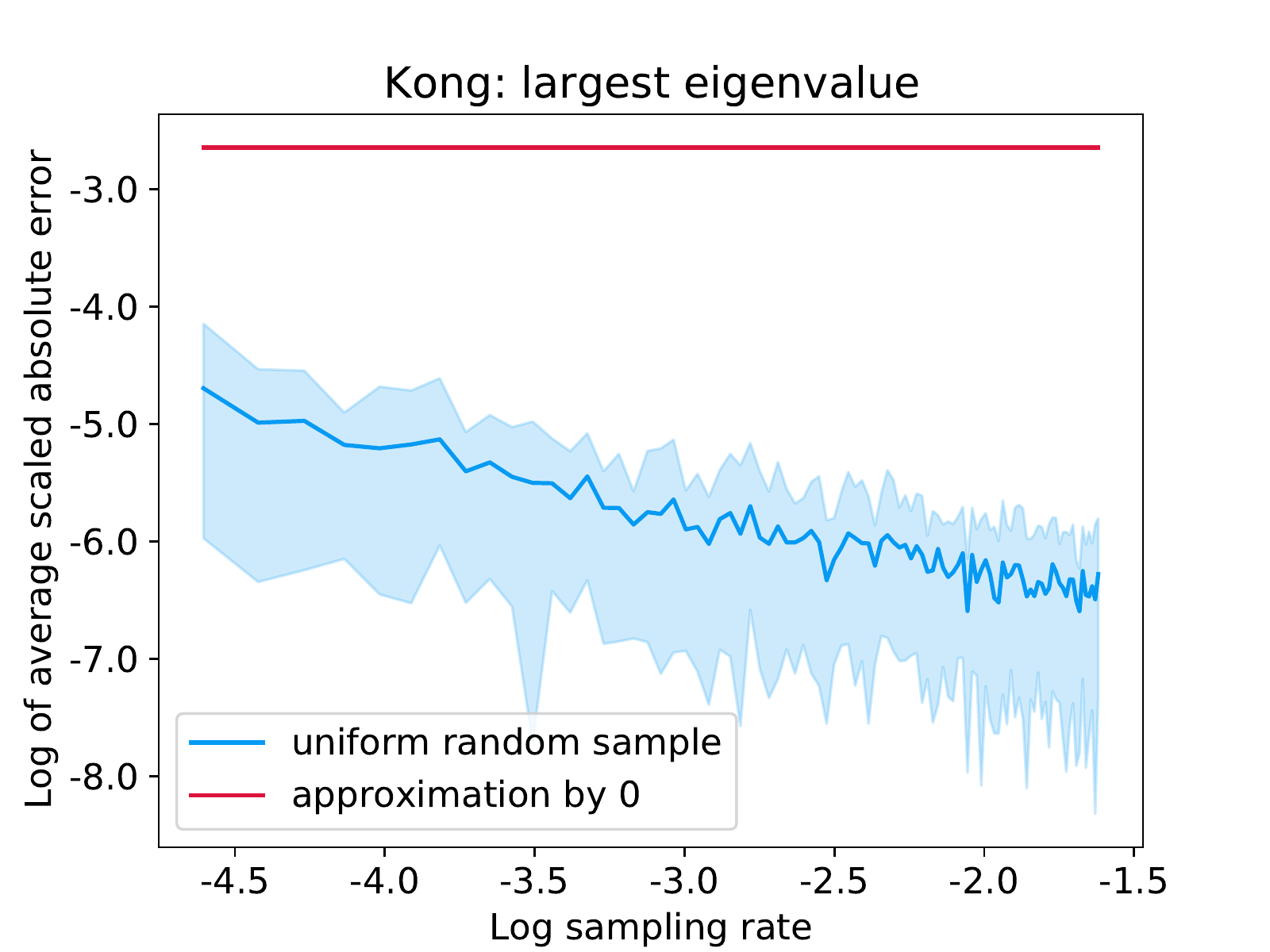}
    \includegraphics[width=0.32\textwidth]{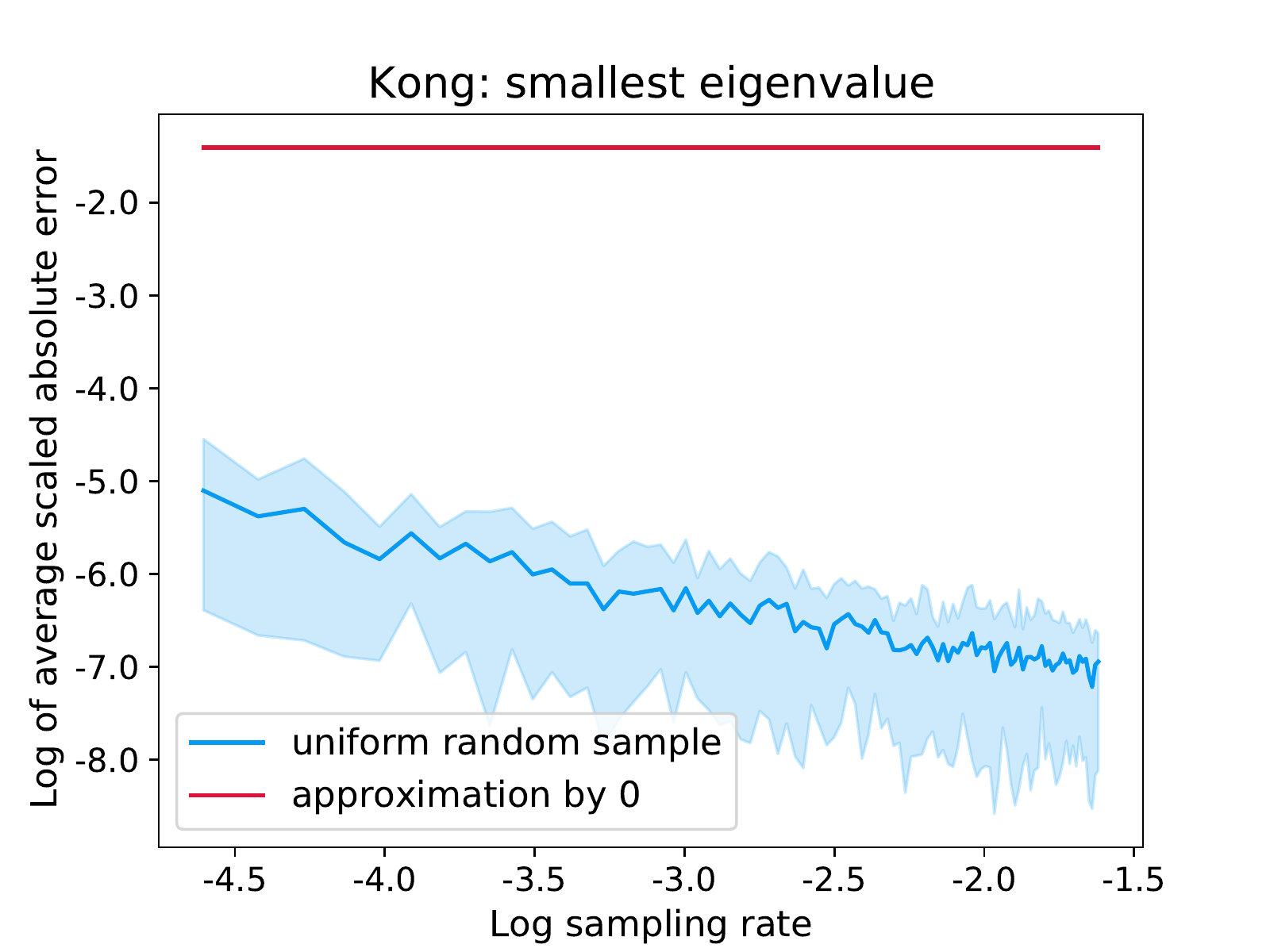}
    \includegraphics[width=0.32\textwidth]{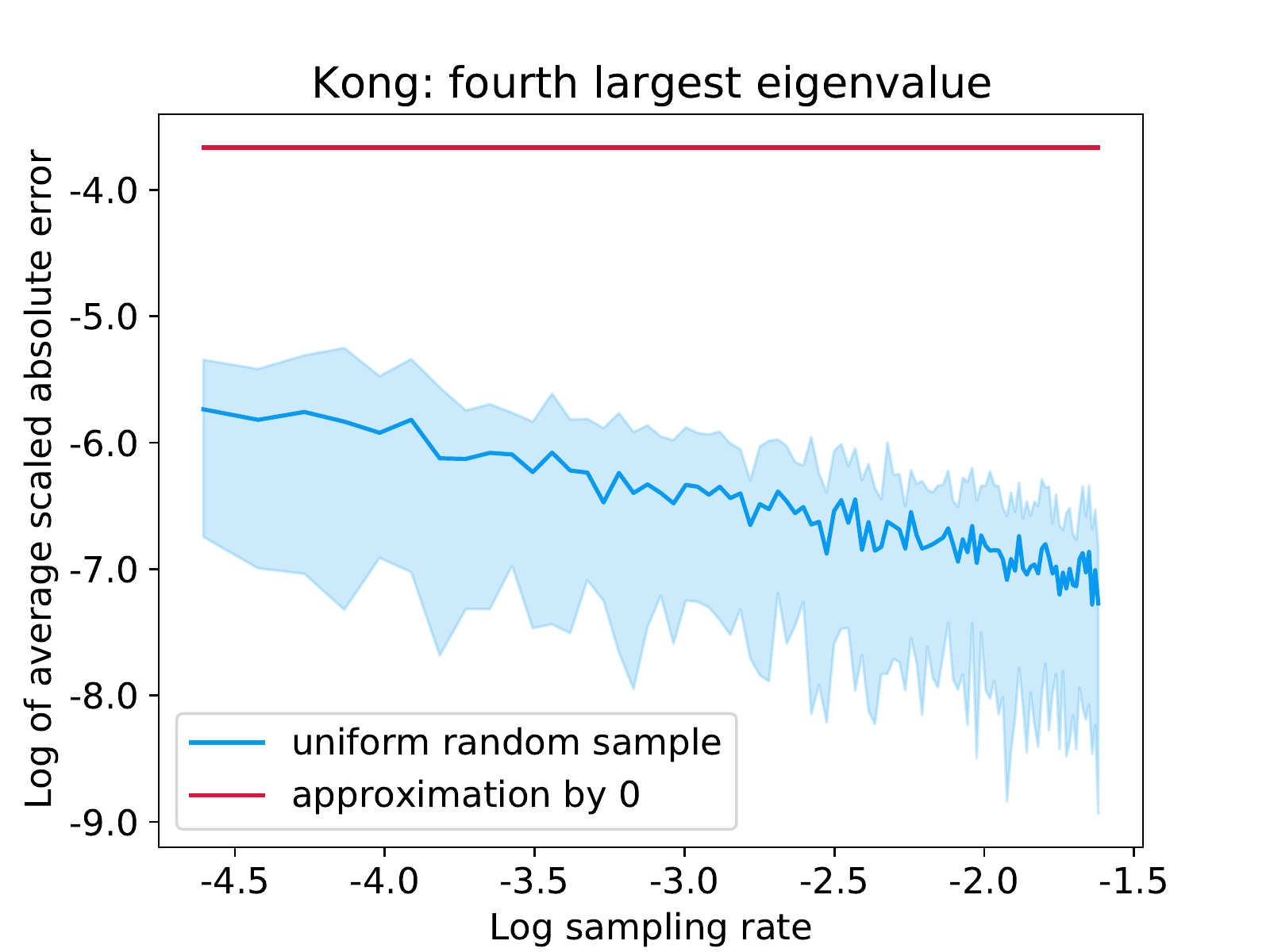}
    \caption{Thin plane spline similarity matrix.}
    \end{subfigure}
    \vskip\baselineskip
    \vspace{-1.3em}
    \begin{subfigure}[t]{\textwidth}
    \centering
    \includegraphics[width=0.32\textwidth]{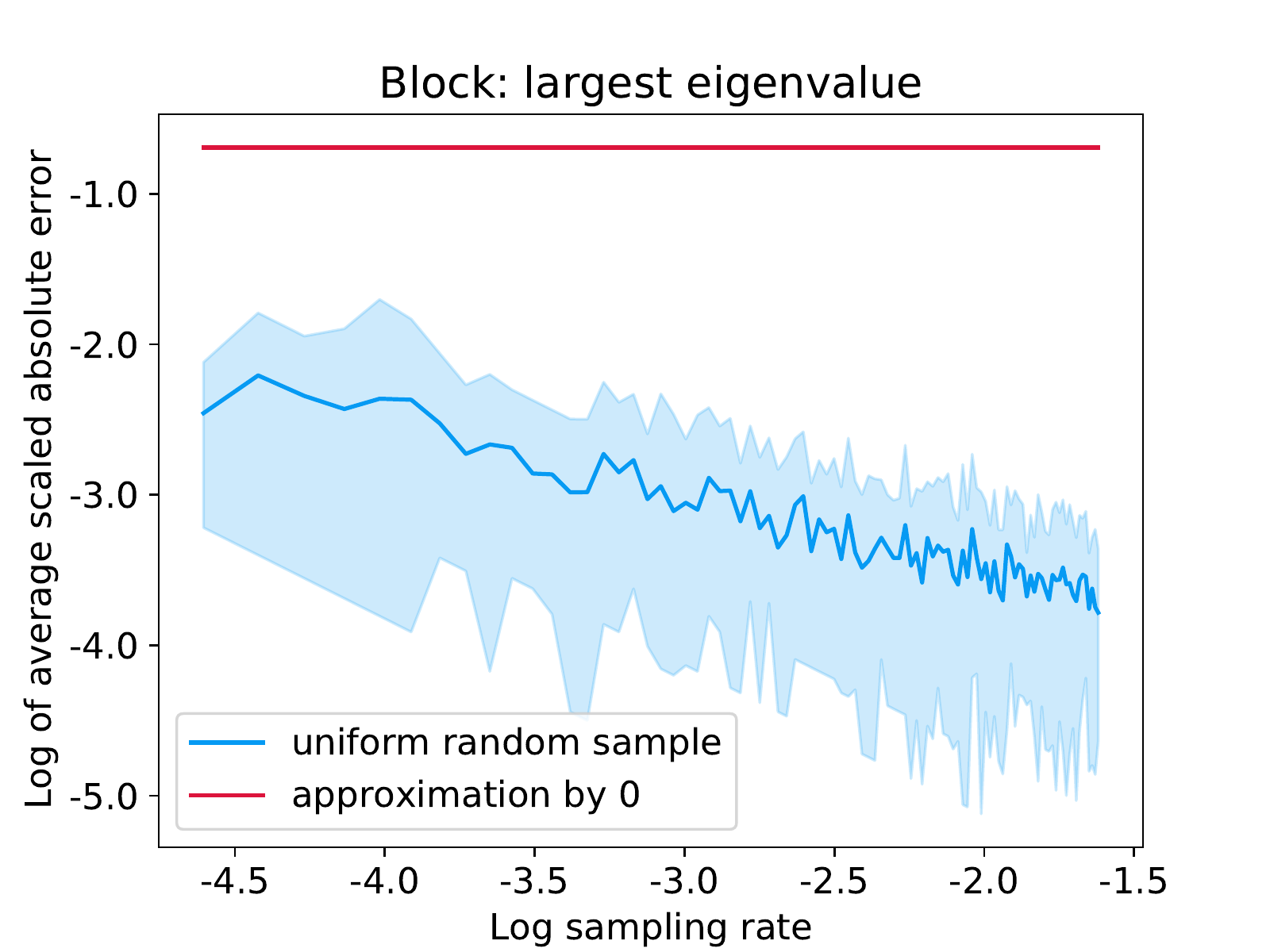}
    \includegraphics[width=0.32\textwidth]{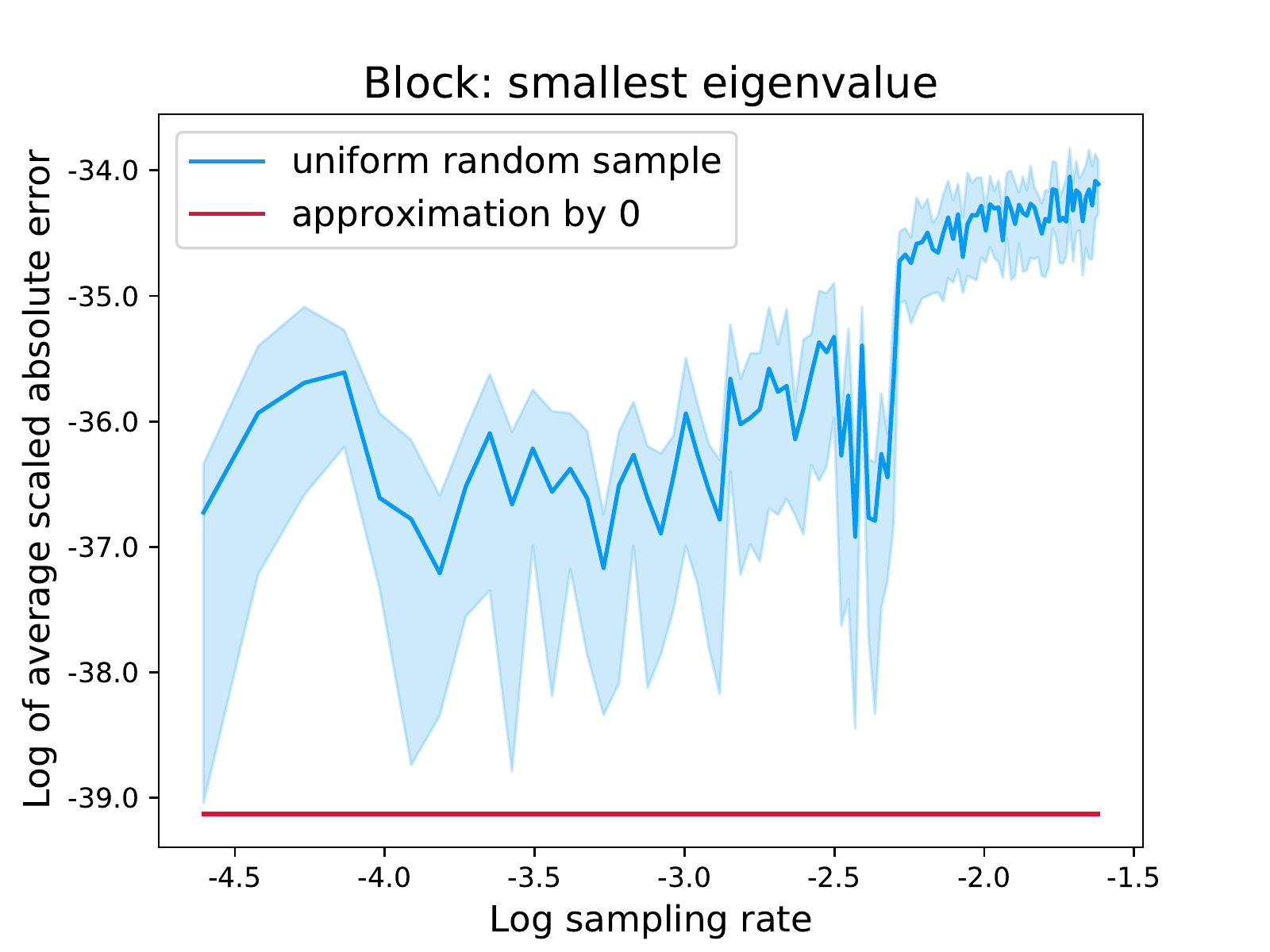}
    \includegraphics[width=0.32\textwidth]{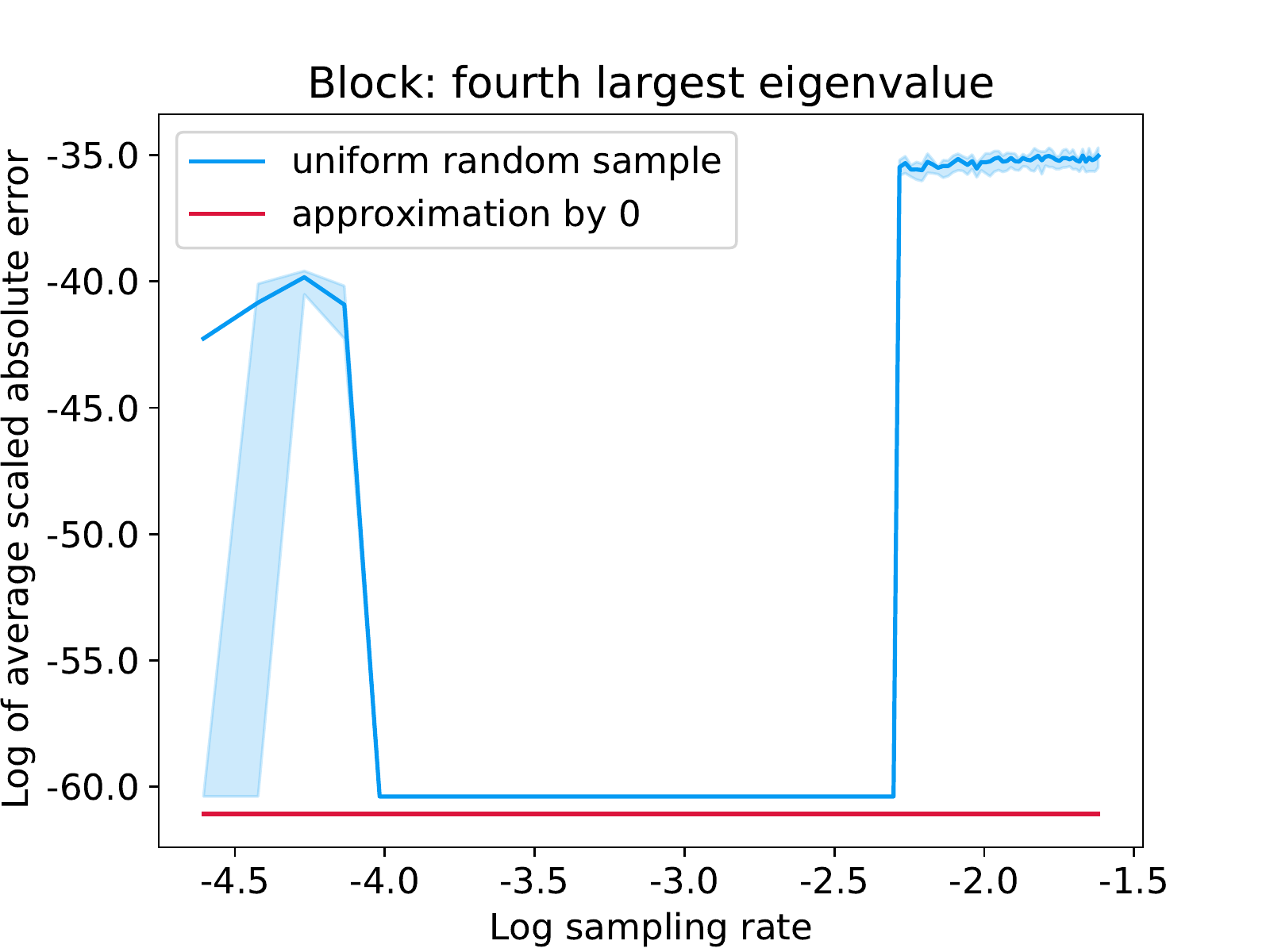}
    \caption{Block matrix.}
    \end{subfigure}
    \vspace{-0.5em}
    \caption{{\textbf{Approximation error of eigenvalues of dense matrices}. Log scale absolute error vs. log sampling rate for Algorithm \ref{alg:eigenvalue estimate} and and approximation by 0, as described in Section \ref{subsec:experiments}, for approximating the largest, smallest and fourth largest of three of the example matrices. 
    The corresponding true eigenvalues for each matrix in-order are: (hyperbolic tangent) $\{4.52\mathrm{e}{+03}, -7.85\mathrm{e}{+00}, 3.18\mathrm{e}{-01}\}$, (thin plane spline) $\{3.54\mathrm{e}{+02}, -1.22\mathrm{e}{+03},  1.28\mathrm{e}{+02}\}$ and (block matrix) $\{ 2.50\mathrm{e}{+03}, -5.08\mathrm{e}{-14},1.49\mathrm{e}{-23}\}$.} }
\label{fig:errors1}
\end{figure}

\begin{figure}[H]
    \centering
    \begin{subfigure}[t]{\textwidth}
    \centering
    \includegraphics[width=0.32\textwidth]{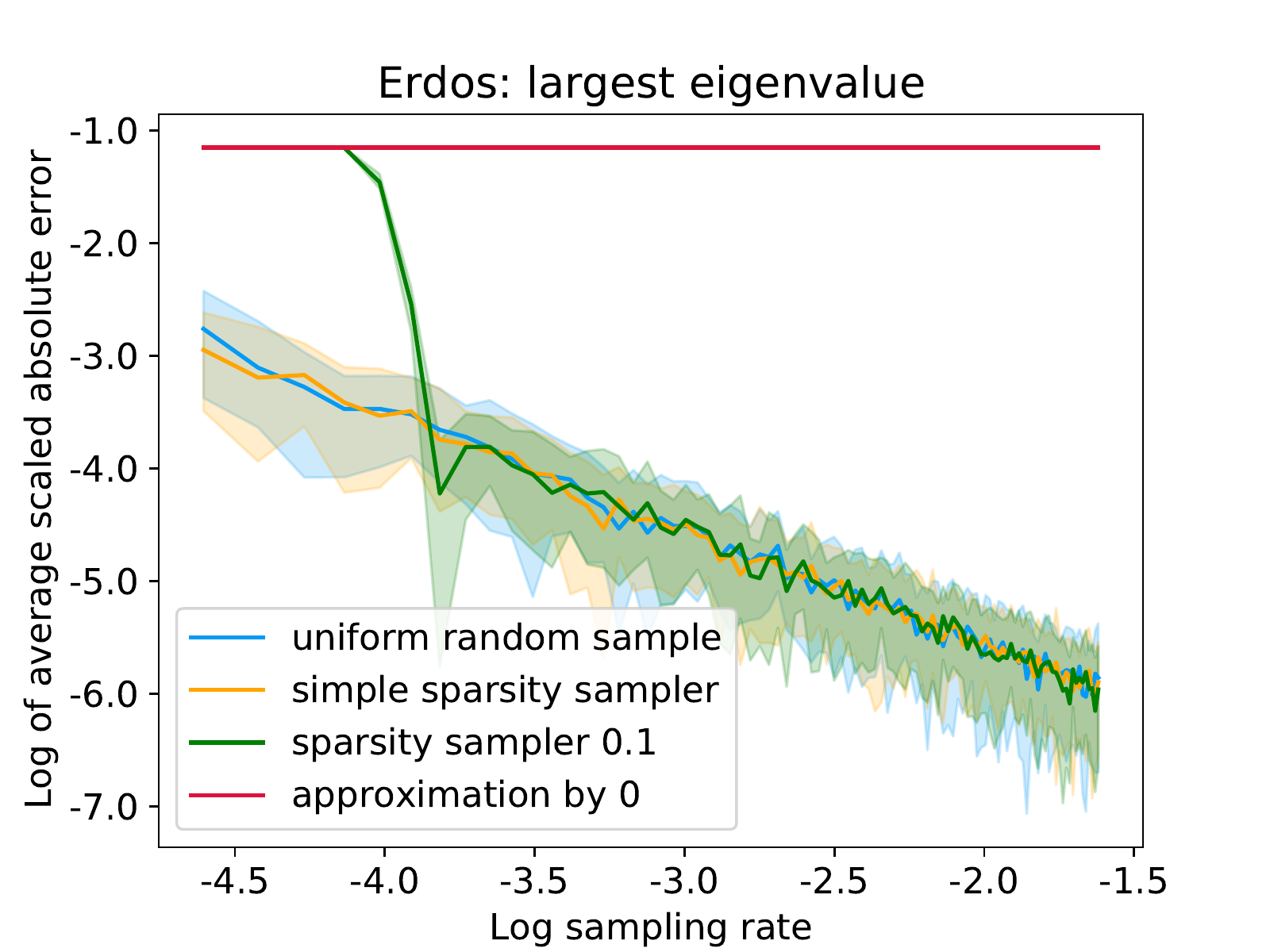}
    \includegraphics[width=0.32\textwidth]{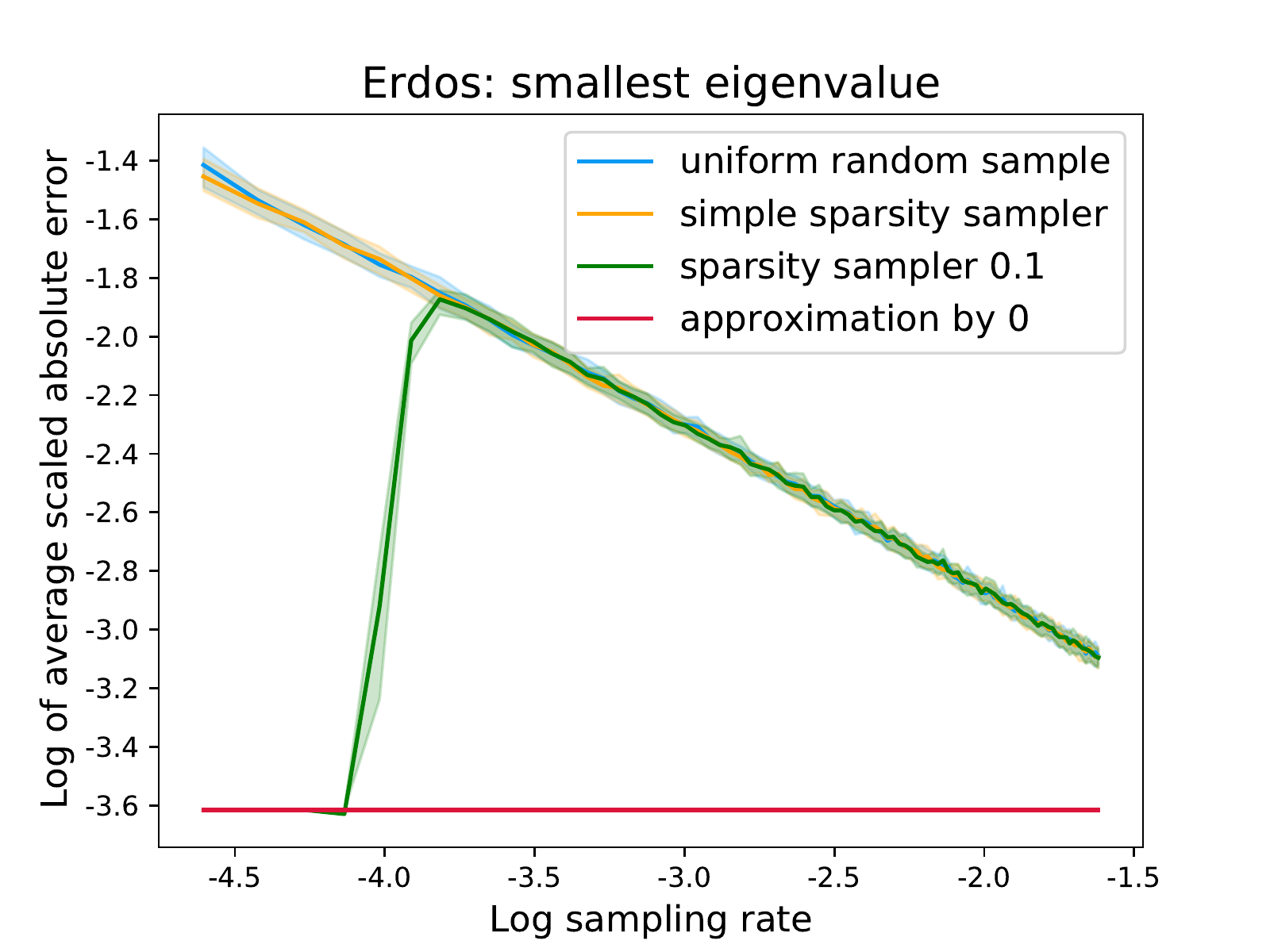}
    \includegraphics[width=0.32\textwidth]{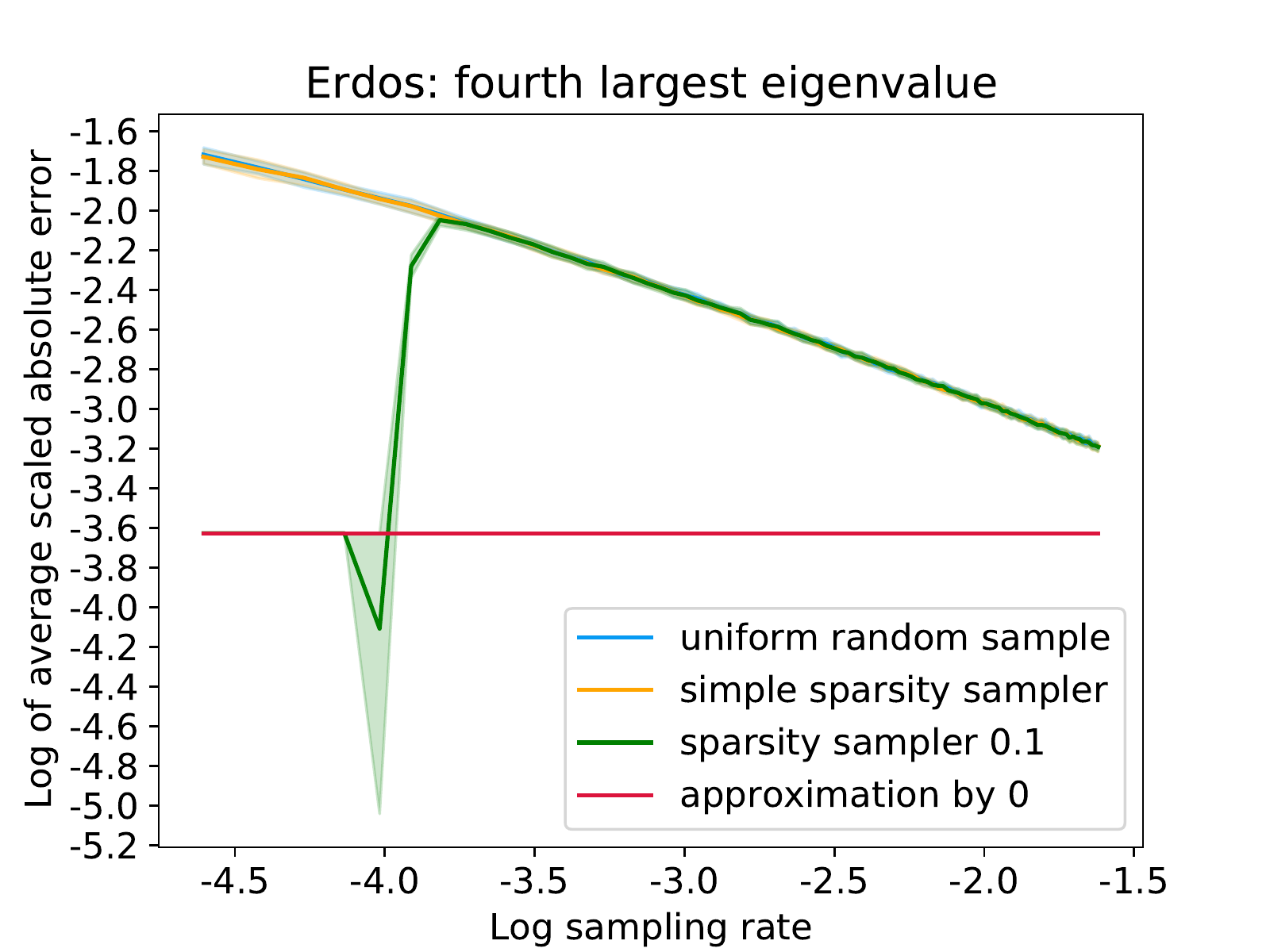}
    \caption{\erdos\ graph adjacency matrix \cite{erdos59a}.}
    \end{subfigure}
    \vskip\baselineskip
    \vspace{-1.3em}
    \begin{subfigure}[t]{\textwidth}
    \centering
    \includegraphics[width=0.32\textwidth]{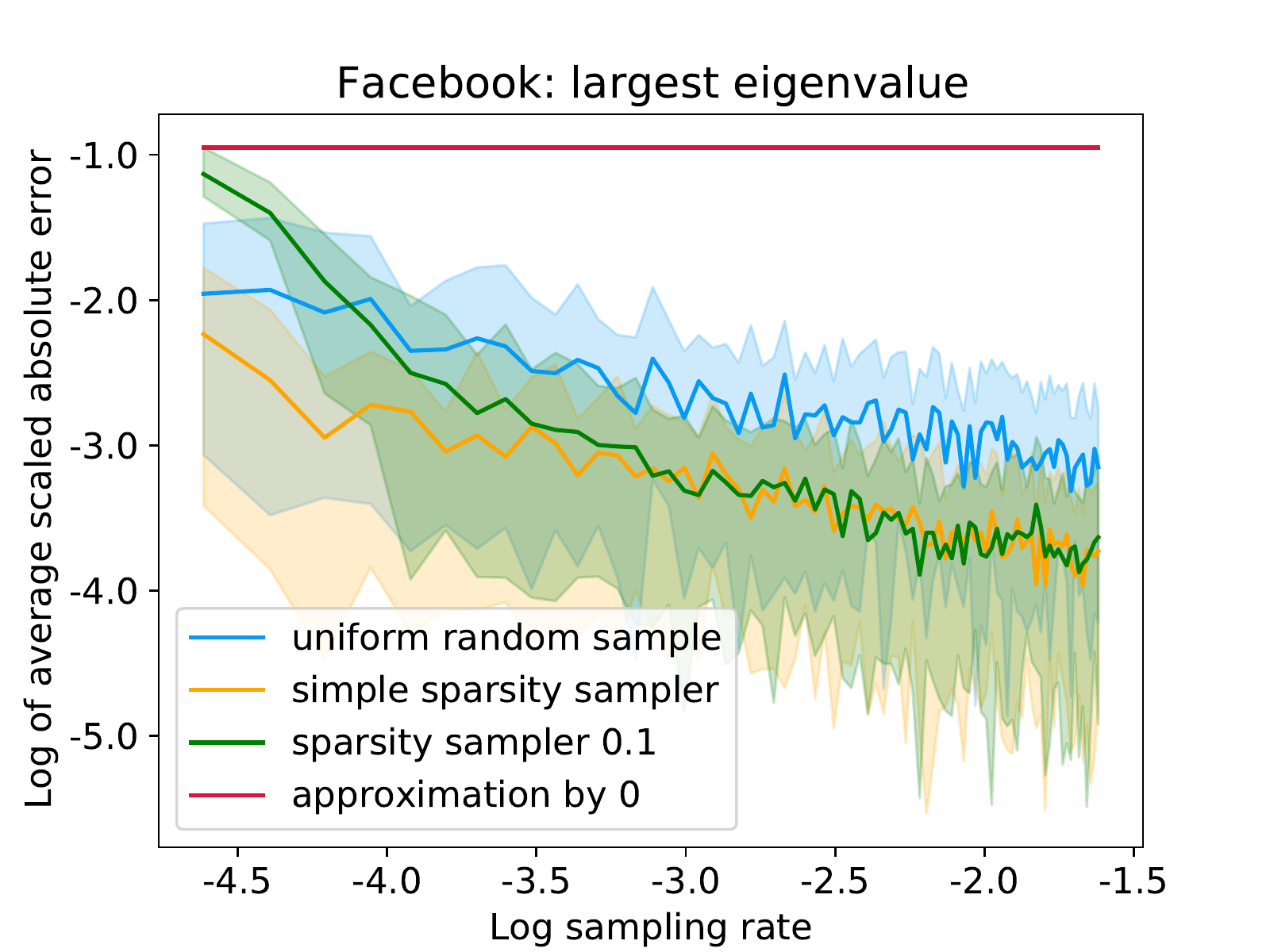}
    \includegraphics[width=0.32\textwidth]{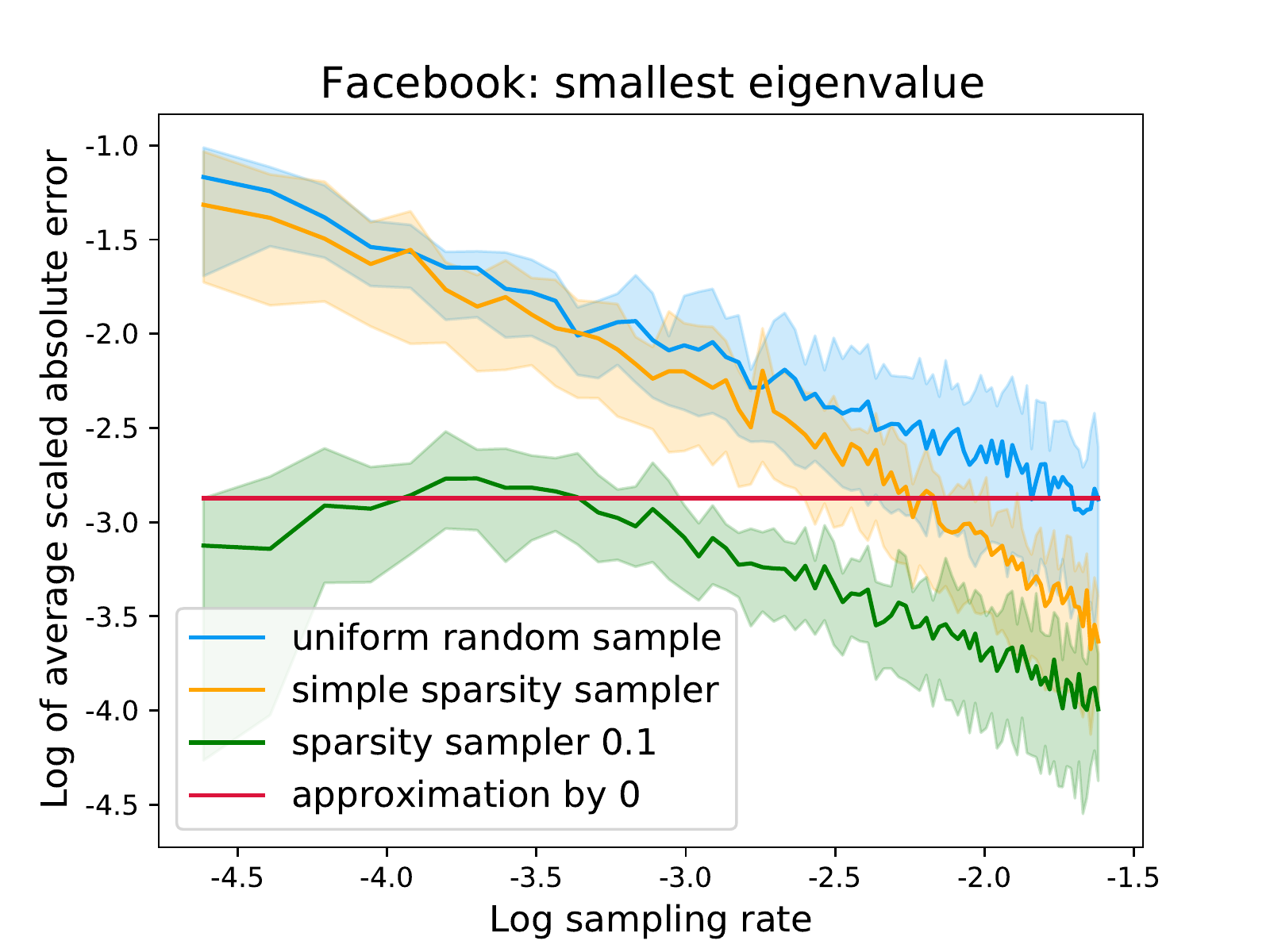}
    \includegraphics[width=0.32\textwidth]{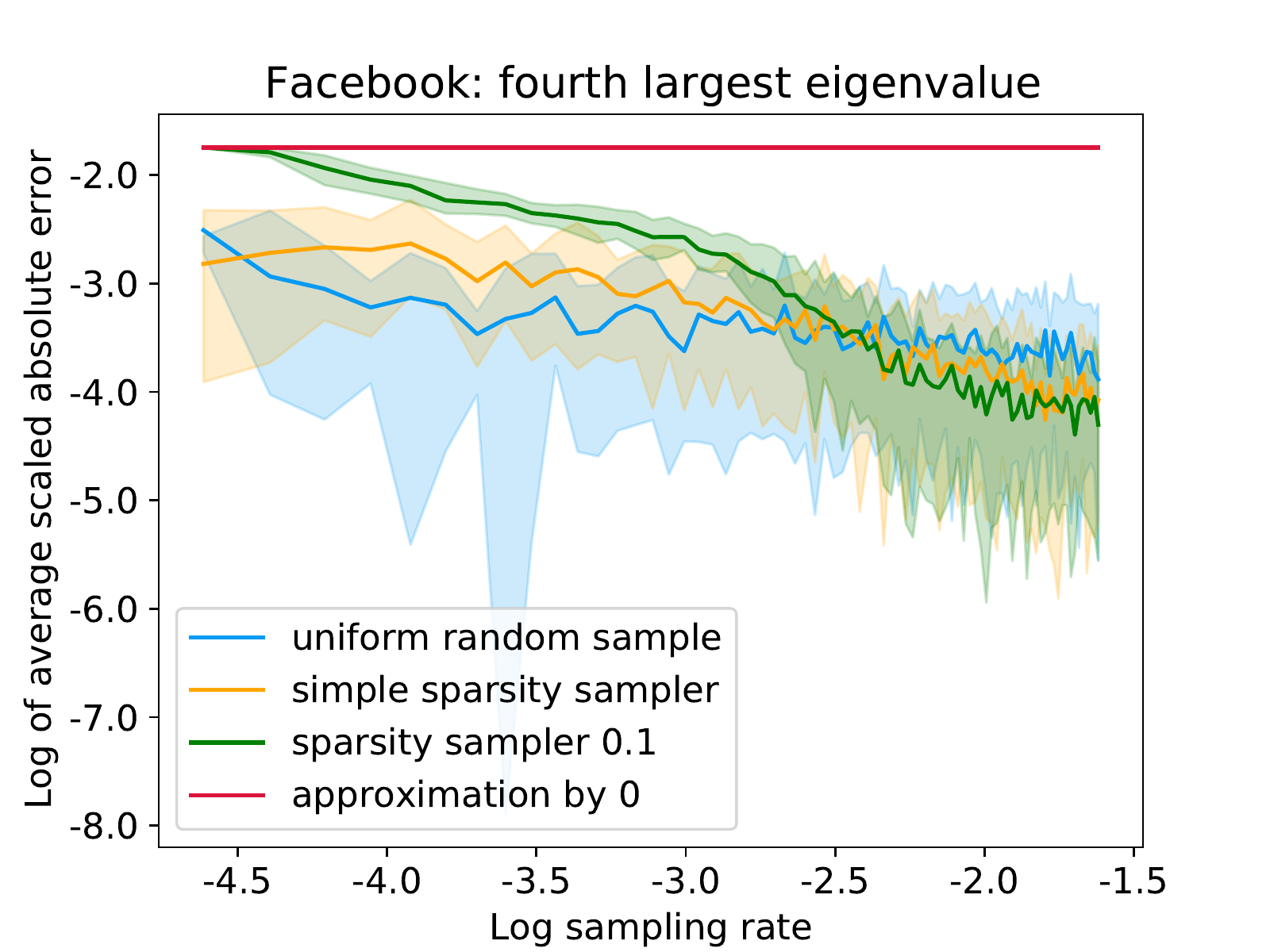}
    \caption{Facebook graph adjacency matrix \cite{mcauley2012learning}.}
    \end{subfigure}
    \vskip\baselineskip
    \vspace{-1.3em}
    \begin{subfigure}[t]{\textwidth}
    \centering
    \includegraphics[width=0.32\textwidth]{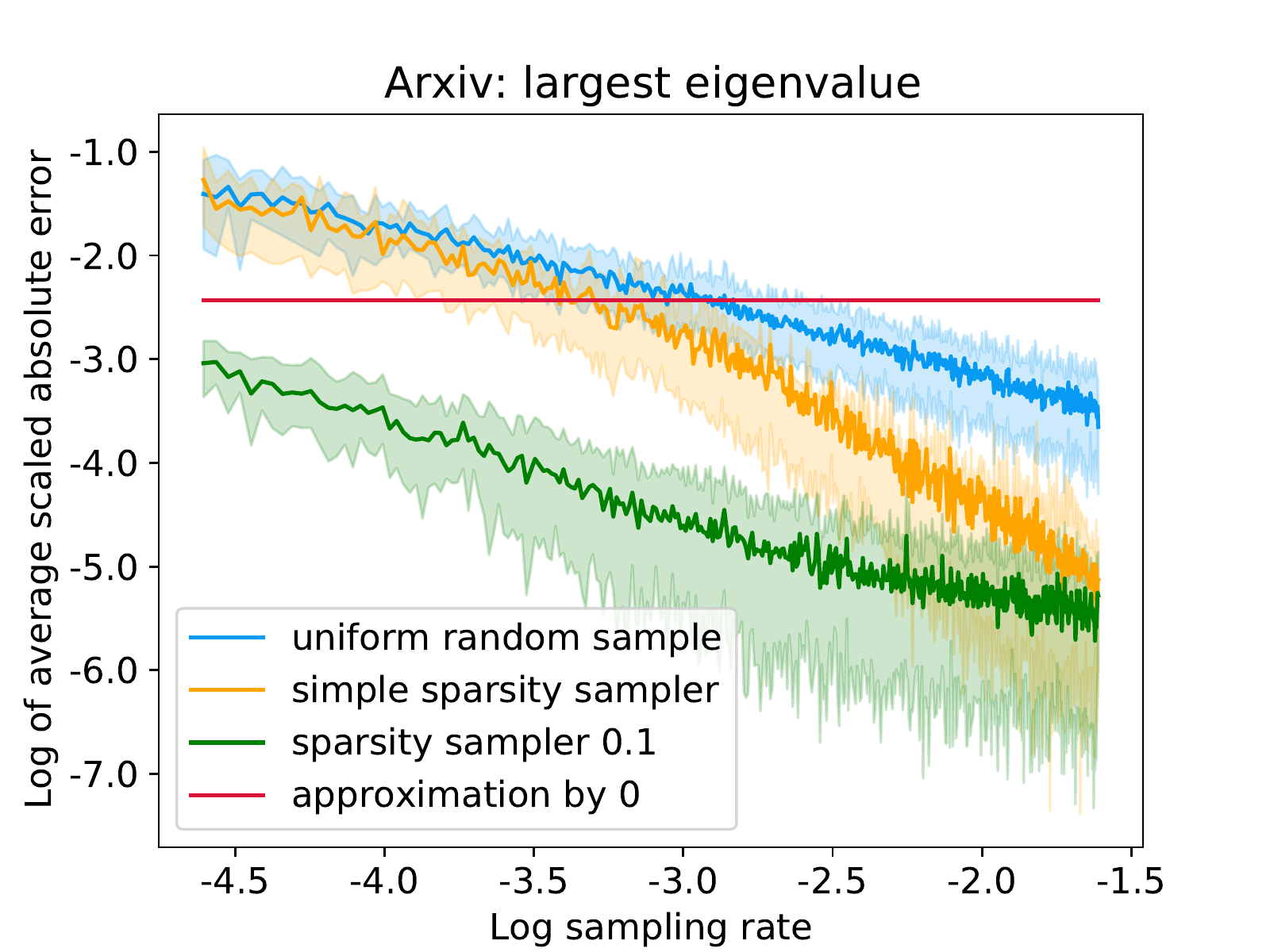}
    \includegraphics[width=0.32\textwidth]{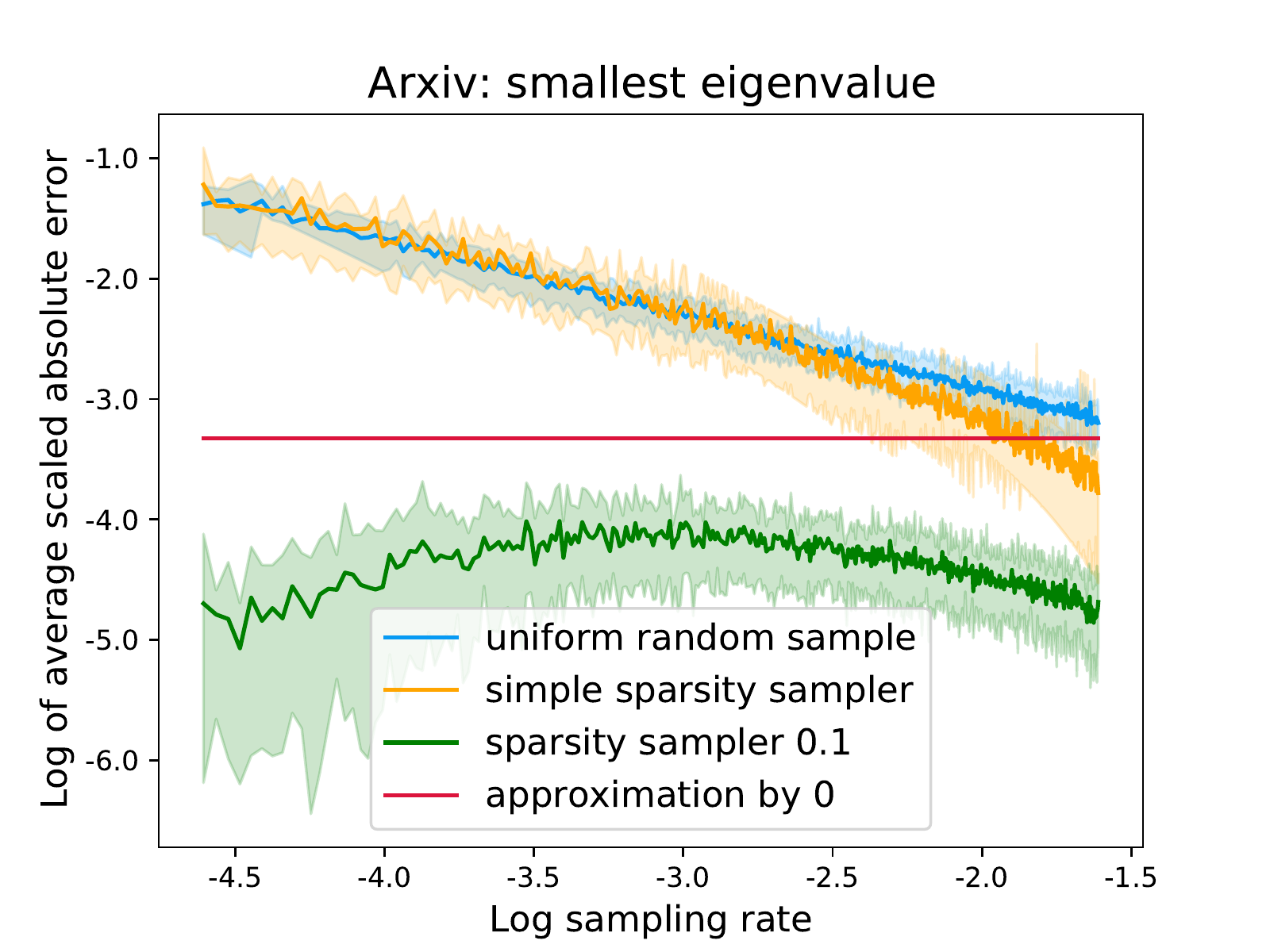}
    \includegraphics[width=0.32\textwidth]{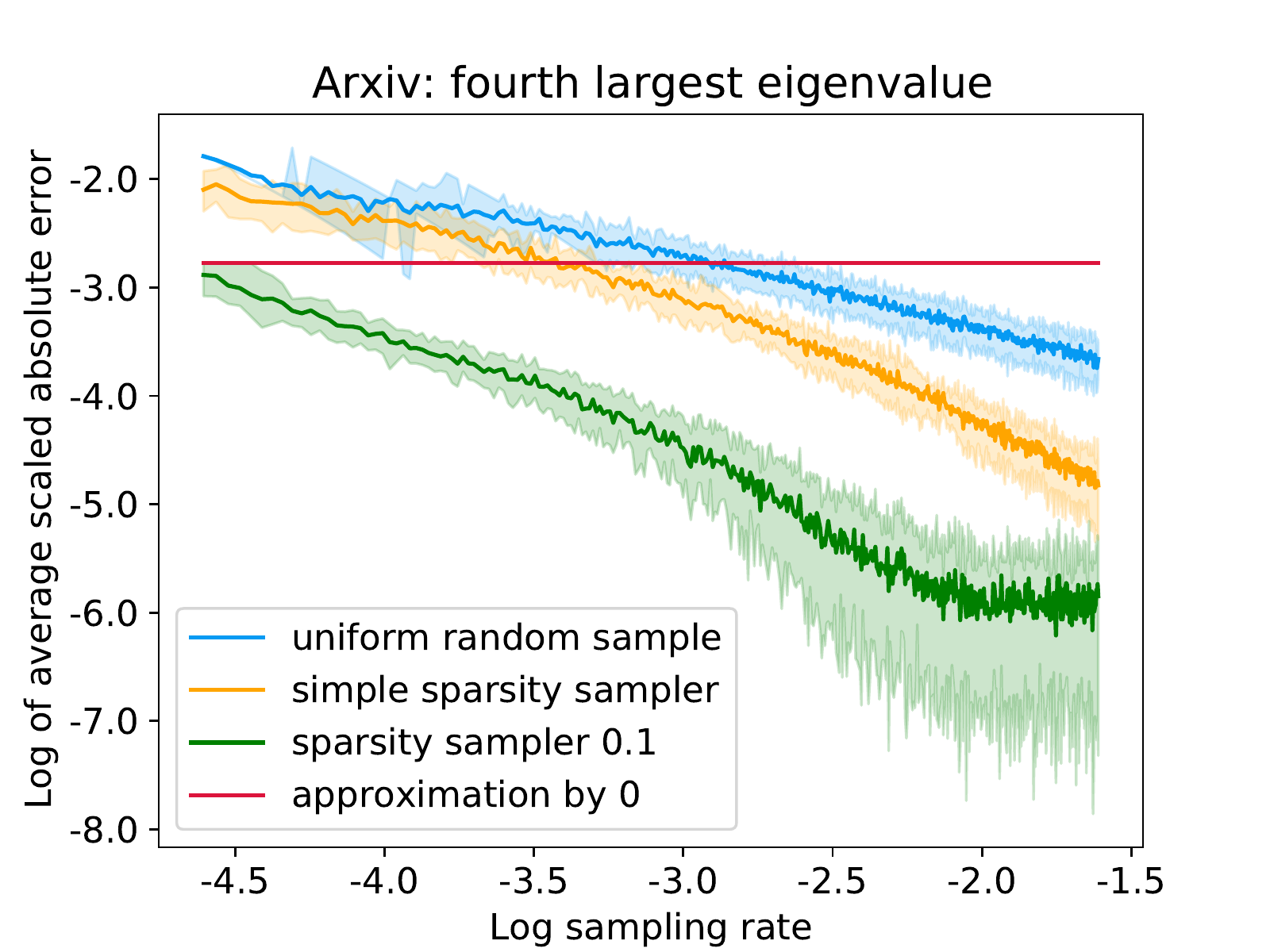}
    \caption{ArXiv collaboration network adjacency matrix \cite{leskovec2007graph}.}
    \end{subfigure}
    \vspace{-0.5em}
    \caption{\textbf{Approximation error of eigenvalues of sparse matrices}. Log scale absolute error vs. log sampling rate for Algorithm \ref{alg:eigenvalue estimate}, Algoithm \ref{alg:nnz eigenvalue estimate}, simple sparsity sampler and approximation by 0, as described in Section \ref{subsec:experiments}, for approximating the largest, smallest, and fourth largest of remaining three example matrices. 
    The corresponding true eigenvalues for each matrix in-order are: 
    (\erdos) $\{500.57, -42.52, 42.02\}$, (Facebook) $\{162.37, -23.75, 73.28\}$ and (arXiv) $\{37.95, -15.58, 26.92\}$.}
    \label{fig:errors2}
\end{figure}

\clearpage

\bibliography{references}

\pagebreak

\appendix


\section{Eigenvalue Approximation for PSD Matrices}\label{app:psd}

Here we give a simple proof that shows if Algorithm \ref{alg:eigenvalue estimate} is used to approximate the eigenvalues of positive semidefinite (PSD) matrices (i.e., with all non-negative eigenvalues) using a $O(1/\epsilon^2) \times O(1/\epsilon^2)$ random submatrix, then the $\ell_2$ norm of the error of eigenvalue approximations is bounded by $\epsilon n$. This much stronger result immediately implies that each eigenvalue of a PSD matrix can be approximated to $\pm \epsilon n$ additive error using just a $O(1/\epsilon^2) \times O(1/\epsilon^2)$ random submatrix. The proof follows from a bound in~\cite{bhatia2013matrix} which bounds the $\ell_2$ norm of the difference vector of eigenvalues of a Hermitian matrix and any other random matrix by the Frobenius norm of the difference of the two matrices. This improves on the bound of Theorem \ref{thm:main_bound} for general entrywise bounded matrices by a $1/\epsilon^2$ factor, and matches the $O(1/\epsilon^4)$ lower bound for principal submatrix queries in~\cite{BakshiChepurkoJayaram:2020}. Note that the hard instance used to prove the lower bound in~\cite{BakshiChepurkoJayaram:2020} can in fact be negated to be PSD, thus showing that our upper bound here is tight.

We first state the result from~\cite{bhatia2013matrix} which we will be using in our proof.

\begin{fact}[$\ell_2$-norm bound on eigenvalues \cite{bhatia2013matrix}]\label{fact:spectral error bound} 
Let $\bv A \in \C^{n\times n}$ be Hermitian and $\bv B \in \C^{n\times n}$ be any matrix whose eigenvalues are $\lambda_1(\bv B), \ldots, \lambda_n(\bv B)$ such that $Re(\lambda_1(\bv B))\geq \ldots \ge Re(\lambda_n(\bv B))$ (where $Re(\lambda_i(\bv B))$ denotes the real part of $\lambda_i(\bv B)$). Let $\bv A- \bv B = \bv E$. Then
\begin{align*}
    \left(\sum_{i=1}^n \left|\lambda_i(\bv A) - \lambda_i(\bv B)\right|^2\right)^{1/2} \leq \sqrt{2}\|\bv E\|_F.
\end{align*}
\end{fact}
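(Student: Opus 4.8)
The plan is to reduce to Schur triangular form and then track, via Frobenius‑norm bookkeeping, exactly how much is lost when the diagonal of $\bv A$ (in a Schur basis of $\bv B$) is replaced by $\bv A$'s eigenvalues. First I would take a Schur decomposition $\bv B = \bv U(\bv D + \bv N)\bv U^*$, with $\bv U$ unitary, $\bv D = \mathrm{diag}(\lambda_1(\bv B),\dots,\lambda_n(\bv B))$ with the eigenvalues ordered by decreasing real part, and $\bv N$ strictly upper triangular. Conjugating $\bv A$ and $\bv B$ by $\bv U^*$ changes neither the eigenvalues of $\bv A$, nor those of $\bv B$, nor $\norm{\bv E}_F$, so I may assume $\bv B = \bv D + \bv N$ outright. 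Expanding $\norm{\bv E}_F^2 = \norm{\bv A - \bv B}_F^2$ into diagonal / strictly‑lower / strictly‑upper blocks and using $\bv A_{ji} = \overline{\bv A_{ij}}$ gives the exact identity
\[
\norm{\bv E}_F^2 \;=\; \sum_{i} |\bv A_{ii} - \lambda_i(\bv B)|^2 \;+\; \sum_{i<j}\Bigl(|\bv A_{ij}|^2 + |\bv A_{ij} - \bv N_{ij}|^2\Bigr).
\]

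The heart of the matter is to compare $\sum_i |\lambda_i(\bv A) - \lambda_i(\bv B)|^2$ with the first sum above. Writing $d_i := \bv A_{ii}$ (real, as $\bv A$ is Hermitian) and splitting each summand into the real and imaginary parts of $\lambda_i(\bv B)$, the imaginary contributions agree in the two sums, so
\[
\sum_i |\lambda_i(\bv A) - \lambda_i(\bv B)|^2 - \sum_i |d_i - \lambda_i(\bv B)|^2 \;=\; \Bigl(\sum_i \lambda_i(\bv A)^2 - \sum_i d_i^2\Bigr) - 2\sum_i \mathrm{Re}\bigl(\lambda_i(\bv B)\bigr)\bigl(\lambda_i(\bv A) - d_i\bigr).
\]
Since $\sum_i \lambda_i(\bv A)^2 = \norm{\bv A}_F^2$, the first bracket equals $\sum_{i\ne j}|\bv A_{ij}|^2 = 2\sum_{i<j}|\bv A_{ij}|^2$. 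For the second sum I would argue it is nonnegative: $\sum_i(\lambda_i(\bv A) - d_i) = 0$ since both terms equal $\tr\,\bv A$, and by the Schur--Horn theorem the diagonal of the Hermitian matrix $\bv A$ is majorized by its eigenvalue vector, so every partial sum $V_i := \sum_{j\le i}(\lambda_j(\bv A) - d_j)$ is nonnegative. As $\mathrm{Re}(\lambda_i(\bv B))$ is nonincreasing in $i$, Abel summation gives $\sum_i \mathrm{Re}(\lambda_i(\bv B))(\lambda_i(\bv A) - d_i) = \sum_{i=1}^{n-1}\bigl(\mathrm{Re}(\lambda_i(\bv B)) - \mathrm{Re}(\lambda_{i+1}(\bv B))\bigr)V_i \ge 0$.

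Combining the two displays, $\sum_i |\lambda_i(\bv A) - \lambda_i(\bv B)|^2 \le \sum_i |\bv A_{ii} - \lambda_i(\bv B)|^2 + 2\sum_{i<j}|\bv A_{ij}|^2$; bounding $2\sum_{i<j}|\bv A_{ij}|^2 \le 2\sum_{i<j}\bigl(|\bv A_{ij}|^2 + |\bv A_{ij} - \bv N_{ij}|^2\bigr)$ and inserting the identity for $\norm{\bv E}_F^2$ yields $\sum_i |\lambda_i(\bv A) - \lambda_i(\bv B)|^2 \le 2\norm{\bv E}_F^2$, which is the claim after taking square roots. I expect the main obstacle to be precisely the nonnegativity of the cross term $\sum_i \mathrm{Re}(\lambda_i(\bv B))(\lambda_i(\bv A) - d_i)$: the seductive shortcut of bounding $\sum_i (\lambda_i(\bv A) - \mathrm{Re}\,\lambda_i(\bv B))^2$ directly by $\norm{\mathrm{Re}(\bv E)}_F^2$ is in fact \emph{false} (already for a $2\times2$ real symmetric $\bv A$ with zero diagonal and off‑diagonal $t$, with $\bv B$ its strictly upper‑triangular part — an example that also shows the constant $\sqrt2$ is sharp), and it is genuinely the interplay of Schur--Horn majorization with the real‑part ordering, exploited through Abel summation, that produces the factor $\sqrt2$. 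Everything else is routine block manipulation of Frobenius norms.
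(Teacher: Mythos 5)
The paper states this as a fact cited to \cite{bhatia2013matrix} without proof, so there is no in-paper argument to compare against; your proof must stand on its own, and it does. Each step checks out: the unitary reduction to $\bv B=\bv D+\bv N$ is justified since eigenvalues and Frobenius norms are invariant under simultaneous conjugation; the identity $\|\bv E\|_F^2=\sum_i|\bv A_{ii}-\lambda_i(\bv B)|^2+\sum_{i<j}\bigl(|\bv A_{ij}|^2+|\bv A_{ij}-\bv N_{ij}|^2\bigr)$ is exact by Hermitian symmetry of $\bv A$; the algebraic comparison, after cancelling imaginary parts and using $\sum_i\lambda_i(\bv A)^2=\|\bv A\|_F^2$, correctly reduces everything to the sign of $\sum_i \mathrm{Re}(\lambda_i(\bv B))\,(\lambda_i(\bv A)-d_i)$; the Schur--Horn step is applied to the conjugated (still Hermitian) matrix, and the chain $\sum_{j\le i}d_j\le\sum_{j\le i}d_j^{\downarrow}\le\sum_{j\le i}\lambda_j(\bv A)$ gives $V_i\ge0$ for every $i$ with $V_n=0$, so Abel summation against the nonincreasing $\mathrm{Re}(\lambda_i(\bv B))$ gives nonnegativity of the cross term; and the final assembly $\sum_i|\lambda_i(\bv A)-\lambda_i(\bv B)|^2\le\sum_i|\bv A_{ii}-\lambda_i(\bv B)|^2+2\sum_{i<j}|\bv A_{ij}|^2\le 2\|\bv E\|_F^2$ holds, the slack in the last inequality being the manifestly nonnegative $\sum_i|\bv A_{ii}-\lambda_i(\bv B)|^2+2\sum_{i<j}|\bv A_{ij}-\bv N_{ij}|^2$. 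Your $2\times 2$ example (symmetric $\bv A$ with zero diagonal and off-diagonals $t$, $\bv B$ its strictly upper-triangular part) correctly shows the constant $\sqrt 2$ cannot be improved, and also correctly refutes the naive bound via $\|\mathrm{Re}(\bv E)\|_F$. The mechanism you isolate—triangularize $\bv B$, then pay for swapping $\bv A$'s diagonal for its eigenvalues with twice the squared off-diagonal mass, the trade being favorable by majorization combined with the real-part ordering—is exactly the structural content of the Kahan/Bhatia theorem, and the proof is complete.
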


Our result is based on the following Lemma, we prove at the end of the section.
\begin{lemma}\label{lem:amm}
Consider a PSD matrix $\bv A = \bv {BB}^T$ with $\norm{\bv A}_\infty \le 1$.  Let $S$ be sampled as in Algorithm \ref{alg:eigenvalue estimate} for $s \geq \frac{1}{\epsilon^2\delta} $. Let $\bv{\bar S} \in \R^{n \times |S|}$ be the scaled sampling matrix satisfying $\bv{\bar S}^T \bv{A} \bv{\bar S} = \frac{n}{s} \cdot \bv{A}_{S}$.
Then with probability at least $1-\delta$,
\begin{align*}
\norm{\bv B^T \bv {\bar S} \bv{\bar S}^T \bv B - \bv B^T \bv B}_F \le \epsilon n.
\end{align*}
\end{lemma}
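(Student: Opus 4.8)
The plan is to recognize Lemma~\ref{lem:amm} as a standard approximate matrix multiplication (AMM) statement and to prove it via the second-moment method. Write $\bv{\bar S}\bv{\bar S}^T = \sum_{i \in S} \frac{n}{s} \bv e_i \bv e_i^T$, so that $\bv B^T \bv{\bar S}\bv{\bar S}^T \bv B = \sum_{i \in S} \frac{n}{s} \bv b_i \bv b_i^T$ where $\bv b_i = \bv B^T \bv e_i$ is the $i$\textsuperscript{th} row of $\bv B$ (viewed as a column vector), while $\bv B^T \bv B = \sum_{i=1}^n \bv b_i \bv b_i^T$. Introduce independent indicators $\chi_i \sim \text{Bernoulli}(s/n)$, so that $\bv X \eqdef \bv B^T \bv{\bar S}\bv{\bar S}^T \bv B - \bv B^T \bv B = \sum_{i=1}^n \left(\frac{n}{s}\chi_i - 1\right) \bv b_i \bv b_i^T$, a sum of independent mean-zero matrices. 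The key step is to bound $\E\norm{\bv X}_F^2$ and then apply Markov's inequality to the nonnegative random variable $\norm{\bv X}_F^2$.

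First I would compute the expected squared Frobenius norm. Since the summands are independent and mean-zero, $\E\norm{\bv X}_F^2 = \sum_{i=1}^n \E\left[\left(\frac{n}{s}\chi_i - 1\right)^2\right] \norm{\bv b_i \bv b_i^T}_F^2 = \sum_{i=1}^n \left(\frac{n}{s} - 1\right)\norm{\bv b_i}_2^4 \le \frac{n}{s}\sum_{i=1}^n \norm{\bv b_i}_2^4$, using $\E\left[\left(\frac{n}{s}\chi_i-1\right)^2\right] = \frac{n}{s}-1$ and $\norm{\bv b_i\bv b_i^T}_F = \norm{\bv b_i}_2^2$. Now the crucial observation is that $\norm{\bv b_i}_2^2 = (\bv B \bv B^T)_{ii} = \bv A_{ii} \le \norm{\bv A}_\infty \le 1$, so $\norm{\bv b_i}_2^4 \le \norm{\bv b_i}_2^2 \le 1$, giving $\sum_{i=1}^n \norm{\bv b_i}_2^4 \le n$. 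Hence $\E\norm{\bv X}_F^2 \le \frac{n^2}{s}$. Choosing $s \ge \frac{1}{\epsilon^2\delta}$ yields $\E\norm{\bv X}_F^2 \le \epsilon^2\delta n^2$, and by Markov's inequality $\Pr\left(\norm{\bv X}_F^2 \ge \epsilon^2 n^2\right) \le \delta$, i.e. $\norm{\bv X}_F \le \epsilon n$ with probability at least $1-\delta$, as claimed.

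There is no serious obstacle here; this is a textbook AMM argument (as the paper itself notes, citing \cite{drineas2001fast}). The only point requiring a little care is the identity $\norm{\bv b_i}_2^2 = \bv A_{ii}$, which exploits the PSD factorization $\bv A = \bv B\bv B^T$ together with $\norm{\bv A}_\infty \le 1$ --- this is exactly where PSD-ness (rather than a general symmetric matrix) enters, ensuring that the diagonal entries of $\bv A$ control the row norms of $\bv B$. One should also double-check the variance computation does not need cross terms: because distinct summands are independent with zero mean, $\E\langle (\frac{n}{s}\chi_i-1)\bv b_i\bv b_i^T, (\frac{n}{s}\chi_j-1)\bv b_j\bv b_j^T\rangle = 0$ for $i \ne j$, so only the diagonal terms survive in $\E\norm{\bv X}_F^2$. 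With Lemma~\ref{lem:amm} established, the stated application follows immediately: since $\bv B^T\bv{\bar S}\bv{\bar S}^T\bv B$ is Hermitian (indeed PSD) and has the same nonzero eigenvalues as $\frac{n}{s}\bv A_S$ (Fact~\ref{def: equality of eigenvalues}), Fact~\ref{fact:spectral error bound} or plain Weyl's inequality transfers the $\norm{\cdot}_F \le \epsilon n$ perturbation bound into an $\ell_2$ (hence $\ell_\infty$) bound on the eigenvalue estimation errors.
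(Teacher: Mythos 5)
Your proof is correct and follows essentially the same approach as the paper: bound $\E\norm{\bv B^T\bv{\bar S}\bv{\bar S}^T\bv B - \bv B^T\bv B}_F^2$ by $n^2/s$ via a variance calculation exploiting $\norm{\bv b_i}_2^2 = \bv A_{ii} \le 1$, then apply Markov. The only cosmetic difference is that you organize the computation as a sum of independent rank-one matrix increments (using Frobenius orthogonality of mean-zero independent terms), whereas the paper carries out the equivalent entry-by-entry variance sum; both yield the identical bound $\E\norm{\cdot}_F^2 \le n^2/s \le \epsilon^2\delta n^2$.
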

\noindent From the above Lemma we have:
\begin{restatable}[Spectral norm bound -- PSD matrices]{corollary}{psdBound}
\label{cor:psd bound}
Consider a PSD matrix $\bv A$ with $\norm{\bv A}_\infty \le 1$.  
Let $S$ be a subset of indices formed by including each index in $[n]$ independently with probability $s/n$ as in Algorithm \ref{alg:eigenvalue estimate}. Let $\bv A_S$ be the corresponding principal submatrix of $\bv{A}$, with eigenvalues $\lambda_1(\bv{A}_S) \ge \ldots \ge \lambda_{|S|}(\bv{A}_S)$.

For all $i \in [|S|]$ with $\lambda_i(\bv{A}_S) \ge 0$, let $\tilde \lambda_i(\bv{A}) = \frac{n}{s} \cdot \lambda_i(\bv{A}_S)$. For all other $i \in [n]$, let $\tilde \lambda_i(\bv{A}) = 0$.
Then if $s \ge \frac{2}{\epsilon^2\delta}$, with probability at least $1-\delta$,
\begin{align*}
\left(\sum_{i=1}^n\left| \tilde \lambda_i(\bv A) - \lambda_i(\bv A) \right | ^2\right)^{1/2} &\le \epsilon n,
\end{align*}
which implies that for all $i \in [n]$,
\begin{align*}
\lambda_i(\bv A) - \epsilon n \le \tilde \lambda_i(\bv A) \le \lambda_i(\bv A) + \epsilon n.
\end{align*}
\end{restatable}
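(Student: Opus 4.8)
\textbf{Proof proposal for Corollary \ref{cor:psd bound}.}

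The plan is to reduce the corollary directly to Lemma \ref{lem:amm} together with the $\ell_2$-norm perturbation bound of Fact \ref{fact:spectral error bound}. First I would write $\bv A = \bv B \bv B^T$ for some $\bv B \in \R^{n \times n}$, which is possible precisely because $\bv A$ is PSD. Let $\bv{\bar S} \in \R^{n \times |S|}$ be the scaled sampling matrix with $\bv{\bar S}^T \bv A \bv{\bar S} = \frac n s \cdot \bv A_S$, exactly as in Lemma \ref{lem:amm}. By Fact \ref{def: equality of eigenvalues} (eigenvalues of a matrix product), the nonzero eigenvalues of $\frac n s \cdot \bv A_S = \bv{\bar S}^T \bv B \bv B^T \bv{\bar S}$ are identical to the nonzero eigenvalues of $\bv B^T \bv{\bar S} \bv{\bar S}^T \bv B$; likewise the nonzero eigenvalues of $\bv A = \bv B \bv B^T$ are identical to those of $\bv B^T \bv B$. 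Both $\bv B^T \bv{\bar S} \bv{\bar S}^T \bv B$ and $\bv B^T \bv B$ are PSD, hence Hermitian with nonnegative eigenvalues, so the remaining eigenvalues of each (beyond the nonzero ones) are all zero.

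Next I would invoke Lemma \ref{lem:amm}: since $s \ge \frac{2}{\epsilon^2 \delta} \ge \frac{1}{(\epsilon')^2 \delta}$ with $\epsilon' = \epsilon/\sqrt 2$, with probability at least $1-\delta$ we have $\norm{\bv B^T \bv{\bar S} \bv{\bar S}^T \bv B - \bv B^T \bv B}_F \le \frac{\epsilon n}{\sqrt 2}$. Now apply Fact \ref{fact:spectral error bound} with $\bv A \leftarrow \bv B^T \bv B$ (Hermitian) and $\bv B \leftarrow \bv B^T \bv{\bar S} \bv{\bar S}^T \bv B$: this yields
\begin{align*}
\left(\sum_{i=1}^n \left| \lambda_i(\bv B^T \bv B) - \lambda_i(\bv B^T \bv{\bar S} \bv{\bar S}^T \bv B) \right|^2 \right)^{1/2} \le \sqrt 2 \cdot \frac{\epsilon n}{\sqrt 2} = \epsilon n.
\end{align*}
It then remains to translate this into a statement about $\lambda_i(\bv A)$ and $\tilde\lambda_i(\bv A)$. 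Since the eigenvalues of $\bv B^T \bv B$ are the eigenvalues of $\bv A$ padded with zeros, and the eigenvalues of $\bv B^T \bv{\bar S} \bv{\bar S}^T \bv B$ are the nonzero eigenvalues of $\frac n s \cdot \bv A_S$ padded with zeros, the sorted sequences $\{\lambda_i(\bv B^T \bv B)\}$ and $\{\lambda_i(\bv A)\}$ agree (both being the eigenvalues of $\bv A$ since $\bv A$ is PSD and $\bv B^T \bv B$ has rank at most that of $\bv A$), and the sorted sequence $\{\lambda_i(\bv B^T \bv{\bar S} \bv{\bar S}^T \bv B)\}$ is precisely $\{\tilde\lambda_i(\bv A)\}$ by the assignment in the corollary statement (positive eigenvalues of $\bv A_S$ scaled by $n/s$ become the top estimates, all remaining estimates are $0$, and the estimates are in sorted order). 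Hence $\left(\sum_i |\tilde\lambda_i(\bv A) - \lambda_i(\bv A)|^2\right)^{1/2} \le \epsilon n$, and since each individual term is bounded by the $\ell_2$ norm of the whole vector, $|\tilde\lambda_i(\bv A) - \lambda_i(\bv A)| \le \epsilon n$ for all $i$, as claimed.

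The main obstacle I anticipate is the bookkeeping in the last paragraph: one must check carefully that the alignment of eigenvalues induced by the estimation rule in the corollary (which assigns $\tilde\lambda_i$ from the \emph{sorted} eigenvalues of $\bv A_S$) coincides with the sorted order used implicitly in Fact \ref{fact:spectral error bound}, and that padding with zeros is consistent on both sides — i.e., that the number of nonzero eigenvalues of $\bv A_S$ does not exceed $n$ and that zero estimates correctly stand in for zero eigenvalues of $\bv A$. This is where an argument of the ``lining up'' type from the proof of Theorem \ref{thm:main_bound} is needed, though here it is much simpler because everything is PSD: there are no negative eigenvalues to route to the bottom of the spectrum, so the sorted order is simply preserved and the only subtlety is the treatment of zeros, which the $\ell_2$ bound handles automatically since a zero estimate against a zero true eigenvalue contributes nothing. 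The rest of the proof is routine given Lemma \ref{lem:amm} and Fact \ref{fact:spectral error bound}.
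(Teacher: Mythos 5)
Your proposal is correct and follows essentially the same route as the paper's proof: factor $\bv A = \bv B\bv B^T$, invoke Lemma \ref{lem:amm} to bound $\|\bv B^T\bv{\bar S}\bv{\bar S}^T\bv B - \bv B^T\bv B\|_F$, apply the $\ell_2$ perturbation bound of Fact \ref{fact:spectral error bound}, and use Fact \ref{def: equality of eigenvalues} to line up the eigenvalues of $\bv B^T\bv{\bar S}\bv{\bar S}^T\bv B$ with those of $\frac n s \bv A_S$. The only cosmetic differences are that the paper takes $\bv B \in \R^{n\times\rank(\bv A)}$ (so the sum in Fact \ref{fact:spectral error bound} runs over $\rank(\bv A)$ terms before padding with zeros, whereas yours runs over $n$ directly) and that the paper performs the $\sqrt 2$ adjustment at the end rather than folding it into the application of Lemma \ref{lem:amm}.
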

\begin{proof}
Let $S$ be sampled as in Algorithm \ref{alg:eigenvalue estimate} and let $\bv{\bar S} \in \R^{n \times |S|}$ be the scaled sampling matrix satisfying $\bv{\bar S}^T \bv{A} \bv{\bar S} = \frac{n}{s} \cdot \bv{A}_{S}$.
Since $\bv A$ is PSD, we can write $\bv A= \bv B \bv B^T$ for some matrix $\bv B \in \R^{n \times \rank(\bv{A})}$. From Lemma~\ref{lem:amm}, for $s \geq \frac{1}{\epsilon^2\delta}$, we have with probability at least $1-\delta$:
\begin{align*}
 \norm{\bv B^T  \bv{\bar S} \bv{\bar S}^T \bv B - \bv B^T \bv B}_F \le \epsilon n
\end{align*}
Using Fact~\ref{fact:spectral error bound}, we have,
\begin{align}
      \left(\sum_{i=1}^{\rank(\bv A)} \left|\lambda_i(\bv B^T  \bv{\bar S}  \bv{\bar S}^T \bv B) - \lambda_i(\bv B^T \bv B)\right|^2\right)^{1/2}
    \le \sqrt{2}\norm{\bv B^T \bv{\bar S}\bv{\bar S}^T \bv B - \bv B^T \bv B}_F  \le \sqrt{2}\epsilon n.\label{eq:spectral error bound psd}
\end{align}
Also from Fact~\ref{def: equality of eigenvalues}, we have $\lambda_i(\bv B^T \bv B)=\lambda_i(\bv{B}\bv{B}^T)=\lambda_i(\bv{A})$ for all $i \le \rank(\bv{A})$. Thus,
\begin{align*}
     \left(\sum_{i=1}^{\rank(\bv A)} \left|\lambda_i(\bv B^T  \bv{\bar S}  \bv{\bar S}^T \bv B) - \lambda_i(\bv{A})\right|^2\right)^{1/2} \leq \sqrt{2}\epsilon n
\end{align*}
Also by Fact~\ref{def: equality of eigenvalues}, all non-zero eigenvalues of $\bv B^T  \bv{\bar S}  \bv{\bar S}^T \bv B$ are equal to those of $\bv{\bar S}^T \bv{BB}^T  \bv{\bar S}=\frac{n}{s} \cdot \bv{A}_S$. All other eigenvalue estimates are set to $0$. Further, for all $i > \rank(\bv A)$, $\lambda_i(\bv A) = 0$. Thus, 
\begin{align*}
     \left(\sum_{i=1}^{n} \left|\tilde \lambda_i(\bv A) - \lambda_i(\bv{A})\right|^2\right)^{1/2} \leq \sqrt{2}\epsilon n.
\end{align*}
Adjusting $\epsilon$ to $\epsilon/\sqrt{2}$ then  gives us the bound.
\end{proof}

We now prove Lemma \ref{lem:amm}, using a standard  approach for sampling based approximate matrix multiplication -- see e.g. \cite{drineas2001fast}.
\begin{proof}[Proof of Lemma \ref{lem:amm}]
For $k = 1,\ldots,n$ let $\bv Y_k = \frac{n}{s} - 1$ with probability $\frac{s}{n}$ and $\bv Y_k = -1$ with probability $1 - \frac{s}{n}$. Thus $\E[\bv Y_k] = 0$ and
\begin{align*}
\norm{\bv B^T  \bv{\bar S}  \bv{\bar S}^T \bv B - \bv B^T \bv B}_F^2 = \sum_{i = 1}^n \sum_{j = 1}^n \left (\sum_{k=1}^n \bv Y_k \cdot \bv B_{ik} \bv B_{jk}\right )^2.
\end{align*}
Fixing $i,j$, each the $\bv Y_k \cdot \bv B_{ik} \bv B_{jk}$ are $0$ mean independent random variables. Thus we have:
\begin{align*}
\E \left [ \norm{\bv B^T  \bv{\bar S}  \bv{\bar S}^T \bv B - \bv B^T \bv B}_F^2 \right ] &= \sum_{i = 1}^n \sum_{j = 1}^n \E \left [ \left (\sum_{k=1}^n \bv Y_k \cdot \bv B_{ik} \bv B_{jk} \right )^2 \right ]\\
&= \sum_{i = 1}^n \sum_{j = 1}^n \Var \left [\sum_{k=1}^n \bv Y_k \cdot \bv B_{ik} \bv B_{jk} \right ]\\
&= \sum_{i = 1}^n \sum_{j = 1}^n \sum_{k=1}^n \Var \left [\bv Y_k \cdot \bv B_{ik} \bv B_{jk} \right ]\\
&\le \sum_{i = 1}^n \sum_{j = 1}^n \sum_{k=1}^n \frac{n}{s} \cdot \bv B_{ik}^2 \bv B_{jk}^2.
\end{align*} 
since $\Var[\bv Y_k] = \left (\frac{n}{s} -1 \right )^2 \cdot \frac{s}{n} + \left (1-\frac{s}{n} \right ) = \frac{n}{s} - 2 + \frac{s}{n} + 1 - \frac{s}{n} = \frac{n}{s}-1$.
Rearranging the sums we have:
\begin{align*}
\E[\norm{\bv B^T  \bv{\bar S}  \bv{\bar S}^T \bv B - \bv B^T \bv B}_F^2] \le \frac{n}{s} \sum_{k=1}^n \sum_{i=1}^n \bv B_{ik}^2 \sum_{j=1}^n \bv B_{jk}^2.
\end{align*}
Observe that  $\sum_{j=1}^n \bv B_{jk}^2 = \bv A_{kk} \le \norm{\bv A}_\infty \le 1$, thus overall we have:
\begin{align*}
\E[\norm{\bv B^T  \bv{\bar S}  \bv{\bar S}^T \bv B - \bv B^T \bv B}_F^2] \le \frac{n^2}{s} \le \epsilon^2 \delta n^2.
\end{align*}
So by Markov's inequality, with probability $\ge 1-\delta$, $\norm{\bv B^T  \bv{\bar S}  \bv{\bar S}^T \bv B - \bv B^T \bv B}_F^2 \le \epsilon^2 n^2$. This completes the theorem after taking a square root.
\end{proof}
\noindent \textbf{Remark}: The proof of Lemma~\ref{lem:amm} can be easily modified to show that the $i$\textsuperscript{th} row of $\bv{A}$ can be sampled with probability proportional to $\frac{|\bv{A}_{ii}|}{\text{tr}(\bv{A})}$ to approximate the eigenvalues of any PSD $\bv{A}$ up to $\pm \epsilon \cdot \text{tr}(\bv{A})$ error ($\text{tr}(\bv{A})$ is the trace of $\bv{A}$). When sampling with probability proportional to $\frac{|\bv{A}_{ii}|}{\text{tr}(\bv{A})}$, we do not require a bounded entry assumption on $\bv{A}$. 

\section{Alternate Bound for Uniform Sampling}\label{app:alt_bound}

In this section we provide an alternate bound for approximating eigenvalues with uniform sampling. The sample complexity is worse by a factor of $1/\epsilon$ for this approach, but better by a factor $\log^2 n$ as compared to Theorem \ref{thm:main_bound}.
We start with an analog to Lemma \ref{lemma: orthonormality},  showing that the outlying eigenspace remains nearly orthogonal after sampling. In particular, we show concentration of the Hermitian matrix $\bv{V}_o^T \bv{\bar S} \bv{\bar S}^T \bv{V}_o$ about its expectation $\bv{V}_o^T \bv{V}_o = \bv I$ rather than the non-Hermitian $\bs{\Lambda}_o^{1/2}\bv{V}_o^T \bv{\bar S} \bv{\bar S}^T\bv{V}_o \bs{\Lambda}_o^{1/2}$ as in Lemma \ref{lemma: orthonormality}. This allows us to use Weyl's inequality in our final analysis, rather than the non-Hermitian eigenvalue perturbation bound of Fact \ref{fact:weyl_general}, saving a $\log^2 n$ factor in the sample complexity.

\begin{lemma}[Near orthonormality -- sampled outlying eigenvalues]\label{lem:uniform-eps4-near-orthonorm}
Let $S$ be sampled as in Algorithm \ref{alg:eigenvalue estimate} for $s \geq \frac{c\log(1/(\epsilon \delta))}{\epsilon^4\delta} $ where $c$ is a sufficiently large constant. Let $\bv{\bar S} \in \R^{n \times |S|}$ be the scaled sampling matrix satisfying $\bv{\bar S}^T \bv{A} \bv{\bar S} = \frac{n}{s} \cdot \bv{A}_{S}$. Then with probability at least $1-\delta$, $\|\bv V_o^T \bar{\bv S} \bar{\bv S}^T \bv V_o - \bv{I} \|_2 \leq \epsilon.$
\end{lemma}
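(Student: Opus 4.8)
The plan is to mirror the proof of Lemma~\ref{lemma: orthonormality} essentially verbatim, applying the matrix Bernstein inequality (Theorem~\ref{thm:matrix bernstein}), but now to the Hermitian matrix $\bv V_o^T \bar{\bv S}\bar{\bv S}^T\bv V_o$ instead of the non-Hermitian matrix $\bv \Lambda_o^{1/2}\bv V_o^T\bar{\bv S}\bar{\bv S}^T\bv V_o\bv \Lambda_o^{1/2}$. First, I would define, for each $i\in[n]$, the matrix-valued random variable $\bv Y_i$ equal to $\frac{n}{s}\,\bv V_{o,i}\bv V_{o,i}^T$ with probability $s/n$ and $\bv 0$ otherwise, where $\bv V_{o,i}$ is the $i$th row of $\bv V_o$. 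Since $\sum_i \E[\bv Y_i] = \bv V_o^T\bv V_o = \bv I$, setting $\bv Q_i = \bv Y_i - \E[\bv Y_i]$ gives independent, mean-zero summands with $\bv E \eqdef \sum_i \bv Q_i = \bv V_o^T\bar{\bv S}\bar{\bv S}^T\bv V_o - \bv I$.

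The key change from Lemma~\ref{lemma: orthonormality} is that I would use the \emph{second} incoherence bound of Lemma~\ref{lemma:row_norm}, namely $\|\bv V_{o,i}\|_2^2 \le \frac{1}{\epsilon^2\delta n}$, in place of the eigenvalue-weighted bound. This yields $\|\bv Q_i\|_2 \le \max(1,\tfrac ns - 1)\cdot \|\bv V_{o,i}\|_2^2 \le \frac{1}{s\epsilon^2\delta}$, and for the variance, exactly as in \eqref{var_ineq}, $\sum_i \E[\bv Q_i^2] \preceq \sum_i \frac ns \|\bv V_{o,i}\|_2^2 \cdot \bv V_{o,i}\bv V_{o,i}^T \preceq \frac{1}{s\epsilon^2\delta}\sum_i \bv V_{o,i}\bv V_{o,i}^T = \frac{1}{s\epsilon^2\delta}\bv I$, so $v \le \frac{1}{s\epsilon^2\delta}$. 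The ambient dimension is $d \le \frac{1}{\epsilon^2\delta}$, since $\bv A$ has at most $\frac{\|\bv A\|_F^2}{\epsilon^2\delta n^2}\le \frac{1}{\epsilon^2\delta}$ eigenvalues of magnitude $\ge \epsilon\sqrt\delta n$ (using $\|\bv A\|_\infty\le 1$), as in Lemma~\ref{lemma: orthonormality}.

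Then I would invoke Theorem~\ref{thm:matrix bernstein} with $L = v = \frac{1}{s\epsilon^2\delta}$, $d \le \frac{1}{\epsilon^2\delta}$, and $t = \epsilon$, which gives $\Pr(\|\bv E\|_2 \ge \epsilon) \le \frac{2}{\epsilon^2\delta}\exp\!\left(\frac{-\epsilon^2/2}{v + L\epsilon/3}\right) \le \frac{2}{\epsilon^2\delta}\exp\!\left(-c' s\epsilon^4\delta\right)$ for an absolute constant $c'$, after using $\epsilon\le 1$ to absorb the $L\epsilon/3$ term. Choosing $s \ge \frac{c\log(1/(\epsilon\delta))}{\epsilon^4\delta}$ with $c$ large enough makes this at most $\delta$, completing the proof. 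The argument is entirely routine and there is no real obstacle; the only thing to note is that the final threshold carries $\epsilon^4\delta$ in the exponent rather than the $\epsilon^3\sqrt\delta$ of Lemma~\ref{lemma: orthonormality} — this worse $\epsilon$-dependence is exactly the price of not weighting the rows of $\bv V_o$ by $\bv \Lambda_o$, and it is what subsequently permits using Weyl's inequality (Fact~\ref{thm:eigenvalue_perturbation_theorem}) rather than the non-Hermitian bound of Fact~\ref{fact:weyl_general} downstream, saving the $\log^2 n$ factor.
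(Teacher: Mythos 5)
Your proof is correct and follows essentially the same argument as the paper: same random variables $\bv Y_i$, same application of the row-norm incoherence bound $\|\bv V_{o,i}\|_2^2 \le \frac{1}{\epsilon^2\delta n}$ from Lemma~\ref{lemma:row_norm}, and the same matrix Bernstein bound with $L = v = \frac{1}{s\epsilon^2\delta}$, $d \le \frac{1}{\epsilon^2\delta}$, and deviation $t=\epsilon$. Your closing remark correctly identifies the tradeoff that motivates the $\epsilon^4\delta$ threshold here versus the $\epsilon^3\sqrt\delta$ of Lemma~\ref{lemma: orthonormality}.
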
 
\begin{proof}
The result is standard in randomized numerical linear algebra -- see e.g., \cite{cohen2015uniform}. For completeness, we give a proof here. Define $\bv{E} = \bv V_o^T \bar{\bv S} \bar{\bv S}^T \bv V_o  - \bv I$.
For all $i \in [n]$, let $\bv V_{o,i}$ be the $i^{th}$ row of $\bv{V}_o$ and define the matrix valued random variable
\begin{align*}
    \bv Y_i = 
    \begin{cases}
    \frac{n}{s}\bv V_{o,i} \bv V_{o,i}^T , & \text{with probability } s/n\\
    0 & \text{otherwise.}
    \end{cases}
\end{align*}
Then, similar to the proof of Lemma~\ref{lemma: orthonormality}, define $\bv Q_i = \bv Y_i - \mathbb{E}\left[\bv Y_i\right]$. Since $\bv Q_1, \bv Q_2, \ldots, \bv Q_n$ are independent random variables and $ \sum_{i=1}^n \bv Q_i=\bv V_o^T \bar{\bv S} \bar{\bv S}^T \bv V_o  - \bv I = \bv{E}$ , we need to bound $\|\bv{Q}_i \|_2$ for all $i \in [n]$ and  $\bv{Var}(\bv E) \eqdef\E(\bv{EE}^T)= \E(\bv{E}^T\bv{E})  =\sum_{i=1}^n \mathbb{E}[\bv Q_i^2]$. Observe $\|\bv Q_i\|_2 \leq \max\left(1, \frac{n}{s}-1\right)\|\bv V_{o,i}\bv V_{o,i}^T\|_2 = \max\left(1, \frac{n}{s}-1\right)\|\bv V_{o,i}\|_2^2 \leq \frac{1}{\epsilon^2\delta s}$, by row norm bounds of Lemma \ref{lemma:row_norm}. Again, using Lemma \ref{lemma:row_norm} we have
\begin{align*}
  \sum_{i=1}^n \mathbb{E}[\bv Q_i^2]  &= \sum_{i=1}^n \frac{s}{n}\cdot \left( \frac{n}{s}-1\right)^2(\bv{V}_{o,i} \bv{V}^T_{o,i})^2+\left(1-\frac{s}{n} \right)(\bv{V}_{o,i} \bv{V}^T_{o,i})^2 \\
  &\preceq \sum_{i=1}^n \frac{n}{s}\|\bv{V}_{o,i} \|_2^2(\bv{V}_{o,i} \bv{V}^T_{o,i}) \\
  &\preceq \sum_{i=1}^n \frac{n}{s}\frac{1}{\epsilon^2 \delta n} (\bv{V}_{o,i} \bv{V}^T_{o,i}) \\
  &\preceq \frac{1}{s\epsilon^2 \delta} \cdot \bv{I}
\end{align*}
where $\bv I$ is the identity matrix of appropriate dimension. By setting $d = \frac{1}{\epsilon^2\delta}$, we can finally bound the probability of the event $\|\bv{E} \|_2 \geq \epsilon n$ using Theorem \ref{thm:matrix bernstein} (the matrix Bernstein inequality) with $\delta$ if $s \geq \frac{c\log(1/(\epsilon\delta))}{\epsilon^4\delta}$. Since these steps follow  Lemma~\ref{lemma: orthonormality} nearly exactly, we omit them here.
\end{proof}

With Lemma \ref{lem:uniform-eps4-near-orthonorm} in place, we can now give our alternate sample complexity bound.

\begin{restatable}[Sublinear Time Eigenvalue Approximation]{theorem}{largeeigbound}
\label{thm:bound on large values}
Let $\bv A \in \mathbb{R}^{n\times n}$ be symmetric with $\|\bv A\|_\infty \leq 1$ and eigenvalues $\lambda_1(\bv{A}) \ge \ldots \ge \lambda_n(\bv{A})$. Let $S \subseteq [n]$ be formed by including each index independently with probability $s/n$ as in Algorithm \ref{alg:eigenvalue estimate}. Let $\bv A_S$ be the corresponding principal submatrix of $\bv{A}$, with eigenvalues $\lambda_1(\bv{A}_S) \ge \ldots \ge \lambda_{|S|}(\bv{A}_S)$.

For all $i \in [|S|]$ with $\lambda_i(\bv{A}_S) \ge 0$, let $\tilde \lambda_i(\bv{A}) = \frac{n}{s} \cdot \lambda_i(\bv{A}_S)$. For all $i \in \{1, \ldots, |S|\}$ with $\lambda_i(\bv{A}_S) < 0$, let $\tilde \lambda_{n-(|S|-i)}(\bv{A}) = \frac{n}{s} \cdot \lambda_i(\bv{A}_S)$. For all other $i \in [n]$, let $\tilde \lambda_i(\bv{A}) = 0$.
If $s \geq \frac{c \log n}{\epsilon^4\delta}$, for a large enough constant $c$,
 then with probability $\ge 1-\delta$, for all $i \in [n]$,
\begin{align*}
    \lambda_i(\bv A) -\epsilon n \leq \tilde \lambda_i(\bv A) \leq \lambda_i(\bv A) +\epsilon n.
\end{align*}
\end{restatable}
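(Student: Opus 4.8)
The plan is to follow the proof of Theorem~\ref{thm:main_bound} almost verbatim, replacing the single step where Fact~\ref{fact:weyl_general} (the non-Hermitian perturbation bound) is invoked with an application of Weyl's inequality (Fact~\ref{thm:eigenvalue_perturbation_theorem}) to a carefully chosen Hermitian matrix. As before, write $\bv A = \bv A_o + \bv A_m$ as in Definition~\ref{def:split} with threshold $\epsilon\sqrt{\delta}\,n$, let $\bv{\bar S} = \sqrt{n/s}\cdot \bv S$ be the scaled sampling matrix with $\bv{\bar S}^T\bv A\bv{\bar S} = \frac ns \bv A_S$, and set $\bv M \eqdef \bv V_o^T\bv{\bar S}$, so that $\frac ns \bv A_{o,S} = \bv M^T\bs{\Lambda}_o\bv M$. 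Since $\bv M\bv M^T = \bv V_o^T\bv{\bar S}\bv{\bar S}^T\bv V_o$ is PSD it has a unique PSD square root $(\bv M\bv M^T)^{1/2}$, and repeated application of Fact~\ref{def: equality of eigenvalues} shows that the nonzero eigenvalues of $\frac ns\bv A_{o,S}$ coincide with those of the \emph{real symmetric} matrix $\bv W \eqdef (\bv M\bv M^T)^{1/2}\bs{\Lambda}_o(\bv M\bv M^T)^{1/2}$.

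Next I would bound $\|\bv W - \bs{\Lambda}_o\|_2$. By Lemma~\ref{lem:uniform-eps4-near-orthonorm} applied with error parameter $\epsilon/3$ (requiring $s \geq \frac{c\log(1/(\epsilon\delta))}{\epsilon^4\delta}$), we have $\bv M\bv M^T = \bv I + \bv F$ with $\|\bv F\|_2 \leq \epsilon/3 < 1$. A scalar computation ($|\sqrt{1+x}-1| \leq |x|$ for $x \in (-1,1)$) gives $(\bv M\bv M^T)^{1/2} = \bv I + \bv G$ with $\|\bv G\|_2 \leq \epsilon/3$; expanding $\bv W = (\bv I+\bv G)\bs{\Lambda}_o(\bv I+\bv G)$ and using $\|\bs{\Lambda}_o\|_2 \leq \|\bv A\|_2 \leq n$ (as $\|\bv A\|_\infty \leq 1$) yields $\|\bv W - \bs{\Lambda}_o\|_2 \leq (2\cdot\tfrac\epsilon3 + \tfrac{\epsilon^2}{9})n \leq \epsilon n$. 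Since $\bv W$ and $\bs{\Lambda}_o$ are both real symmetric, Weyl's inequality gives $|\lambda_i(\bv W) - \lambda_i(\bs{\Lambda}_o)| \leq \epsilon n$ for all $i$ --- the key point being that, unlike in Theorem~\ref{thm:main_bound}, no $\log n$ factor is lost here, which is what removes the $\log^3 n$ factor, while the $\epsilon^{-4}$ dependence of Lemma~\ref{lem:uniform-eps4-near-orthonorm} (versus $\epsilon^{-3}$ of Lemma~\ref{lemma: orthonormality}) is what costs the extra $1/\epsilon$.

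From here the argument is identical to that of Theorem~\ref{thm:main_bound}: Lemma~\ref{middle} (applied with error $\epsilon/2$, needing $s \geq \frac{c\log n}{\epsilon^2\delta}$) bounds $\|\bv A_{m,S}\|_2 \leq \frac\epsilon2 s$, so by Weyl's inequality $\lvert\frac ns\lambda_i(\bv A_S) - \frac ns\lambda_i(\bv A_{o,S})\rvert \leq \frac{\epsilon n}{2}$; combining this with the eigenvalue-alignment bookkeeping of Theorem~\ref{thm:main_bound} (matching the positive/negative eigenvalues of $\bv A_S$ with the outlying eigenvalues of $\bv A_o$ via $\bv W$, and estimating all remaining eigenvalues --- which necessarily have magnitude $\le 2\epsilon n$ --- by $0$) gives $|\tilde\lambda_i(\bv A) - \lambda_i(\bv A_o)| \le O(\epsilon n)$, and then $|\lambda_i(\bv A) - \lambda_i(\bv A_o)| \le \epsilon\sqrt\delta\, n \le \epsilon n$ closes the argument after rescaling $\epsilon$ by a constant. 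Finally, a union bound over the two events (from Lemmas~\ref{lem:uniform-eps4-near-orthonorm} and~\ref{middle}), together with $\log(1/(\epsilon\delta)) = O(\log n)$ (else the algorithm reads all of $\bv A$) and $\epsilon^{-4}\geq \epsilon^{-2}$, shows $s = \frac{c\log n}{\epsilon^4\delta}$ suffices.

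I expect the only content beyond reusing Theorem~\ref{thm:main_bound}'s machinery to be the perturbation step in the second paragraph --- verifying that conjugating $\bs{\Lambda}_o$ by the PSD square root $(\bv M\bv M^T)^{1/2}$ (rather than by the possibly-complex $\bs{\Lambda}_o^{1/2}$) keeps the relevant matrix real symmetric with the correct spectrum, and controlling $\|(\bv M\bv M^T)^{1/2} - \bv I\|_2$ in terms of $\|\bv M\bv M^T - \bv I\|_2$. Everything else --- the incoherence bound (Lemma~\ref{lemma:row_norm}), the matrix-Bernstein concentration (now Lemma~\ref{lem:uniform-eps4-near-orthonorm}), the bound on $\|\bv A_{m,S}\|_2$ (Lemma~\ref{middle}), and the alignment bookkeeping --- carries over unchanged.
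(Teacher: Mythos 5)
Your proof is correct and takes essentially the same approach as the paper's proof in Appendix~\ref{app:alt_bound}: your $\bv W=(\bv M\bv M^T)^{1/2}\bs\Lambda_o(\bv M\bv M^T)^{1/2}$ is identical to the paper's $\bv H^{1/2}\bs\Lambda_o\bv H^{1/2}$ (with $\bv H=\bv V_o^T\bar{\bv S}\bar{\bv S}^T\bv V_o=\bv M\bv M^T$), and your scalar bound $|\sqrt{1+x}-1|\le|x|$ used to control $\|(\bv M\bv M^T)^{1/2}-\bv I\|_2$ plays precisely the role of the paper's observation that the eigenvalues of $\bv H^{1/2}$ lie in $[1-\epsilon/6,1+\epsilon/6]$ whenever those of $\bv H$ do. The remaining steps --- the appeal to Lemma~\ref{lem:uniform-eps4-near-orthonorm}, the expansion of $(\bv I+\bv E)\bs\Lambda_o(\bv I+\bv E)$, the replacement of Fact~\ref{fact:weyl_general} by Weyl's inequality, Lemma~\ref{middle}, and the eigenvalue alignment --- match the paper's argument up to constants.
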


\begin{proof}
Let $\bv{S} \in \R^{n \times |S|}$ be the binary sampling matrix with a single one in each column such that $\bv{S}^T \bv{A} \bv{S} = \bv{A}_S$. Let $\bar{\bv S} = \sqrt{n/s} \cdot \bv{S}$.
Following Definition \ref{def:split}, we write $\bv A = \bv A_o + \bv A_m$.
By Fact \ref{def: equality of eigenvalues} we have that the nonzero eigenvalues of $\frac{n}{s} \cdot \bv{ A}_{o,S} = \bar{\bv S}^T \bv V_o \bv \Lambda_o \bv V_o^T \bar{\bv S}$ are identical to those of $\bv{\Lambda}_o \bv V_o^T \bar{\bv S} \bar{\bv S}^T \bv V_o$.

Note that $\bv H = \bv V_o^T \bar{\bv S} \bar{\bv S}^T \bv V_o$ is positive semidefinite. Writing its eigendecomposition $\bv H = \bv U \bv W \bv U^T$ we can define the matrix squareroot $\bv H^{1/2} = \bv U \bv W^{1/2} \bv U^T$ with $\bv{H}^{1/2}\bv{H}^{1/2} = \bv{H}$. By Lemma \ref{lem:uniform-eps4-near-orthonorm} applied with error $\epsilon/6$, with probability at least $1-\delta$, all eigenvalues of $\bv{H}$ lie in the range $[1-\epsilon/6, 1+\epsilon/6]$. In turn, all eigenvalues of $\bv{H}^{1/2}$ also lie in this range.
Again using Fact \ref{def: equality of eigenvalues}, we have that the nonzero eigenvalues of $\bv \Lambda_o \bv H$, and in turn those of $\frac{n}{s} \cdot\bv{A}_{o,S}$, are identical to those of $\bv H^{1/2}\bv \Lambda_o\bv H^{1/2}$. 

Let $\bv E=\bv H^{1/2}-\bv I=\bv U \bv W^{1/2}\bv U^T -\bv U \bv U^T=\bv U (\bv W^{1/2}-\bv I)\bv U^T$. 
Since the diagonal entries of $\bv W^{1/2}$ lie in $[1-\epsilon/6,1+\epsilon/6]$, those of $\bv W^{1/2}-\bv I$ lie in $[-\epsilon/6,\epsilon/6]$. Thus, $\| \bv E\|_2 \leq \epsilon/6$. We can write
\begin{align}
\lambda_i(\bv H^{1/2}\bv \Lambda_o\bv H^{1/2}) = \lambda_i((\bv I + \bv E)\bv \Lambda_o(\bv I + \bv E)) = \lambda_i(\bv \Lambda_o + \bv E\bv \Lambda_o + \bv \Lambda_o\bv E + \bv E\bv \Lambda_o\bv E).
    \notag
\end{align}
We can then bound
\begin{align*}
    \|\bv E\bv \Lambda_o + \bv \Lambda_o\bv E + \bv E\bv \Lambda_o\bv E\|_2 &\leq \|\bv E\bv \Lambda_o\|_2 + \|\bv \Lambda_o\bv E\|_2 + \|\bv E\bv \Lambda_o\bv E\|_2
    \notag\\
    &\leq \|\bv E\|_2\|\bv \Lambda_o\|_2 + \|\bv \Lambda_o\|_2\|\bv E\|_2 + \|\bv E\|_2\|\bv \Lambda_o\|_2\|\bv E\|_2 \notag\\
    &\leq \epsilon n/6 +n\epsilon/6 + \epsilon^2 n/36\notag\\
    &\leq \epsilon/2 \cdot n.
\end{align*}
Applying Weyl's eigenvalue perturbation theorem (Fact \ref{thm:eigenvalue_perturbation_theorem}), we thus have for all $i$,
\begin{align}
   \left| \lambda_i(\bv H^{1/2}\bv \Lambda_o\bv H^{1/2} ) - \lambda_i(\bv \Lambda_o)\right| &< \epsilon/2 \cdot n.
  \label{eq:applying weyls to eps4 bound}
\end{align}
Note that we have shown that the nonzero eigenvalues of $\frac{n}{s} \cdot \bv{A}_{o,S}$ are identical to those of $\bv H^{1/2}\bv \Lambda_o\bv H^{1/2}$, which we have shown well approximate those of $\bv{\Lambda}_o$ and in turn $\bv{A}_o$ \emph{i.e.}, the non-zero eigenvalues of $\frac{n}{s} \cdot \bv{A}_{o,S}$ approximate all outlying eigenvalues of $\bv{A}$. We can also bound the middle eigenvalues using Lemma \ref{middle} as in Theorem \ref{thm:main_bound}. Now the only thing left is to argue how these approximations `line up' in the presence of zero eigenvalues in the spectrum of these matrices. This part of the proof again proceeds similarly to that of Theorem~\ref{thm:main_bound} in Section \ref{sec:main accuracy bounds, uniform}.

Analogous to Theorem \ref{thm:main_bound}, from Lemma \ref{lem:uniform-eps4-near-orthonorm} equation \eqref{eq:applying weyls to eps4 bound} holds with probability $1-\delta$ if $s \geq \frac{c\log(1/(\epsilon\delta))}{\epsilon^4\delta}$. We also require $s \geq \frac{c\log n}{\epsilon^2\delta}$ for $\|\bv A_{m,S}\|_2 \leq \epsilon n$ to hold with probability $1-\delta$ by Lemma \ref{middle}. Thus, for both conditions to hold simultaneously with probability $1-2\delta$ by a union bound, it suffices to set  $s= \frac{c\log n}{\epsilon^4\delta} \geq \max\left(\frac{c\log(1/(\epsilon\delta))}{\epsilon^4\delta}, \frac{c\log n}{\epsilon^2\delta}\right)$, where we use that $\log(1/(\epsilon \delta)) = O(\log n)$, as otherwise our algorithm can take $\bv{A}_S$ to be all of  $\bv{A}$. Adjusting $\delta$ to $\delta/2$ completes the theorem.
\end{proof}


\section{Refined Bounds}\label{app:refined}

In this section, we show how it is possible to get better query complexity or tighter approximation factors by modifying the proof of Theorem~\ref{thm:main_bound} and Lemmas~\ref{lemma: orthonormality} and~\ref{lemma:row_norm} under some assumptions. We give an extension to Theorem~\ref{thm:main_bound} in Theorem \ref{cor: refined_bound1} for the case when the eigenvalues of $\bv A_o$ lie in a bounded range -- between $\epsilon^a\sqrt{\delta} n$ and $\epsilon^b n$ where $0 \leq b \leq a \leq 1$. 

\begin{theorem}\label{cor: refined_bound1}
Let $\bv A \in \mathbb{R}^{n \times n}$ be symmetric with $\|\bv{A}\|_{\infty} \leq 1$ and eigenvalues $\lambda_1(\bv{A}) \ge \ldots \ge \lambda_n(\bv{A})$. Let $\bv{A}_o $ be as in Definition~\ref{def:split} such that for all eigenvalues $\lambda_i(\bv{A}_o)$ we have either $\epsilon^a \sqrt{\delta}n \leq \lvert \lambda_i(\bv{A}_o) \rvert \leq \epsilon^b n$ or $\lambda_i(\bv{A}_o)=0$ where $0 \leq b \leq a \leq 1$. Let $S \subseteq [n]$ be formed by including each index independently with probability $s/n$ as in Algorithm \ref{alg:eigenvalue estimate}. Let $\bv A_S$ be the corresponding principal submatrix of $\bv{A}$, with eigenvalues $\lambda_1(\bv{A}_S) \ge \ldots \ge \lambda_{|S|}(\bv{A}_S)$.

For all $i \in [|S|]$ with $\lambda_i(\bv{A}_S) \ge 0$, let $\tilde \lambda_i(\bv{A}) = \frac{n}{s} \cdot \lambda_i(\bv{A}_S)$. For all $i \in [|S|]$ with $\lambda_i(\bv{A}_S) < 0$, let $\tilde \lambda_{n-(|S|-i)}(\bv{A}) = \frac{n}{s} \cdot \lambda_i(\bv{A}_S)$. For all other $i \in [n]$, let $\tilde \lambda_i(\bv{A}) = 0$. If $s \geq \frac{c \log (1/(\epsilon \delta))   \log^{2+a-b}n}{\epsilon^{2+a-b}\delta}$, for large enough $c$,
 then with probability at least $1-\delta$, for all $i \in [n]$,
\begin{align*}
    \lambda_i(\bv A) -\epsilon n \leq \tilde \lambda_i(\bv A) \leq \lambda_i(\bv A) +\epsilon n.
\end{align*}

\end{theorem}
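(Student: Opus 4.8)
The plan is to follow the proof of Theorem~\ref{thm:main_bound} essentially verbatim, but to re-run the matrix Bernstein argument of Lemma~\ref{lemma: orthonormality} with the extra information that the outlying eigenvalues have magnitude in the band $[\epsilon^a\sqrt\delta n,\ \epsilon^b n]$. The two ingredients that change are (i) the incoherence bound of Lemma~\ref{lemma:row_norm} and (ii) the variance/operator-norm bookkeeping in the Bernstein bound; everything downstream (Weyl alignment, the middle-eigenvalue bound of Lemma~\ref{middle}, the final triangle inequalities) is unchanged.

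First I would re-examine Lemma~\ref{lemma:row_norm} under the band assumption. The proof there uses $\sum_j \lambda_j^2 \bv V_{o,i,j}^2 \le n$ together with $|\lambda_j|\ge \epsilon\sqrt\delta n$. Now we additionally know $|\lambda_j|\le \epsilon^b n$, so $\norm{\bv\Lambda_o^{1/2}\bv V_{o,i}}_2^2 = \sum_j |\lambda_j|\bv V_{o,i,j}^2 \le \frac{1}{\epsilon^a\sqrt\delta n}\sum_j\lambda_j^2\bv V_{o,i,j}^2 \le \frac{1}{\epsilon^a\sqrt\delta}$, i.e. the same form as before with $\epsilon$ replaced by $\epsilon^a$ in the exponent. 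The point, though, is a sharper bound for the matrix Bernstein variance: in \eqref{var_ineq} we get $\sum_i \E[\bv Q_i^2]\preceq \frac{n}{s}\cdot\frac{1}{\epsilon^a\sqrt\delta}\cdot\bv\Lambda_o \preceq \frac{n}{s\epsilon^a\sqrt\delta}\cdot\epsilon^b n\cdot\bv I = \frac{\epsilon^{b-a}n^2}{s\sqrt\delta}\bv I$, using the \emph{upper} band bound $\bv\Lambda_o\preceq \epsilon^b n\cdot\bv I$ in the last step. Similarly $\norm{\bv Q_i}_2 \le \frac{n}{s}\norm{\bv\Lambda_o^{1/2}\bv V_{o,i}}_2^2 \le \frac{n}{s\epsilon^a\sqrt\delta}$. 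Plugging $v = \frac{\epsilon^{b-a}n^2}{s\sqrt\delta}$ and $L = \frac{n}{s\epsilon^a\sqrt\delta}$ and $d\le 1/(\epsilon^2\delta)$ into Theorem~\ref{thm:matrix bernstein} with deviation $t = \epsilon n$ gives $\Pr(\norm{\bv E}_2\ge\epsilon n)\le \frac{2}{\epsilon^2\delta}\exp(-c\,s\epsilon^{2+a-b}\sqrt\delta)$ (the $Lt/3$ term is $\frac{\epsilon n^2}{3s\epsilon^a\sqrt\delta}$, which is $\le v$ since $\epsilon^{1-a}\le\epsilon^{b-a}$ as $b\le 1$, so the denominator is $\Theta(v)$), so $s \gtrsim \frac{\log(1/(\epsilon\delta))}{\epsilon^{2+a-b}\sqrt\delta}$ suffices for the analog of Lemma~\ref{lemma: orthonormality} with error $\epsilon n$.

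Next I would assemble the theorem exactly as in the proof of Theorem~\ref{thm:main_bound}: apply the non-Hermitian perturbation bound (Fact~\ref{fact:weyl_general}) to $\bv\Lambda_o$ versus $\bv\Lambda_o^{1/2}\bv V_o^T\bar{\bv S}\bar{\bv S}^T\bv V_o\bv\Lambda_o^{1/2}$, which costs a $C\log n$ factor and hence forces the error parameter in the refined Lemma~\ref{lemma: orthonormality} to be $\frac{\epsilon}{2C\log n}$, turning the requirement into $s\gtrsim \frac{\log(1/(\epsilon\delta))\log^{2+a-b}n}{\epsilon^{2+a-b}\sqrt\delta}$; then invoke Lemma~\ref{middle} (with error $\epsilon/2$, needing $s\gtrsim\frac{\log n}{\epsilon^2\delta}$, which is dominated) to control $\norm{\bv A_{m,S}}_2\le \epsilon s/2$; then run the identical eigenvalue-alignment argument (equations \eqref{eq:eig_middle}–\eqref{eq:eig_bot1} and the case analysis following them) to conclude $|\tilde\lambda_i(\bv A)-\lambda_i(\bv A_o)|\le 3\epsilon n$ and finally $|\tilde\lambda_i(\bv A)-\lambda_i(\bv A)|\le 4\epsilon n$, rescaling constants. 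The union bound over the two events and the conversion of the $\sqrt\delta$ in the outer bound to $\delta$ (exactly as in Theorem~\ref{thm:main_bound}, absorbing via $\delta\le$ the inverse of a polylog, else read all of $\bv A$) yields the stated sample size $s\ge \frac{c\log(1/(\epsilon\delta))\log^{2+a-b}n}{\epsilon^{2+a-b}\delta}$.

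The main obstacle—really the only non-mechanical point—is making sure the band assumption is used \emph{consistently} on both sides: the \emph{lower} bound $|\lambda_j|\ge\epsilon^a\sqrt\delta n$ is what makes each $\norm{\bv Q_i}_2$ and each summand $\norm{\bv\Lambda_o^{1/2}\bv V_{o,i}}_2^2$ small (gaining a factor $\epsilon^{-a}$ instead of $\epsilon^{-1}$), while the \emph{upper} bound $\norm{\bv\Lambda_o}_2\le\epsilon^b n$ is what shrinks the variance proxy $v$ by the extra factor $\epsilon^{b}$ (rather than leaving it at the trivial $n$). One must also double-check that the number of outliers is still at most $\norm{\bv A}_F^2/(\delta\epsilon^2 n^2)\le 1/(\epsilon^2\delta)$ — this only uses $|\lambda_i(\bv A)|\le n$ and the split threshold $\epsilon\sqrt\delta n$ (Definition~\ref{def:split} is unchanged: $\bv A_o$ collects eigenvalues of magnitude $\ge\epsilon\sqrt\delta n$, and the hypothesis is that \emph{those} all happen to lie in the sub-band $[\epsilon^a\sqrt\delta n,\epsilon^b n]$), so $d\le 1/(\epsilon^2\delta)$ is safe and the $\log d = O(\log(1/(\epsilon\delta)))$ factor in Bernstein is as claimed. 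A minor sanity check: setting $a=b=1$ recovers $s\gtrsim\frac{\log(1/(\epsilon\delta))\log^2 n}{\epsilon^2\delta}$, matching the remark after Theorem~\ref{thm:main_bound} that equal top eigenvalues give $s=\tilde O(\log^2 n/\epsilon^2)$.
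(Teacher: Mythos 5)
Your proposal is correct and follows essentially the same route as the paper's own proof: it re-derives the incoherence bound using the lower band threshold $\epsilon^a\sqrt{\delta}n$ to get $\norm{\bv\Lambda_o^{1/2}\bv V_{o,i}}_2^2 \le \frac{1}{\epsilon^a\sqrt\delta}$, uses the upper band threshold $\norm{\bv\Lambda_o}_2\le\epsilon^b n$ to shrink the matrix Bernstein variance to $v\lesssim\frac{n^2}{s\epsilon^{a-b}\sqrt\delta}$, and then pays the $\log n$ factor from the non-Hermitian perturbation bound (Fact~\ref{fact:weyl_general}) before assembling with Lemma~\ref{middle} exactly as in Theorem~\ref{thm:main_bound}. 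Your write-up is in fact more explicit than the paper's in a few places (e.g.\ verifying that the $Lt/3$ term is dominated by $v$ using $b\le 1$, and correctly reading the dimension parameter of matrix Bernstein as $d\le 1/(\epsilon^2\delta)$ rather than its logarithm), but the idea and structure are identical.
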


\begin{proof}
The proof follows by modifying the proofs of Theorem~\ref{thm:main_bound}, Lemmas~\ref{lemma:row_norm} and~\ref{lemma: orthonormality} to account for the tighter intervals. 
First observe that since $\lvert \lambda_i(\bv A_o) \rvert \geq \epsilon^a \sqrt{\delta}n $ for all $i$, we can give a tighter row norm bound for $\bv{V}_o$ from the proof of Lemma~\ref{lemma:row_norm}. In particular, from equation~\eqref{Eq: eig_bound } we get: 
\begin{align*}
\norm{\bv \Lambda_o^{1/2}\bv{V}_{o,i}}_2^2 \leq \frac{1}{\epsilon^a \sqrt{\delta}} \hspace{2em}\text{ and }\hspace{2em}\|\bv V_{o,i} \|_2^2 &\leq \frac{n}{\epsilon^{2a} \delta n^2}=\frac{1}{\epsilon^{2a}\delta n}.
\end{align*}
We can then bound the number of samples we need to take such that for $\bv{\Lambda}_o^{1/2} \bv V_o^T \bar{\bv S} \bar{\bv S}^T \bv V_o \bv{\Lambda}_o^{1/2}$ (as defined in Theorem~\ref{thm:bound on large values}) we have $\|\bv{\Lambda}_o^{1/2} \bv V_o^T \bar{\bv S} \bar{\bv S}^T \bv V_o \bv{\Lambda}_o^{1/2}-\bv{\Lambda}_o \|_2 \leq \epsilon n $ with probability at least $1-\delta$ via a matrix Bernstein bound. 
By appropriately modifying the proof of Lemma~\ref{lemma: orthonormality} to incorporate the stronger row norm bound for $\bv{V}_o$, we can show that sampling with probability $s/n$ for $s \geq \frac{c \log(1/(\epsilon \delta))}{\epsilon^{2+a-b}\delta}$ for large enough $c$ suffices. Specifically, we get $L \leq \frac{n}{\epsilon^a\sqrt{\delta}s}$, $v \leq \frac{n^2}{\epsilon^{a-b}\sqrt{\delta}s}$ and $d\leq \log(1/(\epsilon^{2} \delta))$ for the Bernstein bound in Lemma~\ref{lemma: orthonormality} which enables us to get the tighter bound.
Thus, we have $\|\bv{\Lambda}_o^{1/2} \bv V_o^T \bar{\bv S} \bar{\bv S}^T \bv V_o \bv{\Lambda}_o^{1/2}-\bv{\Lambda}_o \|_2 \leq \epsilon n $ with probability $1-\delta$ for $s \geq \frac{c \log(1/(\epsilon \delta))}{\epsilon^{2+a-b}\sqrt{\delta}}$ following Lemma~\ref{lemma: orthonormality}. We also require $s \geq \frac{c\log n}{\epsilon^{2}\delta}$ for $\|\bv A_{m,S}\|_2 \leq \epsilon n$ to hold with probability $1-\delta$ by Lemma \ref{middle}. Then, following the proof of Theorem~\ref{thm:main_bound}, by
Fact \ref{fact:weyl_general}, for all $i \in [n]$, and some constant $C$, we have: $$\lvert \lambda_i(\bv{\Lambda}_o^{1/2} \bv V_o^T \bar{\bv S} \bar{\bv S}^T \bv V_o \bv{\Lambda}_o^{1/2})-\lambda_i(\bv{\Lambda}_o) \rvert \leq C \log n \|\bv{\Lambda}_o^{1/2} \bv V_o^T \bar{\bv S} \bar{\bv S}^T \bv V_o \bv{\Lambda}_o^{1/2}-\bv{\Lambda}_o \|_2.$$

As in the proof of Theorem \ref{thm:main_bound}, adjusting $\epsilon$ by a $\frac{1}{C\log n}$ factor, we get $\lvert \lambda_i(\bv{\Lambda}_o^{1/2} \bv V_o^T \bar{\bv S} \bar{\bv S}^T \bv V_o \bv{\Lambda}_o^{1/2})-\lambda_i(\bv{\Lambda}_o) \rvert \leq \epsilon n$ with probability $1-\delta$ for $s \geq \frac{c \log(1/(\epsilon \delta)) \log^{2+a-b} n}{\epsilon^{2+a-b}\sqrt{\delta}}$. Then we follow the proof of Theorem~\ref{thm:main_bound} to align the eigenvalues completing the proof.
\end{proof}

\section{Spectral Norm Bounds for Non-Uniform Random Submatrices}\label{app:gen_spectral_bounds}

\rudelson*
\begin{proof}
The proof follows from~\cite{tropp2007}. We begin by first defining the following term
\begin{align*}
    E \coloneqq \E_2 \|\bv{AS} \|_2.
\end{align*}
Now we have
\begin{align*}
    E^2 = \E \|\bv A \bv S \|^2_2 =\E \|\bv A\bv S \bv S \bv A^*\|_2 = \E \left\|\sum_{j=1}^n \delta^2_j \bv A_j \bv A_j^*\right\|_2,
\end{align*}
where $\delta_j$ is the sequence of independent random variables such that $\delta_j = \frac{1}{\sqrt{p_j}}$ with probability $p_j$ and $0$ otherwise, and $\bv A_j$ is the $j$\textsuperscript{th} column of $\bv A$. Then, $\mu_j  = \E[(\delta_j)^2]=1$. Let $\{\delta_j'\}$ be an independent copy of the sequence $\{\delta_j\}$.
Subtracting the mean and applying triangle inequality  we have
\begin{align*}
    E^2 \leq \E \left\| \sum_{j=1}^n (\delta^2_j -\E[(\hat{\delta})^2])\bv A_j \bv A_j^*\right\|_2 + \left\|\sum_{j=1}^n  \bv A_j \bv A_j^*\right\|_2.
\end{align*}
Using Jensen's inequality we have
\begin{align*}
    E^2 &\leq \E \left\| \sum_{j=1}^n (\delta^2_j -(\delta_j')^2)\bv A_j \bv A_j^*\right\|_2  + \left\| \bv A\bv A^*\right\|_2.
\end{align*}
The random variables $(\delta^2_j -(\delta_j')^2)$ are symmetric and independent. Let $\{\epsilon_j\}$ be i.i.d Rademacher random variables for all $j \in [n]$. Then applying the standard symmetrization argument followed by triangle inequality, we have:
\begin{align*}
    E^2 &\leq 2 \E \left\|\sum_{j=1}^n \epsilon_j\delta^2_j \bv A_j\bv A_j^*\right\|_2  + \left\| \bv A\bv A^*\right\|_2.
\end{align*}
Let $\Omega = \{j: \delta_j = \frac{1}{\sqrt{p_j}}\}$. Let $\mathbb{E}$ be the partial expectation with respect to $\{ \epsilon_j\}$, keeping the other random variables fixed. Then, we get:
\begin{align*}
    E^2 &\leq 2 \E_{\Omega}\left[\E_{\epsilon} \left\|\sum_{\Omega} \epsilon_j\delta_j^2 \bv A_j\bv A_j^T\right\|_2 \right]   + \| \bv A \|^2_2.
\end{align*}
Using Rudelson's Lemma 11 of \cite{tropp2007} for any matrix $\bv X$ with columns $\bv x_1, \bv x_2, \cdots, \bv x_n$ and any $q =2\log n$ we have
\begin{align*}
    \left(\E \left\|\sum_{j=1}^n \epsilon_j \bv x_j \bv x_j^*\right\|_2^q\right)^{1/q} &\leq 1.5 \sqrt{q} \|\bv X\|_{1\to 2} \|\bv X\|_2.
\end{align*}
Since $(.)^{1/q}$ is concave for $q \geq 1$, using Jensen's inequality we get:
\begin{align*}
    \E \left\|\sum_{j=1}^n \epsilon_j \bv x_j \bv x_j^*\right\|_2 &\leq 1.5 \sqrt{q} \|\bv X\|_{1\to 2} \|\bv X\|_2
\end{align*}
Applying the above result to the matrix $\bv A \bv S$, we get:
\begin{align*}
    E^2 &\leq 3\sqrt{q}\left[\mathbb{E}(\|\bv A \bv S \|_{1 \rightarrow 2} \|\bv A \bv S \|_2) \right]+\|\bv A \|_2^2.
\end{align*}
Applying Cauchy Schwartz we get:
\begin{align*}
    E^2 &\leq 3\sqrt{q}(\mathbb{E}\|\bv A \bv S \|^2_{1 \rightarrow 2})^{1/2}(\mathbb{E}\|\bv A \bv S \|^2_{2})^{1/2}+\|\bv A \|_2^2.
\end{align*}
The above equation is of the form $E^2 \leq bE+c$. Thus, the values of $E$ fro which the above equation is true is given by $E \leq \frac{b+\sqrt{b^2+4c}}{2} \leq b+\sqrt{c}$. Thus, we get:
\begin{align*}
    \E_2 \|\bv{AS} \|_2 \leq 3\sqrt{q}\E_2 \|\bv{AS} \|_{1 \rightarrow 2}+ \|\bv{A}\|_2.
\end{align*}
This gives us the final bound.
\end{proof}


\section{Improved Bounds via Row-Norm-Based Sampling}\label{sec:l2}
Building on the sparsity-based sampling results presented in Section \ref{sec:sparsity}, we now show how to obtain improved approximation error of $\pm \epsilon \norm{\bv A}_F$ assuming we can sample the rows of $\bv{A}$ with probabilties proportional to their squared $\ell_2$ norms. The ability to sample by norms also allows us to remove the assumption that $\bv A$ has bounded entries -- our results apply to any symmetric matrix.

For technical reasons, we  mix row norm sampling with uniform sampling, forming a random principal submatrix by sampling each index $i \in [n]$ independently with probability $p_i=\min \left (1,\frac{s\norm{\bv A_i}^2_2}{\norm{\bv A}^2_F}+\frac{1}{n^2}\right )$ and rescaling each sampled row/column by $1/\sqrt{p_i}$. As in the sparsity-based sampling setting, we must carefully zero out entries of the sampled submatrix to ensure concentration of the sampled eigenvalues. Pseudocode for the full algorithm is given in  Algorithm~\ref{alg:l2_eig_est}.

\subsection{Preliminary Lemmas}
Our proof closely follows that of Theorem \ref{thm:nnz_main_bound} in Section \ref{sec:sparsity}. We start by defining $\bv{A}' \in \mathbb{R}^{n \times n}$ obtained by zeroing out entries of $\bv{A}$ as described in Algorithm~\ref{alg:l2_eig_est}. We have $\bv{A}'_{ij}=0$ whenever  1) $i=j$ and $\norm{\bv A_i}^2_2 < \frac{\epsilon^2}{4}\norm{\bv A}^2_F$ or 2) $i \neq j$ and $\norm{\bv A_i}_2^2 \cdot \norm{\bv A_j}_2^2 < \frac{\epsilon^2 \norm{\bv A}_F^2 \cdot \lvert \bv A_{ij}\rvert^2}{c_2\log^4 n }$.  Otherwise $\bv{A}'_{ij}=\bv{A}_{ij}$. Similar to the sparsity sampling case, we  argue that the eigenvalues of $\bv{A}'$ are close to $\bv{A}$ i.e., zeroing out entries of $\bv{A}$ according to the given condition doesn't change it's eigenvalues by too much (Lemma \ref{lem:l2-zeroed}.  Then, we again split $\bv{A}'=\bv{A}'_o+\bv{A}'_m$ such that $\norm{\bv{A}'_{m}}_2 \leq \epsilon \sqrt{\delta}\norm{\bv{A}}_F$. We argue that after sampling, we have $\norm{\bv{A}'_{m,S}}_2 \le \epsilon \norm{\bv A}_F$ and the eigenvalues of $\bv{A}'_{o,S}$ approximate those of $\bv{A}'_o$ up to $\pm \epsilon \norm{\bv{A}}_F$ error.

\begin{algorithm}[h]
\caption{Eigenvalue estimator using $\ell_2$ norm-based sampling}
\label{alg:l2_eig_est}
\begin{algorithmic}[1]
\STATE {\bfseries Input:} Symmetric $\bv A \in \mathbb{R}^{n\times n}$, Accuracy $\epsilon \in (0,1)$, failure prob. $\delta \in (0,1)$.  $\norm{\bv A_i}_2$ for all $i \in [n]$.
\STATE Fix $s = \frac{c_1\log^{10} n }{\epsilon^8\delta^4}$ where $c_1$ is a sufficiently large constant.
\STATE Add each $i \in [n]$ to sample set $S$ independently, with probability $p_i=\min \left (1,\frac{s\norm{\bv A_i}^2_2}{\norm{\bv A}^2_F}+\frac{1}{n^2}\right )$. Let the principal submatrix of $\bv A$ corresponding to $S$ be $\bv A_S$. 
\STATE Let $\bv{A}_S = \bv{D} \bv{A}_S \bv{D}$ where $\bv{D} \in \R^{|S| \times |S|}$ is diagonal with $\bv{D}_{i,i} = \frac{1}{\sqrt{p_j}}$ if the $i^{th}$ element of $S$ is $j$. 
\STATE Construct $\bv{A}'_S \in \R^{|S| \times |S|}$ from $\bv{A}_S$  as follows:
\begin{align*}
    \bv [\bv{A}'_S]_{i,j} &=   
    \begin{cases}
    0 & \text{if $i = j$ and }\norm{\bv{A}_i}_2^2 < \frac{\epsilon^2}{4}\norm{\bv{A}}_F^2 \\
    0 & \text{if $i \neq j$ and }\norm{\bv A_i}_2^2 \cdot \norm{\bv A_j}_2^2 < \frac{\epsilon^2 \norm{\bv A}_F^2 \cdot \lvert \bv A_{ij}\rvert^2}{c_2\log^4 n }\text{ for sufficient large constant $c_2$}\\
     [\bv A_S]_{i,j} & \text{otherwise}.
    \end{cases}
\end{align*}
\STATE Compute the eigenvalues of $\bv A'_S$: $\lambda_1(\bv{A}'_S) \ge \ldots \ge \lambda_{|S|}(\bv{A}'_S)$.
\STATE For all $i \in [|S|]$ with $\lambda_i(\bv{A}'_S) \ge 0$, let $\tilde \lambda_i(\bv{A}) =  \lambda_i(\bv{A}'_S)$. For all $i \in [|S|]$ with $\lambda_i(\bv{A}'_S) < 0$, let $\tilde \lambda_{n-(|S|-i)}(\bv{A}) =  \lambda_i(\bv{A}'_S)$. For all remaining $i \in [n]$, let $\tilde \lambda_i(\bv{A}) = 0$. 
\STATE {\bfseries Return:} Eigenvalue estimates $\tilde \lambda_1(\bv{A}) \ge \ldots \ge \tilde \lambda_n(\bv{A})$.
\end{algorithmic}
\end{algorithm}

\begin{lemma}\label{lem:l2-zeroed}
    Let $\bv A \in \R^{n \times n}$ be symmetric.  Let $\bv{A}' \in \R^{n \times n}$ have $\bv A'_{ij} = 0$ if either 1) $i=j$ and $\norm{\bv{A}_i}_2^2 < \frac{\epsilon^2}{4}\norm{\bv{A}}_F^2$  or 2) $i \neq j$ and $\norm{\bv A_i}_2^2 \cdot \norm{\bv A_j}_2^2 < \frac{\epsilon^2 \norm{\bv A}_F^2 \cdot \lvert \bv A_{ij}\rvert^2}{c_2\log^4 n }$ for a sufficiently large constant $c_2$. Otherwise, $\bv A'_{ij} =  \bv A_{ij}$. Then, for all $i \in [n]$, $$|\lambda_i(\bv A) - \lambda_i(\bv A')| \le \epsilon \norm{\bv A}_F.$$
\end{lemma}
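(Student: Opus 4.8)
The plan is to mimic the proof of Lemma~\ref{lem:nnz-zeroed}, replacing the sparsity-based accounting with an $\ell_2$-norm/Frobenius-norm accounting, and again invoking Girshgorin's circle theorem on dyadic blocks. As in the sparsity case, I would first handle the diagonal separately: define $\bv{A}''$ to agree with $\bv{A}$ except that $\bv{A}''_{ij}=0$ whenever $i\neq j$ and $\norm{\bv A_i}_2^2\norm{\bv A_j}_2^2 < \frac{\epsilon^2\norm{\bv A}_F^2 |\bv A_{ij}|^2}{c_2\log^4 n}$ (i.e.\ without the diagonal zeroing condition). The zeroed diagonal entries satisfy $|\bv A_{ii}| = \|(\bv A - \bv A'')_{\text{diag}}\|_2$-style bound: since each such $i$ has $\norm{\bv A_i}_2^2 < \frac{\epsilon^2}{4}\norm{\bv A}_F^2$, we have $|\bv A_{ii}| \le \norm{\bv A_i}_2 < \frac{\epsilon}{2}\norm{\bv A}_F$, so the diagonal perturbation has spectral norm at most $\frac{\epsilon}{2}\norm{\bv A}_F$. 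Then by Weyl's inequality it suffices to show $|\lambda_i(\bv A) - \lambda_i(\bv A'')| \le \frac{\epsilon}{2}\norm{\bv A}_F$ for all $i$.

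For the off-diagonal part, I would bucket the indices dyadically by row norm: let $\mathcal{I}_k = \{i : \norm{\bv A_i}_2^2 \in [\norm{\bv A}_F^2/2^k, \norm{\bv A}_F^2/2^{k-1})\}$, and let $\bv A_{kl} = \bv A(\mathcal I_k, \mathcal I_l)$, with $\bv A''_{kl}$ defined analogously. When $2^{k+l}$ is small enough (precisely when $2^{k+l} \le \frac{c_2\log^4 n}{\epsilon^2}$), no entry of $\bv A_{kl}$ gets zeroed. When $2^{k+l} > \frac{c_2\log^4 n}{\epsilon^2}$, let $\bv{\widehat A}_{kl}$ be the matrix of zeroed-out entries of $\bv A_{kl}$. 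Here the key point is that an entry $\bv A_{ij}$ with $i\in\mathcal I_k, j\in\mathcal I_l$ is zeroed only when $|\bv A_{ij}|^2 > \frac{c_2\log^4 n}{\epsilon^2}\cdot\frac{\norm{\bv A_i}_2^2\norm{\bv A_j}_2^2}{\norm{\bv A}_F^2}$... actually wait, it's the reverse: it's zeroed when $\norm{\bv A_i}_2^2\norm{\bv A_j}_2^2 < \frac{\epsilon^2\norm{\bv A}_F^2|\bv A_{ij}|^2}{c_2\log^4 n}$, i.e.\ when $|\bv A_{ij}|^2 > \frac{c_2\log^4 n}{\epsilon^2}\cdot\frac{\norm{\bv A_i}_2^2\norm{\bv A_j}_2^2}{\norm{\bv A}_F^2}$. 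So on $\mathcal I_k \times \mathcal I_l$ the zeroed entries have $|\bv A_{ij}|^2 > \frac{c_2\log^4 n}{\epsilon^2}\cdot\frac{\norm{\bv A}_F^2}{2^{k+l}}$. But also $|\bv A_{ij}|^2 \le \norm{\bv A_i}_2^2 \le \frac{\norm{\bv A}_F^2}{2^{k-1}}$, so such entries can only exist when $\frac{c_2\log^4 n}{\epsilon^2 2^{k+l}} < \frac{1}{2^{k-1}}$, i.e.\ $2^l > \frac{c_2\log^4 n}{2\epsilon^2}$. I would bound $\|\bv{\widehat A}_{kl}\|_2^2 = \|\bv{\widehat A}_{kl}\bv{\widehat A}_{kl}^T\|_2$ via Girshgorin: the $m$-th row sum of $\bv{\widehat A}_{kl}\bv{\widehat A}_{kl}^T$ is $\sum_{j} |\bv{\widehat A}_{kl,mj}|\cdot |(\bv{\widehat A}_{kl}^T\mathbf 1)_j|$-type quantity; more simply, $\|(\bv{\widehat A}_{kl}\bv{\widehat A}_{kl}^T)_{m,:}\|_1 \le \|(\bv{\widehat A}_{kl})_{m,:}\|_2^2 \cdot (\text{max column sparsity weighted})$... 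I'd instead bound $\|\bv{\widehat A}_{kl}\|_2^2 \le \|\bv{\widehat A}_{kl}\|_F^2$ crudely, or better, bound $\|(\bv{\widehat A}_{kl}\bv{\widehat A}_{kl}^T)_{m,:}\|_1 \le \|\bv A_m\|_2^2 \cdot \max_j (\text{number of nonzeros... })$; the cleanest route is: each zeroed entry satisfies $|\bv A_{ij}|^2 > \frac{c_2\log^4 n}{\epsilon^2}\cdot\frac{\norm{\bv A}_F^2}{2^{k+l}}$, so the number of zeroed entries in row $m$ of $\bv A_{kl}$ is at most $\norm{\bv A_m}_2^2 / (\frac{c_2\log^4 n}{\epsilon^2}\cdot\frac{\norm{\bv A}_F^2}{2^{k+l}}) \le \frac{\norm{\bv A}_F^2/2^{k-1}}{\frac{c_2\log^4 n}{\epsilon^2}\cdot\frac{\norm{\bv A}_F^2}{2^{k+l}}} = \frac{\epsilon^2 2^{l+1}}{c_2\log^4 n}$, and each such $|\bv A_{ij}|^2 \le \norm{\bv A_j}_2^2 \le \frac{\norm{\bv A}_F^2}{2^{l-1}}$. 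Hence $\|(\bv{\widehat A}_{kl})_{m,:}\|_2^2 \le \frac{\epsilon^2 2^{l+1}}{c_2\log^4 n}\cdot\frac{\norm{\bv A}_F^2}{2^{l-1}} = \frac{4\epsilon^2\norm{\bv A}_F^2}{c_2\log^4 n}$, which is independent of $k,l$. Similarly each column of $\bv{\widehat A}_{kl}^T$ has squared norm at most $\frac{4\epsilon^2\norm{\bv A}_F^2}{c_2\log^4 n}$, so by Girshgorin $\|\bv{\widehat A}_{kl}\|_2^2 \le \frac{4\epsilon^2\norm{\bv A}_F^2}{c_2\log^4 n}$, i.e.\ $\|\bv{\widehat A}_{kl}\|_2 \le \frac{2\epsilon\norm{\bv A}_F}{\sqrt{c_2}\log^2 n}$.

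Then, symmetrizing to $\bv{\bar A}_{kl}$ (supported on $\mathcal I_k\times\mathcal I_l$ and $\mathcal I_l\times\mathcal I_k$) gives $\|\bv{\bar A}_{kl}\|_2 \le \frac{4\epsilon\norm{\bv A}_F}{\sqrt{c_2}\log^2 n}$, and since $k,l$ each range over $O(\log n)$ values (because $\norm{\bv A_i}_2^2 \ge \min_i \norm{\bv A_i}_2^2$ and $\norm{\bv A}_F^2$ differ by at most a $\poly(n)$ factor — here I need to be a little careful: a row could have tiny but nonzero norm; but rows with $\bv A_{ij}$ all zero never contribute, and for the perturbation bound I can restrict to $k,l \le O(\log(n\cdot\text{something}))$, or note that the total contribution telescopes; I would argue as in Lemma~\ref{lem:nnz-zeroed} that the dyadic levels with nonempty contribution span $O(\log n)$ many values, perhaps after a truncation argument for extremely small-norm rows whose total Frobenius mass is negligible). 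Summing via triangle inequality over the $O(\log^2 n)$ pairs $(k,l)$ with $2^{k+l} > \frac{c_2\log^4 n}{\epsilon^2}$ gives $\|\bv A - \bv A''\|_2 \le O(\log^2 n)\cdot\frac{4\epsilon\norm{\bv A}_F}{\sqrt{c_2}\log^2 n} = O(\epsilon\norm{\bv A}_F/\sqrt{c_2})$, which is $\le \frac{\epsilon}{2}\norm{\bv A}_F$ for $c_2$ large enough. Combining with the diagonal bound via Weyl finishes the proof. The main obstacle I anticipate is the bookkeeping of the dyadic range of $k,l$: unlike the sparsity case where $\nnz(\bv A_i)\ge 1$ gave a clean $O(\log n)$ range, row $\ell_2$ norms can be arbitrarily small, so I would need either a preliminary truncation (zeroing out rows of negligible norm, which perturbs eigenvalues by at most a small multiple of $\epsilon\norm{\bv A}_F$ by a Girshgorin/Frobenius argument) or a more careful summation showing that the geometric decay in the block norms makes the tail over large $k,l$ converge; I'd also need to double check that the constant $c_2$ here (with $\log^4 n$) is consistent with the one used in Algorithm~\ref{alg:l2_eig_est} and the later lemmas.
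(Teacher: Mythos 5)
Your approach differs from the paper's and, as written, has a genuine gap at the Girshgorin step. Your decomposition uses a \emph{two}-level dyadic bucketing by row and column $\ell_2$ norm; the paper instead uses a \emph{three}-level decomposition: it first splits $\bv A$ into entry-magnitude level sets $\bv A_k$ (with $|\bv A_{ij}| \in [\norm{\bv A}_F/2^k, \norm{\bv A}_F/2^{k-1})$ for $k\ge 1$, plus a residual $\bv A_0$ of tiny entries which is handled via Frobenius norm), and then \emph{within} each $\bv A_k$ buckets rows and columns by the \emph{sparsity} of $\bv A_k$, exactly as in Lemma~\ref{lem:nnz-zeroed}. The level-set split is what caps the number of dyadic scales at $O(\log(n/\epsilon))$ and supplies the uniform entry-magnitude bound that Girshgorin needs; your proposal omits it, and you rightly flag the unbounded range of $(k,l)$ as a loose end, but the issue is more than bookkeeping.

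The concrete gap: you derive $\|(\bv{\widehat A}_{kl})_{m,:}\|_2^2 \le \frac{4\epsilon^2\norm{\bv A}_F^2}{c_2\log^4 n}$ for every row, and the analogous bound for every column, and then conclude ``by Girshgorin'' that $\|\bv{\widehat A}_{kl}\|_2^2 \le \frac{4\epsilon^2\norm{\bv A}_F^2}{c_2\log^4 n}$. This inference is false. Girshgorin applied to $\bv{\widehat A}_{kl}\bv{\widehat A}_{kl}^T$ bounds $\|\bv{\widehat A}_{kl}\|_2^2$ by $\max_m \sum_{m'} |(\bv{\widehat A}_{kl}\bv{\widehat A}_{kl}^T)_{m,m'}|$, i.e.\ the full $\ell_1$ row sum, not just the diagonal entry $\|(\bv{\widehat A}_{kl})_{m,:}\|_2^2$. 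Row and column $\ell_2$-norm bounds alone do not control the spectral norm: the rank-one matrix $\frac{1}{\sqrt n}\mathbf 1\mathbf 1^T$ has every row and column $\ell_2$-norm equal to $1$ but spectral norm $\sqrt n$. If you instead push through the actual Girshgorin sum using your per-row/per-column sparsity counts $N_r \lesssim \epsilon^2 2^l/(c_2\log^4 n)$, $N_c \lesssim \epsilon^2 2^k/(c_2\log^4 n)$ and the entry bound $|\bv A_{ij}|^2 \le \norm{\bv A}_F^2/2^{\max(k,l)-1}$, you get $\|\bv{\widehat A}_{kl}\|_2^2 \lesssim N_r N_c \cdot \norm{\bv A}_F^2/2^{\max(k,l)-1} \asymp \epsilon^4\, 2^{\min(k,l)} \norm{\bv A}_F^2/(c_2^2\log^8 n)$, which grows with $\min(k,l)$ and does not sum. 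The paper avoids this by fixing the entry magnitude to a single scale $\approx \norm{\bv A}_F/2^k$ \emph{before} applying Girshgorin: within a level set, the product (row nonzeros)$\times$(column nonzeros)$\times$(entry bound)$^2$ decays geometrically in the sparsity indices $p,q$, which is what makes the double sum converge to $O(\epsilon\norm{\bv A}_F/\sqrt{c_2})$. To repair your argument you would need to insert the entry-magnitude level-set split (or an equivalent device) before the $\ell_2$-bucketing; at that point you essentially recover the paper's proof.
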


\begin{proof}
Consider the matrix $\bv{A}''$, which is defined identically to $\bv{A}'$ except we only set $\bv{A}''_{ij} = 0$  if $i \neq j$ and $\norm{\bv A_i}_2^2 \cdot \norm{\bv A_j}_2^2 < \frac{\epsilon^2 \norm{\bv A}_F^2\lvert \bv A_{ij}\rvert^2}{c_2\log^4 n}$. I.e., we do not zero out any entries on the diagonal as in $\bv{A}'$. We will show that $\norm{\bv A - \bv{A}''}_2 \le \frac{\epsilon}{2}  \norm{\bv{A}}_F$.
 If $\bv{A}_{ii}$ is zeroed out in $\bv{A}'$  this implies that $\norm{\bv{A}_i}_2^2 <\frac{\epsilon^2}{4}\norm{\bv{A}}_F^2 $. Thus, $|\bv{A}_{ii}| \le \norm{\bv A_i}_2 \le \frac{\epsilon}{2} \norm{\bv A}_F$ and so $\norm{\bv{A}''-\bv{A}'}_2 \le \frac{\epsilon}{2} \norm{\bv A}_F$. So, by triangle inequality, we will then have $\norm{\bv A -\bv A'}_2 \le \epsilon \cdot \norm{\bv A}_F$. The lemma then follows from Weyl's inequality

To show that $\norm{\bv A - \bv{A}''}_2 \le \frac{\epsilon}{2}  \norm{\bv{A}}_F$, we use a variant of Girshgorin's theorem, as in the proof of Lemma \ref{lem:nnz-zeroed}. First, we split the entries of $\bv{A}$ into level sets, according to their magnitudes.
Let $\bv{A}=\sum_{k=0}^{\log \frac{n}{\epsilon}}\bv{A}_k$ where $(\bv{A}_0)_{ij}=\bv A_{ij}$ if $\lvert \bv{A}_{ij} \rvert \in \left[0, \frac{\epsilon}{n}\norm{\bv{A}}_F \right)$ and $(\bv{A}_0)_{ij}=0$ otherwise. For $1\leq k\leq \log \frac{n}{\epsilon}$, $(\bv{A}_k)_{ij}=\bv A_{ij}$ if $\lvert \bv{A}_{ij} \rvert \in \left[\frac{\norm{\bv{A}}_F }{2^{k}}, \frac{\norm{\bv{A}}_F }{2^{k-1}}\right)$ and $(\bv{A}_k)_{ij}=0$ otherwise.
We can also define $\bv A''=\sum_{k=0}^{\log \frac{n}{\epsilon}}\bv{A}''_k$ where each $\bv{A}''_k$ are defined similarly.
By triangle inequality, $\norm{\bv{ A}-\bv{A}''}_2 \le \sum_{k=0}^{\log n/\epsilon} \norm{\bv{A}_k-\bv{A}_k''}_2$. First observe that $\norm{\bv{A}_0-\bv{A}_0''}_2 \le \norm{\bv{A}_0-\bv{A}_0''}_F \le n \cdot \norm{\bv A_0}_\infty \le \epsilon \norm{\bv A}_F$. Further, we can assume without loss of generality that $\epsilon > 1/n$ and so $\log(n/\epsilon) \le 2 \log n$, as otherwise our algorithm can afford to read all of $\bv{A}$. So, it suffices to show that for all $k \ge 1$,
\begin{align}\label{eq:kbound}
\norm{\bv{A}_k-\bv{A}_k''}_2 \le \frac{\epsilon}{\log n} \cdot \norm{\bv A}_F.
\end{align}
This will give  $\norm{\bv{ A}-\bv{A}''}_2 \le \epsilon \cdot \norm{\bv A}_F + \sum_{k=1}^{\log n/\epsilon}   \frac{\epsilon}{\log n} \cdot \norm{\bv A}_F \le 3 \epsilon \cdot \norm{\bv A}_F$, which gives the lemma after adjusting $\epsilon$ by a constant factor. 

We now prove \eqref{eq:kbound} for each $k \ge 1$. For $p \in \{0,1,\ldots \log(n^2)\}$, let $\mathcal{I}_p \subset [n]$ be the set of rows/columns in $\bv{A}_k$ with $\nnz((\bv A_k)_i) \in \left[\frac{\nnz(\bv A_k)}{2^p}, \frac{\nnz(\bv A_k)}{2^{p-1}}\right)$ and let $\bv A_{kpq} = \bv A_k(\mathcal{I}_p, \mathcal{I}_q)$ be the submatrix of $\bv{A}_k$ formed with rows in $\mathcal{I}_p$ and columns in $\mathcal{I}_q$. Define the submatrix $\bv{A}''_{kpq}$ of $\bv{A}''_k$ in the same way. Let $\bv{\widehat A}_{kpq}=\bv{A}_{kpq}-\bv{A}''_{kpq}$ and finally, let $\bar{\bv A}_{kpq} \in \mathbb{R}^{n \times n}$ be the symmetric error matrix such that $\bar{\bv A}_{kpq}(\mathcal{I}_p, \mathcal{I}_q)= \bv{\widehat A}_{kpq} $ and $\bar{\bv A}_{kpq}(\mathcal{I}_q, \mathcal{I}_p)= \bv{\widehat A}_{kpq}^T$.

Note that all rows from which we zero out entries must have at least one non-zero entry $\nnz((\bv{A}_k)_i) \geq 1$ (otherwise all entries in that row/column are already zero), thus all such rows have $\nnz((\bv{A}_k)_i) \ge \frac{\nnz(\bv{A}_k)}{n^2}$ and so are covered by the submatrices $\bv{A }_{kpq}$.
Thus, by triangle inequality, we can bound
\begin{align}\label{eq:kbound2}\norm{\bv A_k- \bv{A}_k''}_2 \le \sum_{p=0}^{\log(n^2)}  \sum_{q=0}^{\log(n^2)} \norm{\bv{\bar A}_{kpq}}_2.
\end{align}
To prove \eqref{eq:kbound}, we need to bound $\norm{\bv{A}_{kpq}-\bv{A}_{kpq}''}_2$  for all $k \ge 1$ and $p,q$.
We use a case analysis.

\medskip

\noindent{\textbf{Case 1:} $\frac{4 \nnz(\bv{A}_k)^2 \cdot c_2 \log^4n}{\epsilon^2 \cdot 2^{2k}} > 2^{p+q}.$} In this case, first observe that since the nonzero entries of $\bv{A}_k$ lie in $\left[\frac{\norm{\bv{A}}_F }{2^{k}}, \frac{\norm{\bv{A}}_F }{2^{k-1}}\right)$, for any $i \in \mathcal{I}_p$, $j \in \mathcal{I}_j$, 
\begin{align*}\norm{\bv A_i}_2^2 \cdot \norm{\bv A_j}_2^2 &\ge  \norm{(\bv{A}_{k})_{i}}_2^2 \cdot \norm{(\bv{A}_{k})_{j}}_2^2 \\ 
&\ge \frac{\norm{\bv A}_F^4}{2^{4k}} \cdot \nnz((\bv{A}_k)_i) \cdot \nnz((\bv{A}_k)_j)\\
&\ge \frac{\norm{\bv A}_F^4}{2^{4k}\cdot 2^{p+q}} \cdot \nnz(\bv{A}_k)^2.
\end{align*}
Thus, by the assumed bound on $2^{p+q}$, we have for any $i,j$ where $(\bv{A}_{k})_{ij}$ is nonzero,
\begin{align*}
\norm{\bv A_i}_2^2 \cdot \norm{\bv A_j}_2^2 \ge \frac{\epsilon^2 \norm{\bv{A}}_F^4}{4 \cdot 2^{2k} c_2 \log^4 n}  \ge \frac{\epsilon^2 \norm{\bv{A}}_F^2 \cdot |\bv A_{ij}|^2}{ c_2 \log^4 n},
\end{align*}
where the second inequality follows again from the fact that the nonzero entries of $\bv{A}_k$ lie in $\left[\frac{\norm{\bv{A}}_F }{2^{k}}, \frac{\norm{\bv{A}}_F }{2^{k-1}}\right)$. Thus, any $i,j$ with $(\bv{A}_{kpq})_{ij}$  nonzero is \emph{not zeroed out} in line 5 of Algorithm \ref{alg:l2_eig_est}. So $\bv{\bar A}_{kpq} = \bv{0}$.
Plugging into \eqref{eq:kbound2}, we thus have:
\begin{align}\label{eq:kbound3}\norm{\bv A_k- \bv{A}_k''}_2 \le \sum_{p=0}^{\log(n^2)}  \sum_{q: 2^{p+q} \ge \frac{16 \nnz(\bv{A}_k)^2 \cdot c_2 \log^4n}{\epsilon^2 \cdot 2^{2k}}} \norm{\bv{\bar A}_{kpq}}_2.
\end{align}

\medskip

\noindent{\textbf{Case 2:} $\frac{16 \nnz(\bv{A}_k)^2 \cdot c_2 \log^4n}{\epsilon^2 \cdot 2^{2k}} \le  2^{p+q}.$}
In this case, observe that  $(\bv{\widehat A}_{kpq} \bv{\widehat A}_{kpq}^T)_{m} = (\bv{\widehat A}_{kpq})_{m}\bv{\widehat A}_{kpq}^T$. 
We can see that $(\bv{\widehat A}_{kpq})_{m}$ has at most $\nnz((\bv{A}_k)_{m}) \le \frac{\nnz(\bv A_k)}{2^{p-1}}$ non-zero entries. Similarly, each row of $\bv{\widehat A}_{kpq}^T$ has at most $\frac{\nnz(\bv A_k)}{2^{q-1}}$ non-zero elements. Thus, for all $m \in |\mathcal{I}_p|$, using the fact that all non-zero entries of $\bv{A}_{kpq}$ are bounded by $\frac{\norm{\bv{A}}_F}{2^{k-1}}$, we have:
\begin{align*}
    \norm{(\bv{\widehat A}_{kpq} \bv{\widehat A}^T_{kpq})_{m}}_1 \leq \frac{\nnz(\bv A_k)^2}{2^{p+q-2}} \cdot \frac{\norm{\bv A}_F^2}{2^{2k-2}}.
\end{align*}
Applying Girshgorin's circle theorem (Theorem \ref{thm:girshgorin}) we thus have:
\begin{align*}
    \norm{\bv{\widehat A}_{kpq}}_2^2 = \norm{\bv{\widehat A}_{kpq}\bv{\widehat A}_{kpq}^T}_2 \le  \frac{\nnz(\bv A_k)^2}{2^{p+q-2}} \cdot \frac{\norm{\bv A}_F^2}{2^{2k-2}}
    \end{align*}
    and so 
\begin{align*}
    \|\bar{\bv A}_{kpq}\|_2 
    \leq 2\|\bv{\widehat A}_{kpq}\|_2 
    \leq \frac{8\cdot \norm{\bv{A}}_F \cdot \nnz(\bv A_k)}{2^{k}2^{\frac{(p+q)}{2}}}.
\end{align*}
Plugging to \eqref{eq:kbound3}, we thus have: 
\begin{align*}
\norm{\bv A_k- \bv{A}_k''}_2 &\le \sum_{p=0}^{\log(n^2)} \sum_{q: 2^{p+q} \ge \frac{16 \nnz(\bv{A}_k)^2 \cdot c_2 \log^4n}{\epsilon^2 \cdot 2^{2k}}} \frac{8\cdot \norm{\bv{A}}_F \cdot \nnz(\bv A_k)}{2^{k}2^{\frac{(p+q)}{2}}}\\
&\le \sum_{p=0}^{\log(n^2)} \frac{2\epsilon \cdot \norm{\bv{A}}_F}{\sqrt{c_2} \log^2 n} \cdot \sum_{i=0}^\infty \frac{1}{\sqrt{2}} \le \frac{8 \epsilon \norm{\bv{A}}_F}{\sqrt{c_2}}.
\end{align*}
Setting $c_2 \ge 64$, we thus have \eqref{eq:kbound}, and in turn the lemma.
\end{proof}

We next give a bound on the incoherence of the outlying eigenvectors of $\bv{A}'$. This bound is again similar to Lemmas \ref{lemma:row_norm} and~\ref{lemma:general_row_norm}.

\begin{lemma}[Incoherence of outlying eigenvectors in terms of $\ell_2$ norms]
\label{lemma:l2_row_norm}
Let $\bv A, \bv{A}' \in \mathbb{R}^{n\times n}$ be as in Lemma \ref{lem:l2-zeroed}. 
Let $\bv A'_o= \bv V'_o\bv \Lambda'_o \bv V_o^{'T}$ where $\bv \Lambda'_o$ is diagonal, with the eigenvalues of $\bv A'$ with magnitude $\ge \epsilon \sqrt{\delta} \norm{\bv{A}}_F$ on its diagonal, and $\bv{V}'_o$ has columns equal to the corresponding eigenvectors. Let $\bv V'_{o,i}$ denote the $i$\textsuperscript{th} row of $\bv V'_o$. Then, \begin{align*}
\|\bv{\Lambda}_o^{'1/2}\bv{V}'_{o,i} \|_2^2 \leq \frac{\norm{\bv{A}_i}_2^2}{\epsilon \sqrt{\delta} \norm{\bv{A}}_F}\hspace{1em}and\hspace{1em}\|\bv V'_{o,i}\|^2_2 \leq \frac{\norm{\bv A_i}_2^2}{\epsilon^2\delta \norm{\bv{A}}_F^2}.
\end{align*}
\end{lemma}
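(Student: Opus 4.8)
The plan is to repeat the argument of Lemma~\ref{lemma:row_norm} (and its sparsity-weighted refinement, Lemma~\ref{lemma:general_row_norm}) essentially verbatim, with the replacements $\nnz(\bv{A}_i) \rightsquigarrow \norm{\bv{A}_i}_2^2$ and $\sqrt{\nnz(\bv{A})} \rightsquigarrow \norm{\bv{A}}_F$. Concretely, since $\bv{A}'_o = \bv{V}'_o\bv{\Lambda}'_o\bv{V}_o^{'T}$ is the projection of the symmetric matrix $\bv{A}'$ onto its eigenspace of magnitude $\ge \epsilon\sqrt{\delta}\norm{\bv{A}}_F$, we have $\bv{A}'\bv{V}'_o = \bv{V}'_o\bv{\Lambda}'_o$. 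Writing $[\bv{A}'\bv{V}'_o]_i$ for the $i$th row, $r = \rank(\bv{A}'_o)$, $\lambda_j = \bv{\Lambda}'_o(j,j)$, and $\bv{V}'_{o,i,j}$ for the $(i,j)$th entry of $\bv{V}'_o$, expanding gives
\begin{equation*}
\norm{[\bv{A}'\bv{V}'_o]_i}_2^2 \;=\; \norm{[\bv{V}'_o\bv{\Lambda}'_o]_i}_2^2 \;=\; \sum_{j=1}^r \lambda_j^2\cdot \bv{V}_{o,i,j}^{'2}.
\end{equation*}

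Next I would bound the left-hand side. Because $\bv{V}'_o$ has orthonormal columns, $\norm{\bv{V}'_o}_2 \le 1$, so $\norm{[\bv{A}'\bv{V}'_o]_i}_2^2 = \norm{\bv{A}'_i\bv{V}'_o}_2^2 \le \norm{\bv{A}'_i}_2^2\cdot\norm{\bv{V}'_o}_2^2 \le \norm{\bv{A}'_i}_2^2$. Finally, since $\bv{A}'$ is obtained from $\bv{A}$ only by zeroing out entries, $\norm{\bv{A}'_i}_2 \le \norm{\bv{A}_i}_2$, and hence $\sum_{j=1}^r \lambda_j^2\,\bv{V}_{o,i,j}^{'2} \le \norm{\bv{A}_i}_2^2$. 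The two claimed bounds then follow by using $|\lambda_j| \ge \epsilon\sqrt{\delta}\norm{\bv{A}}_F$ for every $j$: since $\bv{\Lambda}_o^{'1/2}$ may be complex we note $\norm{\bv{\Lambda}_o^{'1/2}\bv{V}'_{o,i}}_2^2 = \sum_j |\lambda_j|\,\bv{V}_{o,i,j}^{'2}$, and then $|\lambda_j| \le \lambda_j^2/(\epsilon\sqrt{\delta}\norm{\bv{A}}_F)$ yields $\norm{\bv{\Lambda}_o^{'1/2}\bv{V}'_{o,i}}_2^2 \le \norm{\bv{A}_i}_2^2/(\epsilon\sqrt{\delta}\norm{\bv{A}}_F)$, while $1 \le \lambda_j^2/(\epsilon^2\delta\norm{\bv{A}}_F^2)$ yields $\norm{\bv{V}'_{o,i}}_2^2 = \sum_j \bv{V}_{o,i,j}^{'2} \le \norm{\bv{A}_i}_2^2/(\epsilon^2\delta\norm{\bv{A}}_F^2)$.

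There is no real obstacle here: the statement is a direct transcription of the earlier incoherence lemmas, and the only two points worth a sentence are (i) the identity $\norm{\bv{\Lambda}_o^{'1/2}\bv{V}'_{o,i}}_2^2 = \sum_j|\lambda_j|\,\bv{V}_{o,i,j}^{'2}$, which handles the possibly-complex square root, and (ii) the inequality $\norm{\bv{A}'_i}_2 \le \norm{\bv{A}_i}_2$, which is immediate from the entrywise zeroing that defines $\bv{A}'$. I expect the writeup to be a few lines, mirroring the proof of Lemma~\ref{lemma:general_row_norm}.
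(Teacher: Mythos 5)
Your proof is correct and follows the paper's argument essentially verbatim: use $\bv{A}'\bv{V}'_o = \bv{V}'_o\bv{\Lambda}'_o$, bound $\norm{[\bv{A}'\bv{V}'_o]_i}_2^2 \le \norm{\bv{A}'_i}_2^2 \le \norm{\bv{A}_i}_2^2$ via orthonormality of the columns of $\bv{V}'_o$, and then apply $|\lambda_j| \ge \epsilon\sqrt{\delta}\norm{\bv{A}}_F$. The one place you deviate — writing $\norm{\bv{\Lambda}_o^{'1/2}\bv{V}'_{o,i}}_2^2 = \sum_j |\lambda_j|\,\bv{V}_{o,i,j}^{'2}$ with $|\lambda_j|$ rather than $\lambda_j$ as the paper does — is in fact a small correctness improvement, since $\bv{\Lambda}_o^{'1/2}$ has imaginary entries whenever $\lambda_j < 0$ and the Euclidean norm squares the moduli.
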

\begin{proof}
The proof is again nearly identical to that of Lemma \ref{lemma:row_norm}.
Observe that $\bv A' \bv V'_o =\bv V'_o \bv \Lambda'_o $. Letting $[\bv A' \bv V'_o]_i$ denote the $i$\textsuperscript{th} row of the $\bv A' \bv V'_o$, we have
\begin{equation}\label{Eq: row_norm1_l2}
    \|[\bv A' \bv V'_o]_i\|_2^2 = \|[\bv V'_o \bv \Lambda'_o]_i\|_2^2 = \sum_{j=1}^r \lambda_j^2 \cdot \bv V_{o,i, j}^{'2},
\end{equation}
where $r=\rank(\bv A'_o)$, $\bv V'_{o,i,j}$ is the $(i,j)$\textsuperscript{th} element of $\bv V'_o$  and $\lambda_j=\bv \Lambda'_o(j,j)$. Since $\bv{V}'_o$ has orthonormal columns, we have $ \|[\bv A' \bv V'_o]_i\|_2^2 \le  \|\bv A'_i\|_2^2 \le \norm{\bv A_i}_2^2 $.
Therefore, by \eqref{Eq: row_norm1_l2},
\begin{equation}\label{Eq: eig_bound_l2 }
    \sum_{j=1}^r \lambda_j^2 \cdot  \bv V_{o,i, j}^{'2} \leq \norm{\bv A_i}_2^2.
\end{equation}
Since by definition $\lvert \lambda_j \rvert \geq \epsilon \sqrt{\delta} \norm{\bv{A}}_F$ for all $j$, we can conclude that $
   \|\bv{\Lambda}_o^{'1/2}\bv{V}'_{o,i} \|_2^2 =\sum_{j=1}^r \lambda_j \cdot  \bv V_{o,i, j}^{'2} \leq \frac{\norm{\bv A_i}_2^2}{\epsilon \sqrt{\delta} \norm{\bv{A}}_F}$
and
$
    \|\bv V'_{o,i}\|_2^2 = \sum_{j=1}^r \bv V_{o,i, j}^{'2} 
    \leq \frac{\norm{\bv A_i}_2^2}{\epsilon^2\delta \norm{\bv{A}}_F^2}$, which completes the lemma.
\end{proof}

\subsection{Outer and Middle Eigenvalue Bounds}\label{sec:accuracy bounds_l2}

Using Lemma \ref{lemma:l2_row_norm}, we next argue that the eigenvalues of $\bv{A}_{o,S}'$ will approximate those of $\bv{A}'$, and in turn those of $\bv{A}$. The proof is very similar to Lemmas~\ref{lemma: orthonormality} and~\ref{lemma:nnz_large}.

\begin{lemma}[Concentration of outlying eigenvalues with $\ell_2$ norm based sampling]\label{lemma:l2_large}
Let $\bv{A},\bv{A}' \in\mathbb{R}^{n\times n}$ be as in algorithm \ref{alg:l2_eig_est}.
Let $\bv A' = \bv A'_m + \bv A'_o$, where $\bv A'_m = \bv V'_m\bv{\Lambda}'_m\bv {\bv V'}_m^{T}$, and 
$\bv A'_o = \bv V'_o\bv{\Lambda}'_o\bv {\bv V'}_o^{T}$ are projections onto the eigenspaces with magnitude $< \epsilon\sqrt{\delta}\norm{\bv{A}}_F$ and $\ge \epsilon\sqrt{\delta}\norm{\bv{A}}_F$ respectively. For all $i \in [n]$ let $p_i=\min \left (1,\frac{s\norm{\bv A_i}^2_2}{\norm{\bv A}^2_F}+\frac{1}{n^2}\right )$ and let $\bar{\bv S}$ be a scaled diagonal sampling matrix such that the $\bar{\bv S}_{ii}=\frac{1}{\sqrt{p_i}}$ with probability $p_i$ and $\bar{\bv S}_{ii}=0$ otherwise. If $s \geq \frac{c\log(1/(\epsilon \delta)) }{\epsilon^3 \sqrt{\delta}}$ for a large enough constant $c$, then with probability at least $1-\delta$, $\|\bv \Lambda_o^{'1/2}\bv V_o^{'T} \bar{\bv S} \bar{\bv S}^T \bv V'_o\bv \Lambda_o^{'1/2} - \bv \Lambda'_o \|_2 \leq \epsilon \norm{\bv{A}}_F$.   
\end{lemma}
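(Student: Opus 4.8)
The plan is to mirror the proof of Lemma~\ref{lemma:nnz_large} almost verbatim, substituting $\norm{\bv A}_F$ for $\sqrt{\nnz(\bv A)}$ and using the $\ell_2$-norm incoherence bounds of Lemma~\ref{lemma:l2_row_norm} in place of the sparsity-based bounds of Lemma~\ref{lemma:general_row_norm}. Set $\bv E = \bv \Lambda_o^{'1/2}\bv V_o^{'T} \bar{\bv S} \bar{\bv S}^T \bv V'_o \bv \Lambda_o^{'1/2} - \bv \Lambda'_o$, and for each $i \in [n]$ define the matrix-valued random variable $\bv Y_i = \frac{1}{p_i} \bv \Lambda_o^{'1/2} \bv V'_{o,i} \bv V_{o,i}^{'T} \bv \Lambda_o^{'1/2}$ with probability $p_i$ and $\bv 0$ otherwise, then $\bv Q_i = \bv Y_i - \E[\bv Y_i]$. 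As before, $\sum_i \bv Q_i = \bv E$, and since $\bv Q_i = \bv 0$ whenever $p_i = 1$, we restrict attention to $P = \{i : p_i < 1\}$. The only structural difference from Lemma~\ref{lemma:nnz_large} is the extra $\frac{1}{n^2}$ in the definition of $p_i$; for $i \in P$ this only helps, since $p_i \ge \frac{s\norm{\bv A_i}_2^2}{\norm{\bv A}_F^2}$, so $\frac{1}{p_i} \le \frac{\norm{\bv A}_F^2}{s\norm{\bv A_i}_2^2}$ exactly as needed.

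First I would bound $\norm{\bv Q_i}_2$. For $i \in P$ we have $\norm{\bv Q_i}_2 \le \frac{1}{p_i}\norm{\bv \Lambda_o^{'1/2}\bv V'_{o,i}}_2^2 \le \frac{\norm{\bv A}_F^2}{s\norm{\bv A_i}_2^2} \cdot \frac{\norm{\bv A_i}_2^2}{\epsilon\sqrt{\delta}\norm{\bv A}_F} = \frac{\norm{\bv A}_F}{s\epsilon\sqrt{\delta}}$ by Lemma~\ref{lemma:l2_row_norm}, giving $L = \frac{\norm{\bv A}_F}{s\epsilon\sqrt{\delta}}$. Next, for the variance: writing out $\E[\bv Q_i^2]$ exactly as in \eqref{var_ineq_nnz},
\begin{align*}
\sum_{i \in P} \E[\bv Q_i^2] &\preceq \sum_{i \in P} \frac{1}{p_i} \cdot \norm{\bv \Lambda_o^{'1/2}\bv V'_{o,i}}_2^2 \cdot (\bv \Lambda_o^{'1/2}\bv V'_{o,i}\bv V_{o,i}^{'T}\bv \Lambda_o^{'1/2}) \\
&\preceq \sum_{i \in P} \frac{\norm{\bv A}_F^2}{s\norm{\bv A_i}_2^2} \cdot \frac{\norm{\bv A_i}_2^2}{\epsilon\sqrt{\delta}\norm{\bv A}_F} \cdot (\bv \Lambda_o^{'1/2}\bv V'_{o,i}\bv V_{o,i}^{'T}\bv \Lambda_o^{'1/2}) \\
&= \frac{\norm{\bv A}_F}{s\epsilon\sqrt{\delta}}\Big(\sum_{i \in P} \bv \Lambda_o^{'1/2}\bv V'_{o,i}\bv V_{o,i}^{'T}\bv \Lambda_o^{'1/2}\Big) \preceq \frac{\norm{\bv A}_F}{s\epsilon\sqrt{\delta}} \bv \Lambda'_o \preceq \frac{\norm{\bv A}_F^2}{s\epsilon\sqrt{\delta}}\bv I,
\end{align*}
where the last step uses $\bv \Lambda'_o \preceq \norm{\bv A}_F \bv I$ (since the outlying eigenvalues are bounded in magnitude by $\norm{\bv A}_F$). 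Thus $v = \norm{\bv{Var}(\bv E)}_2 \le \frac{\norm{\bv A}_F^2}{s\epsilon\sqrt{\delta}}$.

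Finally I would invoke the matrix Bernstein inequality (Theorem~\ref{thm:matrix bernstein}) with these $L$ and $v$, and with $d = \frac{1}{\epsilon^2\delta}$ as the dimension bound --- justified since there are at most $\frac{\norm{\bv A}_F^2}{\epsilon^2\delta\norm{\bv A}_F^2} = \frac{1}{\epsilon^2\delta}$ eigenvalues of $\bv A'$ with magnitude $\ge \epsilon\sqrt{\delta}\norm{\bv A}_F$ (using $\norm{\bv A'}_F \le \norm{\bv A}_F$). Plugging $t = \epsilon\norm{\bv A}_F$ into the tail bound gives $\Pr(\norm{\bv E}_2 \ge \epsilon\norm{\bv A}_F) \le \frac{2}{\epsilon^2\delta}\exp(-c' s\epsilon^3\sqrt{\delta})$ for a constant $c'$, which is $\le \delta$ once $s \ge \frac{c\log(1/(\epsilon\delta))}{\epsilon^3\sqrt{\delta}}$ for large enough $c$. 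This is exactly the claimed conclusion. I do not anticipate any real obstacle here: the argument is a near-mechanical adaptation of Lemma~\ref{lemma:nnz_large}, and the only point requiring a moment's care is checking that the additive $\frac{1}{n^2}$ term in $p_i$ causes no harm (it only increases each $p_i$, hence only shrinks $\frac{1}{p_i}$, so both the operator-norm bound on $\bv Q_i$ and the variance bound go through unchanged for $i \in P$).
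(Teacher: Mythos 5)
Your proposal matches the paper's proof essentially verbatim: both define the same $\bv Q_i$ and $P = \{i : p_i < 1\}$, bound $\norm{\bv Q_i}_2$ and $\sum_{i \in P}\E[\bv Q_i^2]$ using Lemma~\ref{lemma:l2_row_norm} together with $\frac{1}{p_i} \le \frac{\norm{\bv A}_F^2}{s\norm{\bv A_i}_2^2}$ for $i \in P$, and then invoke matrix Bernstein with $d = \frac{1}{\epsilon^2\delta}$. Your observation that the additive $\frac{1}{n^2}$ in $p_i$ only makes the bounds easier is the one small point worth spelling out, and you handle it correctly.
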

\begin{proof}
We define the random variables $\bv{Q}_1, \cdots \bv{Q}_n$ and the set $P=\{i \in [n]: p_i < 1\}$ exactly as in the proof of Lemma~\ref{lemma:nnz_large}.  Then, as explained in the proof of Lemma~\ref{lemma:nnz_large} it is sufficient to bound $\sum_{i\in P}\mathbb{E}[\bv Q_i^2]$. From~\ref{var_ineq_nnz} we have $\sum_{i\in P}\mathbb{E}[\bv Q_i^2] \preceq \sum_{i\in P} \frac{1}{p_{i}} \cdot \norm{\bv \Lambda_o^{1/2} \bv V_{o,i}}_2^2 \cdot  (\bv \Lambda_o^{1/2}\bv V_{o,i}  \bv V_{o,i}^T\bv \Lambda_o^{1/2})$. Also from Lemma~\ref{lem:l2-zeroed}, we have  $\norm{\bv \Lambda_o^{1/2} \bv V_{o,i}}_2^2 \le \frac{\norm{\bv{A}_i}_2^2}{\epsilon \sqrt{\delta} \norm{\bv{A}}_F}$ and for all $i \in P$, $\frac{1}{p_i} \leq \frac{\norm{\bv{A}}_F^2}{s\norm{\bv{A}_i}_2^2}$. We thus get,
 \begin{align*}
 \sum_{i\in P} \mathbb{E}[\bv Q_i^2] &\preceq\sum_{i\in P} \frac{1}{p_{i}} \cdot \frac{\norm{\bv{A}_i}_2^2}{\epsilon \sqrt{\delta} \norm{\bv{A}}_F} \cdot  (\bv \Lambda_o^{1/2}\bv V_{o,i} \bv V_{o,i}^T \bv \Lambda_o^{1/2}) \\
 &\preceq \frac{\norm{\bv{A}}_F}{s \epsilon \sqrt{\delta}}(\sum_{i\in P} \Lambda_o^{1/2}\bv V_{o,i} \bv V_{o,i}^T \bv \Lambda_o^{1/2}) \\
 &= \frac{\norm{\bv{A}}_F}{s \epsilon \sqrt{\delta}}\bv{\Lambda}_o \preceq  \frac{\norm{\bv{A}}_F^2}{s \epsilon \sqrt{\delta}} \cdot \bv{I}.
 \end{align*}
 Since $\bv{Q}_i^2$ is PSD this establishes that $v \leq \|\textbf{Var(E)} \|_2 \leq \frac{\norm{\bv{A}}_F^2}{s \epsilon \sqrt{\delta}}$. We can then apply the matrix Bernstein inequality exactly as in the proof of Lemma \ref{lemma: orthonormality} to show that when $s \geq \frac{c }{\epsilon^3 \sqrt{\delta}}$ for large enough $c$, with probability at least $1-\delta$, $\left\|\bv E\right\|_2 \leq \epsilon \norm{\bv{A}}_F $.
\end{proof}

We now bound the middle eignevalues.

\begin{lemma}[Concentration of middle eigenvalues with $\ell_2$- norm based sampling]
\label{lem:l2-middle}
Let $\bv{A},\bv{A}' \in\mathbb{R}^{n\times n}$ be as in Lemma \ref{lemma:l2_row_norm}.
Let $\bv A' = \bv A'_m + \bv A'_o$, where $\bv A'_m = \bv V'_m\bv{\Lambda}'_m\bv {\bv V'}_m^{T}$, and 
$\bv A'_o = \bv V'_o\bv{\Lambda}'_o\bv {\bv V'}_o^{T}$ are projections onto the eigenspaces with magnitude $< \epsilon\sqrt{\delta}\norm{\bv{A}}_F$ and $\ge \epsilon\sqrt{\delta}\norm{\bv{A}}_F$ respectively (analogous to Definition \ref{def:split}).
As in Algorithm \ref{alg:nnz eigenvalue estimate}, for all $i \in [n]$ let $p_i=\min\left (1,\frac{s \norm{\bv{A}_i}_2^2}{\norm{\bv{A}}_F^2}+\frac{1}{n^2}\right )$ and let $\bar{\bv S}$ be a scaled diagonal sampling matrix such that the $\bar{\bv S}_{ii}=\frac{1}{\sqrt{p_i}}$ with probability $p_i$ and $\bar{\bv S}_{ii}=0$ otherwise. If $s \geq  \frac{c\log^{10} n }{\epsilon^8\delta^4}$ for a large enough constant $c$, then with probability at least $1-\delta$, $$\norm{\bar{\bv S}\bv A'_{m}\bar{\bv S}}_2 \leq \epsilon \norm{\bv{A}}_F.$$
\end{lemma}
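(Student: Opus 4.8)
The plan is to mirror the proof of Lemma~\ref{lem:nnz-middle}, systematically replacing the sparsity quantities $\nnz(\bv A_i),\nnz(\bv A)$ by the squared row norms $\norm{\bv A_i}_2^2,\norm{\bv A}_F^2$, and using the incoherence bound of Lemma~\ref{lemma:l2_row_norm} in place of Lemma~\ref{lemma:general_row_norm}. First I would decompose $\bv A'_m = \bv H_m + \bv D_m$ into its off-diagonal and diagonal parts and write $\E_2\norm{\bar{\bv S}\bv A'_m\bar{\bv S}}_2 \le \E_2\norm{\bar{\bv S}\bv H_m\bar{\bv S}}_2 + \E_2\norm{\bar{\bv S}\bv D_m\bar{\bv S}}_2$. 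Applying the decoupling bound of Lemma~\ref{lemma:coupling} to the first term and then Theorem~\ref{thm: rudelson} twice, once for each operator (exactly as in the derivation of \eqref{eq:decoupled2}), gives
$$\E_2\norm{\bar{\bv S}\bv A'_m\bar{\bv S}}_2 \le 10\sqrt{\log n}\left(\E_2\norm{\bar{\bv S}\bv H_m\hat{\bv S}}_{1\to 2} + \E_2\norm{\bv H_m\hat{\bv S}}_{1\to 2}\right) + 2\norm{\bv H_m}_2 + \E_2\norm{\bar{\bv S}\bv D_m\bar{\bv S}}_2,$$
where $\hat{\bv S}$ is an independent copy of $\bar{\bv S}$.

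Next I would dispose of the three easy terms. By Cauchy--Schwarz and Lemma~\ref{lemma:l2_row_norm}, the off-diagonal entries of $\bv A'_o$ obey $|\bv A'_{o,i,j}| \le \frac{\norm{\bv A_i}_2 \cdot \norm{\bv A_j}_2}{\epsilon\sqrt\delta\norm{\bv A}_F}$, the analog of \eqref{eq:nnz-aoij}. For $\E_2\norm{\bar{\bv S}\bv D_m\bar{\bv S}}_2 \le \max_i \frac{1}{p_i}|(\bv D_m)_{ii}|$ I would split on $p_i < 1$ versus $p_i = 1$: when $p_i < 1$ we have $\bv A'_{ii}=0$, so $(\bv D_m)_{ii}=(\bv A'_o)_{ii}$ and the entry bound together with $p_i \ge \frac{s\norm{\bv A_i}_2^2}{\norm{\bv A}_F^2}$ gives $\le \frac{\norm{\bv A}_F}{s\epsilon\sqrt\delta}$; when $p_i = 1$, $|(\bv D_m)_{ii}| \le \norm{\bv A'_m}_2 \le \epsilon\sqrt\delta\norm{\bv A}_F$. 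Hence this term is $\le \epsilon\sqrt\delta\norm{\bv A}_F$ for $s \ge 1/(\epsilon^2\delta)$. The term $\norm{\bv H_m}_2 \le \norm{\bv A'_m}_2 + \norm{\bv D_m}_2 \le 2\epsilon\sqrt\delta\norm{\bv A}_F$ is immediate, and $\E_2\norm{\bv H_m\hat{\bv S}}_{1\to 2} \le \max_i \frac{\norm{\bv A'_{m,i}}_2}{\sqrt{p_i}}$ is controlled by the same two-case split (using $\norm{\bv A'_{m,i}}_2 \le \norm{\bv A'_i}_2 \le \norm{\bv A_i}_2$ when $p_i < 1$), which after adjusting $\epsilon$ by a $1/\sqrt{\log n}$ factor yields $\le \frac{\epsilon\sqrt\delta\norm{\bv A}_F}{\sqrt{\log n}}$.

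The crux, as in Lemma~\ref{lem:nnz-middle}, is bounding $\E_2\norm{\bar{\bv S}\bv H_m\hat{\bv S}}_{1\to 2}$. Here I would fix a column index $i$, observe $\frac{1}{p_i}\norm{(\bar{\bv S}\bv H_m)_{:,i}}_2^2 \le \frac{1}{p_i}\norm{(\bar{\bv S}\bv A'_m)_{:,i}}_2^2 = \frac{1}{p_i}\sum_{j} z_j$ with $z_j = \frac{1}{p_j}|\bv A'_{m,i,j}|^2$ with probability $p_j$ and $0$ otherwise, and apply Bernstein's inequality (Theorem~\ref{thm:bernstein}). To control $|z_j|$ and $\Var(\sum_j z_j)$ I would use $\frac{1}{p_j} \le \frac{\norm{\bv A}_F^2}{s\norm{\bv A_j}_2^2}$ when $p_j<1$, the split $|\bv A'_{m,i,j}|^2 \le 2|\bv A'_{i,j}|^2 + 2|\bv A'_{o,i,j}|^2$ with the entry bound above, and crucially the zeroing rule of Algorithm~\ref{alg:l2_eig_est}: whenever $\bv A'_{ij}\neq 0$ we have $\norm{\bv A_j}_2^2 \ge \frac{\epsilon^2\norm{\bv A}_F^2 |\bv A_{ij}|^2}{c_2\log^4 n\,\norm{\bv A_i}_2^2}$, so $\frac{\norm{\bv A}_F^2}{\norm{\bv A_j}_2^2}|\bv A'_{ij}|^2 \le \frac{c_2\log^4 n\,\norm{\bv A_i}_2^2}{\epsilon^2}$ and the dangerous $\frac{1}{p_j}$ blow-up is cancelled. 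Because $\bv A'$ need not have bounded entries, in place of $\norm{\bv A}_\infty\le 1$ I would carry $|\bv A_{ij}|$ explicitly; the $|\bv A_{ij}|^2$ weighting introduced this way is exactly what makes the relevant sums telescope, via $\sum_j |\bv A_{ij}|^2 = \norm{\bv A_i}_2^2$ and $\sum_j \norm{\bv A_j}_2^2 = \norm{\bv A}_F^2$, rather than accumulating a large support count. The appearance of $\log^4 n$ in the threshold (versus $\log^2 n$ in the sparsity case) is the source of the extra logarithmic factors in the stated sample complexity. A union bound over $i\in[n]$, combined with the deterministic fallback $\max_i \frac{\norm{(\bar{\bv S}\bv H_m)_{:,i}}_2}{\sqrt{p_i}} \le n^2$ (using $p_i \ge 1/n^2$ and $\norm{\bv H_m}_F \le \norm{\bv A}_F \le n$ — precisely why the $1/n^2$ term is added to the sampling probabilities), converts the high-probability bound into $\E_2\norm{\bar{\bv S}\bv H_m\hat{\bv S}}_{1\to 2} \le \sqrt{\log n}\,\epsilon\sqrt\delta\norm{\bv A}_F$. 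Plugging everything back and adjusting $\epsilon$ by $1/\log n$ factors shows $s \ge \frac{c\log^{10} n}{\epsilon^8\delta^4}$ gives $\E_2\norm{\bar{\bv S}\bv A'_m\bar{\bv S}}_2 \le \epsilon\sqrt\delta\norm{\bv A}_F$, and Markov's inequality on $\norm{\bar{\bv S}\bv A'_m\bar{\bv S}}_2$ finishes. The main obstacle is this Bernstein computation with non-uniform $\ell_2$-norm probabilities and unbounded entries: the bookkeeping of which error contributions are tamed by the zeroing rule and which rely on the $1/n^2$ probability floor is delicate, and tracking the correct powers of $\log n$ and of $\epsilon,\delta$ through the magnitude and variance bounds requires care.
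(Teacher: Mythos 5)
Your proposal is correct and mirrors the paper's proof throughout: the decouple-and-Rudelson reduction, the $p_i<1$ versus $p_i=1$ case analysis for the diagonal, spectral, and row-norm terms, and the per-column Bernstein argument with $z_j=|\bv A'_{m,i,j}|^2/p_j$ all match, and you correctly identify that the zeroing rule cancels the $1/p_j$ blow-up via $\frac{\norm{\bv A}_F^2}{\norm{\bv A_j}_2^2}|\bv A'_{ij}|^2\le c_2\log^4 n\,\norm{\bv A_i}_2^2/\epsilon^2$, with $\sum_j|\bv A_{ij}|^2=\norm{\bv A_i}_2^2$ replacing the support-count argument of the sparsity case.

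One small slip in the deterministic fallback: you invoke $\norm{\bv A}_F\le n$, but that inequality requires $\norm{\bv A}_\infty\le 1$, and dropping that assumption is precisely the point of the $\ell_2$-norm sampling theorem. The fix is cosmetic -- state the fallback in terms of $\norm{\bv A}_F$ directly. Using $p_i, p_j \ge 1/n^2$ and $\norm{\bv H_m}_F\le\norm{\bv A}_F$ one gets $\max_i \norm{(\bar{\bv S}\bv H_m)_{:,i}}_2/\sqrt{p_i}\le n\norm{\bv A}_F$ (this is what the paper uses), which scales correctly with $\norm{\bv A}_F$ and is still annihilated by the $1/n^{c'}$ failure probability, so the conversion to an $\E_2$ bound and the closing Markov step are unaffected.
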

\begin{proof}
First observe that since $s \geq \frac{4}{\epsilon^2}$ (for large enough $c$), the results of Lemmas~\ref{lem:l2-zeroed} and~\ref{lemma:l2_row_norm} still hold. The proof follows the same structure as the proof of bounding the middle eigenvalues for sparsity sampling in  Lemma~\ref{lem:nnz-middle}.
From Lemma \ref{lemma:l2_row_norm}, we have $\|{\bv V'}_{o,i}\|_2 \leq \frac{\norm{\bv{A}_i}_2}{\epsilon\sqrt{\delta} \norm{\bv{A}}_F}$. 
Also, following the proof of Lemma~\ref{lemma:l2_row_norm},  we have $\|{\bv \Lambda'}_o {\bv V'}^T_{o,j} \|_2 = \|[{\bv A'}{\bv V'}_o]_j \|_2 \leq \norm{\bv{A}_j}_2$.
Thus, for all $i,j \in [n]$, using Cauchy Schwarz's inequality, we have 
\begin{align}
\label{eq:l2-aoij}
    |{\bv A'}_{o,i,j}|=|{\bv V'}_{o,i} {\bv \Lambda'}_o {\bv V'}_{o,j}^T| \leq \|{\bv V'}_{o,i}\|_2 \cdot \|{\bv \Lambda'}_o {\bv V'}_{o,j}^T\|_2 \leq \frac{\norm{\bv{A}_i}_2}{\epsilon \sqrt{\delta} \norm{\bv{A}}_F} \cdot \norm{\bv{A}_j}_2.
\end{align} 
Let ${\bv A'}_m = \bv H_m + \bv D_m$ where $\bv{H}_m$ and $\bv{D}_m$ contain the off-diagonal and diagonal elements of $\bv{A}'_m$ respectively. Then following the proof of Lemma~\ref{lem:nnz-middle}, we get:
\begin{align}\label{eq:l2_decoupled2}
    \E_2 \|\bar{\bv S}{\bv A'}_m \bar{\bv S}\|_2 \le 10\sqrt{\log n}\left (\E_2\|\bar{\bv S}\bv H_m \hat{\bv S}\|_{1\to 2} + \E_2\|\bv H_m\hat{\bv S}\|_{1\to 2}\right ) + 2\|\bv H_m\|_2 + \E_2\|\bar{\bv S}\bv D_m \bar{\bv S}\|_2
\end{align}

We now proceed to bound each of the terms on the right hand side of \eqref{eq:l2_decoupled2}. We start with $\E_2\|\bar{\bv S}\bv D_m \bar{\bv S}\|_2$. First, observe that $\E_2\|\bar{\bv S}\bv D_m \bar{\bv S}\|_2 \leq \max_i \frac{1}{p_i}\lvert (\bv{D}_m)_{ii} \rvert$. We consider two cases.

\smallskip

\noindent \textbf{Case 1:} $p_i<1$. Then, as $p_i \geq \frac{s \norm{\bv{A}_i}_2^2}{\norm{\bv{A}}_F^2}$ we have $\norm{\bv{A}}_F^2 \leq \frac{1}{s}\norm{\bv{A}_i}_2^2$ since $\frac{1}{s} < \frac{\epsilon^2}{4}$. So we must have that have $\lvert (\bv{D}_m)_{ii} \rvert= \lvert ({\bv{A}'}_m)_{ii} \rvert = \lvert (\bv{A}'_o)_{ii} \rvert $ (since $\bv{A}'_{ii}=0$). Then  by \eqref{eq:l2-aoij}, we have $\frac{1}{p_i}\lvert (\bv{D}_m)_{ii} \rvert \leq \frac{\norm{\bv{A}}_F}{s\epsilon \sqrt{\delta}}  $.

 \smallskip

\noindent\textbf{Case 2:} $p_i=1$. Then we have $\frac{1}{p_i}\lvert (\bv{D}_m)_{ii} \rvert=\lvert (\bv{D}_m)_{ii} \rvert \leq \max_j \lvert (\bv{D}_m)_{jj} \rvert \leq \|\bv{A}'_m \|_2 \leq \epsilon \sqrt{\delta}\norm{\bv{A}}_F$. \\
From the two cases above, for $s \geq \frac{1}{\epsilon^2 \delta}$, we have:
\begin{align}
    \label{eq:l2-sds-bound}
    \E_2\|\bar{\bv S}\bv D_m \bar{\bv S}\|_2 \leq \epsilon\sqrt{\delta}\norm{\bv{A}}_F.
\end{align} 
We can bound $\norm{\bv{H}_m}_2$ similarly. Since $\bv H_m = {\bv A'}_m - \bv D_m$ and $\|{\bv A'}_m\|_2 \leq \epsilon\sqrt{\delta}\norm{\bv{A}}_F.$,
\begin{align}
    \|\bv H_m\|_2 &\leq \|{\bv A'}_m\|_2 + \|\bv D_m\|_2\notag\\
    &\leq \epsilon\sqrt{\delta}\norm{\bv{A}}_F + \epsilon\sqrt{\delta}\norm{\bv{A}}_F.\notag\\
    &= 2\epsilon\sqrt{\delta}\norm{\bv{A}}_F.\label{eq:l2-hm-bound}
\end{align}
where the second step follows from the fact that $\norm{\bv{D}_m}_2 \le \max_i \lvert (\bv{D}_m)_{ii} \rvert  \le \norm{\bv{A}'_m}_2$.

We next bound the term $\E_2\|\bv H_m\hat{\bv S}\|_{1\to 2}$.
 Observe that $\E_2\|\bv H_m\hat{\bv S}\|_{1\to 2} \leq \frac{\max_i \|{\bv A'}_{m,i}\|_2}{\sqrt{p_i}}$, where $\bv{A'}_{m,i}$ is the $i$\textsuperscript{th} column/row of $\bv{A}'_m$. We again consider the two cases when $p_i = 1$ and $p_i<1$:
 
 \smallskip
 
 \noindent\textbf{Case 1:} $p_i = 1$. Then $\|{\bv A'}_{m,i}\|_2 \leq \|{\bv A'}_m\|_2 \leq \epsilon\sqrt{\delta}\norm{\bv{A}}_F$.
 
  \smallskip

 \noindent\textbf{Case 2:} $p_i < 1$. Then $\|{\bv A'}_{m,i}\|_2 \leq \|{\bv A'}_i\|_2 \leq \norm{\bv{A}}_F$. Thus, setting $s \geq \frac{1}{\epsilon^2\delta}$ we have:
 \begin{align*}
     \frac{\|{\bv A'}_{m,i}\|_2}{\sqrt{p_i}}
    &\leq \frac{\norm{\bv{A}}_F}{\sqrt{s}\norm{\bv{A}_i}_2} \cdot \|{\bv A'}_i\|_2\\
    &\leq \frac{\norm{\bv{A}}_F}{\sqrt{s}} \leq \epsilon\sqrt{\delta}\norm{\bv{A}}_F.
\end{align*}
Thus, from the two cases above, for all $i \in [n]$, adjusting $\epsilon$ by a $\frac{1}{\sqrt{\log n}}$ factor, we have
for $s \geq \frac{\log n}{\epsilon^2\delta}$:
\begin{align}
\label{eq:l2-hms-bound}
    \E_2\|\bv H_m\hat{\bv S}\|_{1 \to 2} &\leq \frac{\epsilon\sqrt{\delta}\norm{\bv{A}}_F}{\sqrt{\log n}}.
\end{align}

Overall, plugging \eqref{eq:l2-sds-bound}, \eqref{eq:l2-hm-bound}, and \eqref{eq:l2-hms-bound} back into \eqref{eq:l2_decoupled2}, we have :
\begin{align}\label{eq:l2_decoupled3}
    \E_2 \|\bar{\bv S}{\bv A'}_m \bar{\bv S}\|_2 \le 10\sqrt{\log n} \cdot \E_2\|\bar{\bv S}\bv H_m \hat{\bv S}\|_{1\to 2} + 15 \epsilon \sqrt{\delta}\norm{\bv{A}}_F.
\end{align}
Finally we bound $\E_2\|\bar{\bv S}\bv H_m \hat{\bv S}\|_{1\to 2}$. As in the proof of Lemma~\ref{lem:nnz-middle}, we have $\E_2\|\bar{\bv S}\bv H_m\hat{\bv S}\|_{1\to 2} \leq \E_2 \left (\max_{i: i\in [n]} \frac{\|(\bar{\bv S}\bv H_m)_{:,i}\|_2}{\sqrt{p_i}}\right )$ and we will argue that $\max_{i: i\in [n]} \frac{\|(\bar{\bv S}\bv H_m)_{:,i}\|_2}{\sqrt{p_i}}$ is bounded by $\epsilon\sqrt{\delta} \norm{\bv{A}}_F$ with probability $1-1/\poly(n)$. Also as argued in the proof of Lemma~\ref{lem:nnz-middle}, since $p_i \geq \frac{1}{n^2}$, it suffices to bound $\frac{\|(\bar{\bv S}\bv A'_m)_{:,i}\|_2}{\sqrt{p_i}}$ for all $i\in [n]$ with high probability. Again, for a fixed $i$ and any $j \in [n]$, define the random variables $z_j$ as:
\begin{align*}
    z_j &=
    \begin{cases}
    \frac{1}{p_j}|{\bv A'}_{m,i,j}|^2 & \text{with probability $p_j$}\\
    0 & \text{otherwise}.
    \end{cases}
\end{align*}
Then $\sum_{j=1}^n z_j = \|(\bar{\bv S}{\bv A'}_{m})_{:,i}\|_2^2$ and $\mathbb{E}[\sum_{j=1}^n z_j] = \|\bv{A}'_{m,i} \|_2^2 \leq \|\bv{A}'_{i} \|_2^2 \leq \norm{\bv{A}}_F^2$. We will again use Bernstein's inequality to bound $\sum_{j=1}^n z_j = \|(\bar{\bv S}{\bv A'}_{m})_{:,i}\|_2^2$ by bounding bound $|z_j|$ for all $j \in [n]$ and $\bv{Var}\left(\sum_{j=1}^n z_j\right)$.  We consider the cases of $p_i <1$ and $p_i = 1$ separately.

\smallskip

\noindent \textbf{Case 1:} $p_i<1$. Then, we have $p_i \geq s\norm{\bv{A}_i}_2^2 / \norm{\bv{A}}_F^2$. If ${\bv A'}_{i,j} \neq 0$ then
\begin{align*}
    |z_j| &\leq \frac{1}{p_j} |{\bv A'}_{m,i,j}|^2 \leq \max\left(1, \frac{\norm{\bv{A}}_F^2}{s\norm{\bv{A}_j}_2^2}\right) |{\bv A'}_{m,i,j}|^2\\
    &\leq |{\bv A'}_{m,i,j}|^2 + \frac{2\norm{\bv{A}}_F^2}{s\norm{\bv{A}_j}_2^2}\left(|{\bv A'}_{i,j}|^2 + |{\bv A'}_{o,i,j}|^2\right)\\
    &\leq |{\bv A'}_{m,i,j}|^2 + \frac{2\norm{\bv{A}}_F^2}{s\norm{\bv{A}_j}_2^2}\left(|{\bv A'}_{i,j}|^2 + \frac{\norm{\bv{A}_i}_2^2\norm{\bv{A}_j}_2^2}{\epsilon^2\delta \norm{\bv{A}}_F^2}\right)\\
    &\leq |{\bv A'}_{m,i,j}|^2 + \frac{2\norm{\bv{A}}_F^2}{s\norm{\bv{A}_j}_2^2}|{\bv A'}_{i,j}|^2 + \frac{2\norm{\bv{A}_i}_2^2}{\epsilon^2\delta s},
\end{align*}
where the fourth inequality uses \eqref{eq:l2-aoij}.
By the thresholding procedure which defines $\bv{A}'$, if $i \neq j$ and $\bv A'_{ij} \neq 0$,
\begin{align}
    \norm{\bv{A}_i}_2^2 \cdot \norm{\bv{A}_j}_2^2 
    \geq \frac{\epsilon^2 \norm{\bv{A}}_F^2 |\bv{A}'_{ij}|^2}{c_2\log^4 n }
    \Rightarrow
    \frac{\norm{\bv{A}_j}_2^2}{|\bv{A}'_{i,j}|^2} 
    \geq \frac{\epsilon^2 \norm{\bv{A}}_F^2}{c_2 \cdot \log^4 n \cdot \norm{\bv{A}_i}_2^2 },\label{eq:l2-aj}
\end{align}
and thus for $p_i < 1$ and ${\bv A'}_{ij} \neq 0$ we have
\begin{align*}
    |z_j| &\leq |{\bv A'}_{m,i,j}|^2 + \frac{2c_2 \log^4 n \cdot \norm{\bv{A}_i}_2^2}{s\epsilon^2} + \frac{2\norm{\bv{A}_i}_2^2}{\epsilon^2\delta s}.
\end{align*}
 Also $\bv{A}'_{ii}=0$ since we must have $\norm{\bv{A}_i}_2^2 < \frac{\epsilon^2}{4}\norm{\bv{A}}_F^2$ as $p_i <1$. If ${\bv A'}_{i,j} = 0$ or $i=j$, then we simply have
\begin{align*}
    |z_j| &\leq |{\bv A'}_{m,i,j}|^2 + \frac{2\norm{\bv{A}_i}_2^2}{s\epsilon^2\delta}.
\end{align*}
Overall for all $j \in [n]$,
\begin{align}
    |z_j| &\leq |{\bv A'}_{m,i,j}|^2 + \frac{2\norm{\bv{A}_i}_2^2}{s\epsilon^2\delta} + \frac{2c_2\log^4 n  \cdot \norm{\bv{A}_i}_2^2}{s\epsilon^2}, \label{eq:l2_partial_abs_z}
\end{align}
and since $|{\bv A'}_{m,i,j}|^2 \leq  \sum_{j=1}^n |{\bv A'}_{m,i,j}|^2 = \|{\bv A'}_{m,i}\|_2^2 \leq \|{\bv A'}_i\|_2^2 \leq \norm{\bv{A}_i}_2^2$, 
\begin{align}
    |z_j| &\leq \norm{\bv{A}_i}_2^2 + \frac{2\norm{\bv{A}_i}_2^2}{s\epsilon^2\delta} + \frac{2 c_2 \cdot \log^4 n  \cdot \norm{\bv{A}_i}_2^2}{s\epsilon^2}. \label{eq:l2_abs_z}
\end{align}
For $s \geq c\left (\frac{\log^4 n }{\epsilon^2} + \frac{1}{\epsilon^2 \delta}\right )$ and large enough $c$, we thus have $|z_j| \leq 2\norm{\bv{A}_i}_2^2$. 

\noindent We next bound the variance by:
\begin{align*}
    \bv{Var}\left(\sum_{j=1}^n z_j\right) &\leq \sum_{j=1}^n \E [z_j^2] \leq \sum_{j=1}^n p_j\frac{1}{p_j^2} |{\bv A'}_{m,i,j}|^4\\
    &= \sum_{j=1}^n \max \left(1, \frac{\norm{\bv{A}}_F^2}{s\norm{\bv{A}_j}_2^2}\right)|{\bv A'}_{m,i,j}|^4 \\
    &\leq \sum_{j=1}^n |{\bv A'}_{m,i,j}|^4 + \sum_{j=1}^n \frac{12\norm{\bv{A}}_F^2}{s\norm{\bv{A}_j}_2^2} \left(|{\bv A'}_{i,j}|^4 + |{\bv A'}_{o,i,j}|^4\right)\\
    &\leq \|{\bv A'}_{m,i}\|_2^4 + \sum_{j=1}^n \frac{12\norm{\bv{A}}_F^2}{s\norm{\bv{A}_j}_2^2} \left(|\bv A_{i,j}'|^4 + \frac{\norm{\bv{A}_i}_2^4\norm{\bv{A}_j}_2^4}{\epsilon^4\delta^2\norm{\bv{A}}_F^4}\right),
\end{align*}
where the last inequality uses \eqref{eq:l2-aoij}. We thus get:
\begin{align}
    \bv{Var}\left(\sum_{j=1}^n z_j\right) &\leq \|{\bv A'}_{m,i}\|_2^4 + \sum_{j:{\bv A'}_{i,j}\neq 0}\frac{12\norm{\bv{A}}_F^2 |\bv{A}'_{ij}|^4}{s\norm{\bv{A}_j}_2^2} + \sum_{j=1}^n \frac{12\norm{\bv{A}_i}_2^4\norm{\bv{A}_j}_2^2}{s\epsilon^4\delta^2\norm{\bv{A}}_F^2}. \label{eq:l2_unsolved_var_sum_zj_1}
\end{align}
\noindent Now $\bv A_{ii}' = 0$ as $p_i <1 $ (and thus, $\norm{\bv{A}}_i^2 < \frac{\epsilon^2}{4}\norm{\bv{A}}_F^2$). Combining \eqref{eq:l2-aj} with the second term to the right of \eqref{eq:l2_unsolved_var_sum_zj_1} we have
\begin{align*}
    \bv{Var}\left(\sum_{j=1}^n z_j\right) &\leq \|{\bv A'}_{m,i}\|_2^4 + \sum_{j:{\bv A'}_{i,j}\neq 0}\frac{12c_2\log^4 n \cdot \norm{\bv{A}_i}_2^2 \cdot |\bv{A}'_{ij}|^2}{s\epsilon^2} + \sum_{j=1}^n \frac{12\norm{\bv{A}_i}_2^4\norm{\bv{A}_j}_2^2}{s\epsilon^4\delta^2\norm{\bv{A}}_F^2}, 
\end{align*}
and since $\sum_{j} |\bv{A}'_{ij}|^2= \norm{\bv{A}_i}_2^2$, we have
\begin{align}
    \bv{Var}\left(\sum_{j=1}^n z_j\right) &\leq \|{\bv A'}_{m,i}\|_2^4 + \frac{12c_2\log^4 n  \cdot \norm{\bv{A}_i}_2^4}{s\epsilon^2} + \sum_{j=1}^n \frac{12\norm{\bv{A}_i}_2^4\norm{\bv{A}_j}_2^2}{s\epsilon^4\delta^2\norm{\bv{A}}_F^2}. \label{eq:l2_unsolved_var_sum_zj_2}
\end{align}
\noindent Finally since $\sum_{j=1}^n \norm{\bv{A}_j}_2^2 = \norm{\bv{A}}_F^2$ and $\|{\bv A'}_{m,i}\|_2^4 \leq \norm{\bv{A'}_i}_2^4 \le \norm{\bv{A}_i}_2^4$ we have
\begin{align}
    \bv{Var}\left(\sum_{j=1}^n z_j\right) &\leq \norm{\bv{A}_i}_2^4 + \frac{12c_2 \log^4 n  \cdot \norm{\bv{A}_i}_2^4}{s\epsilon^2} + \frac{12\norm{\bv{A}_i}_2^4}{s\epsilon^4\delta^2}.\label{eq:l2_var_sum_zj}
\end{align}

\noindent For $s \geq c\left (\frac{\log^4 n }{\epsilon^2} + \frac{1}{\epsilon^4\delta^2}\right )$ for large enough $c$, we have $\bv{Var}\left(\sum_{j=1}^n z_j\right) \leq 2\norm{\bv{A}_i}_2^4$. 

\noindent Therefore, using \eqref{eq:l2_abs_z} and \eqref{eq:l2_var_sum_zj} with $s \geq c\left (\frac{\log^4 n }{\epsilon^2} + \frac{1}{\epsilon^4\delta^2}\right )$, we can 
apply Bernstein inequality (Theorem \ref{thm:bernstein}) (for some constant $c$) to get
\begin{align*}
    \Pr\left(\|(\bar{\bv S}\bv A'_m)_{:,i}\|_2^2 \geq \E\|(\bar{\bv S}\bv A'_m)_{:,i}\|_2^2 + t \right) &\leq \Pr\left(\sum_{j=1}^n z_j \geq \norm{\bv{A}_i}_2^2 + t \right)\\
    &\leq \exp\left(\frac{-t^2/2}{c\norm{\bv{A}_i}_2^4 + ct\norm{\bv{A}_i}_2^2/3}\right).
\end{align*}
If we set $t= \log n \cdot \norm{\bv{A}_i}_2^2$, for some constant $c'$ we have
\begin{align*}
     \Pr\left(\|(\bar{\bv S}\bv A'_m)_{:,i}\|_2^2 \geq \E\|(\bar{\bv S}\bv A'_m)_{:,i}\|_2^2+\log n \cdot \norm{\bv{A}_i}_2^2 \right)
     &\leq \exp\left(\frac{-(\log n)^2/2}{c+c(\log n)/3}\right) \leq  \exp(-c'\log n) \leq 1/n^{c'}.
\end{align*}
Since ${\bv A'}_m = \bv H_m + \bv D_m$, we have $\|(\bar{\bv S}{\bv A'}_m)_{:,i}\|_2 \geq \|(\bar{\bv S}\bv H_m)_{:,i}\|_2$. Then with probability at least $1-1/n^{c'} \ge 1 -\delta$, for any row $i$ with $p_i <1$, we have 
\begin{align*}
    \frac{1}{p_i} \cdot \|(\bar{\bv S}\bv H_m)_{:,i}\|_2^2 \leq \frac{\norm{\bv{A}}_F^2}{s\norm{\bv{A}_i}_2^2}\cdot c(\log n)\norm{\bv{A}_i}_2^2 \leq \frac{\epsilon^2 \delta \norm{\bv{A}}_F^2}{\log n},
\end{align*}
for $s \geq c\left (\frac{\log^4 n }{\epsilon^2} + \frac{1}{\epsilon^4\delta^2}\right )$ for large enough $c$.
Observe that, as in Lemma \ref{lemma: orthonormality} w.l.o.g. we have assumed $1-\frac{1}{n^{c'}} \ge 1-\delta$, since otherwise, our algorithm would read all $n^2$ entries of the matrix.

\smallskip

\noindent \textbf{Case 2:} $p_i = 1$. Then, we have $\norm{\bv{A}_i}_2^2 \geq \norm{\bv{A}}_F^2/s$. As in the $p_i < 1$ case, when $\bv{A}_{ii}=0$, (and this $\bv{A}'_{ii}=\bv{A}_{ii}=0$) we have from~\eqref{eq:l2_partial_abs_z}: 
\begin{align*}
    |z_j| &\leq |{\bv A'}_{m,i,j}|^2 + \frac{2\norm{\bv{A}_i}_2^2}{s\epsilon^2\delta} + \frac{2c_2\log^4 n  \cdot \norm{\bv{A}_i}_2^2}{s\epsilon^2}.
\end{align*}
Now, we observe that $|{\bv A'}_{m,i,j}|^2 \leq \sum_{j=1}^n |{\bv A'}_{m,i,j}|^2 \leq \| \bv{A}'_{m,i}\|^2_2 \leq \| \bv{A}'_m\|^2_2 \leq \epsilon^2\delta \norm{\bv{A}}_F^2$, which gives us
\begin{align}
    |z_j| &\leq \epsilon^2 \delta \norm{\bv{A}}_F^2 + \frac{2\norm{\bv{A}_i}_2^2}{s\epsilon^2\delta} + \frac{2c_2\log^4 n  \cdot \norm{\bv{A}_i}_2^2}{s\epsilon^2}.\label{eq: l2_abs_zj_p=1}
\end{align}
Note that if $\bv{A}_{ii} \neq 0$, the second term in~\eqref{eq:l2_partial_abs_z} is bounded as $\frac{2\norm{\bv{A}}_F^2}{s\norm{\bv{A}_i}_2^2}\cdot |\bv{A}'_{ii}|^2 \leq \frac{2\norm{\bv{A}}_F^2}{s} \leq 2\epsilon^2\delta \norm{\bv{A}}_F^2$ for $s\geq O(\frac{1}{\epsilon^2\delta})$.
Thus, for $s \geq c\left (\frac{\log^4 n }{\epsilon^4\delta} + \frac{1}{\epsilon^4\delta^2}\right )$ for a large enough constant $c$ and adjusting for other constants we have $|z_j| \leq 2\epsilon^2\delta \norm{\bv{A}}_F^2$.
Also observe that the expectation of $\sum z_j$ can be bounded by:
\begin{align*}
    \E\left(\sum_{j=1}^n z_j\right) = \E\|(\bar{\bv S}{\bv A'}_m)_{:,i}\|_2^2 = \|{\bv A'}_{m,i}\|_2^2 \leq \|{\bv A'}_m\|_2^2 \leq \epsilon^2\delta\norm{\bv{A}}_F^2.
\end{align*}

\noindent Next, the variance of the sum of the random variables $\{z_j\}$ can again be bounded by following the analysis presented in and prior to \eqref{eq:l2_unsolved_var_sum_zj_2} and \eqref{eq:l2_var_sum_zj} we have
\begin{align}
    \bv{Var}\left(\sum_{j=1}^n z_j\right) &\leq \|{\bv A'}_{m,i,j}\|_2^4 + \frac{12c_2 \log^2 n \cdot \norm{\bv{A}_i}_2^4}{s\epsilon^2} + \frac{12\norm{\bv{A}_i}_2^4}{s\epsilon^4\delta^2}\notag\\
    &\leq \epsilon^4\delta^2\norm{\bv{A}}_F^4 + \frac{12c_2 \log^2 n \cdot \norm{\bv{A}_i}_2^4}{s\epsilon^2} + \frac{12 \norm{\bv{A}_i}_2^4}{s\epsilon^4\delta^2},\label{eq:l2_var_z_j_p=1}
\end{align}
where we again bound $\|{\bv A'}_{m,i,j}\|_2^4$ using $$|{\bv A'}_{m,i,j}|^2 \leq \sum_{j=1}^n |{\bv A'}_{m,i,j}|^2 \leq \| \bv{A}'_{m,i}\|^2_2 \leq \| \bv{A}'\|^2_2 \leq \epsilon^2\delta \norm{\bv{A}}_F^2.$$

\noindent Then for $s \geq c(\frac{\log^4 n }{\epsilon^6\delta^2} + \frac{1}{\epsilon^8\delta^4})$, we have $\bv{Var}\left(\sum_{j=1}^n z_j\right) \leq 2\epsilon^4\delta^2\norm{\bv{A}}_F^4$ for large enough constant $c$.

\noindent Using \eqref{eq: l2_abs_zj_p=1} and \eqref{eq:l2_var_z_j_p=1} and noting that $\sum_{j=1}^n\E\left( z_j^2\right) \geq \bv{Var}\left(\sum_{j=1}^n z_j\right) - \E^2\left(\sum_{j=1}^n z_j\right)$ we can apply the Bernstein inequality (Theorem \ref{thm:bernstein}):
\begin{align*}
    \Pr\left(\|(\bar{\bv S}\bv A'_m)_{:,i}\|_2^2 \geq \E\|(\bar{\bv S}\bv A'_m)_{:,i}\|_2^2+t \right) &\leq \Pr\left(\sum_{j=1}^n z_j \geq \epsilon^2\delta\norm{\bv{A}_i}_2^2+t \right) \\ &\leq \exp\left(\frac{-t^2/2}{c\epsilon^4\delta^2\norm{\bv{A}}_F^4+c\epsilon^2\delta\norm{\bv{A}}_F^2t/3}\right).
\end{align*}
If we set $t=(\log n)\epsilon^2\delta\norm{\bv{A}}_F^2$, then for some constant $c'$ we have
\begin{align*}
    \Pr\left(\|(\bar{\bv S}\bv A'_m)_{:,i}\|_2^2 \geq \E\|(\bar{\bv S}\bv A'_m)_{:,i}\|_2^2+t \right) &\leq \exp(-c'\log n) \leq 1/n^{c'}.
\end{align*}
This, since $\|(\bar{\bv S}\bv H_m)_{:,i}\|_2^2 \leq \|(\bar{\bv S}\bv A'_m)_{:,i}\|_2^2$, when $p_i = 1$, setting  $s \geq c(\frac{\log^4 n }{\epsilon^6\delta^2} + \frac{1}{\epsilon^8\delta^4})$ for large enough $c$,  we have with probability $\ge 1-1/n^{c'}$
$
    \frac{1}{p_i}\|(\bar{\bv S}\bv H_m)_{:,i}\|_2^2 = \|(\bar{\bv S}\bv H_m)_{:,i}\|_2^2 \leq \|(\bar{\bv S}\bv A'_m)_{:,i}\|_2^2 \leq (\log n)\epsilon^2\delta\nnz({\bv A}).
$

We have proven that with probability $\ge 1-1/n^{c'}$, for both cases when $p_i < 1$ and $p_i = 1$, $\frac{\|(\bar{\bv{S}}\bv H_m)_{:,i}\|_2^2}{p_i} \le (\log n)\epsilon^2\delta\norm{\bv{A}}_F^2$. Taking a union bound over all $i \in [n]$,  with probability at least  $1-1/n^{c'-1}$, $ \max_i \frac{\|(\bar{\bv S}\bv H_m)_{:,i}\|_2}{\sqrt{p_i}} \leq \sqrt{\log n}\epsilon\sqrt{\delta}\norm{\bv{A}}_F$ for $s \geq c(\frac{\log^4 n }{\epsilon^6\delta^2} + \frac{1}{\epsilon^8\delta^4})$. Also, since $p_i \geq \frac{1}{n^2}$ for all $i \in [n]$, $ \frac{\|(\bar{\bv S}\bv H_m)_{:,i}\|_2}{\sqrt{p_i}} \leq \sqrt{\sum_{j=1}^n \frac{\bv{A}_{m,i,j}^2}{p_i\cdot p_j}}\leq \frac{n\cdot\norm{\bv{A}}_F}{\sqrt{s}} $. Thus, $\max_i \frac{\|(\bar{\bv S}\bv H_m)_{:,i}\|_2}{\sqrt{p_i}} \leq n\norm{\bv{A}}_F$ and we get, 
\begin{align*}
    \E_2 \left (\max_{i: i\in [n]} \frac{\|(\bar{\bv S}\bv H_m)_{:,i}\|_2}{\sqrt{p_i}}\right ) \leq \sqrt{\log n}\epsilon\sqrt{\delta}\norm{\bv{A}}_F+\frac{1}{n^{c'-3}} \leq \sqrt{\log n}\epsilon\sqrt{\delta}\norm{\bv{A}}_F.
\end{align*} 
after adjusting $\epsilon$ by at most some constants.
Overall, we finally get 
$$\E_2\|\bar{\bv S}\bv H_m\hat{\bv S}\|_{1\to 2} \leq \E_2 \left (\max_{i: i\in [n]} \frac{\|(\bar{\bv S}\bv H_m)_{:,i}\|_2}{\sqrt{p_i}}\right ) \leq \epsilon\sqrt{\log n}\sqrt{\delta}\norm{\bv{A}}_F.$$
Plugging this bound into \eqref{eq:l2_decoupled3}, we have for $s \geq c(\frac{\log^4 n }{\epsilon^6\delta^2} + \frac{1}{\epsilon^8\delta^4})$,
\begin{align*}
    \E_2\|\bar{\bv S}{\bv A'}_m \bar{\bv S}\|_2 &\leq (\log n)\epsilon\sqrt{\delta}\norm{\bv{A}}_F.
\end{align*}
Finally after adjusting $\epsilon$ by a $\frac{1}{\log n}$ factor, we have for $s \geq c(\frac{\log^{10} n }{\epsilon^6\delta^2} + \frac{\log^8 n}{\epsilon^8\delta^4})$ or $s \geq \frac{c\log^{10} n }{\epsilon^8\delta^4}$,
\begin{align*}
    \E_2\|\bar{\bv S}{\bv A'}_m \bar{\bv S}\|_2 &\leq \epsilon\sqrt{\delta}\norm{\bv{A}}_F.
\end{align*}
The final bound then follows via Markov's inequality on $\|\bar{\bv S}{\bv A'}_m \bar{\bv S}\|_2$.
\end{proof}

\subsection{Main Accuracy Bound}

We are finally ready to state our main result for $\ell_2$ norm based sampling.

\rownormSamp*

\begin{proof}
The proof follows exactly the same structure as the proofs of Theorems~\ref{thm:main_bound} and~\ref{thm:nnz_main_bound} for uniform and sparsity based sampling respectively. We use the results of Lemmas~\ref{lem:l2-middle} and~\ref{lemma:l2_large} on the concentration of the middle and large eigenvalues for $\ell_2$ norm based sampling.  

Analogous to Theorem \ref{thm:nnz_main_bound}, from Lemma \ref{lemma:l2_large} with error parameter $\frac{\epsilon}{\log n}$ the eigenvalues of $\bv A'_{o,S}$ approximate those of $\bv{A}_o'$ up to error $\epsilon\|\bv A\|_F$ with probability $1-\delta$ if $s \geq  \frac{c \log(1/(\epsilon \delta)) \cdot \log^3 n}{\epsilon^3 \sqrt{\delta}}$. We also require $s \geq \frac{c\log^{10}n}{\epsilon^8\delta^4}$ for $\|\bv A_{m,S}'\|_2 \leq \epsilon\|\bv A\|_F$ to hold with probability $1-\delta$ by Lemma \ref{lem:l2-middle}. Thus, for both conditions to hold simultaneously with probability $1-2\delta$ by a union bound, if suffices to set   $s = \max\left(\frac{c \log(1/(\epsilon \delta)) \cdot \log^3 n}{\epsilon^3 \sqrt{\delta}}, \frac{c\log^{10}n}{\epsilon^8\delta^4}\right) = \frac{c\log^{10}n}{\epsilon^8\delta^4}$, where we use that $\log(1/(\epsilon \delta) \le \log n$, as otherwise our algorithm can take $\bv{A}_S$ to be the full matrix $\bv{A}$. Adjusting $\delta$ to $\delta/2$ completes the theorem.
\end{proof}


\section{Eigenvalue Approximation via Entrywise Sampling}\label{app:entry}

In this section we show that sampling $\tilde O(n/\epsilon^2)$ entries from a bounded entry matrix preserves its eigenvalues up to error $\pm \epsilon  n$. We use this result to improve the sample complexity of Theorem \ref{thm:main_bound} from $\tilde O \left (\frac{\log^6 n}{\epsilon^6} \right )$ to $\tilde O \left (\frac{\log^3 n}{\epsilon^5} \right )$ by applying entrywise sampling to further sparsify the submatrix $\bv{A}_S$ that is sampled in Algorithm \ref{alg:eigenvalue estimate}. Entrywise sampling results similar to what we show are well-known in the literature. See for example \cite{achlioptas2007fast} and \cite{braverman2021near}. For completeness, we give a proof here using standard matrix concentration bounds.

\begin{theorem}[Entrywise sampling -- spectral norm bound]\label{thm:entrywise sampling}
Consider $\bv A \in \R^{\nbyn}$ with $\|\bv A\|_\infty \leq 1$. Let $\bv C \in \R^{n \times n}$ be constructed by setting $\bv C_{i,i} = \bv{A}_{i,i}$ for all $i \in [n]$ and 
\begin{align*}
   \bv C_{j,i} =  \bv C_{i,j} &= 
    \begin{cases}
        \frac{1}{p} \cdot \bv A_{i,j} & \text{with probability } $p$\\
        0 & \text{otherwise}.
    \end{cases}
\end{align*}
For any $\epsilon, \delta \in (0,1)$, if $p \geq \frac{c\log(n/\delta)}{n\epsilon^2}$ for a large enough constant $c$, then with probability at least $1-\delta$, $\|\bv A - \bv C\|_2 \leq \epsilon n$.
\end{theorem}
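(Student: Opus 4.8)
The plan is to apply the Matrix Bernstein inequality (Theorem~\ref{thm:matrix bernstein}) to the random matrix $\bv{A} - \bv{C}$. First I would write $\bv{A} - \bv{C} = \sum_{i \le j} \bv{S}_{ij}$ where, for each off-diagonal pair $i < j$, $\bv{S}_{ij}$ is the symmetric matrix supported on entries $(i,j)$ and $(j,i)$ equal to $\bv{A}_{ij} - \bv{C}_{ij}$ there (and the diagonal terms vanish since $\bv{C}_{ii} = \bv{A}_{ii}$). Each $\bv{S}_{ij}$ has mean zero because $\E[\bv{C}_{ij}] = \bv{A}_{ij}$. For the uniform bound $L$, note that $|\bv{A}_{ij} - \bv{C}_{ij}| \le \frac{1}{p}|\bv{A}_{ij}| \le \frac{1}{p}$ when the entry is sampled (which dominates the unsampled case $|\bv{A}_{ij}|\le 1$), and a symmetric matrix with a single off-diagonal pair of magnitude at most $\frac{1}{p}$ has spectral norm at most $\frac{1}{p}$, so $L = \frac{1}{p}$.

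Next I would bound the variance. Since $\bv{A} - \bv{C}$ is symmetric, $\bv{Var}_1 = \bv{Var}_2 = \sum_{i<j}\E[\bv{S}_{ij}^2]$. A direct computation: $\bv{S}_{ij}^2$ is supported on the diagonal entries $(i,i)$ and $(j,j)$, each equal to $(\bv{A}_{ij}-\bv{C}_{ij})^2$, and $\E[(\bv{A}_{ij}-\bv{C}_{ij})^2] = \Var[\bv{C}_{ij}] = \bv{A}_{ij}^2\left(\frac{1}{p}-1\right) \le \frac{\bv{A}_{ij}^2}{p} \le \frac{1}{p}$. Summing, $\sum_{i<j}\E[\bv{S}_{ij}^2]$ is a diagonal matrix whose $(k,k)$ entry is $\sum_{j \ne k}\E[(\bv{A}_{kj}-\bv{C}_{kj})^2] \le \frac{1}{p}\sum_{j\ne k}\bv{A}_{kj}^2 \le \frac{n}{p}$ using $\|\bv{A}\|_\infty \le 1$. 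Hence $v = \max(\|\bv{V}_1\|_2,\|\bv{V}_2\|_2) \le \frac{n}{p}$.

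Finally I would plug $t = \epsilon n$, $v \le \frac{n}{p}$, $L = \frac{1}{p}$, and $d = n$ into Theorem~\ref{thm:matrix bernstein}:
\begin{align*}
\Pr(\|\bv{A}-\bv{C}\|_2 \ge \epsilon n) \le 2n\exp\left(\frac{-\epsilon^2 n^2/2}{\frac{n}{p} + \frac{\epsilon n}{3p}}\right) \le 2n \exp\left(\frac{-p\epsilon^2 n}{4}\right),
\end{align*}
using $\epsilon \le 1$ to absorb the $\frac{\epsilon n}{3p}$ term. Setting $p \ge \frac{c\log(n/\delta)}{n\epsilon^2}$ for a sufficiently large constant $c$ makes the right-hand side at most $\delta$, which finishes the proof. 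There is no serious obstacle here; the only mild subtlety is correctly accounting for the symmetric coupling (each entry and its transpose sampled together, contributing a rank-one-ish $2\times 2$ block) so that both the spectral-norm bound $L$ and the variance proxy $v$ pick up the right constants, and confirming the diagonal entries cancel so no $\bv{A}_{ii}$ terms appear in $L$ or $v$.
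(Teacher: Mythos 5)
Your argument is essentially the paper's own proof: both decompose $\bv A-\bv C$ (up to sign) into the same mean-zero symmetric rank-two pieces $\bv S_{ij}$ supported on a single off-diagonal pair, bound $L\le 1/p$, compute the same diagonal variance proxy $v\le n/p$, and invoke Matrix Bernstein with $d=n$ and $t=\epsilon n$ to close. The only cosmetic remark is that the claim ``the sampled case dominates the unsampled case'' is not needed (and only holds when $p\le 1/2$); what you actually use is just that both $(\tfrac{1}{p}-1)|\bv A_{ij}|$ and $|\bv A_{ij}|$ are at most $\tfrac{1}{p}$, which is fine.
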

Note that by Weyl's inequality (Fact \ref{thm:eigenvalue_perturbation_theorem}), Theorem \ref{thm:entrywise sampling} immediately implies that the eigenvalues of $\bv C$ approximate those of $\bv A$ up to $\pm \epsilon n$ error with good probability.

\begin{proof}
For any $i<j$, define the symmetric random matrix $\bv E^{(ij)}$ with 
\begin{align*}
    \bv E^{(ij)}_{i,j} = \bv E^{(ij)}_{j,i} &=
    \begin{cases}
    (\frac{1}{p}-1)\cdot \bv A_{i,j} & \text{with probability } p\\
    -\bv A_{i,j} & \text{otherwise}.
    \end{cases}
\end{align*}.
Observe that $\bv C -\bv A =  \sum_{i,j\in[n], i < j} \bv E^{(ij)}$.
Further, each $\bv E^{(ij)}$ has just two non-zero values in different rows and columns. So \begin{align*}
   \|\bv E^{(ij)}\|_2 = |\bv C_{i,j} - \bv A_{i,j}]| \leq \left (\frac{1}{p} - 1\right ) \cdot |\bv A_{i,j}|\leq \frac{1}{p},
\end{align*}
where the last inequality uses that $\|\bv A\|_\infty \leq 1$. 
Additionally, $\bv E^{(ij)}\bv E^{(ij)T}$ is diagonal with two diagonal entries at $(i,i)$ or $(j,j)$ equal to $(\bv C_{i,j} - \bv A_{i,j})^2$. Thus, $\bv{V} = \sum_{i,j\in[n], i< j}\E [\bv E^{(ij)}\bv E^{(ij)T}]$ is also diagonal. We have 
\begin{align*}
\bv{V}_{i,i} =  \sum_{j \neq i} \E[(\bv C_{i,j} - \bv A_{i,j})^2] &= \sum_{j \neq i} \bv{A}_{i,j}^2 \cdot \left (p \cdot \left (\frac{1}{p}-1\right )^2 + (1-p) \cdot (-1)^2 \right )\\
&= \sum_{j \neq i} \bv{A}_{i,j}^2 \cdot \left (\frac{1}{p} -1 \right ) \le \frac{n}{p},
\end{align*}
where in the final inequality we use that $\norm{\bv A}_\infty \le 1$. Thus, since $\bv{V}$ is diagonal, $\norm{\bv{V}}_2 \le \frac{n}{p}$. 
Putting the above together using Theorem \ref{thm:matrix bernstein} we get,
\begin{align*}
   \Pr\left (\norm{\bv A - \bv C}_2 \ge \epsilon n\right ) = \Pr\left(\left\|\sum_{i,j\in[n], i< j}\bv E^{(ij)}\right\|_2 \geq \epsilon n\right) &\leq 2n \cdot \exp\left(\frac{-\epsilon^2n^2/2}{\frac{n}{p}+\frac{\epsilon n}{3p}}\right).
\end{align*} 
Thus, for $p \geq \frac{c\log (n/\delta)}{n\epsilon^2}$ for large enough $c$, with probability at least $1-\delta$ we have $\norm{\bv A - \bv C}_2 \leq \epsilon n$.
\end{proof}

\medskip

\subsection{Improved Sample Complexity via Entrywise Sampling}
We can combine Theorem \ref{thm:entrywise sampling} directly with Theorem \ref{thm:main_bound} to give an improved sample complexity for eigenvalue estimation. we have:
\entrywise*
\begin{proof}
Letting $s = \frac{c_1 \log(1/(\epsilon \delta)) \cdot \log^3 n}{\epsilon^3 \delta}$ for large enough constant $c_1$, by Theorem \ref{thm:main_bound}, for a random principal submatrix $\bv{A}_S$ formed by sampling each index with probability $s/n$, the eigenvalues of $\bv{A}_S$, after scaling up by a factor of $n/s$ approximate those of $\bv A$ to error $\pm \epsilon n$ with probability at least $1-\delta$. By Theorem \ref{thm:entrywise sampling}, if we sample off-diagonal entries of $\bv{A}_S$ with probability $p \ge \frac{c_2 \log(|S|/\delta)}{|S| \cdot \epsilon^2}$ to produce $\bv{C}$, then we preserve its eigenvalues to error $\pm \epsilon |S|$. Thus, after scaling by $\frac{n}{s}$, the eigenvalues of $\bv{C}$ approximate those of $\bv{A}$ to error $\pm \left (\epsilon n + \frac{n}{s} \cdot \epsilon |S| \right )$. Finally, observe that by a standard Chernoff bound, $|S| \le 2 s$ with probability at least $1-\delta$. So adjusting $\epsilon$ by a constant, the scaled eigenvalues of $\bv C$ give  $\pm \epsilon n$ approximations to $\bv{A}$'s eigenvalues. The expected number of entries read is $|S| + p \cdot |S|^2 = \tilde O \left (\frac{s \cdot \log(1/\delta)}{\epsilon^2} \right ) = \tilde O \left ( \frac{\log^3 n}{\epsilon^5 \delta}\right )$. Additionally, by a standard Chernoff bound at most $\tilde O\left ( \frac{\log^3 n }{\epsilon^5 \delta}\right )$ are read with probability at least $1-\delta$.
\end{proof}


\section{Singular Value Approximation via Sampling}\label{sec:singular}

We now show how to estimate the singular values of a bounded-entry matrix via random subsampling. Unlike in eigenvalue estimation, instead of sampling a random principal submatrix, here we sample a random submatrix with independent rows and columns. This allows us to apply known interior eigenvalue matrix Chernoff bounds to bound the perturbation in the singular values \cite{gittens2011tail,BakshiChepurkoJayaram:2020}. We first state a simplified version of Theorem 4.1 from \cite{gittens2011tail} (also stated as Theorem 4.6 in \cite{BakshiChepurkoJayaram:2020}), simplified using standard upper bounds on the Chernoff bounds in \cite{mitzenmacher2017probability}.
\begin{theorem}[Interior Eigenvalue Matrix Chernoff bounds -- Theorem 4.1 of~\cite{gittens2011tail}]\label{thm:interior}
Let $\{\bv{X}_j\}$ be a finite sequence of independent, random, positive-semidefinite matrices with dimension $n$, and assume that $\|\bv{X}_j \|_2 \leq L$ for some value $L$ almost surely. Given an integer $k \leq n$, define $$\mu_k=\lambda_k \left( \sum_j \mathbb{E}[\bv{X}_j]\right).$$ Then we have the tail inequalities:
\begin{align*}
    \begin{cases}
     \Pr \left(\lambda_k(\sum_j \bv{X}_j) \geq (1+\Delta)\mu_k \right) \leq (n-k+1)\cdot e^{-\frac{\Delta \mu_k}{3L}}, & \text{for } \Delta \geq 1\\
     \Pr \left(\lambda_k(\sum_j \bv{X}_j) \geq (1+\Delta)\mu_k \right) \leq (n-k+1)\cdot e^{-\frac{\Delta^2 \mu_k}{3L}}, & \text{for } \Delta \in [0,1)\\
    \Pr \left(\lambda_k(\sum_j \bv{X}_j) \leq (1-\Delta)\mu_k \right) \leq k\cdot e^{-\frac{\Delta^2 \mu_k}{2L}}, & \text{for } \Delta \in [0,1)
    \end{cases}
\end{align*}
\end{theorem}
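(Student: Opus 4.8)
The plan is to obtain this statement as a direct consequence of the original interior eigenvalue Chernoff bound of Gittens and Tropp, replacing the sharp but unwieldy tail factors by their standard exponential upper bounds. The hypotheses here coincide exactly with those of Theorem 4.1 of \cite{gittens2011tail}, so no fresh probabilistic work is required; the content is purely a simplification of the scalar functions appearing in the exponents.

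First I would recall the precise conclusion of Theorem 4.1 of \cite{gittens2011tail}: under the stated hypotheses, for every $\Delta \ge 0$,
\begin{align*}
\Pr\left(\lambda_k\left(\sum_j \bv X_j\right) \ge (1+\Delta)\mu_k\right) \le (n-k+1)\cdot \left[\frac{e^{\Delta}}{(1+\Delta)^{1+\Delta}}\right]^{\mu_k/L},
\end{align*}
and for every $\Delta \in [0,1)$,
\begin{align*}
\Pr\left(\lambda_k\left(\sum_j \bv X_j\right) \le (1-\Delta)\mu_k\right) \le k\cdot \left[\frac{e^{-\Delta}}{(1-\Delta)^{1-\Delta}}\right]^{\mu_k/L}.
\end{align*}
Next I would invoke the three elementary inequalities that underlie the classical scalar Chernoff bounds (see \cite{mitzenmacher2017probability}): (i) $\frac{e^{\Delta}}{(1+\Delta)^{1+\Delta}} \le e^{-\Delta^2/3}$ for $\Delta \in [0,1)$; (ii) $\frac{e^{\Delta}}{(1+\Delta)^{1+\Delta}} \le e^{-\Delta/3}$ for $\Delta \ge 1$; and (iii) $\frac{e^{-\Delta}}{(1-\Delta)^{1-\Delta}} \le e^{-\Delta^2/2}$ for $\Delta \in [0,1)$. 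Each follows by taking logarithms and checking that the resulting one-variable function is nonnegative on the relevant interval, via its value at the endpoint together with the sign of its derivative. I would write out (ii) explicitly since it is the least commonly quoted: setting $g(\Delta) \eqdef (1+\Delta)\ln(1+\Delta) - \tfrac{4}{3}\Delta$, one has $g(1) = 2\ln 2 - \tfrac{4}{3} > 0$ and $g'(\Delta) = \ln(1+\Delta) - \tfrac{1}{3} > 0$ for all $\Delta \ge 1$, so $g \ge 0$ on $[1,\infty)$, which is equivalent to (ii).

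Finally, since $\mu_k/L \ge 0$ and $x \mapsto x^{t}$ is nondecreasing on $[0,\infty)$ for every $t \ge 0$, raising (i)--(iii) to the power $\mu_k/L$ preserves these inequalities; substituting into the two displays above, and splitting the upper tail into the cases $\Delta \ge 1$ (apply (ii)) and $\Delta \in [0,1)$ (apply (i)), yields exactly the three claimed tail inequalities. I do not expect a genuine obstacle here: the only points requiring a moment of care are verifying the scalar bound (ii) over the entire range $\Delta \ge 1$ rather than merely near $\Delta = 1$, and noting that the exponentiation step remains valid even when $\mu_k/L < 1$, since we are comparing two quantities that both lie in $(0,1]$.
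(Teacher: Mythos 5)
Your proposal is correct and matches the paper's intent: the paper does not write out a proof but simply states that the result is "a simplified version of Theorem 4.1 from \cite{gittens2011tail}\ldots simplified using standard upper bounds on the Chernoff bounds in \cite{mitzenmacher2017probability}," which is precisely what you carry out. Your verifications of the three scalar inequalities and the observation that raising both sides to the power $\mu_k/L \ge 0$ preserves them are all sound.
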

We are now ready to state and prove the main theorem. 
\begin{theorem}\label{thm:singular}
Let $\bv A \in \mathbb{R}^{n\times n}$ be a matrix with $\|\bv A\|_\infty \leq 1$ and singular values $\sigma_1(\bv{A}) \ge \ldots \ge \sigma_n(\bv{A})$. 
Let $\bv{\bar S} \in \R^{n \times n}$ be a scaled diagonal sampling matrix such that $\bv{\bar S}_{ii}=\sqrt{\frac{n}{s}}$ with probability $\frac{s}{n}$ and $\bv{\bar S}_{ii}=0$ otherwise. Let $\bv{\bar T} \in \R^{n \times n}$ be an independent and identically distributed random sampling matrix. Let  $\bv{Z}=\bv{\bar{S}A \bar{T}}$ be the sampled submatrix from $\bv{A}$ with singular values $\sigma_1(\bv{Z}) \ge \ldots \ge \sigma_{n}(\bv{Z})$.  Then, if $s \geq \frac{c \log (n/\delta) }{\epsilon^2}$ for some constant $c$, with probability at least $1-\delta$, for all $i \in [n]$,
\begin{align*}
    \sigma_i(\bv A) -\epsilon n \leq  \sigma_i(\bv Z) \leq \sigma_i(\bv A) +\epsilon n.
\end{align*}
\end{theorem}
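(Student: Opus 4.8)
The plan is to reduce the singular value statement to an interior eigenvalue statement about the PSD matrix $\bv Z \bv Z^T$ and apply the matrix Chernoff bounds of Theorem~\ref{thm:interior}. First I would observe that $\sigma_i(\bv Z)^2 = \lambda_i(\bv Z \bv Z^T)$ and $\sigma_i(\bv A)^2 = \lambda_i(\bv A \bv A^T)$, so it suffices to control $|\lambda_i(\bv Z \bv Z^T) - \lambda_i(\bv A \bv A^T)|$ for all $i$. A cleaner route, however, is to bound the \emph{two} one-sided perturbations separately: I would first show that column sampling alone (applying $\bv{\bar T}$) approximately preserves all eigenvalues of $\bv A \bv A^T$, and then that row sampling (applying $\bv{\bar S}$) approximately preserves the eigenvalues of the already-column-sampled matrix $\bv A \bv{\bar T}\bv{\bar T}^T \bv A^T$. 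Since the two sampling matrices are independent, I can condition on the first and apply the same argument a second time. By the triangle inequality this gives the full bound with $\epsilon$ replaced by $\epsilon/2$.

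For a single sampling step, say column sampling, the key is to write $\bv A \bv{\bar T}\bv{\bar T}^T \bv A^T = \sum_{j=1}^n \bv X_j$, where $\bv X_j = \frac{n}{s} \bv A_{:,j} \bv A_{:,j}^T$ with probability $s/n$ and $\bv X_j = \bv 0$ otherwise; these are independent PSD matrices with $\|\bv X_j\|_2 \le \frac{n}{s} \|\bv A_{:,j}\|_2^2 \le \frac{n}{s} \cdot n = \frac{n^2}{s} =: L$, using $\|\bv A\|_\infty \le 1$. Moreover $\sum_j \E[\bv X_j] = \bv A \bv A^T$, so $\mu_i = \lambda_i(\bv A \bv A^T) = \sigma_i(\bv A)^2$. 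Now I apply Theorem~\ref{thm:interior} for each index $i \in [n]$. There are two regimes. If $\mu_i$ is large, specifically $\mu_i \ge \epsilon n^2$ (say), then I set $\Delta = \frac{\epsilon n^2}{2\mu_i} \le 1$ (or handle $\Delta \ge 1$ with the first bound), and the Chernoff bound gives $|\lambda_i(\sum_j \bv X_j) - \mu_i| \le \Delta \mu_i = \epsilon n^2 / 2$ except with probability $n \cdot e^{-\Delta^2 \mu_i/(3L)} = n \cdot e^{-\Omega(\epsilon^2 n^4 / (\mu_i s))} \le n e^{-\Omega(\epsilon^2 n^2/s)}$, which is $\le \delta/(2n)$ when $s \ge \frac{c \log(n/\delta)}{\epsilon^2}$. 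If instead $\mu_i < \epsilon n^2$, then I only need an upper tail: applying the bound with $(1+\Delta)\mu_i = \epsilon n^2$, i.e. a large multiplicative deviation, Theorem~\ref{thm:interior} (the $\Delta \ge 1$ case, which decays like $e^{-\Delta \mu_i/(3L)} = e^{-\Omega(\epsilon n^2/s)}$ since $\Delta\mu_i = \epsilon n^2 - \mu_i \ge \epsilon n^2/2$ roughly) shows $\lambda_i(\sum_j \bv X_j) \le \epsilon n^2$ with the same failure probability; the lower side is trivial since $\lambda_i \ge 0 \ge \mu_i - \epsilon n^2$. Union bounding over all $i \in [n]$ and over the two independent sampling steps, with probability $\ge 1-\delta$ we get $|\lambda_i(\bv Z \bv Z^T) - \sigma_i(\bv A)^2| \le \epsilon n^2$ for all $i$ (after rescaling $\epsilon$ by a constant). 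Finally, taking square roots and using $|\sqrt{a} - \sqrt{b}| \le \sqrt{|a-b|}$ for $a,b \ge 0$ converts this into $|\sigma_i(\bv Z) - \sigma_i(\bv A)| \le \sqrt{\epsilon} \cdot n$; re-parametrizing $\epsilon \to \epsilon^2$ (which only changes the sample complexity by a constant in the exponent, namely $s \ge c\log(n/\delta)/\epsilon^2$ becomes $s \ge c\log(n/\delta)/\epsilon^4$ — I would need to re-examine the stated bound) gives the claim.

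The main obstacle I anticipate is the bookkeeping in the small-$\mu_i$ regime and, relatedly, getting the exponent on $\epsilon$ to match the statement $s \ge c\log(n/\delta)/\epsilon^2$. The naive square-root argument above loses a factor, suggesting the theorem is really proved by a slightly different accounting: rather than square-rooting an $\epsilon n^2$ bound on $\lambda_i(\bv Z\bv Z^T)$, one should directly compare $\sigma_i(\bv Z) = \sqrt{\lambda_i(\bv Z\bv Z^T)}$ with $\sigma_i(\bv A) = \sqrt{\mu_i}$ in the regime where both are $\Omega(\epsilon n)$ — there the relative error $\Delta$ on $\lambda_i$ translates to relative error $\Theta(\Delta)$ on $\sigma_i$, so an additive $\epsilon n$ target on $\sigma_i$ corresponds to $\Delta \approx \epsilon n / \sigma_i$, hence $\Delta^2 \mu_i / L \approx (\epsilon^2 n^2/\sigma_i^2)\cdot \sigma_i^2 \cdot s/n^2 = \epsilon^2 s$, giving exactly the $s \gtrsim \log(n/\delta)/\epsilon^2$ threshold. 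When $\sigma_i \lesssim \epsilon n$ we only need to show $\sigma_i(\bv Z) \lesssim \epsilon n$, i.e. $\lambda_i(\bv Z\bv Z^T) \lesssim \epsilon^2 n^2$, which is again a large-deviation event controlled by the same threshold. I would need to carry out this case split carefully to confirm the clean $1/\epsilon^2$ dependence survives, but modulo that arithmetic the structure of the proof is exactly: (i) reduce to eigenvalues of $\bv Z \bv Z^T$; (ii) decouple the two sampling steps by independence; (iii) apply Theorem~\ref{thm:interior} per index with a large/small $\mu_i$ case split; (iv) union bound and translate back to singular values.
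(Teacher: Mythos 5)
Your approach is essentially the paper's: reduce to interior eigenvalues of a Gram matrix, decouple the two sampling steps by conditioning, apply Theorem~\ref{thm:interior} per index with a case split between $\sigma_i \gtrsim \epsilon n$ (relative error $\Delta \approx \epsilon n / \sigma_i < 1$, giving exponent $\Delta^2 \mu_i / L \approx \epsilon^2 s$) and $\sigma_i \lesssim \epsilon n$ (absolute bound $\lambda_i \lesssim \epsilon^2 n^2$ via the $\Delta \ge 1$ tail), then take square roots and union bound. Your first-paragraph ``naive square-root'' worry is warranted and the fix you arrive at in the final paragraph is exactly the paper's accounting; the paper uses $\Delta = \epsilon n / \sigma_k(\bv A)$ when $\sigma_k > \epsilon n$ and $\Delta = \epsilon^2 n^2 / \sigma_k^2(\bv A)$ when $\sigma_k \le \epsilon n$, combined with $\sqrt{a+b} \le \sqrt a + \sqrt b$ and $\sqrt{1-\Delta} \ge 1-\Delta$.

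One detail you gloss over: when you ``condition on the first and apply the same argument a second time,'' the bound $\|\bv X_j\|_2 \le n^2/s$ no longer holds verbatim. After the first sampling (say by $\bv{\bar T}$), the rows of $\bv A\bv{\bar T}$ have squared norm $\tfrac{n}{s}\sum_{j\in T}\bv A_{ij}^2 \le \tfrac{n}{s}\cdot|T|$, which is only $O(n)$ if $|T| = O(s)$. You therefore need to condition on $|T| \le 2s$, a standard Chernoff event holding with probability $1-\delta$ for $s \gtrsim \log(1/\delta)$; this is the event the paper calls $E_2$, and it changes $L$ to $2n^2/s$ in the second stage. Without this conditioning, the $L$ in the second matrix Chernoff application is not controlled. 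It is a one-line fix, but as written the second application of Theorem~\ref{thm:interior} in your plan does not have a valid almost-sure bound on $\|\bv X_j\|_2$.
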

\begin{proof}
We first prove that singular values of $\bar{\bv{S}}\bv{A}$ are close to those of $\bv{A}$. Let $\bv X_i \in \R^{n \times n}$ be matrix valued r.v.'s for $i \in [n]$ such that: 
\begin{align*}
    \bv X_i = 
    \begin{cases}
    \frac{n}{s}\bv{A}_{i}\bv{A}_{i}^T , & \text{with probability } s/n\\
    0 & \text{otherwise}
    \end{cases}
\end{align*}
where $\bv{A}_{i}$ is the $i$\textsuperscript{th} row of $\bv{A}$ written as a column vector. Then, $\sum_i \bv{X}_i = (\bv{\bar{S}}\bv A)^T(\bv{\bar{S}}\bv A)$ and $\mathbb{E}[\sum_i\bv{X}_i]=\bv{A}^T\bv{A}$. We have $\|\bv{X}_i \|_2 \leq \max_j \frac{n}{s}\|\bv{A}_{j} \|^2_2 \leq \frac{n^2}{s}$ and $\lambda_k(\mathbb{E}[\sum_i\bv{X}_i])= \lambda_k(\bv{A}^T\bv{A}) = \sigma_k^2(\bv{A})$ for $k \in [n]$. 

\medskip

\noindent \textbf{Case 1}: We will first prove that $\sigma_k(\bv{A})-\epsilon n \leq \sigma_k(\bar{\bv{S}}\bv{A})$ for all $k \in [n]$. Note that when $\sigma_k(\bv{A}) \leq \epsilon n$, $\sigma_k(\bv{A})-\epsilon n \leq \sigma_k(\bar{\bv S}\bv{A})$ is trivially true. We now consider all $k \in [n]$ such that $\sigma_k(\bv{A})>\epsilon n$. Setting $\mu_k=\lambda_k(\bv{A}^T\bv{A})$, $L= \frac{n^2}{s}$ and $\Delta=\frac{\epsilon n}{\sigma_k(\bv{A})}$ (note that $\Delta<1$) in Theorem~\ref{thm:interior}, we get:  
\begin{align*}
    \Pr\left( \lambda_k((\bv{\bar{S}}\bv A)^T(\bv{\bar{S}}\bv A)) \leq (1-\Delta)\lambda_k(\bv{A}^T\bv{A})  \right) &\leq   k \cdot e^{-c\frac{\Delta^2_1 \lambda_k(\bv{A}^T\bv{A}) }{L}} \leq  k \cdot e^{-c\frac{\epsilon^2 n^2 }{\lambda_k(\bv{A}^T\bv{A}) } \cdot \frac{\lambda_k(\bv{A}^T\bv{A}) }{(n^2/s)}}
\end{align*}
where $c$ is constant. So, for $s \geq O(\frac{\log (n/\delta)}{\epsilon^2})$ for any $k$, we have $\lambda_k((\bar{\bv S}\bv{A})^T(\bar{\bv S}\bv{A}))=\sigma_k^2 (\bar{\bv S}\bv{A}) \geq (1-\Delta) \sigma_k^2 (\bv{A})$ with probability at least $1-\frac{\delta}{n}$. Taking a square root on both sides we get $\sigma_k(\bar{\bv S}\bv{A}) \geq \sqrt{1-\Delta}\sigma_k(\bv{A}) \geq (1-\Delta)\sigma_k(\bv{A})= \sigma_k(\bv{A})-\epsilon n $. Taking a union bound over all $k$ with $\sigma_k(\bv{A})> \epsilon n$, $\sigma_k(\bv{A})-\epsilon n \leq \sigma_k(\bar{\bv{S}}\bv{A})$ holds for all such $k$ with probability at least $1-\delta$.

\medskip

\noindent \textbf{Case 2}: We now prove that $\sigma_k(\bar{\bv{S}}\bv{A}) \leq \sigma_k(\bv{A})+\epsilon n$ for all $k \in [n]$. We first consider the case when $\sigma_k(\bv{A}) \leq \epsilon n$. Setting $\mu_k=\lambda_k(\bv{A}^T\bv{A})$, $L= \frac{n^2}{s}$ and $\Delta=\frac{\epsilon^2 n^2}{\sigma^2_k(\bv{A})}$ (note that $\Delta \geq 1$) in Theorem~\ref{thm:interior}, we get (for some constant $c$):  
\begin{align*}
    \Pr\left( \lambda_k((\bv{\bar{S}}\bv A)^T(\bv{\bar{S}}\bv A)) \geq (1+\Delta)\lambda_k(\bv{A}^T\bv{A})  \right) &\leq   n.e^{-\frac{c\Delta \lambda_k(\bv{A}^T\bv{A})}{L}} \\ &\leq  n \cdot e^{-\frac{c\epsilon^2 n^2 }{\lambda_k(\bv{A}^T\bv{A})} \cdot \frac{\lambda_k(\bv{A}^T\bv{A})}{(n^2/s)}}
\end{align*}
Thus, if $s\geq O(\frac{\log (n/\delta)}{\epsilon^2})$, we have $\lambda_k((\bv{\bar{S}}\bv A)^T(\bv{\bar{S}}\bv A)) \leq (1+\Delta)\lambda_k(\bv{A}^T\bv{A}) \leq \lambda_k(\bv{A}^T\bv{A})+\epsilon^2 n^2$ for all $k \in [n]$ such that $\sigma_k(\bv{A}) \leq \epsilon n$ with probability at least $1-\delta$ via a union bound. Taking square root on both sides and using the facts that $\lambda_k(\bv{A}^T\bv{A})=\sigma_k^2(\bv{A})$, $\lambda_k((\bv{\bar{S}}\bv A)^T(\bv{\bar{S}}\bv A))=\sigma_k^2(\bv{\bar{S}}\bv A)$ and $\sqrt{a+b}<\sqrt{a}+\sqrt{b}$, we get $\sigma_k(\bv{\bar{S}}\bv A) \leq \sigma_k(\bv{A})+\epsilon n$.

We now consider the case $\sigma_k(\bv{A}) > \epsilon n$. Setting $\mu_k=\lambda_k(\bv{A}^T\bv{A})$, $L= \frac{n^2}{s}$ and $\Delta=\frac{\epsilon n}{\sigma_k(\bv{A})}$ (note that $\Delta < 1$) in Theorem~\ref{thm:interior}, we get (for some constant $c$):  
\begin{align*}
    \Pr\left( \lambda_k((\bv{\bar{S}}\bv A)^T(\bv{\bar{S}}\bv A)) \geq (1+\Delta)\lambda_k(\bv{A}^T\bv{A})  \right) &\leq   n.e^{-\frac{c\Delta^2 \lambda_k(\bv{A}^T\bv{A})}{L}} \\ &\leq  n \cdot e^{-\frac{c\epsilon^2 n^2 }{\lambda_k(\bv{A}^T\bv{A})} \cdot \frac{\lambda_k(\bv{A}^T\bv{A})}{(n^2/s)}}.
\end{align*}
Thus, if $s\geq O(\frac{\log (n/\delta)}{\epsilon^2})$, we have $\lambda_k((\bv{\bar{S}}\bv A)^T(\bv{\bar{S}}\bv A)) \leq (1+\Delta)\lambda_k(\bv{A}^T\bv{A})$ for all $k \in [n]$ such that $\sigma_k(\bv{A}) > \epsilon n$ with probability at least $1-\delta$ via a union bound. Taking square root on both sides and using the fact that $\lambda_k(\bv{A}^T\bv{A})=\sigma_k^2(\bv{A})$, $\lambda_k((\bv{\bar{S}}\bv A)^T(\bv{\bar{S}}\bv A))=\sigma_k^2(\bv{\bar{S}}\bv A)$ and $\sqrt{a}<a$ for any $a>1$, we get $\sigma_k(\bv{\bar{S}}\bv A) \leq (1+\Delta)\sigma_k(\bv{A}) \leq \sigma_k(\bv{A})+\epsilon n$. Thus, via a union bound over all $k \in [n]$, we have $\sigma_k(\bv{\bar{S}}\bv A) \leq \sigma_k(\bv{A})+\epsilon n$ with probability $1-2\delta$.

Thus, via a union bound over the two cases above, for all $k \in [n]$ with probability at least $1-3\delta$ for $s \geq O(\frac{\log (n/\delta)}{\epsilon^2})$ we have, for all $k \in [n]$,
\begin{align}\label{eq:sigma1}
    |\sigma_k(\bv{\bar{S}}\bv A)-\sigma_k(\bv{A})| \leq \epsilon n.
\end{align}
Next we prove that the singular values of $\bv{\bar{S}}\bv A \bar{\bv{T}}$ are close to those of $\bv{\bar{S}}\bv{A}$, using essentially the same approach as above.
Let $\bv{Y}_i$ be a matrix values random variable for $i \in [n]$ such that:
\begin{align*}
    \bv Y_i = 
    \begin{cases}
    \frac{n}{s}(\bv{\bar{S}}\bv A)_i(\bv{\bar{S}}\bv A)^T_i , & \text{with probability } s/n\\
    0 & \text{otherwise}
    \end{cases}
\end{align*} 
where $(\bv{\bar{S}}\bv A)_i$ is the $i$\textsuperscript{th} column of $\bv{\bar{S}}\bv A$. Then, $\sum_i \bv{Y}_i = (\bv{\bar{S}}\bv A \bar{\bv{T}})^T(\bv{\bar{S}}\bv A \bar{\bv{T}})$. Also, we have  $\lambda_k(\mathbb{E}[\sum_i \bv{Y}_i])=\lambda_k((\bv{\bar{S}}\bv A )^T(\bv{\bar{S}}\bv A ))=\sigma_k^2(\bv{\bar{S}}\bv A)  $. First, using a standard Chernoff bound, we can claim that $\bar{\bv{S}}$ will sample at most $2s$ rows from $\bv{A}$ with probability at least $ 1-\delta$ for any $s \geq O(\log (1/\delta))$. Thus, we have $\| \bv{Y}_i\|_2 =\frac{n}{s}\|\bv{\bar{S}}\bv A \|_2^2 \leq \frac{n}{s}\cdot \frac{n}{s}\cdot 2s \leq \frac{2n^2}{s}$ with probability $1-\delta$. Let this event be called $E_2$. We now consider two cases conditioned on the event $E_2$.

\medskip

\noindent \textbf{Case 1}: We first prove that $\sigma_k(\bv{\bar{S}}\bv A)-\epsilon n \leq \sigma_k(\bar{\bv{S}} \bv{A} \bar{\bv{T}})$ for all $k \in [n]$. Again note that when $\sigma_k(\bv{\bar{S}}\bv A) \leq \epsilon n$ this is trvially true. So we consider all $k \in [n]$ such that $\sigma_k(\bv{\bar{S}}\bv A) > \epsilon n$. Setting $\mu_k=\lambda_k((\bv{\bar{S}}\bv A)^T(\bv{\bar{S}}\bv A))$, $L= \frac{2n^2}{s}$ (as we have conditioned on $E_2$) and $\Delta=\frac{\epsilon n}{\sigma_k(\bv{\bar{S}}\bv A)}$ (note that $\Delta<1$) in Theorem~\ref{thm:interior}, we get:  
\begin{align*}
    \Pr\left( \lambda_k((\bar{\bv{S}} \bv{A} \bar{\bv{T}})^T(\bar{\bv{S}} \bv{A} \bar{\bv{T}})) \leq (1-\Delta)\lambda_k(\bv{A}^T\bv{A})  \right) &\leq   k \cdot e^{-c\frac{\Delta^2_1 \lambda_k((\bv{\bar{S}}\bv A)^T(\bv{\bar{S}}\bv A)) }{L}} \leq  k \cdot e^{-c\frac{\epsilon^2 n^2 }{\lambda_k((\bv{\bar{S}}\bv A)^T(\bv{\bar{S}}\bv A)) } \cdot \frac{\lambda_k((\bv{\bar{S}}\bv A)^T(\bv{\bar{S}}\bv A)) }{(n^2/s)}}
\end{align*}
where $c$ is some constant. So, for $s \geq O(\frac{\log (n/\delta)}{\epsilon^2})$ for any $k$, we have $\lambda_k((\bar{\bv{S}} \bv{A} \bar{\bv{T}})^T(\bar{\bv{S}} \bv{A} \bar{\bv{T}}))=\sigma_k^2 (\bar{\bv S}\bv{A}\bar{\bv{T}}) \geq (1-\Delta) \sigma_k^2 (\bar{\bv S}\bv{A})$ with probability at least $1-\frac{\delta}{n}$. Taking a square root on both sides we get $\sigma_k(\bar{\bv{S}} \bv{A} \bar{\bv{T}}) \geq \sqrt{1-\Delta}\sigma_k(\bar{\bv S}\bv{A}) \geq (1-\Delta)\sigma_k(\bar{\bv S}\bv{A})= \sigma_k(\bar{\bv S}\bv{A})-\epsilon n $. Taking a union bound over all $k$ with $\sigma_k(\bv{A})> \epsilon n$, $\sigma_k(\bar{\bv{S}}\bv{A})-\epsilon n \leq \sigma_k(\bar{\bv{S}}\bv{A} \bar{\bv{T}})$ holds for all such $k$ with probability at least $1-\delta$.

\medskip
\noindent \textbf{Case 2}: We now prove $\sigma_k(\bar{\bv{S}} \bv{A} \bar{\bv{T}}) \leq \sigma_k(\bv{\bar{S}}\bv A)+\epsilon n $ for all $k \in [n]$. We again first consider the case $\sigma_k(\bv{\bar{S}}\bv A) \leq \epsilon n$. Setting $\mu_k=\lambda_k(\bv{A}^T\bv{A})$, $L= \frac{n^2}{s}$ and $\Delta=\frac{\epsilon^2 n^2}{\sigma^2_k(\bv{\bar{S}}\bv A)}$ (note that $\Delta \geq 1$) in Theorem~\ref{thm:interior}:  
\begin{align*}
    \Pr\left( \lambda_k((\bar{\bv{S}} \bv{A} \bar{\bv{T}})^T(\bar{\bv{S}} \bv{A} \bar{\bv{T}})) \geq (1+\Delta)\lambda_k((\bv{\bar{S}}\bv A)^T(\bv{\bar{S}}\bv A))  \right) &\leq   n \cdot e^{-\frac{c\Delta \lambda_k((\bv{\bar{S}}\bv A)^T(\bv{\bar{S}}\bv A))}{L}} \\ &\leq  n \cdot e^{-\frac{c\epsilon^2 n^2 }{\lambda_k((\bv{\bar{S}}\bv A)^T(\bv{\bar{S}}\bv A))} \cdot \frac{\lambda_k((\bv{\bar{S}}\bv A)^T(\bv{\bar{S}}\bv A))}{(n^2/s)}}
\end{align*}
Then, similar to the case $\sigma_k(\bv{A}) \leq \epsilon n$ in the previous case 2, taking square root of both sides and via a union bound, we get $\sigma_k(\bar{\bv{S}} \bv{A} \bar{\bv{T}}) \leq \sigma_k(\bv{\bar{S}}\bv A)+\epsilon n $ for all $k \in [n]$ such that $\sigma_k(\bar{\bv{S}}\bv{A}) \leq \epsilon n$ with probability at least $1-\delta$ for $s \geq O(\frac{\log (n/\delta)}{\epsilon^2})$. The case $\sigma_k(\bar{\bv{S}}\bv{A}) >\epsilon n$ will again be similar as $\sigma_k(\bv{A}) > \epsilon n$ in the previous case 2. We set $\Delta=\frac{\epsilon n}{\sigma_k(\bar{\bv{S}}\bv{A})}$
and apply Theorem~\ref{thm:interior} and take the square root on both sides to get 
$\sigma_k(\bar{\bv{S}} \bv{A} \bar{\bv{T}}) \leq \sigma_k(\bv{\bar{S}}\bv A)+\epsilon n $ with probability $1-\delta$ for all $k \in [n]$ for $s \geq O(\frac{\log (n/\delta)}{\epsilon^2})$. 
Thus, with probability $1-2\delta$, conditioned on the event $E_2$, we have $\sigma_k(\bar{\bv{S}} \bv{A} \bar{\bv{T}}) \leq \sigma_k(\bv{\bar{S}}\bv A)+\epsilon n $ for all $k \in [n]$. Finally, via a union bound over the two cases above, and conditioned on $E_2$, for all $k \in [n]$ with probability at least $1-2\delta$ for $s \geq O(\frac{\log (n/\delta)}{\epsilon^2})$ we get
\begin{align} \label{eq:sigma2}
    |\sigma_k(\bar{\bv{S}} \bv{A} \bar{\bv{T}})-\sigma_k(\bv{\bar{S}}\bv A)| \leq \epsilon n.
\end{align} 
Thus, taking a union bound over all the cases above (including $E_2$), from equation \eqref{eq:sigma1} and \eqref{eq:sigma2} and via a triangle inequality, we get:
$
    |\sigma_k(\bar{\bv{S}} \bv{A} \bar{\bv{T}})-\sigma_k(\bv A)| \leq 2\epsilon n
$
with probability at least $1-c\delta$ (where $c$ is a small constant) for $s \geq O(\frac{\log (n/\delta)}{\epsilon^2})$. Adjusting $\epsilon$ and $\delta$ by constant factors gives us the final bound. 
\end{proof}

\noindent \textbf{Remark on Rectangular Matrices}: Though we have considered  $\bv{A}$ to be a square matrix for simplicity, notice that Theorem~\ref{thm:singular} also holds for any arbitrary (non-square) matrix $\bv{A} \in \R^{n \times m}$, with $n$ replaced by $\max(n,m)$ in the sample complexity bound.

\medskip

\noindent \textbf{Remark on Non-Uniform Sampling}: As discussed in Section \ref{sec:sparsityOverview}, simple non-uniform random submatrix sampling via row/column sparsities or norms does not suffice to estimate the singular values up to improved error bounds of $\epsilon \sqrt{\nnz(\bv A)}$ or $\epsilon \norm{\bv A}_F$. A more complex strategy, such as the zeroing out used in Theorems \ref{thm:nnz_main_bound} and \ref{thm:l2_main_bound} must be used.  It is worth noting that following the same proof as Theorem \ref{thm:singular}, it is easy to show that if $\bv{\bar S}$ is sampled according to row norms or sparsities and appropriately weighted, then the singular values of $\bv{\bar S} \bv A$ do approximate those of $\bv A$ up to these improved error bounds. The proof breaks down when analyzing $\bv{\bar S} \bv A \bv{\bar T}$. $\bv{\bar T}$ would have to be sampled according to the row norms/sparsities of $\bv{\bar S} \bv A $, not $\bv{A}$, for the proof to go through. However, in general, these sampling probabilities can differ significantly between $\bv{\bar S} \bv A $ and  $\bv{A}$.

\end{document}